\documentclass[11pt]{article}

\title{Additive One Approximation for Minimum Degree Spanning Tree: Breaking the $O(mn)$ Time Barrier}

\date{}

\author{Sayan Bhattacharya%
\thanks{
    Email: \texttt{S.Bhattacharya@warwick.ac.uk}
} \\
University of Warwick
\and Ermiya Farokhnejad%
\thanks{
    Email: \texttt{Ermiya.Farokhnejad@warwick.ac.uk}
} \\
University of Warwick
\and Haoze Wang%
\thanks{
    Email: \texttt{2200012915@stu.pku.edu.cn}
} \\
Peking University
}

\usepackage[linesnumbered,ruled,vlined]{algorithm2e}

\usepackage{fancyhdr}
\usepackage{blindtext}
\usepackage{amsthm}
\usepackage{amsmath}
\usepackage{amssymb}
\usepackage{thmtools}
\usepackage{thm-restate}
\usepackage{anyfontsize}
\usepackage{xstring}
\usepackage{xargs}
\usepackage{makeidx}
\usepackage{cancel}
\usepackage{chngcntr}
\usepackage{caption}
\usepackage{subcaption}
\usepackage{float}
\usepackage{pgf,tikz,pgfplots}
\usepackage{latexsym}
\usepackage{verbatim}
\usepackage{bookmark}
\usepackage[toc]{appendix}
\usepackage[a4paper,left=2cm,right=2cm,top=3cm,bottom=2.4cm]{geometry}

\usepackage{cleveref}

\usepackage{listings}
\usepackage{hyperref}

\hypersetup{
	colorlinks=true,
	linkcolor=blue,
	filecolor=magenta,      
	urlcolor=cyan,
}

\usepackage{algorithmicx}
\usepackage{algpseudocode}
\usepackage[dvipsnames]{xcolor}

\usepackage{bidicontour}

\newtheorem{theorem}{Theorem}[section]
\newtheorem{lemma}[theorem]{Lemma}

\newtheorem{corollary}[theorem]{Corollary}

\newtheorem{definition}[theorem]{Definition}

\newtheorem{assumption}[theorem]{Assumption}
\newtheorem{observation}[theorem]{Observation}
\newtheorem{claim}[theorem]{Claim}

\hypersetup{ colorlinks, citecolor=blue}

\PassOptionsToPackage{unicode}{hyperref}

\usepackage{framed}
\usepackage[framemethod=tikz]{mdframed}
\usepackage{tikz-cd}

\newcommand{\free}{\text{free}}

\newcommand{\calP}[0]{\mathcal{P}}

\newcommand{\calF}[0]{\mathcal{F}}

\newcommand{\calC}[0]{\mathcal{C}}

\newcommand{\calM}[0]{\mathcal{M}}

\renewcommand{\th}[0]{\text{th}}

\newcommand{\tail}[0]{\text{Tail}}
\newcommand{\trivial}[0]{\text{tr}}

\usepackage{ifthen}
\newboolean{debug}
\setboolean{debug}{false} 

\begin{document}

\maketitle

\begin{abstract}
    We consider the ``minimum degree spanning tree'' problem.
    As input, we receive an undirected, connected graph $G=(V, E)$ with $n$ nodes and $m$ edges, and our task is to find a spanning tree $T$ of $G$ that minimizes $\max_{u \in V} \deg_T(u)$, where $\deg_T(u)$ denotes the degree of $u \in V$ in $T$.
    
    The problem is known to be NP-hard. 
    In the early 1990s, an influential work by F\"{u}rer and Raghavachari presented a local search algorithm that runs in $\tilde{O}(mn)$ time, and returns a spanning tree with maximum degree at most $\Delta^\star+1$, where $\Delta^\star$ is the optimal objective. 
    This remained the state-of-the-art runtime bound for computing an additive one approximation, until now.
     
    We break this $O(mn)$ runtime barrier dating back to three decades, by providing a deterministic algorithm that returns an additive one approximate optimal spanning tree in $\tilde{O}(mn^{3/4})$ time. This constitutes a substantive progress towards answering an open question that has been repeatedly posed in the literature [Pettie'2016, Duan and Pettie'2020, Saranurak'2024]. 

    Our algorithm is based on a novel application of the {\em blocking flow} paradigm.
    
\end{abstract}

\pagenumbering{gobble}

 \newpage
 \tableofcontents

\addtocontents{toc}{\protect\setcounter{tocdepth}{2}}

\newpage
\pagenumbering{arabic}

\section{Introduction}\label{sec:intro}

Consider the {\em textbook} optimization problem~\cite{WS11} of computing a  {\em minimum degree spanning tree} (MDST). As input, we are given an undirected, connected graph $G = (V, E)$ with $|V| = n$ nodes and $|E| = m$ edges. Our goal is to compute a spanning tree $T$ of $G$ that minimizes $\max_{u \in V} \deg_T(u)$, where $\deg_T(u)$ is the degree of a node $u$ in $T$. Let $\Delta^\star$ be the optimal objective value, i.e., the optimal spanning tree $T^\star$ has maximum degree $\Delta^\star$. The problem is clearly NP-hard; even deciding whether $\Delta^\star = 2$ is equivalent to detecting whether $G$ contains a Hamiltonian path. 

In the early 1990s, F\"{u}rer and Raghavachari \cite{FR92} designed an elegant local search algorithm for this problem, which runs in $\tilde{O}(mn)$ time\footnote{Throughout the paper, we use the $\tilde{O}(.)$ notation to hide polylogarithmic in $n$ factors.} and gives an {\em additive} one approximation, i.e., it returns a spanning tree with maximum degree at most $\Delta^\star+1$. After three decades, this $\tilde{O}(mn)$ runtime bound remains the state-of-the-art. In fact, even if we allow for {\em purely multiplicative} $O(1)$-approximation, no algorithm is known to beat the runtime of \cite{FR92} in sparse graphs. Whether we can solve this problem in near-linear time is a major open question, which has repeatedly been asked in the graph algorithms literature over the years~\cite{DuanP20,P16,S24}. 

We make substantive progress towards this open question by obtaining the following result.

\begin{theorem}\label{theorem:main}
    There exists a deterministic algorithm that, given an input graph $G = (V, E)$, returns an additive one approximate minimum degree spanning tree of $G$ in $\tilde{O}\left(m n^{3/4}\right)$ time.
\end{theorem}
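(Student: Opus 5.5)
We plan to follow the Fürer--Raghavachari framework and speed up only the way improvements are found. Maintain a spanning tree $T$ with maximum degree $k$. Call a vertex \emph{high} if $\deg_T(u)=k$, \emph{medium} if $\deg_T(u)=k-1$, and otherwise low.

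The FR certificate reads as follows. Remove the set $S$ of all degree $k$ and degree $k-1$ vertices that are blocking, and look at the components of $T-S$. If no edge of $G$ joins two different components of $T-S$, then every spanning tree needs degree at least $\frac{|\text{components}|-1}{|S|}$ on average in $S$. This gives $\Delta^\star\ge k-1$.

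An improvement step picks a non-tree edge $(x,y)$ between two components with $x,y$ low. It inserts $(x,y)$ and deletes a tree edge on the $x$--$y$ cycle incident to a high (or, by cascading, medium) vertex. The FR analysis shows $O(n\log n)$ phases overall, with $O(\log n)$ ``degree-decrease rounds'' once the number of high vertices is handled. Each FR phase costs $O(m)$, since it is a single search, and there can be $\Omega(n)$ such phases. This is where the $mn$ comes from.

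The key idea is to batch many improvements per $O(m)$ pass, in the style of blocking flows (Hopcroft--Karp / Dinic). An improvement is a sequence of swaps that moves degree from a high vertex toward a low vertex through medium vertices. We will view it as an augmenting path in an auxiliary structure.

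Within a round at fixed $k$, we define layers as follows. Layer $0$ consists of the high vertices. A medium vertex $w$ enters layer $i+1$ if some non-tree edge whose fundamental cycle passes through a layer-$i$ vertex has $w$ as an endpoint and can be swapped. We grow the layers by BFS until we reach low endpoints.

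We then compute a maximal set of vertex-disjoint shortest augmenting sequences, the blocking step, in $\tilde O(m)$ time. We need dynamic trees (link-cut) to implement swaps and path queries in polylog time. The \emph{main claim} to establish is a monotonicity lemma: after a blocking step, the length of the shortest augmenting sequence strictly increases. This is the analogue of the Dinic distance lemma. It is the main obstacle, because the swaps change $T$, hence change the fundamental cycles and thus the auxiliary graph itself, unlike in a static flow network. We would try to prove it by showing that all swaps performed only shrink the set of tree-paths available to earlier layers. For this we would restrict the allowed swaps to edges whose cycles lie within the current components, so that the component structure refines monotonically.

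Given the monotonicity lemma, the counting is a standard two-regime argument. Run $d$ blocking phases, costing $\tilde O(md)$. After them, every remaining augmentation has length greater than $d$. Each augmenting sequence of length $\ell$ consumes $\ell$ medium vertices, which are disjoint across a maximal packing of augmentations. We would argue, via a residual-packing/decomposition against a hypothetical tree of degree $k-2$, which is the alternative witness, that at most $O(n/d)$ further augmentations are needed in this round. We then finish these one at a time with FR single searches at $O(m)$ each.

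Beyond that, the number of high vertices itself must shrink. We would combine this with a threshold: rounds in which there are many high vertices ($h$ large) progress quickly, while rounds with few high vertices benefit from the $n/d$ bound. Balancing $d$, $n/d$, and the $h$-dependence, namely $\tilde O(m(d+ n/(dh)+h))$ summed over the FR potential, should give $d=h=n^{1/4}$ and a total of $\tilde O(mn^{3/4})$, up to the $O(\log n)$ rounds in which $k$ decreases. Termination still produces the FR witness, which gives the additive one guarantee.
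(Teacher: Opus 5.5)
Your outline has the same overall shape as the paper's argument: blocking-flow rounds in which the shortest augmentation length must strictly increase, followed by a counting argument. However, there is a genuine gap. Both load-bearing steps are only asserted, and the counting step you model on Hopcroft--Karp has no analogue here.

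\textbf{The residual bound.} The claim ``after $d$ blocking phases only $O(n/d)$ augmentations remain'' needs a decomposition of the difference between your tree and a hypothetical degree-$(k-2)$ tree into many vertex-disjoint, simultaneously applicable improving sequences. This is the analogue of $M\oplus M^\star$ for matchings. No such exchange structure is available for degree-bounded spanning trees. Moreover, in your single-tree setting one FR improvement can rebuild a region containing $\Theta(n)$ vertices and destroy every other candidate, so a maximal packing can be tiny.

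\textbf{The monotonicity lemma.} It also fails as stated. The cascading swaps lower the degrees of intermediate vertices, and these can then serve as new low endpoints that create \emph{shorter} augmentations. Restricting swaps to cycles within components does nothing about this.

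\textbf{The running time.} The final balance $\tilde O(m(d+n/(dh)+h))$ is not derived from anything you have established.

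The paper avoids both problems by changing the objects. It grows a valid forest from $n$ singleton components, so the number $f$ of components supplies the slack in the FR counting. It defines augmenting chains relative to a configuration $(\mathcal{F},\mathcal{M},D)$ with three rules:
\begin{itemize}
\item $\mathcal{M}$ is a $\theta$-molecular decomposition with $\theta=\lceil 20n/f\rceil$, so applying one chain frees only $O(\theta H)$ nodes.
\item Every molecule in which a chain performs a degree reduction is deleted from $\mathcal{M}$.
\item Every node whose degree drops from $\Delta^\star+1$ to $\Delta^\star$ goes into the dirty set $D$ and is barred from being a chain endpoint.
\end{itemize}
With these rules monotonicity follows directly: every chain after an application was already a chain before. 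This is what lets the $\tilde O(m)$ layered search move from an $\ell$-configuration to an $(\ell+1)$-configuration.

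In place of a residual bound, the key lemma shows the following. Any sequence of chains, each of length at most $H=\lceil 20n/f\rceil$, that ends in an $(H+1)$-configuration contains at least $f^3/(10^5 n^2)$ chains. The proof is a layer-by-layer FR witness argument against $T^\star$. It shows that the potential $\Delta^\star|W(t)|$ grows by $f/5$ per layer while staying at most $3n$. This uses $|D|\le sH$ and the bound on newly freed nodes.

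Each round costs $\tilde O(mn/f)$ and removes $\Omega(f^3/n^2)$ components. Running such rounds while $f\ge n^{3/4}$, and then plain FR merges, gives $\tilde O(mn^{3/4})$.
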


\paragraph{Remark.} 
Consider the following generalization of the MDST problem, which is known as {\em bounded degree spanning tree}  (BDST). As part of the input, we get an integer $b(u) \geq 1$ for each $u \in V$.
We have to either return a spanning tree $T$ of $G$ with $\deg_T(u) \leq b(u)+1$ for all  $u \in V$, or  certify that there does {\em not} exist any spanning tree $T$ of $G$ with $\deg_T(u) \leq b(u)$ for all  $u \in V$.
\Cref{theorem:main} seamlessly extends to this more general setting, without incurring any overhead in total runtime. For simplicity of exposition, however, we assume that every node $u \in V$ has the same $b(u) \equiv \Delta^\star$ value.
A summary of the generalization is provided in \Cref{appendix:generalization}.

\paragraph{Other Related Work.} 
Duan, He, and Zhang \cite{DHZ20} designed an algorithm that, in  $\tilde{O}(m/\epsilon^7)$ time, returns a spanning tree with maximum degree at most $(1+\epsilon) \Delta^\star + O(\epsilon^{-2}\log n)$. 
 In addition,
Chekuri, Quanrud, and Torres \cite{CQT21} showed how to compute a spanning tree with maximum degree at most $\lceil (1+\epsilon) \Delta^\star \rceil + 2$ in $\tilde{O}(n^2/\epsilon^2)$ time.
The techniques in these two papers, however, seem inherently incapable of breaking the $O(mn)$ time barrier for additive one approximation. We explain this in more detail in \Cref{sec:technical-overview-comparison}.
By combining our result with \cite{CQT21}, we immediately get the following corollary.\footnote{See the discussion at the end of page 20 and beginning of page 21 in the arXiv version of \cite{CQT21}, to see how the sparsification technique in \cite{CQT21} can be combined with our result to get \Cref{cor:n-dependent}.}

\begin{corollary}\label{cor:n-dependent}
    There exists an algorithm that, given an input graph $G = (V, E)$, with high probability, computes a spanning tree of $G$ in $\tilde{O}\left((m +  n^{7/4}) \cdot \epsilon^{-2}\right)$ time, whose maximum degree is bounded by $\lceil (1+\epsilon) \Delta^\star \rceil + 2$. 
\end{corollary}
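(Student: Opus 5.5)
The plan is to combine the sparsification framework of \cite{CQT21} with \Cref{theorem:main}, used as a black box inside their reduction; the only real work is checking the interface and the running time. As I recall it (the end of page 20 and start of page 21 in the arXiv version of \cite{CQT21}), their method solves a fractional relaxation of bounded degree spanning tree approximately via multiplicative weights. This yields, with high probability, a sparse subgraph $H \subseteq G$ with $\tilde{O}(n/\epsilon^2)$ edges, computed in $\tilde{O}(m/\epsilon^2)$ time (or near-linear time for the sparse output). The key property of $H$ is that it admits a spanning tree, or at least a degree-bounded fractional solution certifying one, whose degrees are bounded by roughly $b'(u) = \lceil (1+\epsilon)\Delta^\star \rceil + 1$ at every node $u$. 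Their $\tilde{O}(n^2/\epsilon^2)$ bound comes from running the $\tilde{O}(mn)$ algorithm of \cite{FR92} on this sparse graph $H$. The plan is to substitute our algorithm for that step.

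The steps, in order, are as follows. First, run the sparsification step of \cite{CQT21} to obtain $H$ with $|E(H)| = \tilde{O}(n/\epsilon^2)$ and suitable degree bounds $b'(u)$. Second, run the bounded-degree generalization of \Cref{theorem:main} on $H$ with bounds $b'$; the Remark says this extends without overhead. Because the premise guarantees that a tree respecting $b'$ exists in $H$, the algorithm cannot return the infeasibility certificate. It must therefore return a spanning tree of $H$, hence of $G$, with maximum degree at most $\max_u b'(u) + 1 \le \lceil (1+\epsilon)\Delta^\star \rceil + 2$. If $\Delta^\star$ is not known, we guess it by binary search over $O(\log n)$ values, which loses only a polylogarithmic factor.

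For the running time, the second step costs $\tilde{O}(|E(H)| \cdot n^{3/4}) = \tilde{O}(n^{7/4}/\epsilon^2)$. Adding the sparsification cost gives $\tilde{O}((m + n^{7/4})\epsilon^{-2})$, as claimed. The success probability is inherited from the randomized sparsification, since our algorithm is deterministic.

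The main obstacle is making sure that what \cite{CQT21} actually guarantees about $H$ is strong enough for \Cref{theorem:main}. If they only guarantee an integral or fractional certificate, and their argument relies on a property specific to the F\"urer--Raghavachari local search (for example, its witness structure when it fails), we need the same property from our algorithm. The first thing to try is to verify that their argument uses \cite{FR92} only through its input/output guarantee: on input $(H, b')$, it either returns a tree with degrees at most $b'+1$ or certifies that no tree with degrees at most $b'$ exists. Our algorithm satisfies exactly that specification. If their argument does need more, I would instead show that the fractional degree-bounded solution on $H$ implies the witness-set lower bound that our algorithm's failure certificate refutes.
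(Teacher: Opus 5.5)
Your proposal is correct and matches the paper's argument. You substitute \Cref{theorem:main} (or equivalently its bounded-degree variant with a binary search on the bound) for \cite{FR92} as a black box on the $\tilde{O}(n/\epsilon^2)$-edge sparsifier of \cite{CQT21}, which costs $\tilde{O}(n^{7/4}\epsilon^{-2})$ on top of the $\tilde{O}(m\epsilon^{-2})$ sparsification. The interface concern in your last paragraph resolves as you hoped: the existence of a spanning tree of $H$ with maximum degree $\lceil (1+O(\epsilon))\Delta^\star\rceil+1$ follows from the fractional solution on $H$ via \cite{SL07}, so only the input/output guarantee of the additive-one solver is needed, and $\epsilon$ is rescaled by a constant.
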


Following the work of \cite{FR92}, a sequence of influential papers studied a more general version of the problem~\cite{RMRRH93,KR00,CRRT05,Goemans06,SL07}: Here, the input graph is (edge)-weighted, and we want to compute a spanning tree of minimum total cost, subject to an upper bound on its maximum degree. This line of work culminated with the algorithm of Singh and Lau~\cite{SL07}, who showed how to obtain a  spanning tree with maximum degree $\Delta^\star+1$, whose cost is upper bounded by the minimum possible cost of any spanning tree with maximum degree $\Delta^\star$, in polynomial time. This result by \cite{SL07} is among the most celebrated applications of the {\em iterated rounding} technique~\cite{LauRS11}.

\section{Notations and Preliminaries}\label{preliminaries}

We start by defining some notations that will be used throughout the rest of this paper. Let  $G=(V, E)$ denote an undirected and {\em connected} input graph, with $n := |V|$ nodes and $m := |E|$ edges.
Consider any subgraph $H$ of $G$. We let $V(H) \subseteq V$ and $E(H) \subseteq E$ respectively denote the sets of nodes and edges of $H$, and we write $H \subseteq G$.
For every node $v \in V(H)$, we let $\deg_H(v)$ and $\psi_H(v)$ respectively denote the degree and the set of neighbors of the node $v$ in $H$.
Whenever we use the phrase ``{\bf component} of $H$'', we refer to a subgraph ({\em not} a node-set) that happens to be a connected component of $H$. Whenever we use the term ``{\bf forest}'', we refer to a forest $\calF \subseteq G$ defined on the {\em entire} node-set $V$ (i.e., $V(\calF) = V$). In contrast, whenever we use the term ``{\bf sub-tree}'' of $H \subseteq G$, we refer to a tree $T \subseteq H$ defined over a {\em subset} of nodes in $V(H)$ (i.e., $V(T) \subseteq V(H)$).
Finally, given any subset $S \subseteq V$, we let $G[S]$ denote the subgraph of $G$ induced by $S$. Thus, we have $V(G[S]) = S$ and $E(G[S]) = \{ (u, v) \in E : u, v \in S\}$.

In the ``{\bf minimum degree spanning tree}'' problem, our goal is to compute a spanning tree $T$ of the input graph $G$ which minimizes $\max_{u \in V} \{ \deg_T(u) \}$. Let $\Delta^\star$ denote the optimal objective value, and $T^\star$ denote the optimal spanning tree. Thus, we have 
$$ T^\star = \arg\min_{\substack{T \subseteq G: \\ T \text{ is a spanning tree of } G} } \left\{ \max_{u \in V} \deg_{T}(u) \right\}, \text{ and } \Delta^* = \max_{u \in V} \left \{ \deg_{T^\star}(u) \right \}.$$

We say that a forest $\calF$ is {\bf valid} iff $\deg_{\calF}(u) \leq \Delta^\star+1$ for all nodes $u 
\in V$. We will show how to compute a valid forest that is also a spanning tree of $G$ in $\tilde{O}(mn^{3/4})$ time (see \Cref{theorem:main}). 

\paragraph{Knowledge of $\Delta^\star$.} For ease of exposition, throughout the rest of the paper, we assume that we {\em know} the value of $\Delta^\star$, and our goal is to {\em find} a spanning tree of $G$ with maximum degree at most $\Delta^\star+1$. This assumption is w.l.o.g.: Once we obtain an algorithm which requires the knowledge of $\Delta^\star$, we can convert it into an algorithm that does not require this knowledge by doing a simple binary search on $\Delta^\star$. This incurs only a $O(\log n)$ factor overhead in the total running time, since $\Delta^\star \leq n$.
In \Cref{appendix:unknown-Delta-star}, we provide a more detailed explanation of how this binary search is performed.

\subsection{Basic Building Blocks: Molecular Decomposition and Atoms}\label{building-blocks}
\label{sec:molecules-atoms}

Our algorithm relies on a new forest decomposition technique which we call a ``molecular decomposition''. 
The decomposition consists of some mutually node-disjoint sub-trees, which we refer to as ``molecules''. Each molecule, in turn, further contains some mutually node-disjoint sub-trees that we refer to as ``atoms''. In this section, we precisely define these relevant terminologies, and illustrate a key insight from the \cite{FR92} algorithm along the way.

Fix  any given {\em valid} forest $\calF$. 
Consider any two distinct nodes $u$ and $v$ within the same component of $\calF$.
We denote by $P_{u,v}^\calF$ the unique path between $u$ and $v$ in $\calF$.
Let $T_{u \leftarrow v}^\calF$ be the connected sub-tree of $\calF$ containing $u$, obtained by removing the unique edge incident on $v$ in  $P_{u,v}^\calF$.
Note that  $T_{u \leftarrow v}^\calF = T_{x \leftarrow v}^\calF$, where $(x, v)$ is the unique edge incident on $v$ in $P_{u,v}^\calF$.
We denote $T_{x \leftarrow v}^\calF$ by $T_{xv}^\calF$, to indicate that there is an edge  $(x, v) \in E(\calF)$. The following illustrates an example of these sub-trees.
\Cref{fig:Tuv} illustrates an example of these sub-trees.

For every  edge $(x,y) \in E(\calF)$, we say that $T_{xy}^\calF$ is a {\bf normal molecule} of $\calF$, with $y$ being its {\bf root}. Thus, a normal molecule is a sub-tree of $\calF$ that is connected to the rest of $\calF$ via a single edge.
For every component $C$ of $\calF$, we say that $C$ is a {\bf special molecule} of $\calF$.
We use the term {\bf molecule} to refer to an entity that is either a normal molecule or a special molecule, w.r.t.~an underlying forest in $G$.\footnote{Note that a normal molecule has a root, and a special molecule does {\em not} have a root.} 
We refer to a collection $\calM$ of molecules of $\calF$ as a {\bf molecular decomposition} if it satisfies the following conditions;
\begin{itemize}
    \item Molecules in $\calM$ are mutually node-disjoint.

    \item The root $y$ of every normal molecule $T^\calF_{xy}$ in $\calM$ is not contained in any other molecule in $\calM$.
\end{itemize}
The molecules in $\calM$ are called {\bf $\calM$-molecules}. 
A node $u \in V$ is {\bf $\calM$-free} if it is {\em not} contained in any $\calM$-molecule, and {\bf $\calM$-covered} otherwise. We further classify each $\calM$-covered node into one of two categories -- {\bf $\calM$-reducible} and {\bf $\calM$-non-reducible} -- as described below.

\begin{figure}[ht!]
\caption{An example of a sub-tree $T^\calF_{u\leftarrow v}$.}
    \label{fig:Tuv}
\centering

\tikzset{every picture/.style={line width=0.75pt}} 

\begin{tikzpicture}[x=0.75pt,y=0.75pt,yscale=-0.8,xscale=0.8]

\draw [line width=0.75]    (359,108) -- (328,146) ;
\draw [line width=0.75]    (328,146) -- (298.33,134.67) ;
\draw [line width=0.75]    (328,146) -- (303.33,179.67) ;
\draw  [fill={rgb, 255:red, 0; green, 0; blue, 0 }  ,fill opacity=1 ] (325.08,146) .. controls (325.08,144.39) and (326.39,143.08) .. (328,143.08) .. controls (329.61,143.08) and (330.92,144.39) .. (330.92,146) .. controls (330.92,147.61) and (329.61,148.92) .. (328,148.92) .. controls (326.39,148.92) and (325.08,147.61) .. (325.08,146) -- cycle ;
\draw  [fill={rgb, 255:red, 0; green, 0; blue, 0 }  ,fill opacity=1 ] (295.42,134.67) .. controls (295.42,133.06) and (296.72,131.75) .. (298.33,131.75) .. controls (299.94,131.75) and (301.25,133.06) .. (301.25,134.67) .. controls (301.25,136.28) and (299.94,137.58) .. (298.33,137.58) .. controls (296.72,137.58) and (295.42,136.28) .. (295.42,134.67) -- cycle ;
\draw  [fill={rgb, 255:red, 0; green, 0; blue, 0 }  ,fill opacity=1 ] (300.42,179.67) .. controls (300.42,178.06) and (301.72,176.75) .. (303.33,176.75) .. controls (304.94,176.75) and (306.25,178.06) .. (306.25,179.67) .. controls (306.25,181.28) and (304.94,182.58) .. (303.33,182.58) .. controls (301.72,182.58) and (300.42,181.28) .. (300.42,179.67) -- cycle ;
\draw [color={rgb, 255:red, 0; green, 0; blue, 0 }  ,draw opacity=1 ][line width=0.75]    (392.33,139.67) -- (359,108) ;
\draw [line width=0.75]    (438.33,126.67) -- (392.33,139.67) ;
\draw [line width=0.75]    (359,108) -- (324.33,85.67) ;
\draw  [fill={rgb, 255:red, 0; green, 0; blue, 0 }  ,fill opacity=1 ] (321.42,85.67) .. controls (321.42,84.06) and (322.72,82.75) .. (324.33,82.75) .. controls (325.94,82.75) and (327.25,84.06) .. (327.25,85.67) .. controls (327.25,87.28) and (325.94,88.58) .. (324.33,88.58) .. controls (322.72,88.58) and (321.42,87.28) .. (321.42,85.67) -- cycle ;
\draw  [fill={rgb, 255:red, 0; green, 0; blue, 0 }  ,fill opacity=1 ] (389.42,139.67) .. controls (389.42,138.06) and (390.72,136.75) .. (392.33,136.75) .. controls (393.94,136.75) and (395.25,138.06) .. (395.25,139.67) .. controls (395.25,141.28) and (393.94,142.58) .. (392.33,142.58) .. controls (390.72,142.58) and (389.42,141.28) .. (389.42,139.67) -- cycle ;
\draw  [fill={rgb, 255:red, 0; green, 0; blue, 0 }  ,fill opacity=1 ] (435.42,126.67) .. controls (435.42,125.06) and (436.72,123.75) .. (438.33,123.75) .. controls (439.94,123.75) and (441.25,125.06) .. (441.25,126.67) .. controls (441.25,128.28) and (439.94,129.58) .. (438.33,129.58) .. controls (436.72,129.58) and (435.42,128.28) .. (435.42,126.67) -- cycle ;
\draw [line width=0.75]    (392.33,139.67) -- (416.33,178.67) ;
\draw  [fill={rgb, 255:red, 0; green, 0; blue, 0 }  ,fill opacity=1 ] (413.42,178.67) .. controls (413.42,177.06) and (414.72,175.75) .. (416.33,175.75) .. controls (417.94,175.75) and (419.25,177.06) .. (419.25,178.67) .. controls (419.25,180.28) and (417.94,181.58) .. (416.33,181.58) .. controls (414.72,181.58) and (413.42,180.28) .. (413.42,178.67) -- cycle ;
\draw  [color={rgb, 255:red, 0; green, 0; blue, 255 }  ,draw opacity=1 ] (280.11,202.82) .. controls (255.13,187.45) and (254.05,143.85) .. (277.7,105.45) .. controls (301.34,67.05) and (340.75,48.38) .. (365.73,63.76) .. controls (390.71,79.14) and (391.79,122.73) .. (368.14,161.13) .. controls (344.5,199.54) and (305.09,218.2) .. (280.11,202.82) -- cycle ;
\draw  [fill={rgb, 255:red, 0; green, 0; blue, 0 }  ,fill opacity=1 ] (356.08,108) .. controls (356.08,106.39) and (357.39,105.08) .. (359,105.08) .. controls (360.61,105.08) and (361.92,106.39) .. (361.92,108) .. controls (361.92,109.61) and (360.61,110.92) .. (359,110.92) .. controls (357.39,110.92) and (356.08,109.61) .. (356.08,108) -- cycle ;

\draw (209,216.4) node [anchor=north west][inner sep=0.75pt]    {$T^\calF_{u\leftarrow v} =T^\calF_{u'\leftarrow v} =T^\calF_{x\leftarrow v} =T^\calF_{xv}$};
\draw (335,145.4) node [anchor=north west][inner sep=0.75pt]    {$u$};
\draw (282,168.4) node [anchor=north west][inner sep=0.75pt]    {$u'$};
\draw (390,122.4) node [anchor=north west][inner sep=0.75pt]    {$v$};
\draw (360,87.4) node [anchor=north west][inner sep=0.75pt]    {$x$};

\end{tikzpicture}
    
\end{figure}

Let $T$ be any arbitrary $\calM$-molecule.
Consider the following procedure, which is essentially the algorithm of \cite{FR92} running on $T$.
Initially, we mark all of the nodes $u \in V(T)$ satisfying $\deg_{\calF}(u) \leq \Delta^\star$ as a singleton \textit{atom}, and the rest of the nodes in $T$ as \textit{bad nodes}.
Then, as long as there exists an edge $(x,y) \in E$\footnote{This can be either a forest edge $(x,y) \in E(T)$ or a non-forest edge $(x,y) \in E - E(T)$.} between two nodes $x$ and $y$ that are contained in two different atoms, we consider the path $P_{x,y}^T$ between $x$ and $y$ in $T$, and {\em merge} all of the atoms {\em hitting} $P_{x,y}^T$ (i.e.,  with at least one node on this path) together with all bad nodes on $P_{x,y}^T$, to form a new larger atom. 
Throughout this process, each atom remains a sub-tree of $\calF$, and different atoms remain mutually node-disjoint.

After running the above procedure on every $\calM$-molecule of $\calF$, we refer to every existing atom at the end as an {\bf $\calM$-atom}.  
We refer to the nodes inside these $\calM$-atoms as {\bf $\calM$-reducible}.
Observe that every $\calM$-reducible node must be $\calM$-covered.
If an $\calM$-covered node  is not $\calM$-reducible, then we say that it is {\bf $\calM$-non-reducible}.
Morally, a node $u$ is $\calM$-reducible if we can locally change $\calF$, by inserting and removing some edges inside the $\calM$-atom containing $u$, to achieve another {\em valid} forest $\calF'$ such that $\deg_{\calF'}(u) \leq \Delta^\star$.
This key property is summarized in \Cref{lem:degree-reduction}.
\Cref{fig:atoms} illustrates an example of a molecule and its atoms, and \Cref{fig:degree-reduction} illustrates the status of the molecule after applying \Cref{lem:degree-reduction} to reduce the degree of $u$.

\begin{figure}[ht]
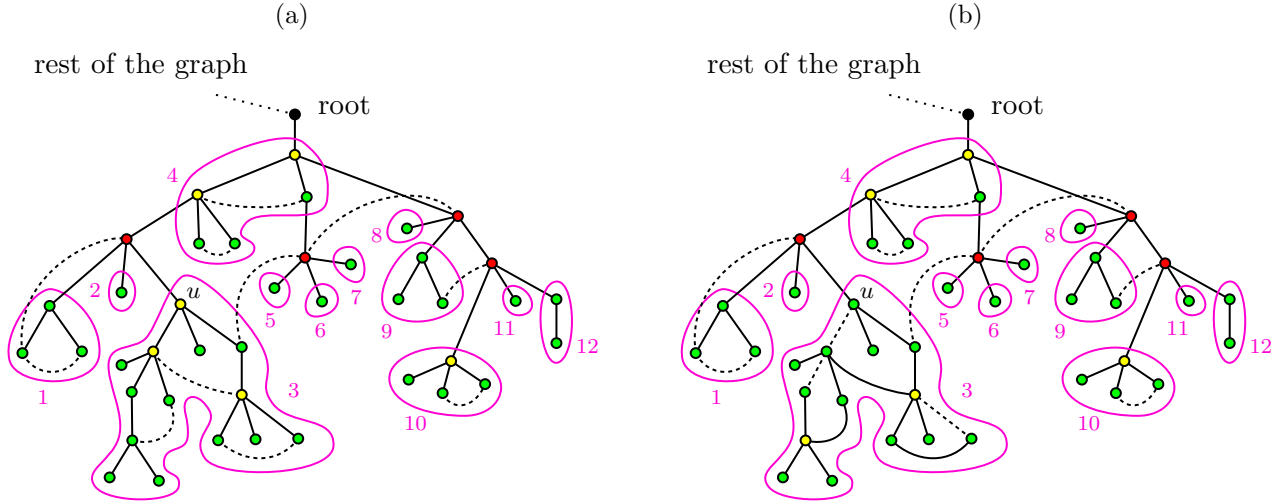

    \caption{An example of a molecule and its atoms.
    Solid line are forest edges, ans dashed lines are non-forest edges.
    The value of $\Delta^\star$ is $3$.
    Red nodes correspond to non-reducible nodes.
    Yellow nodes are reducible nodes of degree $\Delta^\star+1$.
    Green nodes are reducible nodes of degree $\leq \Delta^\star$.
    Atoms are depicted with magenta curves. (a) The status of the initial molecule. (b) The status of the molecule after reducing the degree of $u$ in atom $3$.
    The sub-tree inside atom $3$ will change but all of the other atoms and non-reducible nodes remain unaffected.}
  \centering
  \subfloat[]{\input{figures/fig-atoms}\label{fig:atoms}}
  \hfill
  \subfloat[]{\input{figures/fig-degree-reduction}\label{fig:degree-reduction}}
  \label{fig:atoms-deg-reduction}
\end{figure}

\begin{lemma}[\cite{FR92}]\label{lem:degree-reduction} There is a {\bf degree-reduction} subroutine which works as follows. Let $\calM$ be a molecular decomposition of a valid forest $\calF$, and let $C$ be any $\calM$-atom. The subroutine takes $(\calF, \calM, u)$ as input, and modifies $\calF$ by inserting/deleting some edges $e \in E$ whose {\em both} endpoints lie in $V(C)$. Let $\calF^-$ (resp.~$\calF^+$)  denote the state of $\calF$ just before the call to the subroutine (resp.~just after the subroutine finishes execution). The subroutine runs in $\tilde{O}\left(\left| E\big( G[V(C)] \big) \right| \right)$ time, and guarantees that:
    \begin{enumerate}
    \item  $\deg_{\calF^+}(u) \leq \Delta^\star$.
    \item $\calF^+$ remains a valid forest. 
    \end{enumerate}
\end{lemma}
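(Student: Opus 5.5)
We prove the lemma by induction on the merge history of the atom construction inside the molecule $T$ that contains $C$. The structure to exploit: every $\calM$-atom was built as a binary tree of merges. Each internal node of this history is a merge triggered by a \emph{witness edge} $(x,y)\in E$ joining two distinct atoms $A_x \ni x$ and $A_y \ni y$. The merge absorbs the path $P^T_{x,y}$, so the bad nodes on that path, together with the atoms hitting it, are combined into one new atom.

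The inductive statement is stronger than the lemma. \emph{Claim:} for every atom $A$ in the history and every $u\in V(A)$, we can modify $\calF$ using only edges with both endpoints in $V(A)$, so that afterwards
\begin{itemize}
\item $\deg(u)\le\Delta^\star$;
\item every other node of $V(A)$ gains at most one degree, and only if it previously had degree $\le\Delta^\star$;
\item $\calF$ remains a forest with the same vertex partition into components;
\item the edge set of $\calF$ changes only inside $G[V(A)]$.
\end{itemize}

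\textbf{Base case.} A singleton atom $\{u\}$ already satisfies $\deg_\calF(u)\le\Delta^\star$, so nothing needs to be done.

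\textbf{Inductive step.} Let $A$ be formed by the witness edge $(x,y)$ with path $P=P^T_{x,y}$.
\begin{itemize}
\item \emph{Case 1: $u$ is a bad node on $P$.} Let $(u,w)$ be an edge of $P$ incident to $u$. Recursively reduce the degrees of $x$ in $A_x$ and of $y$ in $A_y$. These two sub-atoms are node-disjoint, and each was built from an earlier forest state whose path structure is preserved; see the obstacle below. Then insert $(x,y)$ and delete $(u,w)$.
 \begin{itemize}
 \item $x$ and $y$ now have degree $\le\Delta^\star+1$.
 \item $u$ drops from $\Delta^\star+1$ to $\Delta^\star$.
 \item Acyclicity holds because $(u,w)$ lies on the unique cycle created by $(x,y)$.
 \end{itemize}
\item \emph{Case 2: $u$ lies in a sub-atom $A'$ hitting $P$.} Recurse inside $A'$. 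If $A'$ is $A_x$ or $A_y$, or if the recursive change does not need the witness, this is immediate.
\end{itemize}
Validity is preserved because only nodes of degree $\le\Delta^\star$, or nodes we have just reduced, ever gain an edge.

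\textbf{Main obstacle: cycle invariance.} After performing exchanges inside sub-atoms, the tree path between $x$ and $y$ may change, so we must still guarantee that $(u,w)$ lies on the cycle closed by $(x,y)$. We intend to argue this via contraction. Each atom is a connected subtree, and exchanges inside an atom keep it a connected subtree on the same vertex set. Contracting every atom to a single vertex, the path $P$ in the contracted tree is therefore invariant, so any bad edge $(u,w)$ on it remains on the $x$--$y$ cycle. We have to pick $w$ so that $(u,w)$ is an edge of $P$ between $u$ and a neighbouring vertex. If both neighbours of $u$ on $P$ sit in the same atom this needs care, but in that case $u$ would be inside the atom's subtree, contradicting that atoms are subtrees; so the choice is always possible.

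\textbf{Running time.} We execute the recursion top-down, storing each witness edge at its merge. The recursion visits each merge at most once along disjoint sub-atoms, and each exchange costs polylogarithmic time with link-cut trees. Building the history itself is charged to edges of $G[V(C)]$ using union-find, giving $\tilde O(|E(G[V(C)])|)$ overall.
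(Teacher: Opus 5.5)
Your proposal is correct and takes the same route as the paper's sketch: induction over the atom-merging history, where a bad node $u$ on the witness path $P^T_{x,y}$ is handled by recursively reducing $x$ and $y$ inside their disjoint sub-atoms, inserting $(x,y)$, and deleting an edge of $P^T_{x,y}$ at $u$. Your contraction argument, which shows that this edge still lies on the new $x$--$y$ cycle after the sub-atoms have been rearranged, makes explicit a step the paper's sketch passes over.
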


\begin{proof}(Sketch)
    Consider the procedure described above, which defines the notion of an $\calM$-atom. 
    We claim that: At every point in time during the procedure,  for every atom $R$ and every node $z \in V(R)$, we can change the edges inside $R$ locally, in such a way that the degree of $z$ becomes  $\leq \Delta^\star$, and the degree of every other node contained in $R$ remains $\leq \Delta^\star + 1$ and changes by at most one.
    
    To see why this is true, note that in the beginning the atoms are singleton nodes with degree at most $\Delta^\star$, and so they satisfy the claim.
    When two atoms $R_x$ and $R_y$ containing the nodes $x$ and $y$ are getting merged because there is a forest edge $(x,y) \in E(T)$, the claim remains obviously correct for the new atom $R_x \cup R_y$.
    Subsequently, 
    whenever a bad node $z$ becomes part of an atom, there is a non-forest edge $(x,y) \in E - E(\calF)$ where $x$ and $y$ are in two different atoms and $z$ lies on the path $P_{x,y}^T$.
    Thus, we can reduce the degree of $x$ and $y$ in their respective atoms recursively, insert the edge $(x,y)$ into $R$, and remove an edge incident on $z$ in the path $P_{x,y}^T$ from $R$.
    Hence, the degrees of $z$ in $R$ becomes $\leq \Delta^\star$. Applying an induction hypothesis, we infer that the degree of every other node in $R$ remains $\leq \Delta^\star$ and changes by at most one.
    
    When the above procedure terminates, we set $R = C$. The runtime guarantee follows if we use standard data structures to efficiently implement this procedure.
\end{proof}

\section{Technical Overview}
\label{sec:technical:overview}

The \cite{FR92} algorithm starts with  an arbitrary spanning tree $T$ of the input graph $G$, and then in each successive iteration updates  $T$ to reduce its number of nodes with  maximum degree. At all times, it is ensured that $T$ remains a spanning tree of $G$. We, however, approach the problem from a different perspective: We start with a valid forest $\calF$ where each node is a singleton component, and $E\left(\calF \right) = \emptyset$. In each successive iteration, we update $\calF$ (by inserting/deleting some edges in it) so as to reduce the number of components of $\calF$. Throughout our algorithm, we ensure that $\calF$ remains a valid forest. We terminate when $\calF$ consists of one component, i.e., when $\calF$ becomes a spanning tree of $G$. 

The rest of this section is organized as follows. In \Cref{sec:technical-overview-new-perspective}, we present and analyze the \cite{FR92} algorithm from our new perspective. Subsequently, in \Cref{sec:technical-overview-key-insights}, we present a key technical insight under a simplifying assumption, which hints at the possibility of obtaining a ``polynomial improvement'' over the runtime bound of \cite{FR92}. Finally, in \Cref{sec:technical-overview-overcome-assumption}, we provide a very high-level sketch of several ideas that are needed to get rid of the simplifying assumption.

\subsection{The~\cite{FR92} Algorithm (From a New Perspective)}\label{sec:technical-overview-new-perspective}

The algorithm starts with a trivial valid forest $\calF$, where each node $v \in V$ is a singleton component and $E(\calF) = \emptyset$. Subsequently, it runs for $n-1$ {\bf iterations}. Each iteration invokes \Cref{lem:alg-small-scale-main} to reduce the number of components of $\calF$ by one, while ensuring that $\calF$ continues to remain a valid forest. At the end of the last iteration, there is only one component in  $\calF$, and hence $\calF$ becomes a spanning tree of $G$ with maximum degree at most $\Delta^\star+1$.  Since each iteration takes $\tilde{O}(m)$ time and there are $n-1$ iterations, the overall runtime of the algorithm is $\tilde{O}(mn)$.

\begin{lemma}[\cite{FR92}]\label{lem:alg-small-scale-main}
    There exists a deterministic algorithm that, given a valid forest $\calF$ (which is not a spanning tree), modifies $\calF$ (by inserting/deleting some edges) so that $\calF$ continues to remain a valid forest, but with one fewer component than before. The algorithm runs in $\tilde{O}(m)$ time.
\end{lemma}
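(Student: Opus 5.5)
We follow the Fürer--Raghavachari scheme within the molecular-decomposition framework. Take the trivial molecular decomposition $\calM$ in which every component of $\calF$ is a special molecule. Since $\calF$ is not a spanning tree and $G$ is connected, $\calF$ has at least two components, and the plan is to find a non-forest edge that joins two distinct components. We run the atom-forming procedure of \Cref{sec:molecules-atoms} on every $\calM$-molecule. It starts with the nodes of degree $\le \Delta^\star$ as singleton atoms and repeatedly merges atoms along tree paths through bad nodes. Using standard union--find over the tree paths, with each bad node absorbed at most once and each edge scanned $O(1)$ times per atom merge, this takes $\tilde O(m)$ time overall.

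Once no further merges are possible, we look for an edge $(x,y)\in E$ with $x,y$ in different components of $\calF$ and both $x$ and $y$ $\calM$-reducible.

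\textbf{Such an edge exists.} Suppose not. Let $B$ be the set of $\calM$-non-reducible nodes; each has $\deg_\calF = \Delta^\star+1$. Contract every $\calM$-atom to a single vertex and remove $B$. Because the procedure terminated, every edge of $G$ between two different atoms would have to leave its component, and by assumption no such edge goes to a reducible node in another component. Hence every edge of $G$ between distinct atoms is incident to $B$. Inside each component, removing $B$ splits the component into at least $\sum_{b\in B_C}\deg_\calF(b) - 2(|B_C|-1)$ pieces, and each piece is a union of atoms. Summing over the $k\ge 2$ components, $G-B$ has at least
\[
\sum_{b\in B}\deg_\calF(b) - 2|B| + 2k \;=\; (\Delta^\star-1)|B| + 2k
\]
components. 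Any spanning tree must connect these components through $B$, which forces $\sum_{b\in B}\deg_{T^\star}(b) \ge (\Delta^\star-1)|B|+2k-1 + |B|$. The left side is at most $\Delta^\star|B|$, so this is impossible since $k\ge 1$. The counting is the standard Fürer--Raghavachari witness argument, and I expect pinning it down precisely to be the main obstacle. In particular, one must check that the pieces of $G-B$ are genuinely disconnected from each other in $G$, and not merely within $\calF$. That holds because an edge between two atoms of the same molecule would have triggered a merge, and edges to atoms of other components are excluded by assumption.

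\textbf{Degree reduction and merging.} Given the edge $(x,y)$, apply \Cref{lem:degree-reduction} to $x$ in its atom $C_x$. This modifies only edges inside $V(C_x)$, so the forest partition into components and the other component are unaffected. Then apply \Cref{lem:degree-reduction} to $y$; we must argue that $y$'s atom is still an atom of the updated forest. We expect this because the atoms of different molecules are built independently and $C_x$ was modified only internally. If this is delicate, recompute the atoms of $y$'s component only, which costs $\tilde O(m)$.

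Now $\deg(x),\deg(y)\le\Delta^\star$. Inserting $(x,y)$ keeps the forest valid and merges two components, reducing the component count by one. Each step is $\tilde O(m)$, so the whole procedure runs in $\tilde O(m)$ time as claimed.
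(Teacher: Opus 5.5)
Your route matches the paper's: trivial decomposition, compute the atoms, look for a non-forest edge joining reducible nodes in different components, rule out its absence with the $T^\star$ counting over the atoms and the non-reducible nodes, then reduce degrees at both endpoints and insert the edge. There is one genuine error, in the count of pieces, but it is fixable.

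\textbf{The counting step.} The per-component bound $\sum_{b\in B_C}\deg_\calF(b)-2(|B_C|-1)$ holds only when $B_C\neq\emptyset$. A component containing no non-reducible node is a single atom and contributes one piece, not two. As a sanity check, for $B=\emptyset$ your total says that $G-B=G$ has at least $2k$ components, although $G$ is connected.

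The correct total is at least $(\Delta^\star-1)|B|+k$. After adding the $|B|$ singletons, this is exactly the paper's bound $|\calC|\ge f+\Delta^\star|N|$. It gives $\Delta^\star|B|\ge \Delta^\star|B|+k-1$, i.e.\ $k\le 1$. So the contradiction needs $k\ge 2$, not merely $k\ge 1$ as you wrote. You do have $k\ge 2$, because $\calF$ is not spanning, so the argument closes.

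\textbf{Two smaller points.} The sentence ``every edge of $G$ between distinct atoms is incident to $B$'' is garbled. It should say that no edge of $G$ joins two distinct atoms, so every edge leaving an atom ends in $B$.

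Your worry about $y$'s atom is unnecessary. Degree reduction at $x$ changes only edges with both endpoints in $V(C_x)$. Hence $y$'s component, the degrees of its nodes, and therefore its atoms are all unchanged, and no recomputation is needed.
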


\begin{proof}(Sketch)
We consider a trivial molecular decomposition $\calM_{\trivial}$ of $\calF$ where each  component of $\calF$ is a special molecule, and there is no normal molecule.
We compute all the $\calM_{\trivial}$-atoms.
Now, we say that an edge $(u, v) \in E - E(\calF)$ is {\em augmenting} iff $u$ and $v$ are contained in two different $\calM_{\trivial}$-atoms in two different components of $\calF$.
By iterating over all edges of $G = (V, E)$, we find a suitable edge $(u, v) \in E$.
Next, we apply \Cref{lem:degree-reduction} separately  on the  respective $\calM_{\trivial}$-atoms containing $u$ and $v$, so as to ensure that $\deg_{\calF}(u) \leq \Delta^\star$ and $\deg_{\calF}(v) \leq \Delta^\star$. 
Finally, we  insert the edge  into $E(\calF)$, and terminate. 
It is easy to verify that $\calF$ continues to remain a valid forest, with one fewer component than before.  
To prove the correctness of the algorithm, it now remains to show that we can always find an augmenting edge $(u, v)$ to add to $E(\calF)$.

\medskip
\noindent {\bf Correctness.}
For the sake of contradiction, suppose that there exists a valid forest  $\calF$ with $f > 1$ components, which does not admit {\em any} augmenting edge.
Let $N \subseteq V$ denote the set of all $\calM_\trivial$-non-reducible nodes of $\calF$.
Observe that if we delete all the nodes in  $N$ from $\calF$, then every component in the resulting forest would be an $\calM_\trivial$-atom. 
Now, consider the collection $\calC$, which consists of all the $\calM_\trivial$-atoms of $\calF$, plus all the nodes in $N$.
Since $\deg_\calF(u) = \Delta^\star+1$ for all $u \in N$, we infer that  
\begin{equation}
\label{eq:collection}
|\calC| \geq f + \Delta^\star \cdot |N|.
\end{equation}

Since we have assumed that there is no augmenting edge, there is no edge in $E$ which connects two distinct $\calM_{\trivial}$-atoms between two different components of $\calF$. Furthermore, by definition, there is no edge in $E$ which connects two distinct $\calM_{\trivial}$-atoms within the same component of $\calF$. Thus, it follows that {\em there is no edge in $E$ whose two endpoints belong to two distinct $\calM_{\trivial}$-atoms}.

Now, consider the optimal spanning tree $T^\star$ of $G$, which has maximum degree $\Delta^\star$. Clearly, this spanning tree $T^\star$ connects the $\calM_{\trivial}$-atoms and the nodes in $N$ together with the rest of the nodes. Thus, there exists at least $\left|\calC\right| - 1$ edges in $T^\star$ that have one endpoint in some $x \in \calC$ ($x$ is either a node in $N$ or an $\calM_{\trivial}$-atom) and the other endpoint outside of $x$. From the above discussion, we infer that each such edge is incident on $N$. Thus, from \Cref{eq:collection}, we get
\begin{equation}
\label{eq:collection1}
\Delta^\star \cdot |N| \geq |\calC| - 1 \geq f + \Delta^\star \cdot |N| - 1.
\end{equation}
Rearranging the terms in the above inequality, we get $f \leq 1$, which gives us the desired contradiction.

\medskip
\noindent
\textbf{Running Time.}
The time taken to compute all the $\calM_{\trivial}$-atoms is  $\tilde{O}(m)$.
Searching for the suitable edge $(u, v)$ takes $\tilde{O}(m)$ time, since we just need to check whether both $u$ and $v$ are in two different $\calM_\trivial$-atoms.
Finally, invoking \Cref{lem:degree-reduction} to reduce the degree of $u$ and $v$ also takes $\tilde{O}(m)$ time. 
\end{proof}

\subsection{Our Key Insight (Under a Simplifying Assumption)}\label{sec:technical-overview-key-insights}

Our quest for achieving a polynomial improvement over the $\tilde{O}(mn)$ runtime bound of \cite{FR92} starts with a natural question: Can we design an algorithmic framework which works in {\bf phases}, such that  each phase merges together a {\em polynomial} number of components of $\calF$ in $\tilde{O}(m)$ time? This would imply that the number of phases is polynomially smaller than $n$, and since each phase runs in $\tilde{O}(m)$ time, we would accordingly be able to beat the $\tilde{O}(mn)$ time barrier by a polynomial factor. In this section, we outline how to implement such a phase, under a simplifying assumption as summarized below.

\begin{assumption}\label{assumption:uniform-size}
   There are $f \geq \sqrt{n}$ components in $\calF$, with $\leq 100 \cdot n/f$ nodes in each component. 
\end{assumption}

Intuitively, \Cref{assumption:uniform-size} says that the size (in terms of the number of nodes) of every component in the forest $\calF$ is at most a constant times the average size of a component. Within this context, \Cref{lem:simple-alg-uniform-size} captures a key technical insight, and points out how to implement a phase.

\paragraph{Breaking the $O(mn)$ Time Barrier.}\label{paragraph:break-time-barrier} We start with a valid forest $\calF$ with $f = n$ components and $E(\calF) = \emptyset$. Suppose that we get extremely lucky, in the sense that \Cref{assumption:uniform-size}  continues to hold all the time. Then, as long as $f \geq n^{0.51}$, we invoke \Cref{lem:simple-alg-uniform-size} to reduce the number of components in $\calF$ by $\Omega(f^2/n) = \Omega\left(n^{0.02}\right)$, which takes $\tilde{O}(m)$ time. In other words, we spend {\em amortized} $\tilde{O}\left(m/n^{0.02}\right)$ time to reduce the number of components of $\calF$ by one. This is precisely the source of our ``polynomial advantage'' over the \cite{FR92} algorithm. Overall, the total time we spend until $f$ becomes smaller than the threshold $n^{0.51}$ is at most $\tilde{O}\left(m n/n^{0.02}\right) = \tilde{O}\left(mn^{0.98}\right)$. Subsequently, when $f$ becomes smaller than $n^{0.51}$, we switch back to the \cite{FR92} algorithm as outlined in \Cref{sec:technical-overview-new-perspective}. From this point onward, we spend $\tilde{O}(m)$ time per iteration, to reduce the number of components of $\calF$ by one. So, overall it takes $\tilde{O}\left(mn^{0.51}\right)$ time to reduce the number of components of $\calF$ all the way down to one, at which point our algorithm terminates. Clearly, the overall runtime of the entire procedure is at most $\tilde{O}\left(mn^{0.98}\right) + \tilde{O}\left(mn^{0.51}\right) = \tilde{O}\left(mn^{0.98} \right)$. 

In fact, a more refined analysis here would imply that this algorithm actually runs in $\tilde{O}(m \sqrt{n})$ time. 
We, however, skip this more refined analysis, since (i) our goal in this technical overview is only to highlight the key ideas that lead us to break the $O(mn)$ time barrier, and (ii) if we remove our simplifying assumption, then we get a slightly weaker guarantee than that of \Cref{lem:simple-alg-uniform-size}, which eventually gives us a runtime bound of $\tilde{O}\left( mn^{3/4}\right)$.

\begin{lemma}\label{lem:simple-alg-uniform-size}
    There is a deterministic algorithm that, given a valid forest $\calF$ satisfying \Cref{assumption:uniform-size} as input, modifies $\calF$ (by inserting/deleting some edges) in such a way that $\calF$ continues to remain a valid forest, but with  $\Omega(f^2/n)$ fewer components than before. The algorithm runs in $\tilde{O}(m)$ time.
\end{lemma}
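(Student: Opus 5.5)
We plan to implement one phase as a single pass that finds many augmenting edges at once, with their repairs confined to disjoint parts of the forest, and then to apply them all together. We start, as in the proof of \Cref{lem:alg-small-scale-main}, from the trivial molecular decomposition $\calM_{\trivial}$ and compute all $\calM_{\trivial}$-atoms in $\tilde{O}(m)$ time. We then build an auxiliary multigraph $H$ whose vertices are the components of $\calF$. For every edge $(u,v)\in E$ with $u,v$ in distinct $\calM_{\trivial}$-atoms of distinct components, we put an edge in $H$ between the two components. The plan is to select a set $S$ of such augmenting edges, one per chosen $H$-edge, satisfying three conditions. First, $S$ forms a forest in $H$, so adding it creates no cycle. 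Second, no atom is used by two edges of $S$, so the calls to \Cref{lem:degree-reduction} touch disjoint atoms and do not interfere. Third, $|S|=\Omega(f^2/n)$. Each edge of $S$ then reduces the number of components by one.

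To make the degree reductions independent, we will in fact use at most one augmenting edge per component. We greedily take a matching-like structure in $H$ (for instance a star forest or a matching) in which each component contributes at most one endpoint atom. For each selected edge $(u,v)$ we invoke \Cref{lem:degree-reduction} on the atom of $u$ and on the atom of $v$, and then insert $(u,v)$. The lemma only edits edges inside one atom, and each component is used once, so all modifications act on disjoint node sets. The edited components stay trees, so validity is preserved. The total cost is $\sum \tilde{O}(|E(G[V(C)])|)\le \tilde{O}(m)$ because the atoms are disjoint. A matching in $H$ merges pairs without creating cycles. Finding a maximal such matching takes $\tilde{O}(m)$ time.

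The main obstacle is the counting step: showing that $H$ has a matching of size $\Omega(f^2/n)$, equivalently that many components are incident to augmenting edges and that a maximal matching cannot be small. We will redo the argument of \Cref{lem:alg-small-scale-main} for a sub-collection. Let $X$ be the set of components covered by a maximal matching $\mathcal{K}$ in $H$, so $|X|\le 2|\mathcal{K}|$. Every augmenting edge has an endpoint component in $X$. We contract the components of $X$ into a single super-component, or rather treat their nodes as non-atoms, which can be charged as degree. We then apply the $T^\star$ counting to the collection consisting of the atoms of the $f-|X|$ components outside $X$, the non-reducible nodes $N$, and the nodes of $X$. Every $T^\star$-edge leaving a member of this collection touches $N$ or $V(X)$. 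This yields roughly
\[
f-|X| \;\lesssim\; 1+\sum_{x\in V(X)}\deg_{T^\star}(x)\;\le\; 1+\Delta^\star\cdot |V(X)|.
\]
By \Cref{assumption:uniform-size}, $|V(X)|\le 100\,|X|\,n/f$, which gives $|X|=\Omega\bigl(f^2/(n\Delta^\star)\bigr)$. Here a factor $\Delta^\star$ appears. To remove it, we plan to charge only to nodes of $X$ that are endpoints of augmenting edges, using that each component of $X$ is matched only once, or to refine by counting atoms rather than nodes. If this fails, we settle for the weaker bound and note that the final analysis tolerates the loss. Verifying that this charging is tight enough is the step we expect to require the most care.
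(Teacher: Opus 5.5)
\textbf{The gap is in your algorithm, not just its analysis.} You use only atom-to-atom edges w.r.t.\ $\calM_{\trivial}$ and let each component take part in at most one merge. So the number of merges is at most the size of a maximum matching in $H$, and this can be $\Theta(f^2/(n\Delta^\star))$. The factor $\Delta^\star$ you lose is therefore real, and no finer charging will remove it.

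For example, let $\calF$ consist of $f$ paths on $n/f$ nodes each, and call $k$ of them hubs. Let $E(G)$ consist of $E(\calF)$, a few non-forest edges chaining the hubs together, and, for each non-hub component, one edge from one of its nodes to a hub node. Spread these edges so that every hub node receives about $(f-k)f/(kn)$ of them. Then:
\begin{itemize}
\item Each non-hub component can reach the rest of $G$ only through a hub node, so $\Delta^\star=\Theta(f^2/(kn))$.
\item All $\calF$-degrees are at most $2\le\Delta^\star$, so every component is a single atom.
\item Every edge of $H$ is incident on a hub component.
\end{itemize}
Hence every matching in $H$ has at most $k=\Theta(f^2/(n\Delta^\star))$ edges. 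This is $o(f^2/n)$ as soon as $\Delta^\star=\omega(1)$, while \Cref{assumption:uniform-size} holds (e.g.\ $f=n^{2/3}$).

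Neither charging only to endpoints nor counting atoms can help, because the hub nodes genuinely absorb $\Theta(\Delta^\star)$ edges of $T^\star$ each. Your fallback fails too. The lemma claims $\Omega(f^2/n)$, and when $\Delta^\star\gtrsim f^2/n$ the weaker bound gives only $O(1)$ merges per $\tilde O(m)$-time phase, i.e.\ no gain over \cite{FR92}.

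\textbf{The missing idea.} Already-merged components must keep absorbing edges at nodes that still have slack. This forces a sequential (blocking-set) process instead of a one-shot matching.

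The paper keeps $\calM\subseteq\calM_{\trivial}$. Whenever an edge is used, the molecules involved are deleted from $\calM$, so their nodes become $\calM$-free. An edge from an $\calM$-atom to an $\calM$-free node $v$ of another component with $\deg_{\calF}(v)\le\Delta^\star$ also counts as augmenting (type (ii) in \Cref{def:augmenting}). Such an edge needs no degree reduction at $v$. In the hub example this attaches essentially every non-hub component.

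For the count, suppose no augmenting edge is left after $r$ rounds. Then every edge leaving an atom ends in $N\cup B$, where $B$ is the set of free nodes of degree $\Delta^\star+1$. The $T^\star$-count then gives $\Delta^\star|B|+2r\ge f/2$.

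The step that replaces your term $\Delta^\star|V(X)|$ is this: the free part is a forest on at most $200nr/f$ nodes, and every node of $B$ has $\Delta^\star+1$ incident edges in it. Hence $\Delta^\star|B|\le 2|E_B|\le 400nr/f$. This is where $\Delta^\star$ cancels, yielding $r=\Omega(f^2/n)$.

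The $\tilde O(m)$ bound comes from two facts. A single scan of $E$ suffices, because an edge that is not augmenting can never become augmenting later. And each molecule undergoes at most one degree reduction before it leaves $\calM$.
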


We devote the rest of \Cref{sec:technical-overview-key-insights} to the proof of \Cref{lem:simple-alg-uniform-size}. We start with an important definition.

\begin{definition}
\label{def:augmenting}
Let $\calF'$ be a valid forest and $\calM'$ be a molecular decomposition of $\calF$.
We refer to a non-forest edge $(u,v) \in E - E(\calF')$ as an {\bf augmenting edge} w.r.t.~$(\calF', \calM')$ iff $u, v$ belong to two different components of $\calF$, and either
\begin{itemize}
\item  (i) $u, v$ are part of two distinct $\calM'$-atoms, 
or
\item  (ii) $u$ belongs to an $\calM'$-atom and $v$ is an $\calM'$-free node with $\deg_{\calF'}(v) \leq \Delta^\star$.
\end{itemize}
\end{definition}

Our algorithm for \Cref{lem:simple-alg-uniform-size} works in {\bf rounds}. Before the very first round, we initialize $\calM$ to be a trivial molecular decomposition of the input forest $\calF$, where every component of $\calF$ becomes a special $\calM$-molecule, and there is no normal $\calM$-molecule.\footnote{Throughout the algorithm, it will continue to hold that $\calM$ is a subset of a trivial molecular decomposition of $\calF$. In other words, every $\calM$-molecule is a component of $\calF$ (i.e., a special molecule), but {\em not} the other way round (i.e., it is {\em not} necessarily the case that every component of $\calF$ belongs to  $\calM$ as a special molecule in subsequent rounds).} Subsequently,  each round works as follows.

\paragraph{Implementing a given round.} First, we determine whether or not there exists an augmenting edge w.r.t.~$(\calF, \calM)$. If the answer is no, then we {\bf terminate} the algorithm. Otherwise, if the answer is yes, then we find an augmenting edge $(u, v) \in E - E(\calF)$, where $u$ belongs to an $\calM$-atom $C_u$ which is part of a an $\calM$-molecule $M_u$ (say). Further, the nodes $u, v$ belong to different components of $\calF$. 

Now, we fork into one of the following two cases.

\medskip
\noindent 
{\em Case (i): The node $v$ belongs to an $\calM$-atom $C_v$ which is part of an $\calM$-molecule $M_v$ (say).}  

\smallskip
\noindent In this case, we  invoke the degree-reduction subroutine from \Cref{lem:degree-reduction}  on $(\calF, \calM, u)$ and $(\calF, \calM, v)$, one after another. This modifies the forest $\calF$ while keeping it valid, so as to ensure that $\deg_{\calF}(u) \leq \Delta^\star$ and $\deg_{\calF}(v) \leq \Delta^\star$. Next, we add the edge $(u, v)$ to the forest, by setting $E(\calF) \leftarrow E(\calF) \cup \{(u, v)\}$. Clearly, $\calF$ remains a valid forest even after this step, but the number of components of $\calF$ reduces by one. We now update $\calM$ by setting $\calM \leftarrow \calM - \{ M_u, M_v\}$. At this point, we start the next round.

\medskip
\noindent 
{\em Case (ii): The node $v$ is an $\calM$-free node with $\deg_{\calF}(v) \leq \Delta^\star$.}  

\smallskip
\noindent 
In this case, we  invoke the degree-reduction subroutine from \Cref{lem:degree-reduction}  on $(\calF, \calM, u)$. This modifies the forest $\calF$ while keeping it valid, so as to ensure that $\deg_{\calF}(u) \leq \Delta^\star$. Next, we add the edge $(u, v)$ to the forest, by setting $E(\calF) \leftarrow E(\calF) \cup \{(u, v)\}$. Clearly, $\calF$ remains a valid forest even after this step, but the number of components of $\calF$ reduces by one. We now update $\calM$ by setting $\calM \leftarrow \calM - \{ M_u\}$. At this point, we start the next round.

\medskip
Clearly, each round reduces the components of $\calF$ by one, while ensuring that $\calF$ remains a valid forest. Hence, \Cref{lem:simple-alg-uniform-size}  follows from  \Cref{lm:rounds:sketch} and \Cref{lm:runtime:sketch} (the reader might find it instructive to compare their proofs against that of \Cref{lem:alg-small-scale-main}). The proof of \Cref{lm:rounds:sketch} appears in \Cref{sec:lm:rounds:sketch}.

\begin{lemma}
\label{lm:rounds:sketch}
The above algorithm runs for at least $f^2/n$ rounds.
\end{lemma}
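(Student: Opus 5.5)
We argue by contradiction, mirroring the correctness proof of \Cref{lem:alg-small-scale-main}. Suppose the algorithm terminates after $r < f^2/n$ rounds. At termination, let $\calF$ be the current forest (with $f - r$ components) and $\calM$ the current set of special molecules; there is no augmenting edge w.r.t.~$(\calF,\calM)$. Each round removes at most two molecules from $\calM$, and the components of $\calF$ touched by a round are merged into a component that is no longer in $\calM$. So at least $f - 2r$ of the original components are still intact, untouched, and belong to $\calM$. Moreover, the $\calM$-free nodes are exactly the nodes of the touched components. By \Cref{assumption:uniform-size}, there are at most $2r \cdot 100 n/f < 200 f$ such nodes. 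Since $f\ge\sqrt n$, the free nodes form a set $U$ with $|U| \le 200 \cdot r n/f$.

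\textbf{Step 1: building the collection.} Let $N$ be the set of $\calM$-non-reducible nodes, and let $U_{\text{hi}} \subseteq U$ be the free nodes of degree $\Delta^\star+1$. Form the collection $\calC$ consisting of
\begin{itemize}
\item every $\calM$-atom,
\item every node of $N \cup U_{\text{hi}}$, and
\item every free node of degree $\le \Delta^\star$.
\end{itemize}
Within each intact molecule, deleting $N$ leaves atoms as components. Counting as in \Cref{eq:collection}, the $\calM$-covered part contributes at least $|\calM| + \Delta^\star |N| \ge (f-2r) + \Delta^\star|N|$ pieces.

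\textbf{Step 2: edges of $T^\star$ between pieces.} Every edge of $T^\star$ joining two distinct pieces must be incident to $N$, to $U_{\text{hi}}$, or to a low-degree free node. It cannot join two atoms, whether in the same molecule (by the atom closure) or in different components (no augmenting edge of type (i)). It also cannot join an atom to a low-degree free node (no augmenting edge of type (ii)). Hence every crossing edge is incident to $N$ or to $U$. Connectivity of $T^\star$ across the pieces gives
$$\Delta^\star|N| + \Delta^\star|U| \;\ge\; |\calC|-1 \;\ge\; (f-2r) + \Delta^\star|N| - 1 .$$
This yields $\Delta^\star |U| \ge f-2r-1$, and hence $r \gtrsim f/(\Delta^\star \cdot 200 n/f)$.

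\textbf{Main obstacle.} The bound from Step 2 loses a factor $\Delta^\star$. The fix I would try is to refine the count over $U$. Free nodes lie in touched components, which are already connected inside $\calF$, so they should be counted per merged component rather than per node. I would treat each merged component $K$ as a single "molecule-like" piece whose external $T^\star$-edges number at most $\Delta^\star|V(K)|$. I would then offset this against the fact that high-degree nodes of $K$ themselves add $\Delta^\star$ extra pieces. Concretely, I would decompose $K$ by removing $U_{\text{hi}}\cap V(K)$, exactly as non-reducible nodes were removed. Then the $\Delta^\star|U_{\text{hi}}|$ terms cancel on both sides, just as the $\Delta^\star|N|$ terms did. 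What remains are the low-degree free nodes, whose $T^\star$-edges must go to $N \cup U$. A simple $|U| \ge f - 2r - 1$ style inequality then follows. Combined with $|U| \le 200rn/f$, this gives $r \ge \Omega(f^2/n)$, with constants adjusted to meet the stated $f^2/n$. Verifying that this cancellation really removes the $\Delta^\star$ factor is where I expect the real work to lie.
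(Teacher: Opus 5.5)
\textbf{There is a genuine gap.} Your setup and Step~1 agree with the paper, and the Step~2 inequality is correct. As you note, though, it is too weak by a factor $\Delta^\star$, and the proposed fix does not repair this. An edge of $T^\star$ joining two free nodes of degree at most $\Delta^\star$ is never augmenting, so nothing excludes it. That is exactly what forced you to charge crossing edges to all of $U$ rather than to $N \cup U_{\text{hi}}$.

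The cancellation idea runs into the same obstacle. Suppose you split each merged component $K$ by deleting $U_{\text{hi}} \cap V(K)$. The resulting low-degree pieces can be linked to one another in $T^\star$ by exactly these uncharged edges. So the extra pieces you gain do not force any additional edges incident to $N \cup U_{\text{hi}}$, and the low-degree free nodes can still contribute up to $\Delta^\star |U \setminus U_{\text{hi}}|$ crossing edges. Consequently, the claimed inequality of the form $|U| \geq f - 2r - 1$ does not follow.

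\textbf{The missing idea.} You should bound $\Delta^\star |U_{\text{hi}}|$ from above using the forest $\calF$, instead of trying to cancel it. The steps are as follows.
\begin{enumerate}
\item Keep in $\calC$ only the $\calM$-atoms and the nodes of $N$, as the paper does; drop the low-degree free singletons.
\item Your own case analysis already shows that any edge leaving an atom ends in $N \cup U_{\text{hi}}$. Hence every $T^\star$-edge leaving a piece of $\calC$ is incident to $N \cup U_{\text{hi}}$.
\item This gives $\Delta^\star|N| + \Delta^\star|U_{\text{hi}}| \geq |\calC| - 1 \geq f - 2r + \Delta^\star|N| - 1$, that is, $\Delta^\star |U_{\text{hi}}| \geq f - 2r - 1$.
\item Each $v \in U_{\text{hi}}$ has $\Delta^\star + 1$ incident edges of $\calF$. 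All of them lie inside the touched components, which consist entirely of free nodes and form a forest on $|U|$ nodes. This is the handshake step, and it is the paper's key step: $\Delta^\star |U_{\text{hi}}| \leq \sum_{v \in U_{\text{hi}}} \deg_{\calF}(v) \leq 2|U| \leq 400rn/f$.
\item Combining the two bounds yields $400rn/f + 2r + 1 \geq f$, and hence $r = \Omega(f^2/n)$.
\end{enumerate}

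\textbf{On the constant.} The paper's proof obtains $r \geq f^2/(1000n)$, not $f^2/n$, and no adjustment of constants in this argument gives exactly $f^2/n$. Only $\Omega(f^2/n)$ is needed downstream.
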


\begin{lemma}
\label{lm:runtime:sketch}
The total runtime of the above algorithm, across all the rounds, is at most $\tilde{O}(m)$.
\end{lemma}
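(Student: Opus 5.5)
The plan is to charge every unit of work either to a single edge of $G$ or to a single $\calM$-atom / $\calM$-molecule. I will then argue that each edge and each atom is charged only $\tilde{O}(1)$ times, or proportionally to its size, over the whole run. There are three sources of cost:
\begin{itemize}
\item the initial computation of all $\calM$-atoms;
\item the calls to the degree-reduction subroutine of \Cref{lem:degree-reduction};
\item the search for an augmenting edge in each round.
\end{itemize}

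\textbf{Initial atoms.} At the start $\calM$ is the trivial decomposition. Computing all $\calM$-atoms is exactly the preprocessing step of \Cref{lem:alg-small-scale-main}, which takes $\tilde{O}(m)$ time.

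The key structural fact is that the $\calM$-atoms are never recomputed. Every round only deletes molecules from $\calM$ (namely $M_u$, and $M_v$ in Case (i)). The forest is modified only inside the atoms $C_u, C_v$ of these deleted molecules, and by adding the edge $(u,v)$. So every surviving $\calM$-molecule, together with its atoms, is literally unchanged.

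\textbf{Degree reductions.} Each call of \Cref{lem:degree-reduction} on an atom $C$ costs $\tilde{O}(|E(G[V(C)])|)$. Immediately afterwards, the molecule containing $C$ leaves $\calM$, so each atom is used at most once. Atoms are node-disjoint, so their induced edge sets are disjoint. Hence the total cost of all degree reductions is $\tilde{O}(m)$.

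\textbf{Finding augmenting edges.} This is the step I expect to be the main obstacle, since naively rescanning all edges every round costs $O(m)$ per round. I will maintain:
\begin{itemize}
\item a union--find structure over the components of $\calF$, merged whenever $(u,v)$ is inserted;
\item for each node, a flag recording whether it is in a surviving $\calM$-atom or is $\calM$-free;
\item its current $\deg_\calF$;
\item a queue $Q$ of candidate edges.
\end{itemize}

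Initially $Q$ contains all of $E$. In each round we repeatedly pop an edge and test the conditions of \Cref{def:augmenting} in $\tilde{O}(1)$ time. If the edge fails, it is discarded. If it succeeds, the round proceeds. When a molecule $M$ is deleted from $\calM$, all its nodes become $\calM$-free, and we push every edge incident to $V(M)$ back into $Q$. Since each molecule is deleted at most once, the re-insertions total $\sum_M \sum_{x\in V(M)}\deg_G(x) = O(m)$. Popping and testing therefore costs $\tilde{O}(m)$ overall, including the final certification that no augmenting edge remains, which happens when $Q$ empties.

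\textbf{Correctness of lazy discarding.} It remains to justify the discard rule: an edge that is not augmenting at some moment cannot become augmenting later unless one of its endpoints is freed, in which case it is re-inserted anyway. I will check monotonicity of each ingredient, assuming no endpoint is freed in between.
\begin{itemize}
\item Components only merge, so ``different components'' can only switch from true to false.
\item The set of nodes lying in surviving atoms only shrinks, and a node leaves it only by being freed.
\item Since neither endpoint is freed, the set of $\calM$-free nodes among the endpoints is fixed.
\item The degree of a node that is already free is never touched by degree reductions, which act only inside atoms. It can only increase, when it serves as the free endpoint $v$ in Case (ii). So ``free with $\deg_\calF \le \Delta^\star$'' can only switch from true to false.
\end{itemize}
Hence both clause (i) and clause (ii) of \Cref{def:augmenting} are monotonically non-increasing for such an edge, which validates the lazy scan. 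Summing the three sources of cost gives $\tilde{O}(m)$ total time, as claimed in \Cref{lm:runtime:sketch}.
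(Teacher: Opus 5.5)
Your argument is correct. Its accounting for the degree-reduction calls matches the paper's: each $\calM$-atom is used once, after which its molecule leaves $\calM$, and atoms are node-disjoint.

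The difference is in how you search for augmenting edges. The paper relies on a stronger invariant: an edge that is not augmenting at the start of some round never becomes augmenting later, \emph{even if its endpoints are subsequently freed}. So a single pass over $E$, resuming where it left off, suffices, with no re-insertions. You prove monotonicity only while neither endpoint is freed. You then pay for freeing events by re-pushing every edge incident to a deleted molecule. This costs $O(m)$ in total, because the molecules ever in $\calM$ are the node-disjoint initial components and each is deleted at most once.

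Your route avoids a case analysis that the paper's stronger claim needs but does not spell out. For example, a freed $\calM$-non-reducible node still has degree $\Delta^\star+1$, since neither degree reduction nor the inserted edge $(u,v)$ ever touches it. Likewise, if a freed endpoint was $\calM$-reducible and its partner lay in an atom of another component, the edge was already augmenting by clause (i). Your approach would also survive weaker invariants; the paper's approach buys a simpler one-pass implementation.

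You also make explicit two points the paper's sketch takes for granted: the initial $\tilde{O}(m)$ computation of the $\calM$-atoms, and the fact that the atoms of surviving molecules never need to be recomputed.
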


\begin{proof}(Sketch)
Immediately after we invoke the degree-reduction subroutine (see \Cref{lem:degree-reduction}) on some $\calM$-atom (say) $C$, the $\calM$-molecule containing $C$ gets deleted from $\calM$. Thus the total time spent on all the calls to the degree reduction subroutine, across all the rounds, is at most $\tilde{O}(m)$. Moreover, it is easy to ensure that the total time spent on identifying the augmenting edges across all the rounds is bounded by $\tilde{O}(m)$: This is because if an edge $(u, v) \in E$ is {\em not} augmenting at the start of a given round, then it can never become augmenting at the start of any subsequent round. Thus, all we need to do is keep scanning through the edges in $E$ in an arbitrary order, while checking whether or not the current edge being scanned is an augmenting edge. If yes, then we update $\calF$ and $\calM$ accordingly and move to the next round by continuing our scan from the position next to where we left off.
\end{proof}

\subsubsection{Proof of \Cref{lm:rounds:sketch}}
\label{sec:lm:rounds:sketch}
Fix any integer $r$ which satisfies 
\begin{equation}
\label{eq:r}
0 \leq r < f^2/(1000n).
\end{equation}
Suppose that the algorithm has  implemented the first $r$ rounds. We will show that at the end of round $r$, there must necessarily exist at least one augmenting edge w.r.t.~$(\calF, \calM)$, and so the algorithm will also be successful in implementing the next round $r+1$. 
This will conclude the proof of \Cref{lm:rounds:sketch}.

\paragraph{Basic Notations.} We use the superscript $r$ to denote the state of an object at the end of round $r$.  
Accordingly, let $\calF^{(r)}$ and $\calM^{(r)}$ respectively denote the state of  $\calF$ and  $\calM$ at the end of round $r$, and let $N^{(r)} \subseteq V$ denote the set of all $\calM^{(r)}$-non-reducible nodes. 
Furthermore, let $B^{(r)}$ denote the set of all $\calM^{(r)}$-free nodes $v$ with $\deg_{\calF^{(r)}}(v) = \Delta^\star+1$, and let $E_B^{(r)} \subseteq E\left(\calF^{(r)}\right)$ denote the set of all edges in $\calF^{(r)}$ that are incident on at least one node in $B^{(r)}$. 
We define $\calF^{(r)}_{\text{mol}} \subseteq \calF^{(r)}$ to be a subgraph of $\calF^{(r)}$, which consists of the union of all the $\calM^{(r)}$-molecules.\footnote{Recall that each $\calM^{(r)}$-molecule is a component of $\calF^{(r)}$, but not vice versa.} 
Thus, we have 
$$V\left( \calF^{(r)}_{\text{mol}}\right) = \bigcup_{M \in \calM^{(r)}} V(M) \text{ and } E\left( \calF^{(r)}_{\text{mol}} \right) = \bigcup_{M \in \calM^{(r)}} E(M).$$ 
Finally, we define $\calF_{\text{free}}^{(r)} \subseteq \calF^{(r)}$ to be a subgraph of $\calF^{(r)}$, which consists of all the components of $\calF^{(r)}$ that are {\em not} part of $\calF^{(r)}_{\text{mol}}$. 
Thus we have $V\left(\calF_{\text{free}}^{(r)}\right) =  V - V\left(\calF_{\text{mol}}^{(r)}\right)$ and $E\left(\calF_{\text{free}}^{(r)}\right) =  E(\calF) - E\left(\calF_{\text{mol}}^{(r)}\right)$.
Clearly, the set $V\left(\calF_{\text{free}}^{(r)}\right)$ consists of all the $\calM^{(r)}$-free nodes. 
It is easy to verify that
\begin{equation}
\label{eq:verify}
N^{(r)} \subseteq V\left(\calF^{(r)}_{\text{mol}} \right), B^{(r)} \subseteq V\left(\calF_{\text{free}}^{(r)}\right) \text{ and } E_B^{(r)} \subseteq E\left(\calF_{\text{free}}^{(r)}\right).
\end{equation}

\paragraph{Bounding the Number of Edges in $E_B^{(r)}$.} Each round of the algorithm merges two components of $\calF$ into one, and then ensures that each node in the two merged components become $\calM$-free. 
Thus, under \Cref{assumption:uniform-size}, the number of $\calM$-free nodes increases by at most $2 \cdot 100 \cdot n/f \leq 200 n/f$ per round. Thus, from \Cref{eq:verify}, we infer that $\left| E_B^{(r)} \right| \leq \left| E\left(\calF_{\text{free}}^{(r)}\right) \right| \leq \left| V\left(\calF_{\text{free}}^{(r)}\right) \right| - 1 \leq 200 nr/f$; the second inequality holds because $\calF^{(r)}_{\text{free}}$ is a subgraph of the forest $\calF^{(r)}$. To summarize, we have 
\begin{equation}
\label{eq:free:100}
200 nr/f \geq  \left|E_B^{(r)} \right|.
\end{equation}

\paragraph{The Main Argument.} If we delete all the nodes in  $N^{(r)}$ from $\calF^{(r)}_{\text{mol}}$ (see \Cref{eq:verify}), then every component in the resulting forest would be an $\calM^{(r)}$-atom. 
Now, consider the collection $\calC^{(r)}$, which consists of all the $\calM^{(r)}$-atoms of $\calF^{(r)}$, {\em plus} all the nodes in $N^{(r)}$.
Observe that $\calF^{(r)}_{\text{mol}}$ consists of at least $f-2r$ components, as each round reduces the number of  $\calM$-molecules by at most $2$.
Since $\deg_{\calF^{(r)}}(u) = \deg_{\calF^{(r)}_{\text{mol}}}(u) = \Delta^\star+1$ for all $u \in N^{(r)}$, we infer that  
\begin{equation}
\label{eq:collection:100}
\left|\calC^{(r)} \right| \geq f - 2r + \Delta^\star \cdot \left| N^{(r)} \right|.
\end{equation}

Now, for the sake of contradiction, suppose that there is no augmenting edge w.r.t.~$\left(\calF^{(r)}, \calM^{(r)}\right)$. In other words: (i) there is no edge $(u,v) \in E$ where $u$ and $v$ belong to two distinct $\calM^{(r)}$-atoms in two distinct components of $\calF^{(r)}$, and (ii) there is no edge $(u,v) \in E$ where $u$ is in an $\calM^{(r)}$-atom and $v$ is $\calM^{(r)}$-free with $\deg_{\calF^{(r)}}(v) \leq \Delta^\star$. Also, by definition, there is no edge in $E$ which connects two distinct $\calM^{(r)}$-atoms within the same component of $\calF^{(r)}$. Overall, this leads us to the following observation.

\begin{observation}
\label{obs:ex-abs}
Consider any edge $(u, v) \in E$ such that $u$ belongs to an $\calM^{(r)}$-atom (say) $C_u$ and $v \notin V(C_u)$. Then, it must be the case that  $v \in N^{(r)} \cup B^{(r)}$.
\end{observation}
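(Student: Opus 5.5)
The plan is a case analysis on where $v$ lies, using the assumption that no augmenting edge exists w.r.t.~$\left(\calF^{(r)}, \calM^{(r)}\right)$ together with the definition of $\calM^{(r)}$-atoms. Fix an edge $(u,v) \in E$ with $u \in V(C_u)$ for an $\calM^{(r)}$-atom $C_u$ and $v \notin V(C_u)$. Every node of $V$ falls into exactly one of three classes: $\calM^{(r)}$-reducible (i.e., in some $\calM^{(r)}$-atom), $\calM^{(r)}$-non-reducible (i.e., in $N^{(r)}$), or $\calM^{(r)}$-free. In the second case we are already done. So it suffices to rule out that $v$ lies in another atom, and to show that if $v$ is free then $v \in B^{(r)}$.

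\textbf{Case 1: $v$ lies in an $\calM^{(r)}$-atom $C_v \neq C_u$.} We split according to whether $C_u$ and $C_v$ lie in the same component of $\calF^{(r)}$.
\begin{itemize}
\item \textbf{Different components.} Then $(u,v)$ cannot be a forest edge. Hence it is a non-forest edge joining two distinct atoms in different components, i.e., an augmenting edge of type (i) in \Cref{def:augmenting}. This contradicts our assumption.
\item \textbf{Same component.} Since every $\calM^{(r)}$-molecule is a whole component of $\calF^{(r)}$ (a special molecule), $C_u$ and $C_v$ lie in the same molecule $M$. The atom-building procedure on $M$ only terminates when no edge of $E$, forest or non-forest, joins two different atoms of $M$. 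The edge $(u,v)$ would therefore have triggered a merge, a contradiction.
\end{itemize}

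\textbf{Case 2: $v$ is $\calM^{(r)}$-free.} Then $v$ lies in a component of $\calF^{(r)}$ that is not in $\calM^{(r)}$, whereas $u$ lies in an $\calM^{(r)}$-molecule, which is a full component. So $u$ and $v$ are in different components, and $(u,v) \notin E\left(\calF^{(r)}\right)$. If $\deg_{\calF^{(r)}}(v) \leq \Delta^\star$, then $(u,v)$ is augmenting of type (ii), again a contradiction. Since $\calF^{(r)}$ is valid, the only remaining possibility is $\deg_{\calF^{(r)}}(v) = \Delta^\star+1$, which is exactly the statement $v \in B^{(r)}$.

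The only delicate point is the same-component subcase of Case 1. It relies on two facts: the merging rule in the atom construction applies to \emph{any} edge of $E$, not just non-forest edges; and, throughout this algorithm, $\calM^{(r)}$ contains only special molecules, so "same component" coincides with "same molecule." Everything else is a direct unpacking of the definitions.
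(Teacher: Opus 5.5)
Your proof is correct and follows the paper's own argument: the paragraph preceding the observation runs the same case analysis. Distinct atoms in different components are ruled out by the absence of type (i) augmenting edges, distinct atoms in the same component by the termination rule of the atom construction, and $\calM^{(r)}$-free endpoints with degree at most $\Delta^\star$ by the absence of type (ii) augmenting edges, which leaves only $v \in N^{(r)} \cup B^{(r)}$. Your explicit remark that every $\calM^{(r)}$-molecule is a whole component of $\calF^{(r)}$, so that ``same component'' means ``same molecule'' and a free $v$ makes $(u,v)$ a non-forest edge between different components, is a detail the paper leaves implicit.
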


Now, consider the optimal spanning tree $T^\star$ of $G$, which has maximum degree $\Delta^\star$. Clearly, this spanning tree $T^\star$ connects the $\calM^{(r)}$-atoms and the nodes in $N^{(r)}$ together with the rest of the nodes of $G$. Thus, there are at least $\left|\calC^{(r)}\right| - 1$ edges in $T^\star$ that have one endpoint in an $x \in \calC^{(r)}$ ($x$ is either a node in $N^{(r)}$ or an $\calM^{(r)}$-atom) and the other endpoint outside of $x$.  \Cref{obs:ex-abs} implies that each such edge is incident on  $N^{(r)} \cup  B^{(r)}$. Thus, from \Cref{eq:collection:100}, we get
\begin{equation*}
 \Delta^\star \cdot \left|N^{(r)} \right| + \Delta^\star \cdot \left| B^{(r)} \right| \geq \left|\calC^{(r)} \right| - 1 \geq f -2r + \Delta^\star \cdot \left|N^{(r)} \right| - 1.
\end{equation*}
Since $f \gg 1$ (see \Cref{assumption:uniform-size}), rearranging the terms in the above inequality, we infer that
\begin{equation}
\label{eq:collection101}
\Delta^\star \cdot \left| B^{(r)} \right| + 2r \geq f/2.
\end{equation}
By definition, each node in $B^{(r)}$ has $(\Delta^\star+1)$ many incident edges in   $\calF^{(r)}$. Thus, we get
\begin{equation}
\label{eq:collection102}
2 \cdot \left| E_B^{(r)} \right| \geq \sum_{v \in B^{(r)}} \deg_{\calF^{(r)}}(v) \geq \Delta^\star \cdot \left| B^{(r)} \right|. 
\end{equation}
Since $n/f \geq 1$, from \Cref{eq:free:100}, \Cref{eq:collection101} and \Cref{eq:collection102}, we now infer that
\begin{eqnarray*}
500 nr/f \geq 400 nr/f + 2r  \geq 2 \cdot \left| E_B^{(r)} \right| + 2r \geq \Delta^\star \cdot \left| B^{(r)} \right| + 2r \geq f/2.
\end{eqnarray*}

Rearranging the terms, we now get $r \geq f^2/(1000 n)$, which contradicts \Cref{eq:r}. Thus, our assumption that there is no augmenting edge w.r.t.~$\left(\calF^{(r)}, \calM^{(r)} \right)$ must be wrong. This concludes the proof of \Cref{lm:rounds:sketch}.

\subsection{Overview of Our Algorithm (Without Any Simplifying Assumption)}\label{sec:technical-overview-overcome-assumption}

Let us highlight three crucial properties that we used to derive the algorithm for \Cref{lem:simple-alg-uniform-size}. 
\begin{enumerate}
\item Under \Cref{assumption:uniform-size}, the process of degree reduction of an $\calM$-reducible node $u$ (see \Cref{lem:degree-reduction}) only affects $O(n/f)$ nodes of the forest $\calF$, as the component of $\calF$ containing $u$ has size  $O(n/f)$.
\item We compute a {\em blocking set} of augmenting edges in $\tilde{O}(m)$ time (see \Cref{lm:rounds:sketch}). Specifically, there does {\em not} exist any augmenting edge at the end of the last round of the algorithm for \Cref{lem:simple-alg-uniform-size}.
This is reminiscent of the classical blocking flow algorithm for the maxflow problem.
\item Any {\em arbitrary}  blocking set of augmenting edges (not necessarily the one returned by the algorithm) is of size $\Omega(f^2/n)$.
This follows from \Cref{lm:rounds:sketch}.  
\end{enumerate}
In the remainder of this section, we present a high-level overview of the main challenges that arise while attempting to ensure these properties in the general case (when \Cref{assumption:uniform-size} might not hold), and explain how we overcome these challenges.

\medskip
\noindent 
\textbf{Challenge I: Non-Uniform Components.} In the absence of \Cref{assumption:uniform-size}, we need to deal with the fact that some components of the initial forest $\calF$ might not be of size $O(n/f)$.
As a result, a single call to the degree-reduction subroutine (see \Cref{lem:degree-reduction})  might even affect $\Omega(n)$ nodes if the concerned component of $\calF$ happens to  be of size $\Omega(n)$.

To address this challenge, we chop up  every ``large'' (of size $\Omega(n/f)$) component $C$ of $\calF$ into a collection of mutually node-disjoint ``small'' molecules, each of size $O(n/f)$.
We call the resulting structure  a {\bf $\theta$-molecular decomposition} (see \Cref{sec:algorithm}), with $\theta = O(n/f)$.
We work with the atoms  w.r.t.~these small molecules.
Thus, even if a component $C$ of $\calF$ is large, when we reduce the degree of a node $u$ in $C$, it only affects at most $O(n/f)$ nodes in the molecule containing $u$, and it remains possible in future to reduce the degrees of other nodes that are in different small molecules of $C$.

\medskip
\noindent 
\textbf{Challenge II: Size of a Blocking Set of Augmenting Edges.}
Unfortunately,  we can no longer lower  bound the size of a blocking set of augmenting edges as in \Cref{lm:rounds:sketch}, for the following reason.
It might happen that a node $v$ in a component $C$ of $\calF$ is reducible w.r.t.~the entire component $C$, but is {\em not} reducible w.r.t.~the concerned small molecule of size $O(n/f)$ which contains $v$ in our $\theta$-molecular decomposition.
As a result, lots of augmenting edges (as per \Cref{def:augmenting}) can no longer be considered as augmenting.

To address this challenge, we define an intricate object called an {\bf augmenting chain} (see \Cref{def:alt-chain} and \Cref{def:aug-chain}). To get some intuition behind this definition, consider a non-forest edge $(u, v) \in E - E(\calF)$ between an $\calM$-reducible node $u$, and an $\calM$-non-reducible node $v$ which  lies outside of the component of $u$.
Let us refer to such an edge $(u, v)$ as a \textit{bad} edge. Note that if 
 we encounter a bad edge $(u,v)$, then  we can try to perform the following operations.

\medskip
\noindent 
\textbf{Tiny Swap:}
\begin{enumerate}
\item Insert the edge $(u,v)$ into the forest $\calF$. \item Consider a  sub-tree $T^{\calF}_{xv}$ of $\calF$, where $(x, v) \in E(\calF)$. Delete the edge $(x,v)$ from $\calF$, so that  the degree of $v$ remains $\leq \Delta^\star+1$ and the tree $T^{\calF}_{xv}$ gets disconnected from the rest of the forest $\calF$.
\item Try to search for edges that can reconnect $T^{\calF}_{xv}$ back to the rest of the forest $\calF$.
\end{enumerate}

Morally, an augmenting chain consists of a sequence of tiny swaps as described above, until we find an edge connecting two molecules which is an augmenting edge.

We derive an analogue of  \Cref{lm:rounds:sketch}, by considering augmenting chains of length $O(n/f)$.
More precisely, we show that any {\em blocking set} of  $O(n/f)$-length augmenting chains has size at least $\Omega(f^3/n^2)$ (see \Cref{key-lemma}). 
Here, the phrase ``blocking set'' refers to the property that we can {\em not} find any augmenting chain of length $O(n/f)$, after we have successively applied all the  chains in to the concerned set.
Note that this bound is slightly weaker than \Cref{lm:rounds:sketch}. This is because applying an augmenting chain of length $\Omega(n/f)$ can affect $\Omega(n^2/f^2)$ nodes, but applying an augmenting edge only affects $O(n/f)$ nodes.
But this weaker guarantee is still sufficient for us to break the $O(mn)$ time barrier.

\medskip
\noindent 
\textbf{Challenge III: Finding a Blocking Set of Augmenting Chains Algorithmically.}
Our definition of an augmenting chain is quite complex, and searching for such objects is non-trivial, especially since we need to find a blocking set of them within a limited time.
Moreover, after improving the forest $\calF$ by some tiny swaps, the validity of the remaining augmenting chains may be cast into doubt.

We provide an intricate algorithm (see \Cref{sec:description-subroutine-main}) that searches for augmenting chains by exploring potential tiny swaps sequentially for $n/f$ layers, until it finds augmenting chains of length $\simeq n/f$.
The algorithm is involved, since we do not know in advance which layer of tiny swaps an edge $(u, v)$ might contribute to.
Along the way, we need to prove some interesting structural properties of augmenting chains.
The resulting procedure which addresses Challenge III runs in  $\tilde{O}(mn/f)$ time, where the extra $n/f$ factor overhead in runtime (in comparison with \Cref{lem:simple-alg-uniform-size}) arises because the augmenting chains are now of length $O(n/f)$. 

\medskip
\noindent
\textbf{Putting Everything Together.}
As per the above discussions, we bypass  \Cref{assumption:uniform-size}, and provide an algorithm with guarantees analogous to that of \Cref{lem:simple-alg-uniform-size}. In $\tilde{O}(mn/f)$ time, this allows us to reduce the number of components of the forest $\calF$ by $\Omega(f^3/n^2)$.
Overall, this leads to an additive one approximate minimum degree spanning tree algorithm that runs in $\tilde{O}(mn^{3/4})$ time  (see \Cref{sec:algorithm}).

\section{Alternating and Augmenting Chains}\label{sec:alt-aug-chains}

Our definitions of alternating and augmenting chains are a bit intricate. To ease into them, we start with a few helpful terminologies. Consider any {\em valid} forest $\calF$, and any molecular decomposition $\calM$ of $\calF$. For every pair of distinct nodes $x, y \in V$ within the same component of $\calF$, we refer to  $T_{x \leftarrow y}^{\calF}$ as an {\bf $\calM$-block} iff the following two conditions hold:
\begin{itemize}
    \item $y$ is either $\calM$-non-reducible, or the root of a normal $\calM$-molecule.
    \item $T^\calF_{x \leftarrow y}$ is contained in an $\calM$-molecule.
\end{itemize}

\begin{observation}
\label{obs:block}
   Consider any molecular decomposition $\calM$ of a valid forest $\calF$. For every $\calM$-atom $C$ and $\calM$-block $T_{x \leftarrow y}^\calF$, either $V(C) \subseteq V\left(T_{x \leftarrow y}^\calF\right)$ or $V(C) \cap V\left(T_{x \leftarrow y}^\calF\right) = \emptyset$. In other words, an $\calM$-block cannot have a non-trivial intersection with an $\calM$-atom.
\end{observation}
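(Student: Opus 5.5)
The plan is to argue by contradiction, using only two facts: each $\calM$-atom is connected, and removing $y$ from the forest separates $T^\calF_{x \leftarrow y}$ from everything else. Suppose $C$ is an $\calM$-atom with $V(C) \cap V(T^\calF_{x\leftarrow y}) \neq \emptyset$ and $V(C) \not\subseteq V(T^\calF_{x\leftarrow y})$. First I record that each $\calM$-atom $C$ is a connected sub-tree of $\calF$. The atom-building procedure only merges atoms together with the bad nodes lying on a tree path $P^T_{x,y}$ that connects them, so connectivity in $\calF$ is preserved at every step. Moreover, $C$ lies inside a single $\calM$-molecule $M$.

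Next I use the tree structure. Let $(x',y)$ be the edge incident on $y$ in $P^\calF_{x,y}$, so that $T^\calF_{x\leftarrow y} = T^\calF_{x'y}$. Any $\calF$-path from a node inside $T^\calF_{x'y}$ to a node outside it must use the edge $(x',y)$, and hence must pass through $y$, where $y \notin V(T^\calF_{x'y})$. Since $C$ is a connected sub-tree that meets both sides, the tree path inside $C$ between a node on each side contains $y$. So $y \in V(C)$, which means $y$ is $\calM$-reducible. This already excludes the first alternative in the definition of an $\calM$-block, namely that $y$ is $\calM$-non-reducible.

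It remains to handle the case where $y$ is the root of a normal $\calM$-molecule $M' = T^\calF_{zy}$. By the molecular decomposition conditions, $y$ lies in no $\calM$-molecule other than $M'$. Because $y \in V(C) \subseteq V(M)$, this forces $M = M'$. The block $T^\calF_{x\leftarrow y}$ is contained in some $\calM$-molecule, and that molecule cannot be $M'$, since $y \in V(M')$ would then be a node of the molecule lying outside the sub-tree. I still need to rule this out carefully: a molecule is connected, and if it contained $T^\calF_{x\leftarrow y}$ together with $y$, then $y$ would be part of it. Instead, I argue that the block sits inside some $\calM$-molecule $M''$, and that $C$ meets the block, so $V(C) \cap V(M'') \neq \emptyset$. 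Molecules are node-disjoint and $C \subseteq M$, so $M'' = M = M'$. Then $M'$ contains both $T^\calF_{x\leftarrow y}$ and $y$. As $M' = T^\calF_{zy}$ is connected, it also contains the edge $(x',y)$, so $x'$ and $y$ both belong to $T^\calF_{zy}$.

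The contradiction comes from how $T^\calF_{zy}$ is built. This normal molecule is obtained by cutting the edge $(z,y)$ and keeping $y$'s side, so inside $M'$ the node $y$ is attached to its neighbours other than $z$. Then $T^\calF_{x'y}$ is the part of $M'$ hanging off $y$ via $x'$, with $x' \neq z$. I expect this to be the main obstacle: reconciling "the block is contained in a molecule" with $y$ being the root. Where this is unclear, I would fall back on the atom procedure itself. Within the run on molecule $M'$, every atom grows only along paths in $M'$. An atom containing $y$ and nodes of $T^\calF_{x'y}$ together with nodes outside it is allowed, so the contradiction must come from the block condition, read as saying that $T^\calF_{x\leftarrow y}$ is a whole side of $M'$ at a non-reducible or root node. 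I would then check the intended reading of the definition, namely that $y$ is the root of the very molecule containing the block with $x$ on the root's side, which gives $V(C)\subseteq V(T^\calF_{x\leftarrow y})$ directly.
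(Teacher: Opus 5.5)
\textbf{The gap is the root case: you never close it, and your attempt rests on a misreading of the definitions.}

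Your first step is sound and is exactly the paper's argument. Atoms are connected sub-trees of $\calF$, and every path from $T^\calF_{x\leftarrow y}$ to the rest of its component passes through $y \notin V(T^\calF_{x\leftarrow y})$. So an $\calM$-atom meeting both sides must contain $y$. This settles the case where $y$ is $\calM$-non-reducible.

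\medskip
\noindent\textbf{Where the root case goes wrong.} The root $y$ of a normal molecule $T^\calF_{zy} = T^\calF_{z\leftarrow y}$ is \emph{not} a node of that molecule. $T^\calF_{zy}$ is the side containing $z$ after deleting the edge $(z,y)$, not ``$y$'s side'' as you write. The second condition of a molecular decomposition says the root lies in no other $\calM$-molecule. Combined with $y \notin V(T^\calF_{zy})$, this means $y$ lies in no $\calM$-molecule at all, i.e., $y$ is $\calM$-free. Every $\calM$-atom is contained in an $\calM$-molecule, so $y \notin V(C)$. This contradicts $y \in V(C)$ immediately, which is why the paper calls this case trivial.

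\medskip
\noindent\textbf{Why the reading matters.} Under your reading (root inside $M'$, with $M'$ being $y$'s side), the statement would genuinely fail. You observe this yourself when you note that an atom through $y$ meeting both sides ``is allowed''. So deferring to ``the intended reading of the definition'' is not a minor check; it is precisely the missing step. Your closing claim that $V(C)\subseteq V(T^\calF_{x\leftarrow y})$ follows ``directly'' is therefore unsupported.

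\medskip
\noindent\textbf{How to finish.} With the correct reading, you need neither the condition that the block lies in an $\calM$-molecule nor any identification of $M$, $M'$, $M''$. In both cases of the block definition, $y$ belongs to no $\calM$-atom, and your separation argument then completes the proof.
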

\begin{proof}
    If the node $y$ is the root of a normal $\calM$-molecule, then the observation is trivial.
    Thus, for the rest of the proof, suppose that the node $y$ is $\calM$-non-reducible and belongs to some molecule $T$ of $\calM$.
    Each $\calM$-atom in $T$ is a connected sub-tree of $T$, and any two distinct $\calM$-atoms are mutually node-disjoint. By definition, the node $y$ is not in any of these $\calM$-atoms.
    This implies that $T^\calF_{x \leftarrow y}$ can not have a non-trivial intersection with any $\calM$-atom.
\end{proof}

Throughout the rest of this paper, we say that a triple $(\calF, \calM, D)$ is a {\bf valid configuration} iff $\calF$ is a {\em valid} forest, $\calM$ is a molecular decomposition of $\calF$, and $D$ is a {\em subset} of $\calM$-free nodes satisfying the following condition: Every node $u \in D$ has $\deg_{\calF}(u) = \Delta^\star$. We refer to the nodes in $D$ as {\bf dirty} w.r.t.~$(\calF, \calM, D)$.
We are now ready to define the notion of an alternating chain.

\begin{definition}[Alternating Chain]\label{def:alt-chain}
Let $(\calF, \calM, D)$ be a valid configuration and $\ell \geq 0$\footnote{In the case $\ell=0$, the sequence $P$ consists only of $(w_0)$, and the only conditions that matter is \Cref{property-z0}, and the definition of $0^\th$ $\calM$-block of $P$.} be an integer.
We call a sequence of {\em distinct} nodes $P = (w_0, z_1, w_1, z_2, w_2, \ldots, z_\ell, w_\ell)$  an {\bf alternating chain} of {\bf length} $\ell$ w.r.t.~$(\calF, \calM, D)$ iff  the following properties hold.
\begin{enumerate}
    \item\label{property-z0}
    $w_0$ belongs to an $\calM$-atom within a special $\calM$-molecule.

    \item\label{property-wi-zi-connection}
    For every $i \in [1, \ell]$:
    \begin{itemize}
    \item{(a)} $w_i$ and $z_i$ are in the same connected component of $\calF$.

    \item{(b)} $T_{w_i \leftarrow z_i}^\calF$ is an $\calM$-block.
    
    \item{(c)} $w_i$ belongs to an $\calM$-atom.

  \end{itemize}

 (For all $i \in [0, \ell]$, we refer to the  $\calM$-atom containing $w_i$ as the {\bf $i^\th$ critical  $\calM$-atom of $P$}. Further, for all $i \in [1, \ell]$, we refer to $T_{w_i \leftarrow z_i}^\calF$ as the {\bf $i^{\th}$ $\calM$-block of $P$}. Finally, for ease of exposition, we refer to the special $\calM$-molecule which contains $w_0$ as the {\bf $0^{\th}$ $\calM$-block of $P$}.)

    \item\label{property-wi-outside-jth-block} For all $i, j \in [0, \ell]$ with $i > j$, the node $w_i$ lies outside of the  $j^\th$ $\calM$-block.

    \item\label{property-non-forest-edge} For all $i \in [1, \ell]$, we have $(w_{i-1}, z_i) \in E- E(\calF)$, i.e., $(w_{i-1}, z_i)$ is a non-forest edge.
\end{enumerate}
\end{definition}

It is easy to see that the set of all $\calM$-blocks has a laminar structure, especially, the $\calM$-blocks of an alternating chain satisfy the property as summarized in the observation below. We will use this observation throughout the analysis of our algorithm. 
\begin{observation}
\label{obs:key} Let $P$ be an alternating chain of length $\ell$ w.r.t.~a valid configuration $(\calF, \calM, D)$. 
Then, for $0 \leq j < i \leq \ell$, either the node-sets of $j^\th$ and $i^\th$ $\calM$-blocks of $P$  are mutually disjoint, or the $j^\th$ $\calM$-block is a subgraph of the $i^\th$ $\calM$-block.
\end{observation}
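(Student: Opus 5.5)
The plan is to use the tree structure of $\calF$ together with Property~\ref{property-wi-outside-jth-block} of \Cref{def:alt-chain}, namely that $w_i \notin V(B_j)$ whenever $i>j$. Write $B_k$ for the $k^\th$ $\calM$-block of $P$. The argument splits into two cases according to whether $j=0$ or $j \geq 1$.

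\emph{Case $j=0$.} Here $B_0$ is the special $\calM$-molecule containing $w_0$, which is a whole component of $\calF$. The block $B_i = T^\calF_{w_i \leftarrow z_i}$ is connected and lies in a single component of $\calF$. Hence it is either inside the component $B_0$, or disjoint from it. In the first case $w_i \in V(B_0)$, which contradicts Property~\ref{property-wi-outside-jth-block}. So $B_i$ and $B_0$ are disjoint.

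\emph{Case $j\ge 1$.} Both $B_j = T^\calF_{w_j\leftarrow z_j}$ and $B_i = T^\calF_{w_i\leftarrow z_i}$ are of the form $T^\calF_{ab}$ for forest edges $(a,b)$ with $b\notin T^\calF_{ab}$. If they lie in different components, they are disjoint. Otherwise, fix the component $C$ containing both, and use the standard laminarity of subtrees of a tree. Removing the edge $e_j=(a_j,z_j)$ splits $C$ into $B_j$ and its complement; removing $e_i=(a_i,z_i)$ splits $C$ into $B_i$ and its complement. For two edges of a tree, the two sides cut off by each form a laminar family. Concretely, there are two sub-cases:
\begin{itemize}
\item If $e_i=e_j$ with the same orientation, then $B_i=B_j$, so $w_i \in V(B_j)$, a contradiction. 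If $e_i=e_j$ with the opposite orientation, the two blocks are complementary and hence disjoint.
\item If $e_i\neq e_j$, then $e_i$ lies entirely on one side of $e_j$. This gives three possible relations: $B_j\subseteq B_i$, $B_i\subseteq B_j$, or $V(B_i)\cap V(B_j)=\emptyset$. The degenerate fourth option, where the two blocks cover all of $C$, is also handled below.
\end{itemize}

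It remains to exclude the bad configurations. First, $B_i \subsetneq B_j$ would give $w_i\in V(B_i)\subseteq V(B_j)$, contradicting Property~\ref{property-wi-outside-jth-block}. Second, the option where $V(B_i)\cup V(B_j)=V(C)$ with the two blocks overlapping needs its own check. That configuration arises when $e_i$ lies inside $B_j$ and $B_i$ is the side of $e_i$ containing $e_j$. Then $B_i$ contains the complement of $B_j$ together with part of $B_j$. Since $w_i \notin V(B_j)$, the node $w_i$ is in the complement of $B_j$. I will rule this case out using the fact that $B_i$ must be contained in a single $\calM$-molecule, combined with the rooting structure of molecules.

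If $B_j$ lies in a molecule $M$ whose root $r$ is outside $M$, then the complement of $B_j$ contains $r$, and hence $B_i$ would contain $r$. That is impossible: $r$ belongs to no molecule, by the second condition in the definition of a molecular decomposition, yet $B_i$ lies inside a molecule. If instead $M$ is special, then $C=M$, and $B_i$ would contain nodes on both sides of $e_j$. I will handle this sub-case by tracking more carefully which side of $e_j$ and of $e_i$ the nodes $w_i$ and $z_j$ fall on. I expect this exclusion of the "overlapping" configuration to be the main obstacle, and the place where the exact definition of $\calM$-block must be used.

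Since the paper states the observation is "easy to see," I anticipate that the intended argument is simply the laminarity of tree-edge cuts plus Property~\ref{property-wi-outside-jth-block}. I would verify the special-molecule sub-case by a short case check on the position of $z_j$ relative to $e_i$.
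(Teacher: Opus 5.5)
Your reduction is correct up to the point where you stop. The case $j=0$, the four-way analysis of two tree cuts, the exclusion of $B_i\subsetneq B_j$ via Property~\ref{property-wi-outside-jth-block}, and the exclusion of the crossing configuration when a normal molecule is involved are all fine. In that last case the $\calM$-free root would land in the other block, as you argue.

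\textbf{The gap.} The sub-case you postpone cannot be closed by any case check on the position of $z_j$. This is the sub-case where both blocks lie in one special molecule $C$, with $V(B_i)\cup V(B_j)=V(C)$ and $V(B_i)\cap V(B_j)\neq\emptyset$. The configuration actually occurs:
\begin{itemize}
\item Take $\Delta^\star=2$. Let $C$ be the tree with edges $(z_1,a),(z_1,p),(z_1,y_1),(y_1,y_2),(y_2,z_2),(z_2,b),(z_2,q)$, and let $w_0,e_1,e_2,e_3$ be isolated nodes of $\calF$.
\item Let $\calM$ consist of all components as special molecules, and let $D=\emptyset$.
\item Let the non-forest edges be $(w_0,z_1),(y_1,z_2),(a,e_1),(e_1,p),(p,e_2),(e_2,b),(y_2,e_3),(e_3,q)$.
\item The sequence $w_0,z_1,a,e_1,p,e_2,b,z_2,y_1,y_2,e_3,q$ is a Hamiltonian path of $G$. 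So $\Delta^\star=2$ is the true optimum and $\calF$ is valid.
\item Every edge of $G$ inside $C$ other than $(y_1,y_2)$ is incident to $z_1$ or $z_2$. Hence the $\calM$-atoms of $C$ are $\{a\},\{p\},\{y_1,y_2\},\{b\},\{q\}$, and $z_1,z_2$ (each of degree $3$) are $\calM$-non-reducible.
\end{itemize}
Now $P=(w_0,z_1,y_1,z_2,a)$ is an alternating chain:
\begin{itemize}
\item $B_1=T^\calF_{y_1\leftarrow z_1}$ has node set $\{y_1,y_2,z_2,b,q\}$, and $B_2=T^\calF_{a\leftarrow z_2}$ has node set $\{a,p,z_1,y_1,y_2\}$. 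Both are $\calM$-blocks inside $C$.
\item $a\notin V(B_1)$, and $y_1,a\notin V(B_0)=\{w_0\}$.
\item $(w_0,z_1)$ and $(y_1,z_2)$ are non-forest edges.
\end{itemize}
Yet $V(B_1)\cap V(B_2)=\{y_1,y_2\}$ while $B_1\not\subseteq B_2$. So the statement, as written, fails exactly in your unresolved sub-case.

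\textbf{The paper's proof.} Its one-line proof has the same hole. It uses \Cref{obs:block} to push the whole atom of $w_i$ outside $B_j$, and then appeals to the definition of a block. In the example, the atom $\{a\}$ of $w_2$ does avoid $B_1$, so that argument leaves the crossing open too.

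\textbf{What is missing.} Both arguments lack a hypothesis forbidding blocks that point ``upward'' inside a special molecule. In a normal molecule this is automatic: an upward block would contain the $\calM$-free root, which is exactly your root argument. A special molecule, however, has no root. The statement becomes true under either of these strengthenings:
\begin{itemize}
\item Require $z_i\notin V(B_j)$ for all $j<i$. In the crossing configuration $z_i$ always lies in $B_j$, and symmetrically $z_j$ lies in $B_i$.
\item Fix a root for every special molecule (as with the dummy roots of \Cref{sec:description-subroutine-main}) and require each $w_i$ to be a descendant of $z_i$. Then all blocks are rooted subtrees and hence laminar.
\end{itemize}
The appeal to this observation in the proof of \Cref{lem:apply-aug-chain} would then need such a strengthened definition or a separate argument.
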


\begin{proof}
By definition, the node $w_i$ belongs to an $\calM$-atom and lies outside the $j^{th}$ $\calM$-block of $P$. 
Thus, by \Cref{obs:block},  every node in the $\calM$-atom containing $w_i$ lies outside the $j^{th}$ $\calM$-block of $P$. At this point, the observation follows from the definition of an $\calM$-block.
\end{proof}

We are now ready to define the concept of an augmenting chain, which is an alternating chain appended by one extra node with certain properties.

\begin{definition}[Augmenting Chain]\label{def:aug-chain}
Let $(\calF, \calM, D)$ be a valid configuration and $\ell \geq 0$ be an integer.
We call a sequence of {\em distinct} nodes $A = \left(w_0, z_1, w_1, z_2, w_2, \ldots, z_\ell, w_\ell, z_{\ell+1} \right)$ an {\bf augmenting chain} of {\bf length} $\ell+1$ w.r.t.~$(\calF, \calM, D)$ iff the following properties hold. 
\begin{enumerate}
    \item\label{property1:aug-chain} The prefix $P(A) = \left(w_0, z_1, w_1, z_2, w_2, \ldots, z_\ell, w_\ell \right)$ of $A$ is an alternating chain w.r.t.~$(\calF, \calM, D)$. 
    
    (For each $i \in [0, \ell]$, we use the terms ``{\bf $i^{th}$ $\calM$-block of $A$}'' and ``{\bf $i^{th}$ critical $\calM$-atom of $A$}'' to refer to the ``$i^{th}$ $\calM$-block of $P(A)$'' and ``$i^{th}$ critical $\calM$-atom of $P(A)$'', respectively.) 

    \item\label{property-zell1-outside-blocks}
    For every $i \in [0, \ell]$, the node $z_{\ell+1}$ lies outside the $i^\th$ $\calM$-block of $A$. 

 \item\label{property-well-zell1}
    $(w_\ell, z_{\ell+1}) \in E - E(\calF)$, i.e., $(w_\ell, z_{\ell+1})$ is a non-forest edge.

    \item \label{property-zell1-reducible}
    $z_{\ell+1}$ is either $\calM$-reducible, or  an $\calM$-free node with $\deg_{\calF}(z_{\ell+1}) \leq \Delta^\star$ that lies outside  $D$.
\end{enumerate}
\end{definition}

\Cref{fig:aug-chain} illustrates an example of an augmenting chain of length $8$.
Consider any valid configuration $(\calF, \calM, D)$, and let 
  $A = \left(w_0, z_1, w_1, z_2, w_2, \ldots, z_\ell, w_\ell, z_{\ell+1} \right)$ be an augmenting chain w.r.t.~$(\calF, \calM, D)$. We use the phrase ``{\bf applying the augmenting chain $A$}'' to refer to an operation which gives us a new valid configuration $(\calF', \calM', D')$, defined as follows.

\begin{itemize}
    \item For every $i \in [1, \ell]$, let $(z_i, y_i) \in E(\calF)$ be the edge connecting $z_i$ to the $i^\th$ $\calM$-block of $A$,\footnote{In other words, $y_i$ is the first node after $z_i$ on the path $P_{w_i \leftarrow z_i}^{\calF}$ from $z_i$ to $w_i$ in $\calF$.} just {\em before} we apply the augmenting chain $A$.
    \item \textbf{Updating $\calF$}. We perform the following steps.
    \begin{enumerate}
    \item \label{step:1} First, for every $i \in [0, \ell]$, we reduce $\deg_{\calF}(w_i)$ to $\leq \Delta^\star$ by applying \Cref{lem:degree-reduction} in the $i^\th$ critical $\calM$-atom of $A$. Moreover, if $z_{\ell+1}$ is $\calM$-reducible, then we also reduce $\deg_{\calF}(z_{\ell+1})$ to $\leq \Delta^\star$ by applying \Cref{lem:degree-reduction} in the $\calM$-atom which contains $z_{\ell+1}$ (Otherwise, by definition, we already have that $z_{\ell+1}$ is an $\calM$-free node with $\deg_{\calF}(z_{\ell+1}) \leq \Delta^\star$). 
    \item \label{step:2} Next, for every $i \in [1, \ell]$, we delete the edge $(z_i, y_i)$ from $\calF$. 
    \item \label{step:3} Finally, for every $i \in [1, \ell+1]$, we insert the edge $(w_{i-1}, z_{i})$ into $\calF$. 
    \end{enumerate}
    We let $\calF'$ be the status of $\calF$ after performing the above three steps. 
    \item \textbf{Updating $\calM$.}
    If a molecule $M \in \calM$ contains at least one node from $\{w_0,w_1, \ldots, w_\ell, z_{\ell+1}\}$ just {\em before} applying $A$, then we say that the molecule is {\bf affected} due to applying $A$. We delete all affected molecules from  $\calM$. We let $\calM'$ be the status of $\calM$ after making these changes. 
    \item \textbf{Updating $D$.} Let $Y$ denote the set of nodes $y \in \{y_1, \ldots, y_{\ell}\}$ that were $\calM$-non-reducible just {\em before} applying the augmenting chain $A$. We set $D' \leftarrow D \cup Y$. 
\end{itemize}

\input{figures/fig-aug-chain}

\begin{lemma}\label{lem:apply-aug-chain}
    After applying an augmenting chain $A$ w.r.t.~a valid configuration $(\calF, \calM, D)$, the result $(\calF', \calM', D')$ is another valid configuration where the forest $\calF'$ has one fewer component than $\calF$.
\end{lemma}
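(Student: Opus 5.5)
We need to verify three things for $(\calF', \calM', D')$: (1) $\calF'$ is a forest with one fewer component; (2) $\calF'$ is valid, i.e., every degree is at most $\Delta^\star+1$; (3) $\calM'$ is a molecular decomposition of $\calF'$, and $D'$ consists of $\calM'$-free nodes of $\calF'$-degree exactly $\Delta^\star$. We handle the three steps of the update in order, using the laminar structure of blocks from \Cref{obs:block} and \Cref{obs:key}.

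\textbf{Step 1 (degree reductions).} The critical $\calM$-atoms of $A$ are distinct: $w_i$ lies outside the $j^\th$ block for $j<i$, while $w_j$'s atom lies inside the $j^\th$ block or the $0^\th$ molecule. Every call to \Cref{lem:degree-reduction} only changes edges with both endpoints inside its own atom. Hence these calls commute and preserve validity. The component structure of $\calF$ is also unchanged, since each atom stays a spanning subtree of its node set.

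I would next check that every block $T_{w_i\leftarrow z_i}^{\calF}$ keeps its node set, and that the edge $(z_i,y_i)$ is untouched. By \Cref{obs:block}, atoms lie entirely inside or entirely outside a block. The node $z_i$ is either a molecule root, hence outside all molecules, or $\calM$-non-reducible; in both cases $z_i$ is not in any atom. So no reduction touches $(z_i,y_i)$, and the cut between a block and the rest of its component is preserved. Also, $z_{\ell+1}$ is outside all blocks, so its atom is handled the same way.

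\textbf{Steps 2--3 (acyclicity and component count).} This is the main obstacle. I plan an induction on $i$, processing the swap ``delete $(z_i,y_i)$, insert $(w_{i-1},z_i)$'' for $i=1,\dots,\ell$ and finally inserting $(w_\ell,z_{\ell+1})$.

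The invariant is as follows. After stage $i$, the $i^\th$ block (together with any earlier blocks nested inside it, per \Cref{obs:key}) is attached to the tree that contains the special molecule of $w_0$. Every other block $j>i$ still hangs from its $z_j$ exactly as in $\calF$.

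Deleting $(z_i,y_i)$ detaches exactly the node set of the $i^\th$ block. This holds even after earlier swaps, because earlier blocks are either disjoint from it or nested inside it. The nested ones were re-hung via edges whose endpoints $w_{j-1}$ lie outside block $j$; I need to argue these endpoints stay consistent. The node $w_{i-1}$ lies outside the $i^\th$ block (distinctness, plus \Cref{property-wi-outside-jth-block} applied appropriately; this needs care and is where I expect the fiddly argument). So inserting $(w_{i-1},z_i)$ reconnects the detached piece without creating a cycle, and the component count is unchanged.

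The last insertion $(w_\ell,z_{\ell+1})$ joins two different components. Here $z_{\ell+1}$ lies outside all blocks, and the final piece containing $w_\ell$ is the component of $w_0$'s special molecule. I would argue that $z_{\ell+1}$ lies in another component, or else derive a contradiction from $w_0$'s component being a whole special molecule while $z_{\ell+1}$ is outside the $0^\th$ block. Hence the count drops by exactly one.

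\textbf{Degrees, $\calM'$, and $D'$.} For degrees: each $w_i$ and $z_{\ell+1}$ reaches $\deg\le\Delta^\star$ after Step 1 and gains at most one edge. Each $z_i$ with $i\le\ell$ loses one edge and gains one. Each $y_i$ only loses an edge. So validity holds.

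For $\calM'$: an unaffected molecule contains no $w_i$ or $z_{\ell+1}$, so Step 1 does not change it. I would show it contains no $z_i$ or $y_i$ whose edge is swapped, unless the whole structure stays a subtree attached by a single edge; it then remains a normal molecule with the same root, or a special molecule that is still a full component. This point also needs verification. Disjointness and the root condition are inherited from $\calM$.

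For $D'$: each new dirty node $y\in Y$ was $\calM$-non-reducible, so it had degree $\Delta^\star+1$, and it lost exactly one edge, giving degree $\Delta^\star$. Its molecule contains $w_i$, since the block contains $y_i$ and $w_i$, so that molecule is affected and $y$ becomes $\calM'$-free. Old nodes of $D$ were $\calM$-free and are never $w_i$ or $z_{\ell+1}$; this uses the condition $z_{\ell+1}\notin D$. So their degrees are unchanged, except that they might equal some $z_i$, which still keeps degree $\Delta^\star$.
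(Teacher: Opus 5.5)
\textbf{The gap is in Steps~2--3, and it is more than fiddliness: the picture behind your invariant is wrong.} Deleting $(z_i,y_i)$ detaches the $i^\th$ block, which contains $y_i$ and $w_i$ but \emph{not} $z_i$. So inserting $(w_{i-1},z_i)$ does not reattach that block at all. It hangs the piece currently containing $w_{i-1}$ onto the side of $z_i$; for $i=1$ that piece is the special molecule $C_0$ of $w_0$. The piece containing $w_i$ is reattached only when $(w_i,z_{i+1})$ is inserted.

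Hence, already after stage~$1$, the first block is an isolated piece rather than attached to $w_0$'s tree. Your claim that $w_\ell$ ends up in $w_0$'s component also fails (take $\ell=1$). Your auxiliary claim that $w_{i-1}$ lies outside the $i^\th$ block is false in general. \Cref{obs:key} allows the $(i-1)^\th$ block to be nested in the $i^\th$, and property~\ref{property-wi-outside-jth-block} only places later $w$'s outside earlier blocks. In that nested case, your interleaved order also breaks ``deleting $(z_i,y_i)$ detaches exactly the $i^\th$ block''. By then $C_0$ already hangs from $z_{i-1}$, which lies inside the $i^\th$ block. So the induction cannot be run as you set it up.

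\textbf{What works, and is what the paper's proof sketches, is to delete all $\ell$ edges first.} If these edges are pairwise distinct, you get $f+\ell$ pieces:
$Q_0:=C_0$;
for $i\ge 1$, $Q_i:=$ the $i^\th$ block minus the earlier blocks nested in it, which contains $w_i$ by property~\ref{property-wi-outside-jth-block};
and the undetached remainders.
Orient each inserted edge $(w_k,z_{k+1})$, $k\in[0,\ell]$, from $Q_k$ to the piece containing $z_{k+1}$. Suppose $z_{k+1}$ lies outside blocks $0,\dots,\min(k+1,\ell)$. Then that piece is a remainder or some $Q_m$ with $m>k+1$. So each $Q_k$ has out-degree one, remainders have out-degree zero, and indices strictly increase along edges. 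There is no cycle, and $(f+\ell)-(\ell+1)=f-1$ components remain.

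Both needed facts (distinct deleted edges, and $z_{k+1}$ outside the earlier blocks) follow from \Cref{obs:key}, since each block meets the rest of $\calF$ only through its edge $(z_j,y_j)$. The one exception is $(z_{k+1},y_{k+1})=(y_j,z_j)$ for some $1\le j\le k$, where two adjacent non-reducible nodes of one special molecule serve as roots of complementary blocks. \Cref{def:alt-chain} does not rule this out, and the paper's proof passes over it as well. There the lemma genuinely fails: if $z_2=y_1$ and $y_2=z_1$, the inserted edge $(w_1,z_2)=(w_1,y_1)$ closes a cycle inside the first block. Such a chain is never of minimum length, since it can be shortcut through the block $T^{\calF}_{w_{k+1}\leftarrow z_j}$. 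So the algorithm is unaffected, but a complete proof must exclude this case.

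Your handling of Step~1, the degree bounds, $\calM'$ and $D'$ is sound. In places it is more careful than the paper's, e.g.\ you note that an old dirty node may coincide with a root $z_i$.
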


\begin{proof}
    Recall the procedure for updating $\calF$, as described above. Just before applying this procedure, the $\ell+2$ different $\calM$-atoms which respectively contain  $w_0, \ldots, w_{\ell}, z_{\ell+1}$ are all mutually node-disjoint (see \Cref{def:aug-chain}), and each $\calM$-block of $A$ has a trivial intersection with each of these $\calM$-atoms (see \Cref{obs:block}). Thus, during Step~\ref{step:1} the value of $\deg_{\calF}(z_i)$ remains unchanged for every $i \in [1, \ell]$. We conclude that $\calF$ is still a valid forest at the end of Step~\ref{step:1}, with the additional property that: 
    \begin{equation}
    \label{eq:add}
    \deg_{\calF}(w_i) \leq \Delta^\star \text{ for all } i \in [0, \ell], \text{ and } \deg_{\calF}(z_{\ell+1}) \leq \Delta^\star.
    \end{equation}
    Each edge-deletion $(z_i,y_i)$ in Step~\ref{step:2} {\em increases} the number of components in $\calF$ by one, and detaches the $i^\th$ $\calM$-block $T^{\calF}_{w_i \leftarrow z_i}$ from the rest of the forest.
    Thus, due to Step~\ref{step:2}, the number of components in $\calF$ increases by $\ell$.
    In contrast, each edge-insertion $(w_{i-1}, z_i)$ in Step~\ref{step:3} {\em decreases} the number of components in $\calF$ by one, and reattaches the $\calM$-block $T^{\calF}_{w_i \leftarrow z_i}$ to the forest.
    Thus, due to Step~\ref{step:3}, the number of components in $\calF$ decreases by $\ell+1$.
    We infer that overall the number of components in $\calF$ decreases by $(\ell+1)-\ell = 1$ due to the above procedure, and so the forest $\calF'$ has one fewer component than $\calF$.
    Note that, during this process, $\calF$ remains a forest and no cycle in formed due to the following crucial facts: Just before applying $A$, \Cref{obs:key} holds, and the node $z_{\ell+1}$ lies outside the $i^{th}$ $\calM$-block of $A$ for every $i \in [0, \ell]$ (see \Cref{def:aug-chain}). 

    Moreover, due to Step~\ref{step:2} and Step~\ref{step:3}, the nodes in $\{w_0, w_1, \ldots, w_{\ell}, z_{\ell+1}\}$ increase their degrees in $\calF$ by one; the nodes in $\{y_1, \cdots, y_\ell\}$ decrease their degrees by one; and the degree of every other node in $\calF$ remains unchanged in these two steps. Accordingly, \Cref{eq:add} implies that $\calF$ is a valid forest at the end of Step~\ref{step:3}.
    To summarize, we have derived that $\calF'$ is a valid forest with one fewer component than $\calF$.

    Next, we focus on the procedure for updating $\calM$. Recall that $\calM'$ is obtained by removing all the affected molecules from $\calM$. Accordingly, every molecule $M \in \calM'$ must necessarily be a molecule in both the forests $\calF$ (before applying $A$) and $\calF'$ (after applying $A$). As $\calM' \subseteq \calM$ and $\calM$ is a molecular decomposition of $\calF$, it follows that $\calM'$ is also a molecular decomposition of $\calF'$. 
    
    Finally, we focus on the procedure for updating $D$.
    Consider any node $ y_i \in Y$. By definition, just before applying $A$, we  had $\deg_{\calF}(y_i) = \Delta^{\star}+1$ since $y_i$ was $\calM$-non-reducible. While updating $\calF$,  we delete the edge $(z_i, y_i)$, and no other edge incident on $y_i$ gets inserted.
    It follows that $\deg_{\calF'}(y_i) = \Delta^\star$.
    Moreover, the node $y_i$ belongs to an affected molecule, and hence it is $\calM'$-free.
    Next, consider any node $v$ that  was part of the set $D$ (which implies that $v$ was $\calM$-free and had $\deg_{\calF}(v) = \Delta^\star$) just before applying $A$. It is also easy to verify that $v$ remains $\calM'$-free, and no edge incident on $v$ gets inserted/deleted as we update the forest $\calF$. So, we have $\deg_{\calF'}(v) = \Delta^\star$ for all $v \in D$. To summarize, the set $D'$ continues to satisfy the desired conditions that: (i)  it is a subset of $\calM'$-free nodes and (ii) every node $u \in D'$ has $\deg_{\calF'}(u) = \Delta^\star$. 

    From the above discussion, we infer that $(\calF', \calM', D')$ is also a valid configuration and $\calF'$ has one fewer component than $\calF$. This concludes the proof of the lemma.
\end{proof}

\Cref{fig:aug-chain-apply} illustrates the result of applying the augmenting chain in \Cref{fig:aug-chain}.

\input{figures/fig-aug-chain-apply}

We next consider applying multiple augmenting chains one after another. Let $(\calF, \calM, D)$ be any valid configuration. We say that $A^{(1)}, \ldots, A^{(k)}$  is a {\bf sequence of augmenting chains} w.r.t.~$(\calF, \calM, D)$ iff  $A^{(i)}$ is an augmenting chain w.r.t.~$(\calF^{(i-1)}, \calM^{(i-1)}, D^{(i-1)})$ for all $i \in [1, k]$, where $(\calF^{(i)}, \calM^{(i)}, , D^{(i)})$ is the outcome of applying $A^{(i)}$ on $(\calF^{(i-1)}, \calM^{(i-1)}, , D^{(i-1)})$ and $(\calF^{(0)}, \calM^{(0)}, D^{(0)}) := (\calF, \calM, D)$.

We conclude this section by highlighting an important property of augmenting chains. 
In particular, the primary reason why we need to consider the set of dirty nodes $D$ while defining a valid configuration $(\calF, \calM, D)$  is that this allows us to derive \Cref{claim:set-of-augmenting-chains-is-decreasing}, as stated below.

\begin{claim}
\label{claim:set-of-augmenting-chains-is-decreasing}
    Assume $A$ is an augmenting chain w.r.t.~a valid configuration $(\calF, \calM, D)$, and $(\calF', \calM', D')$ is the result of applying $A$ to $(\calF, \calM, D)$.
    Then, any augmenting chain w.r.t.~$(\calF', \calM', D')$ is an augmenting chain w.r.t.~$(\calF, \calM, D)$ as well.
\end{claim}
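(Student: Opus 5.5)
The plan is to show that applying $A$ only changes things inside the affected molecules, and at a handful of $\calM$-free nodes whose degree does not change. Every object mentioned by a chain w.r.t.~$(\calF',\calM',D')$ then looks the same w.r.t.~$(\calF,\calM,D)$. Write $A=(w_0,z_1,\dots,w_\ell,z_{\ell+1})$ with removed edges $(z_i,y_i)$, and let $A'=(w'_0,z'_1,\dots,w'_k,z'_{k+1})$ be an augmenting chain w.r.t.~$(\calF',\calM',D')$. I will verify each property of \Cref{def:alt-chain} and \Cref{def:aug-chain} for $A'$ w.r.t.~$(\calF,\calM,D)$.

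\textbf{Step 1: locate every change.}
\begin{itemize}
\item Step~\ref{step:1} touches only edges inside critical atoms, which lie in affected molecules.
\item Each $y_i$ lies in the $i$-th $\calM$-block. That block sits in the molecule containing $w_i$, which is affected.
\item Each $z_i$ is either $\calM$-non-reducible in that same molecule, or the root of a normal molecule. In the latter case it lies in no $\calM$-molecule, so it is $\calM$-free.
\item $z_{\ell+1}$ is in an affected molecule or is $\calM$-free.
\end{itemize}
Hence every inserted or deleted edge has both endpoints in affected molecules or at $\calM$-free nodes. The only $\calM$-free nodes whose degree can change are the roots $z_i$ ($i\le \ell$), and their degree changes by $+1-1=0$, or $z_{\ell+1}$ when it is free.

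Consequently every $M\in\calM'$ is the same subtree in $\calF$ and $\calF'$, with identical node degrees. Since the atom-building procedure depends only on the molecule, the degrees of its nodes, and the edges of $G$ among its nodes, the $\calM'$-atoms are exactly the $\calM$-atoms lying in unaffected molecules. Therefore $\calM'$-reducible implies $\calM$-reducible. Also, an $\calM'$-non-reducible node, and the root of a normal $\calM'$-molecule, has the same status w.r.t.~$\calM$. An unaffected special molecule has no changed incident edges, so it is a component of both forests.

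\textbf{Step 2: transfer the alternating-chain properties.} By Step~1, each $\calM'$-block $T^{\calF'}_{x\leftarrow y}$ is an $\calM$-block $T^{\calF}_{x\leftarrow y}$ with the same node set. This gives Properties~\ref{property-z0}, \ref{property-wi-zi-connection}(b),(c), \ref{property-wi-outside-jth-block}, and Property~\ref{property-zell1-outside-blocks} of the augmenting chain. Property~\ref{property-wi-zi-connection}(a) holds because the block is a connected subtree of $\calF$ containing both $w'_i$ and $z'_i$.

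For the non-forest conditions, I must rule out that some $(w'_{i-1},z'_i)$, or the final edge $(w'_k,z'_{k+1})$, is one of the deleted edges $(z_j,y_j)$. The node $w'_{i-1}$ lies in an $\calM'$-atom, hence in an unaffected molecule. It therefore cannot be $y_j$, which is in an affected molecule. It also cannot be $z_j$, which is in an affected molecule or $\calM$-free. Distinctness of the chain's nodes is immediate.

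\textbf{Step 3: the final node $z'_{k+1}$ (main obstacle).} If $z'_{k+1}$ is $\calM'$-reducible, it is $\calM$-reducible by Step~1. Otherwise it is $\calM'$-free with $\deg_{\calF'}(z'_{k+1})\le\Delta^\star$ and $z'_{k+1}\notin D'$, and I split into two cases.
\begin{itemize}
\item If $z'_{k+1}$ was $\calM$-free, its degree is unchanged by Step~1 (note it is not $z_{\ell+1}$, since the edge $(w_\ell,z_{\ell+1})$ was inserted; the degenerate case needs a quick separate check). So $\deg_{\calF}(z'_{k+1})\le\Delta^\star$, and since $D\subseteq D'$ we have $z'_{k+1}\notin D$.
\item If it was $\calM$-covered, it lies in an affected molecule. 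Then either it is $\calM$-reducible, and we are done, or it is $\calM$-non-reducible with $\deg_\calF=\Delta^\star+1$. In the latter case I argue that the only non-reducible nodes whose degree drops are the $y_j$'s. Nodes in critical atoms are reducible; the $z_j$'s have net change zero; the $y_j$'s drop by exactly one. So $z'_{k+1}\in Y\subseteq D'$, a contradiction.
\end{itemize}
This use of the dirty set $D'$ is exactly the point the claim hinges on. I expect the careful bookkeeping of which nodes' degrees can change to be the main work.
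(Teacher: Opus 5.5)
Your proposal is correct and follows essentially the same route as the paper. Both arguments observe that unaffected molecules, together with their atoms, blocks and incident edges, are identical in $\calF$ and $\calF'$. Both then handle the final node by the same split into $\calM$-reducible, $\calM$-free (using $D\subseteq D'$), and $\calM$-non-reducible, where a degree drop forces $z'_{k+1}\in Y\subseteq D'$.

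One small fix is needed. Your parenthetical claim that $z'_{k+1}\neq z_{\ell+1}$ is unjustified and can fail: a free $z_{\ell+1}$ with $\deg_{\calF}(z_{\ell+1})\le\Delta^\star-1$ can end a new chain through a different edge $(w'_k,z_{\ell+1})$. This does not hurt the argument. Your Step~1 already shows that this node's degree only increases, so $\deg_{\calF}(z'_{k+1})\le\deg_{\calF'}(z'_{k+1})\le\Delta^\star$ still holds. That is exactly the paper's observation that degrees of $\calM$-free nodes can only go up.
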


\begin{proof}
    Assume $A = (w_0^A, z_1^A, w_1^A, \ldots, z_k^A, w_k^A, z_{k+1}^A)$.
    Let $C = (w_0, z_1, w_1, \ldots, z_\ell, w_\ell, z_{\ell+1}) $ be an arbitrary augmenting chain w.r.t.~$(\calF', \calM', D')$.
    We show it is also an augmenting chain w.r.t.~$(\calF, \calM, D)$.
    In the following, we show the properties of \Cref{def:aug-chain} for $C$ w.r.t.~$(\calF, \calM, D)$.
    For every $i \in [0, \ell]$, we have $(w_i, z_{i+1}) \in E -  E(\calF')$, and $w_i$ is contained in an $\calM'$-atom.
    We conclude that the $\calM'$-molecule containing $w_i$ (which is an $\calM$-molecule as well) has remained unaffected while applying $A$ to $(\calF, \calM, D)$.
    This means that the neighborhood of $w_i$ coincide in both $\calF$ and $\calF'$, concluding that $(w_i, z_{i+1}) \in E -  E(\calF)$.
    The same statement is correct for all the blocks $T^{\calF'}_{w_i \leftarrow z_i}$, i.e., $T^{\calF'}_{w_i \leftarrow z_i} = T^{\calF}_{w_i \leftarrow z_i}$. 
    This fact combined with $\calM' \subseteq \calM$ clearly concludes properties \ref{property1:aug-chain}, \ref{property-zell1-outside-blocks}, and \ref{property-well-zell1}.
    It remains to show property \ref{property-zell1-reducible}.
    There are two cases for $z_{\ell+1}$ as follows.

    \medskip
    \noindent
    \textbf{Case I.}
    $z_{\ell+1}$ is $\calM'$-reducible. 
    According to $\calM' \subseteq \calM$, we obviously have that $z_{\ell+1}$ is $\calM$-reducible as well, which concludes property \ref{property-zell1-reducible} for $C$ w.r.t.~$(\calF, \calM, D)$.

    \medskip
    \noindent
    \textbf{Case II.}
    $z_{\ell+1}$ is $\calM'$-free with $\deg_{\calF'}(z_{\ell+1}) \leq \Delta^\star$ and $z_{\ell+1} \notin D'$.
    We consider two sub-cases as follows.
    \begin{itemize}
        \item $z_{\ell+1}$ is $\calM$-free.
        According to the operation of applying the augmenting chain $A$ to $(\calF, \calM, D)$, it is straightforward to see that the degree of $\calM$-free nodes can only increase (the only $\calM$-free node whose degree can increase is the last node $z^A_{k+1}$ in $A$).
        We conclude that $\deg_{\calF}(z_{\ell+1}) \leq \deg_{\calF'}(z_{\ell+1}) \leq \Delta^\star$, and $z_{\ell+1} \notin D$ since $D' \supseteq D$.
        Hence, we have property \ref{property-zell1-reducible} for $C$ w.r.t.~$(\calF, \calM, D)$.
        
        \item $z_{\ell+1}$ is $\calM$-covered.
        If $z_{\ell+1}$ is $\calM$-reducible, we already have property \ref{property-zell1-reducible} for $C$ w.r.t.~$(\calF, \calM, D)$.
        Otherwise, $z_{\ell+1}$ is $\calM$-non-reducible, which concludes $\deg_{\calF}(z_{\ell+1}) = \Delta^\star+1$.
        We also have $\deg_{\calF'}(z_{\ell+1}) \leq \Delta^\star$.
        The degree-reduction subroutine for $w_i^A$ (and possibly $z^A_{k+1}$) only affects $\calM$-atoms and does not change the degree of $z_{\ell+1}$.
        Hence, during applying $A$ to $(\calF, \calM, D)$ the only scenario that can happen which decreases the degree of $z_{\ell+1}$ is the following; $z_{\ell+1} = y_i^A$ is the first node after $z^A_i$ in the path $P^{\calF}_{z_i^A, w_i^A}$ from $z_i^A$ to $w_i^A$ in $\calF$, and the edge $(z^A_i,y^A_i) = (z_i^A, z_{\ell+1})$ is removed from $\calF$.
        Finally, $z_{\ell+1}$ satisfies all the properties while defining a dirty node which means $z_{\ell+1} \in D'$.\qedhere
    \end{itemize}
\end{proof}

\section{Our Algorithm}\label{sec:algorithm}

Our algorithm works by repeatedly applying a sequence of augmenting chains to  reduce the number of components in a valid forest by a large additive factor. Before describing the algorithm, we need to introduce a couple of  concepts -- (i) ``$t$-configuration'' and (ii) ``$\theta$-molecular decomposition''. 

We say that a valid configuration $(\calF, \calM, D)$ is a {\bf $t$-configuration}, for some integer $t \geq 1$, iff the length of every augmenting chain w.r.t.~$(\calF, \calM, D)$ is at least $t$.
Next, fix any integer $\theta \in [1, n]$, and consider any molecular decomposition $\calM$ of a valid forest $\calF$. We say that $\calM$ is a {\bf $\theta$-molecular decomposition} of $\calF$ iff the following conditions hold.
\begin{enumerate}
\item\label{property:theta-max-decomposition-1} Every component of $\calF$ with at most $2\theta$ nodes is a special $\calM$-molecule.

\item\label{property:theta-max-decomposition-2} For every normal $\calM$-molecule $M$, we have $|V(M)| \leq \theta$;

\item\label{property:theta-max-decomposition-3}
For any arbitrary molecule $T^{\calF}_{x \leftarrow y}$ of $\calF$ which is not part of $\calM$ and $T^{\calF}_{x \leftarrow y}$ contains at least one $\calM$-free node, we must have $|V(T^{\calF}_{x \leftarrow y})| \geq \theta+1$.
\end{enumerate}

Intuitively, we can compute a $\theta$-molecular decomposition $\calM$ of $\calF$ in near-linear time in a bottom-up manner, starting from the leaves in $\calF$.
To see how this works, say that a component of $\calF$ is {\em small} if it contains at most $2\theta$ nodes, and {\em big} otherwise.
We start by designating every small component of $\calF$ as a special $\calM$-molecules, and every leaf node in a big component of $\calF$ as a singleton normal $\calM$-molecule.
Subsequently, as long as there exists an edge $(u, v) \in E(\calF)$ in a big component of $\calF$ such that $\left|V\left(T_{uv}^{\calF}\right)\right| \leq \theta$ and $T_{wu}^{\calF}$ is an $\calM$-molecule for all neighbors $w \in \psi_{\calF}(u) \setminus \{v\}$, we delete all the molecules $\left\{T_{wu}^{\calF}\right\}_{w \in \psi_{\calF}(u) \setminus \{v\}}$ from $\calM$ and insert the molecule $T_{uv}^{\calF}$ into $\calM$.
The procedure terminates when no more molecules can be added to $\calM$ in this manner. 
The following figure illustrates and example of a $\theta$-molecular decomposition for $\theta=4$.
\input{figures/fig-theta-decomposition}

\begin{observation}
    The above process, provides a $\theta$-molecular decomposition of $\calF$.
\end{observation}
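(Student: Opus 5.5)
We check the process against the definition of a $\theta$-molecular decomposition: first that its output $\calM$ is a molecular decomposition at all, and then that each of the three conditions holds. For the first part, we maintain an invariant that holds after every step of the bottom-up process. \emph{Invariant:} every member of $\calM$ is either a small component (a special molecule) or a normal molecule $T^{\calF}_{uv}$ lying inside a big component. The members of $\calM$ are mutually node-disjoint, and the root of each normal member lies in no member of $\calM$.

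Initially the invariant holds. Small components are disjoint from one another and from big components. For a leaf $u$ with unique neighbour $v$, we have $T^{\calF}_{uv}=\{u\}$ with root $v$. The node $v$ is not a leaf, except when the component is a single edge, and such a component is small since $2\le 2\theta$. Now consider a merge step at the edge $(u,v)$. The removed molecules $T^{\calF}_{wu}$, for $w\in\psi_\calF(u)\setminus\{v\}$, together with $u$ partition $V(T^{\calF}_{uv})$. Before the step, $u$ was free, because it was the root of those molecules. So the new molecule occupies exactly the nodes freed, plus $u$, and disjointness is preserved. The root of the new molecule is $v$. If $v$ lay in some current molecule $M$, then $M$ would be disjoint from $T^{\calF}_{uv}$ and contain $v$, so $M$ would have to be some $T^{\calF}_{ab}$ whose root $b$ lies in $T^{\calF}_{uv}$. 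But a subtree containing $v$ and hanging off a node of $T^{\calF}_{uv}$ must contain the edge $(v,u)$ direction, which forces $u\in M$. This is a contradiction. Hence $\calM$ is always a molecular decomposition.

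Conditions~\ref{property:theta-max-decomposition-1} and~\ref{property:theta-max-decomposition-2} are then immediate. Small components are inserted at the start and never removed, since merges only delete normal molecules. Every normal molecule is either a leaf, of size $1\le\theta$, or was inserted under the test $|V(T^{\calF}_{uv})|\le\theta$.

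The main obstacle is Condition~\ref{property:theta-max-decomposition-3}, which we prove by contradiction. Suppose $T^{\calF}_{x\leftarrow y}$ is not in $\calM$, contains an $\calM$-free node, and has $|V(T^{\calF}_{x\leftarrow y})|\le\theta$. Writing it as $T^{\calF}_{xy}$ with $(x,y)\in E(\calF)$, its component is big. Otherwise the component itself would be a special molecule and every node in it would be covered. Among all such counterexamples, pick one that is minimal under inclusion. For every $w\in\psi_\calF(x)\setminus\{y\}$, the subtree $T^{\calF}_{wx}$ is strictly smaller, so it has size at most $\theta$.

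Next we show each such $T^{\calF}_{wx}$ is in $\calM$. If one is not in $\calM$, minimality forces it to contain no free node, so it is entirely covered. The covering molecules are disjoint subtrees of the form $T^{\calF}_{ab}$, and their roots are free. A covering molecule whose root lies outside $T^{\calF}_{wx}$ must be exactly $T^{\calF}_{wx}$, which we excluded. So every covering molecule has its root inside $T^{\calF}_{wx}$, and that root is free, contradicting full coverage. Hence every $T^{\calF}_{wx}$ is in $\calM$.

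It follows that $x$ is a leaf or all its child subtrees are molecules. In the leaf case, $x$ is a singleton molecule, which contradicts $T^{\calF}_{xy}\notin\calM$. Otherwise the merge rule still applies at the edge $(x,y)$, which contradicts termination. Finally, the process terminates because each merge strictly decreases the number of free nodes, as $u$ becomes covered. With a bottom-up order using subtree sizes, the whole process runs in near-linear time.
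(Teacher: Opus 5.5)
\textbf{Gap in the merge step.} The invariant argument for the merge step does not go through, and the case it misses is exactly the one the paper's proof singles out. Suppose the rule is applied at $(u,v)$ while $v$ lies in a current molecule $M$. As you say, $M$ is disjoint from $T^{\calF}_{uv}$ and contains $v$. Since $(v,u)$ is a forest edge leaving $M$, it is the unique edge attaching $M$ to the rest of $\calF$, so $M = T^{\calF}_{vu}$ with root $u$. This molecule does \emph{not} contain $u$. Your claim that a subtree containing $v$ and hanging off a node of $T^{\calF}_{uv}$ ``forces $u\in M$'' is therefore false, and you have no contradiction: $T^{\calF}_{vu}\in\calM$ is consistent with disjointness and with $u$ being free.

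This same case is also the only way a merge can cover the root of some \emph{other} surviving molecule (namely $u$, as the root of $T^{\calF}_{vu}$). Your invariant check never examines that possibility.

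Your argument never uses that the component is big, and it cannot succeed without it. Run the rule on the path $a-b-c$ with $\theta=2$. This is a small component, so the real process leaves it alone, but your initial and merge arguments do not notice that. Applying the rule at $(b,c)$ produces $T^{\calF}_{bc}=\{a,b\}$, whose root $c$ is covered by $T^{\calF}_{cb}=\{c\}$.

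The fix is to add to the invariant that every normal member has at most $\theta$ nodes. This holds at every step: leaves have size $1$, and each merge passes the size test. Then $T^{\calF}_{vu}\in\calM$ gives $|V(T^{\calF}_{vu})|\le\theta$, and the test gives $|V(T^{\calF}_{uv})|\le\theta$. These two sets partition the component, which would then have at most $2\theta$ nodes, contradicting that merges happen only in big components.

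\textbf{Gap in Condition~\ref{property:theta-max-decomposition-3}.} The same ingredient is missing from your minimal-counterexample argument. The sentence ``a covering molecule whose root lies outside $T^{\calF}_{wx}$ must be exactly $T^{\calF}_{wx}$'' ignores a molecule that strictly contains $T^{\calF}_{wx}$; such a molecule contains the edge $(w,x)$ and hence $x$. You must first show that $x$ is free. A molecule containing $x$ is either $T^{\calF}_{xy}$ itself, which is excluded, or it contains the edge $(x,y)$. In the latter case it contains all of $T^{\calF}_{xy}$, contradicting that $T^{\calF}_{xy}$ has a free node, or all of $T^{\calF}_{yx}$. But $T^{\calF}_{yx}$ has at least $(2\theta+1)-\theta=\theta+1$ nodes, contradicting the size bound on normal molecules. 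Once $x$ is known to be free, your conclusion that every molecule covering $T^{\calF}_{wx}$ has its root inside $T^{\calF}_{wx}$ is valid. The rest of your argument then works, including the leaf case once you note that covered nodes stay covered. That argument is a more detailed justification of Condition~\ref{property:theta-max-decomposition-3} than the paper's ``directly follow from the procedure''.
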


\begin{proof}
Consider an arbitrary component $C$ of $\calF$.
If $|V(C)| \leq 2\theta$, the algorithm adds $C$ to $\calM$ as an special $\calM$-molecule which is aligned with property \ref{property:theta-max-decomposition-1}.
Now, assume $|V(C)| \geq 2\theta + 1$.

Properties \ref{property:theta-max-decomposition-2} and \ref{property:theta-max-decomposition-3} directly follow from the procedure of the algorithm.
It remains to show that $\calM$ is a molecular decomposition of $\calF$, i.e., $\calM$-molecules are mutually node disjoint and the roots of $\calM$-molecules are $\calM$-free nodes.

We show that at any point in time during the construction of $\calM$-molecules in $C$, for every $\calM$-molecule $T^\calF_{x \leftarrow y}$, $y$ is an $\calM$-free node.
This shows that the final $\calM$ is indeed a molecular decomposition of $\calF$ since $\calM$-molecules are connected subgraphs of $\calF$, and their roots are $\calM$-free nodes which separates their node sets.

This is correct for the initial $\calM$ containing the leaves of $C$.
Now, assume that it is violated for the first time while the procedure removes $T^\calF_{w_1 u}, T^\calF_{w_1 u}, \ldots, T^\calF_{w_t u}$ from $\calM$ and adds  $T^\calF_{uv}$ to $\calM$, where $\psi_\calF(u) = \{v,w_1,w_2,\ldots, w_t\}$.
The only scenario that can happen is the following;
$t=1$, and $T^\calF_{uw_1}$ was already in $\calM$.
But, in this case, $V(T^\calF_{vu}) \cup V(T^\calF_{uw_1})$ contains the entire node set of the component $C$ and we have $|V(T^\calF_{vu})| \leq \theta$ and $|V(T^\calF_{uw_1})| \leq \theta$ according to the procedure of the algorithm.
This contradicts the assumption that $|V(C)| \geq 2\theta + 1$.
\end{proof}

We now state two key lemmas that underpin the design and analysis of our algorithm. Say that an augmenting chain is {\em short} iff its length is at most some parameter $H := \lceil 20 n/f \rceil$, where $f$ is the number of components in the current valid forest $\calF$. In words, \Cref{key-lemma} gives us the following guarantee.  If we start with a  valid configuration $(\calF, \calM, D)$, where $D = \emptyset$ and $\calM$ is a $\theta$-molecular decomposition of $\calF$ with $\theta := \lceil 20 n/f\rceil$, then we must apply a sequence of at least $\Omega(f^3/n^2)$  short augmenting chains, before we can arrive at an $(H+1)$-configuration. \Cref{lem:subroutine-main}, in contrast, says that if we start with {\em any} $\ell$-configuration, for $\ell \geq 1$, then in $\tilde{O}(m)$ time we can arrive at an $(\ell+1)$-configuration after applying a sequence of augmenting chains of length $\ell$. We defer the proofs of \Cref{key-lemma} and \Cref{lem:subroutine-main} to \Cref{sec:main-lemma} and \Cref{sec:description-subroutine-main}, respectively.

\begin{lemma}
\label{key-lemma}
Consider any valid forest $\calF^{(0)}$ with $f \geq 20$ components. Define $\theta := \lceil 20 n/f\rceil$ and $H := \lceil 20 n/f \rceil$. Let $\calM^{(0)}$ be a $\theta$-molecular decomposition of $\calF^{(0)}$, and let $D^{(0)} := \emptyset$.
Fix an integer $s \geq 1$. Let $A^{(1)}, \ldots, A^{(s)}$ be a sequence of augmenting chains w.r.t.~$\left(\calF^{(0)}, \calM^{(0)}, D^{(0)}\right)$, such that for all $i \in [1, s]$, the augmenting chain $A^{(i)}$ is of length at most $H$.
For each $i \in [1, s]$, let $\left(\calF^{(i)}, \calM^{(i)}, D^{(i)} \right)$ be the resulting valid configuration we obtain after applying the sequence of augmenting chains $A^{(1)}, \ldots, A^{(i)}$.
Now, if $\left(\calF^{(s)}, \calM^{(s)}, D^{(s)} \right)$ is an $(H+1)$-configuration, then it must be the case that $s \geq f^3/(10^5 \cdot n^2)$.
\end{lemma}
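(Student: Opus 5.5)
We prove the contrapositive. Suppose $s < f^3/(10^5 n^2)$, and we show that $(\calF^{(s)}, \calM^{(s)}, D^{(s)})$ still admits an augmenting chain of length at most $H$. The counting mirrors the proof of \Cref{lm:rounds:sketch}. We first bound how much damage $s$ chains can do. Each chain $A^{(i)}$ has length at most $H$, so it affects at most $H+1$ molecules. Each affected molecule has at most $2\theta$ nodes: special molecules are components with at most $2\theta$ nodes, and normal molecules have at most $\theta$ nodes. Consequently:
\begin{itemize}
\item the number of $\calM$-free nodes grows by at most $2\theta(H+1)$ per chain;
\item $|D^{(s)}| \le sH$;
\item the number of deleted special molecules is at most $s(H+1)$.
\end{itemize}
Altogether, the set $U$ of nodes that became $\calM$-free during the process satisfies $|U| \le s\cdot O(n^2/f^2)$, which is $\le f/C$ for a large constant $C$ under our assumption on $s$.

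Next we lower bound the number of ``pieces'' in a structural collection. Let $R$ be the union of the critical atoms reachable by alternating chains of length at most $H$ starting from special molecules (a layered BFS), together with the blocks those chains traverse. The key structural step is that, in the absence of augmenting chains of length at most $H$, every edge leaving a reachable atom to something outside the explored blocks lands in an $\calM^{(s)}$-non-reducible node, a node of $B^{(s)}$ (free, degree $\Delta^\star+1$), or a node of $D^{(s)}$. A reached non-reducible node $z$ on the non-forest edge $(w_{i-1},z)$ opens the block $T_{w\leftarrow z}$ and extends the chain. Then take the collection $\calC$ consisting of the atoms and non-reducible nodes in reached territory, and count edges of $T^\star$ crossing between members of $\calC$, exactly as in \Cref{eq:collection:100}:
\begin{itemize}
\item the non-reducible nodes contribute degree $\Delta^\star+1$ in $\calF$, versus at most $\Delta^\star$ in $T^\star$;
\item the surviving special molecules contribute roughly $f - s(H+1)$ extra pieces;
\item free or dirty endpoints absorb at most $\Delta^\star |B^{(s)}| + \Delta^\star|D^{(s)}| \le O(|U|) + O(sH)$ edges, by the forest-edge bound \Cref{eq:collection102} applied within $U$.
\end{itemize}

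The difficulty, and the role of the length bound $H$ and $\theta$, is that the exploration might stop at depth $H$ rather than being closed. We must argue that if exploration reaches depth $H$ without finding an augmenting chain, this is itself contradictory. Each BFS layer is entered through distinct blocks, so it consumes fresh nodes. Property~\ref{property:theta-max-decomposition-3} of the $\theta$-decomposition guarantees that any block leaving a normal molecule into free territory has more than $\theta$ nodes, and the number of reached atoms grows with the layers. So either the layers are eventually ``closed'', in which case the counting above applies within the union of layers and gives the contradiction, or some $H=\lceil 20n/f\rceil$ layers each contain at least $\Omega(f)$ nodes' worth of structure, exceeding $n$. I would choose a layer index where the increment of the pieces is small, by averaging over the $H$ layers (the $20n/f$ choice gives an average increment below $f/20$), and run the $T^\star$ counting argument on the prefix up to that layer. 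In that argument, the edges escaping to the next layer are charged to the small increment.

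The main obstacle is precisely this averaging/closure step. It requires showing that the frontier edges of a layer genuinely give length-extendable alternating chains, verifying Properties~\ref{property-wi-outside-jth-block} and the laminarity from \Cref{obs:key}. This is where the definitions of blocks and of $D$ (via \Cref{claim:set-of-augmenting-chains-is-decreasing}) must be used carefully. The final arithmetic then combines $f - O(sn^2/f^2) - f/20 > 1$ to produce the contradiction when $s < f^3/(10^5 n^2)$.
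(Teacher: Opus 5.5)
Your damage bounds ($2\theta(H+1)$ new free nodes per chain, $|D^{(s)}|\le sH$) match the paper's. The overall shape is also right: a layered exploration, a $T^\star$ crossing count, and an averaging over $H$ layers. But the proposal has a genuine gap in how it treats free nodes.

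\textbf{The free-node bound fails.} You bound the edges absorbed by free full-degree endpoints as $\Delta^\star|B^{(s)}| = O(|U|)$, using the forest-edge bound within $U$. This assumes every free node of degree $\Delta^\star+1$ was freed \emph{during} the process. That held in the sketch, where $\calM^{(0)}$ was trivial, but it is false for a $\theta$-molecular decomposition. For example, take a big component that is a caterpillar whose spine nodes each carry $\Delta^\star-1$ pendant leaves. Every spine node farther than about $\theta/\Delta^\star$ from the ends lies in $\free(0)$ and has degree $\Delta^\star+1$. There can be $\Theta(n/\Delta^\star)$ such nodes, so $\Delta^\star|B^{(s)}|$ can be $\Theta(n)\gg f$, and your crossing inequality no longer yields a contradiction. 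Relatedly, you only extend chains through non-reducible nodes. A free root of a normal molecule is also a legitimate $z_i$, since its hanging molecules are $\calM$-blocks, and this is what turns those free full-degree nodes into progress rather than leakage.

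\textbf{How the paper handles it.} The paper does not bound these nodes; it puts them into the potential.
\begin{itemize}
\item Every tail node, free or not, goes into $W(t)=\tail(t)\cup N(t)$.
\item Any node of $W(t)\setminus D^{(s)}$ has forest degree $\Delta^\star+1$, since otherwise appending it yields an augmenting chain of length at most $H$.
\item Deleting $W(t)$ from $\calF^{(s)}$ leaves at least $(f-s)+\sum_{x\in W(t)}(\deg(x)-2)$ pendant molecules.
\item Each such piece that is not a $t$-atom contains a free node, and there are at most $f/4$ of them. Those containing a newly freed node are bounded by your damage estimate. Those containing only originally free nodes have more than $\theta$ nodes by Property~\ref{property:theta-max-decomposition-3}, so there are at most $n/(\theta+1)$ of them. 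This is where Property~\ref{property:theta-max-decomposition-3} actually enters.
\end{itemize}

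\textbf{The averaging step.} The step you leave as the main obstacle is resolved by the potential $\Delta^\star|W(t)|$. The $T^\star$ count gives $\Delta^\star|W(t+1)|\ge \Delta^\star|W(t)|+f/5$ for each layer. The forest degree sum $\sum_{x\in W(t)}\deg_{\calF^{(s)}}(x)\ge(\Delta^\star+1)|W(t)|-f/4$ gives $\Delta^\star|W(t)|\le 3n$. The upper bound must be on this $\Delta^\star$-weighted quantity, obtained through forest degrees. Measuring ``structure'' in plain node counts, as your sketch suggests, loses a factor of $\Delta^\star$ and gives no contradiction.
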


\begin{lemma}\label{lem:subroutine-main}
    There exists a deterministic algorithm that, given an integer $\ell \geq 1$ and an $\ell$-configuration $(\calF, \calM, D)$, returns an $(\ell+1)$-configuration $(\calF', \calM', D')$ by applying a sequence of augmenting chains $A^{(1)},  \ldots, A^{(q)}$ w.r.t.~$(\calF, \calM, D)$. For each $i \in [1, q]$, the augmenting chain $A^{(i)}$ is  of length exactly $\ell$.
    The algorithm runs in $\tilde{O}(m)$ time.
\end{lemma}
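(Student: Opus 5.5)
The proof combines the monotonicity of \Cref{claim:set-of-augmenting-chains-is-decreasing} with a Dinic-style blocking search. We start from the given $\ell$-configuration and repeatedly find and apply an augmenting chain of length exactly $\ell$. We stop once no such chain exists.

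\textbf{Correctness of the stopping rule.} Each application yields a valid configuration by \Cref{lem:apply-aug-chain}. By \Cref{claim:set-of-augmenting-chains-is-decreasing}, every augmenting chain w.r.t.\ the new configuration was already one w.r.t.\ the old configuration, and hence w.r.t.\ the initial $\ell$-configuration. So every augmenting chain we ever see has length at least $\ell$. When no chain of length exactly $\ell$ remains, the current configuration is an $(\ell+1)$-configuration, as required.

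The same monotonicity is what makes pruning safe: a partial search object that cannot be extended now will never become extendable later.

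\textbf{Layering.} I would first exploit the $\ell$-configuration property to assign each $\calM$-atom a level. Say a critical atom is at level $i$ if it can occur as the $i^\th$ critical $\calM$-atom of some alternating chain of length $i$. For an $\ell$-configuration, I would prove a ``shortest-path'' structural lemma. Along any alternating chain of length $\ell$ that extends to an augmenting chain, the $i^\th$ critical atom has minimum level exactly $i$. Otherwise, shortcutting would produce an augmenting chain of length less than $\ell$; the laminarity of \Cref{obs:key} and \Cref{obs:block} should let us splice the prefix and suffix while preserving Property~\ref{property-wi-outside-jth-block}.

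The levels can then be computed by a BFS-like sweep in $\tilde{O}(m)$ time. Level $0$ consists of the atoms in special molecules. To pass from level $i-1$ to level $i$, we scan each non-forest edge $(w_{i-1}, z)$ and mark all atoms inside the $\calM$-blocks $T^\calF_{w\leftarrow z}$. These blocks hang off $z$ at non-reducible nodes or molecule roots, and we restrict to those not containing earlier blocks, so each atom is marked once. Euler-tour intervals on each molecule let us enumerate the unmarked atoms of a block in time proportional to their number, using a union-find ``next unmarked'' pointer.

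\textbf{Blocking search.} Next I would run a DFS over the layered structure, as in Dinic's algorithm. Starting from a level-$0$ atom, we advance from level $i-1$ to level $i$ along an unused non-forest edge into a block whose atoms are at level $i$. At level $\ell$, we look for a non-forest edge to a valid endpoint $z_{\ell+1}$ (Property~\ref{property-zell1-reducible}, outside all blocks). When the search succeeds, we apply the chain. We then delete all affected molecules, and with them their atoms, from the layered structure. When the search fails at an edge or atom, we delete it permanently, which is justified by monotonicity. Each edge is then scanned $O(1)$ times amortized over the whole run.

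\textbf{Running time.} Degree-reduction work is charged to molecules that are deleted immediately afterwards, so it totals $\tilde{O}(m)$ by \Cref{lem:degree-reduction}. Non-reducible nodes becoming dirty only removes candidate endpoints, so it never forces a rescan.

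\textbf{Main obstacle.} The hard part is showing that applying a chain, which rewires forest edges and removes molecules, never invalidates the levels or blocks of surviving atoms in a way that requires recomputation. Concretely, two facts must be proved. First, a surviving atom's blocks are unchanged; this holds because unaffected molecules have identical $\calF$-neighbourhoods, as in the proof of \Cref{claim:set-of-augmenting-chains-is-decreasing}. Second, levels can only increase, and any atom whose level rises above its old value can simply be discarded, since it can no longer lie on a length-$\ell$ chain. I would attack this with the shortcutting argument above, together with \Cref{claim:set-of-augmenting-chains-is-decreasing}.
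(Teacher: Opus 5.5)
Your plan follows the paper's proof in outline. It is a Dinic-style blocking search: the $\ell$-configuration makes the BFS levels well defined and pairwise disjoint (the paper's layers $Z_t$, \Cref{claim:property-optimal-layers} and \Cref{claim:Zt-star-subseteq-Zt}). \Cref{claim:set-of-augmenting-chains-is-decreasing} justifies permanent pruning and lets you keep the initial layering, and degree reductions are charged to destroyed molecules. The remaining differences are implementation-level. You level critical atoms and search forward from level $0$. The paper levels the block roots $z_t$ and runs a backward DFS from candidate last edges $(w_{\ell-1},z_\ell)$. There is one indexing slip: a chain of length $\ell$ has critical atoms at levels $0,\ldots,\ell-1$ and endpoint $z_\ell$. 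As written, stopping at level $\ell$ with $z_{\ell+1}$ searches for chains of length $\ell+1$.

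The step that does not go through as you describe it is the shortcutting lemma. Suppose you splice a shorter prefix $P'$, which reaches the $i^\th$ critical atom, onto the suffix of a length-$\ell$ chain. Nothing then forces the later $w_j$ to avoid the blocks of $P'$, the nodes to stay distinct, or the endpoint to lie outside the blocks of $P'$. \Cref{obs:key} and \Cref{obs:block} only relate blocks and atoms of a single chain, so laminarity gives you no control here.

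The paper's remedy is to drop these requirements. A pseudo-augmenting chain (\Cref{def:pseudo-chain}) keeps only Properties~\ref{property-z0}, \ref{property-wi-zi-connection}, \ref{property-non-forest-edge} of \Cref{def:alt-chain}, Properties~\ref{property-well-zell1}, \ref{property-zell1-reducible} of \Cref{def:aug-chain}, and the condition that $w_k, z_{k+1}$ lie in different atoms. \Cref{claim:pseudoChain-into-chain} then shows, by induction on length, that any pseudo-augmenting chain that is not augmenting can be shortcut to a strictly shorter one. One of its cases swaps the roles of $w_k$ and $z_{k+1}$ when $z_{k+1}$ falls inside an earlier block. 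Every splice is then argued with pseudo-chains, and the violation cases recurse instead of having to be ruled out.

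You also need this device for the running time. Your terminal test asks that the endpoint lie outside all blocks of the current path. Checking this against up to $\ell$ blocks per candidate edge costs $O(m\ell)$. A level-based $O(1)$ test is sufficient but not necessary, so using it alone would lose the blocking property. By \Cref{claim:pseudoChain-into-chain} plus monotonicity, it suffices to check that the endpoint is not in the atom of $w_{\ell-1}$; the paper checks $v\notin T^{\calF}_{w_{\ell-1}\leftarrow x}$.

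Your ``main obstacle'' dissolves for the same reason. You never need to detect atoms whose level has risen. Any sequence your search assembles from current atom and root statuses is a length-$\ell$ pseudo-augmenting chain w.r.t.\ the current configuration. Since no shorter chain exists, it is a genuine augmenting chain. An atom whose level rose therefore cannot lie on such a sequence, and the DFS simply prunes it.
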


\subsection{Description of Our Algorithm}\label{sec:description-alg}

We start by initializing $\calF$ to be a forest with $n$ components and an empty edge-set, i.e. with $E(\calF) \leftarrow \emptyset$ and $V(\calF) \leftarrow V$. Subsequently, our algorithm consists of two stages.

\medskip
\noindent 
\textbf{Stage I.}
Let $f$ denote the number of components of the valid forest $\calF$. 
As long as $f \geq n^{3/4}$, we keep implementing the next {\bf round} of our algorithm. Each round consists of the following steps.
\begin{itemize}
\item Set $H := \lceil 20n/f \rceil$, $\theta = \lceil 20n/f \rceil$, $\calF^{(0)} := \calF$, and $D^{(0)} := \emptyset$.
\item Compute a $\theta$-molecular decomposition $\calM^{(0)}$ of $\calF^{(0)}$. 

(Observe that $(\calF^{(0)}, \calM^{(0)}, D^{(0)})$ is a $1$-configuration.)
\item For every $\ell \in [1, H]$
\begin{itemize}
\item Call the subroutine from \Cref{lem:subroutine-main} on the $\ell$-configuration $\left(\calF^{(\ell-1)}, \calM^{(\ell-1)}, D^{(\ell-1)}\right)$, which returns an $(\ell+1)$-configuration $(\calF^{(\ell)}, \calM^{(\ell)}, D^{(\ell)})$ after applying a sequence of $s_{\ell}$ augmenting chains (each of length $\ell$) w.r.t.~$(\calF^{(\ell-1)}, \calM^{(\ell-1)}, D^{(\ell-1)})$.
\end{itemize}
\item At the end of the above {\bf for loop}, we clearly have an $(H+1)$-configuration $(\calF^{(H)}, \calM^{(H)}, D^{(H)})$, which is derived by applying a sequence of $\sum_{\ell=1}^H s_\ell$ augmenting chains w.r.t.~$(\calF^{(0)}, \calM^{(0)}, D^{(0)})$. We now reset $\calF \leftarrow \calF^{(H)}$, and terminate this round.
\end{itemize}

\medskip
\noindent 
\textbf{Stage II.} We initiate this stage when we observe that the current valid forest $\calF$ has $f < n^{3/4}$ components at the end of a given round. 
We implement this stage in {\bf iterations}. Each iteration reduces the number of components of $\calF$ by one, while ensuring that $\calF$ remains valid, by invoking \Cref{lem:alg-small-scale-main}. 
This stage ends when $\calF$ consists of only one component, and hence is a spanning tree of $G$ with maximum degree at most $\Delta^\star+1$. 
Our algorithm returns this  spanning tree as the output.

\subsection{Analysis of Our Algorithm: Proof of \Cref{theorem:main}}\label{sec:analysis}

From the description of our algorithm, it is obvious that it returns a spanning tree of maximum degree at most $\Delta^\star+1$.
Indeed, 
\Cref{lem:subroutine-main} and \Cref{lem:alg-small-scale-main} guarantee that $\calF$ always remains a {\em valid} forest (i.e., has maximum degree at most $\Delta^\star+1$), and we terminate when $\calF$ has only one component. We devote the rest of this section towards analyzing the runtime of our algorithm.

To begin with, it is easy to see that Stage II of our algorithm  takes $\tilde{O}(mn^{3/4})$ time. This is because Stage II consists of at most $n^{3/4}$ iterations, and each iteration takes $\tilde{O}(m)$ as per \Cref{lem:alg-small-scale-main}. It now remains to show that Stage I of our algorithm also takes at $\tilde{O}(mn^{3/4})$ time.

Consider any given round in Stage I. The round starts with a valid configuration $(\calF^{(0)}, \calM^{(0)}, D^{(0)})$, where $D^{(0)} = \emptyset$ and $\calM^{(0)}$ is a $\theta$-molecular decomposition of $\calF^{(0)}$.
The round ends with an $(H+1)$-configuration $(\calF^{(H)}, \calM^{(H)}, D^{(H)})$, after applying a sequence of $\sum_{\ell = 1}^{H} s_\ell$ augmenting chains (each of length $\leq H$) w.r.t.~$(\calF^{(0)}, \calM^{(0)}, D^{(0)})$. Accordingly, by \Cref{key-lemma}, there are at least $\sum_{\ell = 1}^{H} s_\ell \geq f^3/(10^5 \cdot n^2)$ many augmenting chains in this sequence. In other words, each round reduces the number of components in $\calF$ by at least $f^3/(10^5 \cdot n^2)$. Furthermore, by \Cref{lem:subroutine-main}, each round takes  $\tilde{O}(mH) = \tilde{O}(mn/f)$ time. This holds because the concerned {\bf for loop} in any given round runs for $H$ iterations, and each iteration makes one call to the subroutine guaranteed by \Cref{lem:subroutine-main}.

To summarize,  each round takes $\tilde{O}(mn/f)$ time, and  reduces the number of components of $\calF$ from $f$ to at most $f - f^3/(10^5 \cdot n^2)$. 
Suppose that Stage I lasts for exactly $t \geq 1$ rounds.
For each $i \in [1, t]$, let $f_i$ denote the number of components of the forest $\calF$ just after the $i^{\th}$ round.
Let $f_0 = n$.
Thus, by definition, we have $f_i \geq n^{3/4}$ for all $i \in [0, t-1]$ and $f_t < n^{3/4}$.
From the above discussion, it follows that $f_{i+1} \leq f_i - f_i^3/(10^5 \cdot n^2)$ for all $i \in [0,t-1]$.
We now derive that
$$ \frac{1}{f_i} \leq \frac{1}{f_{i+1}} \cdot \left(1 - \frac{f_i^2}{10^5n^2}\right) \leq \frac{1}{f_{i+1}} \cdot \left(1 - \frac{1}{10^5\sqrt{n}}\right),$$
where the last inequality holds because $f_{i} \geq n^{3/4}$ for all $i \in [0, t-1]$.
Hence, we get
$$\sum_{i=0}^{t-1} \frac{1}{f_i} \leq \frac{1}{f_{t-1}} \cdot \sum_{i=0}^{\infty} \left(1- \frac{1}{10^5\sqrt{n}}\right)^i \leq \frac{1}{n^{3/4}} \cdot  (10^5 \sqrt{n}) = O \!\left(n^{-1/4}\right). $$

Accordingly, we infer that the total time spent during Stage I is at most 
$\tilde{O}(\sum_{i=0}^{t-1} mn/f_i) = \tilde{O}(mn^{3/4})$. 
This gives us the desired runtime bound of \Cref{theorem:main}.

\section{Key Property of an $(H+1)$-Configuration: Proof of \Cref{key-lemma}}\label{sec:main-lemma}

Throughout this section, we make the following assumption.
\begin{assumption}
\label{assume:main}
$\left( \calF^{(s)}, \calM^{(s)}, D^{(s)}\right)$ is an $(H+1)$-configuration {\em and} $s < f^3/(10^5 \cdot n^2)$. 
\end{assumption}
Under \Cref{assume:main}, we will derive a contradiction, which in turn will imply \Cref{key-lemma}.

\subsection{Some Useful Notations and Terminologies} 
For every $t \in [0, H]$,
let $\calP_t$ denote the collection of all alternating chains of length at most $t$ w.r.t.~$\left(\calF^{(s)},\calM^{(s)}, D^{(s)}\right)$.
Consider any alternating chain $P = (w_0, z_1, w_1, \ldots, z_k, w_k) \in \calP_t$, for some $k \in [0, t]$.
Let $\tail(P)$ denote the set of all nodes $v \in V$ that satisfy the following conditions.
\begin{enumerate}
\item  $(w_k, v) \in E - E\left(\calF^{(s)}\right)$. 
\item For each $i \in [0, k]$, the node $v$ does {\em not} belong to the $i^{th}$ $\mathcal{M}^{(s)}$-block  of $P$.
\end{enumerate}

Intuitively, the set $\tail(P)$ consists of all the nodes that might potentially be appended to $P$ to form an augmenting chain. 
For each $t \in [1, H]$, we define $\tail(t) := \bigcup_{P \in \calP_{t-1}} \tail(P)$.
For consistency of notations, we define $\tail(0) := \emptyset$.
Next, for every $t \in [0, H]$, let $N(t)$ denote the collection of all nodes $u \in V$ that satisfy the following conditions.
\begin{enumerate}
\item $u$ is $\calM^{(s)}$-non-reducible.
\item There exists an alternating chain $P \in \calP_t$ of length $k$ such that $u$ belongs to the $i^{\th}$ $\calM^{(s)}$-block of $P$, for some $0 \leq i \leq k \leq t$.
\end{enumerate}

Fix any $t \in [0, H]$, and any $\calM^{(s)}$-atom $C$. We say that $C$ is a {\bf $t$-atom} iff there exists an alternating chain $P \in \calP_t$ of length $k$ such that $C$ is the $i^{th}$ critical $\calM^{(s)}$-atom of $P$, for some $0 \leq i \leq k \leq t$.
Next, for all $t \in [0,H]$, we define $W(t) := \tail(t) \cup N(t)$.
Now, consider any arbitrary molecule $M$ of $\calF^{(s)}$, and any arbitrary $X \subseteq V$.
We refer to $M$ as $X$-avoiding if $X \cap V(M) = \emptyset$ and $M$ is one of the connected components of $\calF^{(s)} - X$.
It is easy to verify that there is at most one edge $(u,v) \in E(\calF^{(s)})$ such that $u \in X$ and $v \in V(M)$ (and in that case, $u$ must be the root of $M$).
There might be other edges $(u,v) \in E - \calF^{(s)}$ where $u \in X$ and $v \in V(M)$, but there is at most one edge in $\calF^{(s)}$ that can connect $M$ to $X$.
Let $\calM(X)$ denote the collection of all $X$-avoiding molecules of $\calF^{(s)}$.
Clearly, $\calM(X)$ is a molecular decomposition of $\calF^{(s)}$.

\subsection{A Potential Function Argument}

To prove \Cref{key-lemma}, we use $\Delta^\star \cdot |W(t)|$ as a potential function. We will show that if we increase $t$ by one, then the potential increases by at least $f/5$ (see \Cref{cor:potential}). Thus, as we increase $t$ from $0$ to $H$, the potential increases by at least $fH/5 \geq 4n$, since $H = \lceil 20 n/f \rceil$. In contrast, we show that the potential is always upper bounded by $3n$ (see \Cref{lm:upper:bound:potential}), which leads to the desired contradiction.

As an intermediate step towards proving  \Cref{cor:potential}, we derive \Cref{lem:number-of-avoiding-W(t)}, which relates the growth of the potential with the number of $W(t)$-avoiding molecules of $\calF^{(s)}$.  We defer the proofs of \Cref{lm:upper:bound:potential} and \Cref{lem:number-of-avoiding-W(t)}  to \Cref{sec:proof-lem:number-of-avoiding-W(t)}, which appear {\em after} we have established some necessary properties of the set $W(t)$ in \Cref{sec:properties-tail-N-atom}.

\begin{lemma}
\label{lm:upper:bound:potential}
For every $t \in [0, H]$, we have $\Delta^\star \cdot |W(t)| \leq 3n$.
\end{lemma}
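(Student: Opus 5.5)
The plan is to show that every node of $W(t)$ has degree at least $\Delta^\star$ in the forest $\calF^{(s)}$. Then
$$\Delta^\star \cdot |W(t)| \leq \sum_{v \in W(t)} \deg_{\calF^{(s)}}(v) \leq \sum_{v \in V} \deg_{\calF^{(s)}}(v) \leq 2(n-1) \leq 3n,$$
because $\calF^{(s)}$ is a forest on $n$ nodes. For $t=0$ we have $\tail(0)=\emptyset$, so only $N(0)$ needs handling, and that case is covered below.

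First I would handle $N(t)$. By definition, each $u \in N(t)$ is $\calM^{(s)}$-non-reducible. A non-reducible node is $\calM^{(s)}$-covered but was never absorbed into an atom, so in the atom-building procedure it was a bad node, i.e.\ $\deg_{\calF^{(s)}}(u) = \Delta^\star+1 \geq \Delta^\star$.

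The main step is $\tail(t)$ for $t \in [1,H]$. Take $v \in \tail(P)$ for some alternating chain $P = (w_0, z_1, \ldots, z_k, w_k) \in \calP_{t-1}$, so $k \leq t-1 \leq H-1$. I would consider the sequence $A = (w_0, z_1, \ldots, w_k, v)$ and check that it satisfies every property of \Cref{def:aug-chain} except possibly Property~\ref{property-zell1-reducible}. The prefix is the alternating chain $P$. The edge $(w_k, v)$ is a non-forest edge, and $v$ lies outside every $\calM^{(s)}$-block of $P$; both facts hold by the definition of $\tail(P)$. For distinctness, $v$ differs from each $z_i$ because $z_i$ lies in the $i^\th$ block. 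It also differs from each $w_i$, because a violation of Property~\ref{property-zell1-reducible} forces $v$ to be non-reducible (see the next step), whereas each $w_i$ lies in an atom.

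Now suppose, for contradiction, that $v$ were $\calM^{(s)}$-reducible, or $\calM^{(s)}$-free with $\deg_{\calF^{(s)}}(v) \leq \Delta^\star$ and $v \notin D^{(s)}$. Then $A$ would be an augmenting chain of length $k+1 \leq H$ w.r.t.\ $(\calF^{(s)}, \calM^{(s)}, D^{(s)})$. This contradicts \Cref{assume:main}, which says this is an $(H+1)$-configuration. Hence $v$ falls into one of three cases:
\begin{itemize}
\item $v$ is $\calM^{(s)}$-non-reducible, so its degree is $\Delta^\star+1$;
\item $v$ is $\calM^{(s)}$-free with degree $\Delta^\star+1$;
\item $v$ is $\calM^{(s)}$-free and lies in $D^{(s)}$, so its degree is exactly $\Delta^\star$ by the definition of a valid configuration.
\end{itemize}
In every case $\deg_{\calF^{(s)}}(v) \geq \Delta^\star$, and together with the bound on $N(t)$ this gives the displayed inequality.

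The only delicate point is the verification that appending $v$ produces a genuine augmenting chain, in particular the distinctness condition and the length bound $k+1 \leq H$. Once that is in place, the degree-sum bound finishes the proof with room to spare ($2n$ rather than $3n$).
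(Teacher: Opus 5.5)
Your degree-sum strategy is sound, and slightly cleaner than the paper's. However, the step you single out as delicate, namely distinctness of $A=(w_0,z_1,\ldots,w_k,v)$, is justified incorrectly on both counts.

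First, $z_i$ does \emph{not} lie in the $i^\th$ block. The block $T^{\calF^{(s)}}_{w_i\leftarrow z_i}$ is the side of the edge $(z_i,y_i)$ containing $w_i$, and $z_i$ is its root, outside it. So $v\in\tail(P)$ does not exclude $v=z_i$. This is exactly the dangerous case: by Property~\ref{property-wi-zi-connection}(b) of \Cref{def:alt-chain}, $z_i$ is either $\calM^{(s)}$-non-reducible or the root of a normal $\calM^{(s)}$-molecule, hence possibly an $\calM^{(s)}$-free node. That is precisely the kind of node you want to show cannot have degree $\le\Delta^\star$ outside $D^{(s)}$.

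Second, your argument for $v\neq w_i$ is inverted. You need distinctness exactly when Property~\ref{property-zell1-reducible} of \Cref{def:aug-chain} \emph{holds}, and then $v$ may be reducible, so ``each $w_i$ lies in an atom'' separates nothing. The correct reason is the one you attached to $z_i$: $w_i$ lies in the $i^\th$ block ($w_0$ in the $0^\th$), while $v$ lies outside all of them.

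The fix for $v=z_i$ is short. Since $z_i$ is not reducible, either it is non-reducible, with degree $\Delta^\star+1$, or it is $\calM^{(s)}$-free. In the latter case, suppose $\deg_{\calF^{(s)}}(z_i)\le\Delta^\star$ and $z_i\notin D^{(s)}$. Then the truncation $(w_0,z_1,\ldots,w_{i-1},z_i)$ is already an augmenting chain of length $i\le H$, which is a contradiction. Indeed, its prefix is alternating, $(w_{i-1},z_i)$ is a non-forest edge, and $z_i=v$ lies outside every block of $P$. Alternatively, observe that $A$ is a pseudo-augmenting chain, since by \Cref{obs:block} the atom of $w_k$ lies inside the $k^\th$ block while $v$ lies outside it, and then invoke \Cref{claim:pseudoChain-into-chain}. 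The paper's \Cref{Wt-has-full-degree} also asserts that the appended sequence is an augmenting chain without checking distinctness, so both arguments need this line.

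Once that is repaired, your route differs from the paper's only in the accounting. The paper uses \Cref{sum-of-deg-Wt}: every node of $W(t)\setminus D^{(s)}$ has degree $\Delta^\star+1$, and $|D^{(s)}|\le sH\le f/4$. This bound on $|D^{(s)}|$ consumes the hypothesis $s<f^3/(10^5n^2)$ from \Cref{assume:main}, and yields $2n\ge(\Delta^\star+1)|W(t)|-f/4$, hence the bound $3n$.

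You instead use that dirty nodes have degree exactly $\Delta^\star$, so you never need to count them. As a result you rely only on the $(H+1)$-configuration hypothesis, you obtain $2n$, and you treat $t=0$ explicitly. The paper reuses \Cref{sum-of-deg-Wt} because it needs that sharper $(\Delta^\star+1)$-type bound anyway for the second property of \Cref{lem:number-of-avoiding-W(t)}.
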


\begin{lemma}\label{lem:number-of-avoiding-W(t)}
    Consider any index $t \in [0, H-1]$. Let $\calM_{\texttt{atom}}(W(t)) \subseteq \calM(W(t))$ denote the set of $W(t)$-avoiding molecules of $\calF^{(s)}$ that are $t$-atoms. Then, the following conditions must hold.
    \begin{enumerate}    
        \item $\Delta^\star \cdot |W(t+1)| \geq$ $\left|\calM_{\texttt{atom}}(W(t))\right| + |W(t)| - 1$.

        \item  $\left|\calM(W(t))\right| \geq f/2 + (\Delta^\star-1)\cdot |W(t)|$.

        \item $\left|\calM(W(t)) - \calM_{\texttt{atom}}(W(t))\right| \leq f/4$.
    \end{enumerate}
\end{lemma}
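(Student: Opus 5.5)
The plan is to prove the three items separately. Item~1 is a cut-counting argument against the optimal tree $T^\star$, in the spirit of \Cref{lem:alg-small-scale-main} and \Cref{lm:rounds:sketch}. Item~2 counts the components of $\calF^{(s)}-W(t)$. Item~3 is a size/charging argument based on the $\theta$-molecular decomposition and the smallness of $s$. Throughout I will use the fact that $\left(\calF^{(s)},\calM^{(s)},D^{(s)}\right)$ has no augmenting chain of length $\le H$. Together with $t+1\le H$, this means that every $v\in\tail(P)$ with $P\in\calP_t$ violates Property~\ref{property-zell1-reducible} of \Cref{def:aug-chain}. Hence $v$ is $\calM^{(s)}$-non-reducible, or $\calM^{(s)}$-free with degree $\Delta^\star+1$, or a dirty node of degree $\Delta^\star$. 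In every case $\deg_{\calF^{(s)}}(v)\ge\Delta^\star$, and every $u\in N(t)$ has degree exactly $\Delta^\star+1$. I will also use the monotonicity $W(t)\subseteq W(t+1)$, which holds because $\calP_{t-1}\subseteq\calP_t$ and $N(t)$ only grows with $t$.

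\textbf{Item~1.} Let $\calC$ consist of the molecules in $\calM_{\texttt{atom}}(W(t))$ together with the singletons $\{u\}$ for $u\in W(t)$. These are pairwise disjoint, so $T^\star$ has at least $|\calC|-1$ edges that leave a member of $\calC$. I claim every such edge is incident on $W(t+1)$, which gives $\Delta^\star|W(t+1)|\ge|\calC|-1$. Edges touching a singleton lie in $W(t)\subseteq W(t+1)$. Now take an edge $(w,v)$ with $w$ in a $W(t)$-avoiding $t$-atom $C$ and $v\notin V(C)$.
\begin{itemize}
\item If $(w,v)$ is a forest edge, then $v\in W(t)$ because $C$ is a component of $\calF^{(s)}-W(t)$.
\item Otherwise, take the chain $P\in\calP_t$ of length $k$ for which $C$ is the $i^{\th}$ critical atom, and truncate $P$ at $w_i$; since the truncation has length $i\le t$, it lies in $\calP_t$. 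The truncation is still an alternating chain, and the critical atom contains $w$. I will rechoose the last node of the truncated chain to be $w$ itself, which is permitted because the block conditions only involve atoms (\Cref{obs:block}).
\begin{itemize}
\item If $v$ avoids all blocks of the truncated chain, then $v\in\tail(t+1)$.
\item If $v$ lies in block $j$ and is non-reducible, then $v\in N(t)$.
\item If $v$ lies in block $j$ and is reducible, then the non-forest edge $(w,v)$ joins two distinct atoms of the same molecule, which contradicts the closure property of the atom procedure. Alternatively, in a different molecule, I will argue that $v$'s atom sits inside block $j$ by \Cref{obs:key}, and then rule this case out.
\end{itemize}
\end{itemize}
This reducible-$v$ subcase is the delicate point of Item~1.

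\textbf{Item~2.} Removing $X=W(t)$ from the forest gives exactly
\[
\#\mathrm{comp}\left(\calF^{(s)}-X\right)=f_s-|X|+\sum_{u\in X}\deg_{\calF^{(s)}}(u)-e(X),
\]
where $f_s\ge f-s$ is the number of components of $\calF^{(s)}$ and $e(X)$ is the number of forest edges inside $X$. All of these components are $X$-avoiding molecules. Using $\deg\ge\Delta^\star$ gives $\left|\calM(W(t))\right|\ge f-s+(\Delta^\star-1)|X|-e(X)$. The remaining task is to show $e(X)\le f/2-s$, or to absorb $e(X)$ using the nodes of degree $\Delta^\star+1$. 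Each degree-$(\Delta^\star+1)$ node of $X$ contributes one extra unit. So only forest edges whose endpoints include a dirty degree-$\Delta^\star$ node are uncompensated, and $|D^{(s)}|\le sH\le f^3/(10^5n^2)\cdot 21n/f\ll f$. I expect this bookkeeping, charging forest edges inside $W(t)$ to the surplus degree, to be the main obstacle. If the direct charging fails, I would root each component of $\calF^{(s)}[X]$ and charge each edge to its child endpoint, whose surplus unit covers it unless that child is dirty.

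\textbf{Item~3.} Let $M$ be a $W(t)$-avoiding molecule that is not a $t$-atom. I will show $M$ either contains an $\calM^{(s)}$-free node or contains an $\calM^{(s)}$-molecule's root region, and then charge it. Every special $\calM^{(s)}$-molecule's atoms are $0$-atoms via length-$0$ chains. Inside an $\calM^{(s)}$-molecule, the pieces cut out by the non-reducible nodes (which lie in $N(t)$ when reachable) are exactly atoms. So non-atom avoiding molecules must meet $\calM^{(s)}$-free territory. The $\calM^{(s)}$-free nodes are of two kinds:
\begin{itemize}
\item Nodes already free in $\calM^{(0)}$. By Property~\ref{property:theta-max-decomposition-3}, a molecule containing such a node has size $\ge\theta+1$, so there are at most $n/\theta\le f/20$ such disjoint molecules.
\item Nodes in molecules affected by some $A^{(i)}$. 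There are at most $s(H+2)$ affected molecules, each of size $\le2\theta$, hence at most $2\theta s(H+2)\le f/20$ nodes.
\end{itemize}
Summing the two bounds with slack gives $\left|\calM(W(t))-\calM_{\texttt{atom}}(W(t))\right|\le f/4$.
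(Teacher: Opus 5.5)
Your Items~1 and~3 follow the paper's route. Item~1 is a cut count in $T^\star$ over the $W(t)$-avoiding $t$-atoms and the singletons of $W(t)$, using the outgoing-edge property of $t$-atoms. Item~3 shows that every avoiding molecule that is not a $t$-atom contains an $\calM^{(s)}$-free node, then charges it either to $\free(s)-\free(0)$ or to a molecule of size at least $\theta+1$.

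Item~2, however, rests on a false step. It is not true that all components of $\calF^{(s)}-X$ are $X$-avoiding molecules. A molecule is attached to the rest of $\calF^{(s)}$ by at most one forest edge, so any component of $\calF^{(s)}-X$ receiving forest edges from two or more nodes of $X$ is not in $\calM(X)$. A typical example is an $\calM^{(s)}$-atom sitting between two non-reducible nodes of $N(t)$ in the same molecule. Your identity therefore counts all components, and can exceed $|\calM(W(t))|$ by almost $|W(t)|$. Meanwhile the term $e(X)$ you single out is harmless: it is less than $|X|$ because $\calF^{(s)}[X]$ is a forest.

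The missing step is to bound these multiply-attached pieces. In each tree $C$ of $\calF^{(s)}$ meeting $X$, contracting the components of $C-X$ yields a tree. This shows that at most $|X\cap V(C)|-1-e(X\cap V(C))$ of those components have two or more attachments. Subtracting them, $e(X)$ cancels and you get $|\calM(W(t))|\ge (f-s)+\sum_{x\in W(t)}\bigl(\deg_{\calF^{(s)}}(x)-2\bigr)$. This is \Cref{claim:num-incident-on-S} summed over components. The paper obtains it by deleting the nodes of $W(t)$ in non-decreasing order of depth: each deletion creates at least $\deg-1$ child molecules and destroys at most one. With the correct count, your starting bound $\deg\ge\Delta^\star$ no longer suffices, since it only yields $(\Delta^\star-2)|W(t)|$. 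You need the exact degree $\Delta^\star+1$ for nodes outside $D^{(s)}$, together with $|D^{(s)}|\le sH\le f/4$, i.e., \Cref{sum-of-deg-Wt}.

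There are two smaller omissions. In Item~1 you leave open the case where $v$ is reducible and lies in block $j<i$. It is ruled out, regardless of which molecule $v$ is in, by taking $j$ minimal. Then $(w_0,z_1,\ldots,w_{j-1},z_j,v,w)$ is an augmenting chain of length $j+1\le t\le H-1$, contradicting the $(H+1)$-configuration; this is Case~3 of \Cref{lem:outgoing-edges-of-t-atoms}.

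In Item~3, the size bound from property~\ref{property:theta-max-decomposition-3} applies only to avoiding molecules containing no node of $\free(s)-\free(0)$. Only those are unchanged molecules of $\calF^{(0)}$ with an $\calM^{(0)}$-free node, so the two classes must be split that way. You must also show that an avoiding molecule inside a normal $\calM^{(s)}$-molecule is a $t$-atom and not merely an atom. The paper does this in \Cref{cl:free} by writing $M=T^{\calF^{(s)}}_{xy}$ with $y\in W(t)$ and extending a chain that witnesses $y\in\tail(t)$ or $y\in N(t)$.
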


\begin{corollary}
\label{cor:potential}
For every $t \in [0, H-1]$, we have $\Delta^\star \cdot |W(t+1)| \geq f/5 + \Delta^\star \cdot |W(t)|$.
\end{corollary}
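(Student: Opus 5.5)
The corollary should follow by directly chaining the three parts of \Cref{lem:number-of-avoiding-W(t)}. The plan is to bound $|\calM_{\texttt{atom}}(W(t))|$ from below by combining parts 2 and 3. We then substitute this bound into part 1 and simplify using $f \geq 20$.

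\textbf{Step 1.} Since $\calM_{\texttt{atom}}(W(t)) \subseteq \calM(W(t))$, we have
$$|\calM_{\texttt{atom}}(W(t))| = |\calM(W(t))| - |\calM(W(t)) - \calM_{\texttt{atom}}(W(t))|.$$
Part 2 bounds the first term from below by $f/2 + (\Delta^\star - 1)|W(t)|$. Part 3 bounds the subtracted term from above by $f/4$. Together these give
$$|\calM_{\texttt{atom}}(W(t))| \geq f/4 + (\Delta^\star-1)\cdot|W(t)|.$$

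\textbf{Step 2.} Plugging this into part 1 gives
$$\Delta^\star |W(t+1)| \geq f/4 + (\Delta^\star - 1)|W(t)| + |W(t)| - 1 = f/4 - 1 + \Delta^\star |W(t)|.$$
It remains to check that $f/4 - 1 \geq f/5$. This is equivalent to $f \geq 20$, which is exactly the hypothesis of \Cref{key-lemma}, in force throughout this section.

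There is no real obstacle here: all the substantive content lies in \Cref{lem:number-of-avoiding-W(t)}, whose proof is deferred. The only points to check are that $t \in [0, H-1]$ puts us in the range where that lemma applies, and that the $-1$ from part 1 is absorbed using $f \geq 20$.
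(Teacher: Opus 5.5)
Your proposal is correct and matches the paper's proof: it uses $\calM_{\texttt{atom}}(W(t)) \subseteq \calM(W(t))$ together with parts 2 and 3 of \Cref{lem:number-of-avoiding-W(t)} to get $|\calM_{\texttt{atom}}(W(t))| \geq f/4 + (\Delta^\star-1)|W(t)|$, substitutes this into part 1, and absorbs the $-1$ via $f \geq 20$. Nothing is missing.
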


\begin{proof}
From \Cref{lem:number-of-avoiding-W(t)}, we infer that
\begin{eqnarray*}
\Delta^\star \cdot |W(t+1)| & \geq & \left|\calM_{\texttt{atom}}(W(t))\right| + |W(t)| - 1 \\
& \geq & \left|\calM(W(t))\right| -\left( \left|\calM(W(t)) - \calM_{\texttt{atom}}(W(t))\right| \right) + |W(t)| - 1 \\
& \geq & 
f/2 + (\Delta^\star -1 ) \cdot |W(t)| - f/4 + |W(t)| - 1 \\
& \geq & f/4 - 1 + \Delta^\star \cdot |W(t)| \\
& \geq & f/5 + \Delta^\star \cdot |W(t)|.
\end{eqnarray*}
The last inequality holds since $f \geq 20$. This concludes the proof of the corollary.
\end{proof}

\paragraph{Proof of \Cref{key-lemma}.} Summing the inequality from \Cref{cor:potential} over all $t \in [0, H-1]$, we get
\begin{equation}
\label{eq:contradict}
\Delta^\star \cdot |W(H)| \geq fH/5 + \Delta^\star \cdot |W(0)| \geq fH/5.
\end{equation}

In contrast, \Cref{lem:number-of-avoiding-W(t)} guarantees that $3n \geq \Delta^\star \cdot |W(H)|$. Combining this with \Cref{eq:contradict}, we get $3 n \geq f H/5$. However, this leads to a contradiction, since  $H = \lceil 20 n/f \rceil$ as per the statement of \Cref{key-lemma}. Thus, it must be the case that \Cref{assume:main} does {\em not} hold. This implies \Cref{key-lemma}.

\subsection{Basic Properties of the Set $W(t)$}\label{sec:properties-tail-N-atom}

In this section, we derive some basic properties of the nodes in $W(t)$, which would subsequently be used in \Cref{sec:proof-lem:number-of-avoiding-W(t)} while proving  \Cref{lm:upper:bound:potential} and \Cref{lem:number-of-avoiding-W(t)}.

\begin{observation}\label{lem:increasing-tail(t)-N(t)}
    For all $t \in [0, H-1]$, we have $\tail(t) \subseteq \tail(t+1)$ and $ N(t) \subseteq N(t+1)$.
\end{observation}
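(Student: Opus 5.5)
The plan is to reduce both inclusions to a single monotonicity fact: $\calP_t \subseteq \calP_{t+1}$. This holds directly from the definition, since $\calP_t$ is the collection of all alternating chains of length \emph{at most} $t$ with respect to the fixed configuration $\left(\calF^{(s)},\calM^{(s)}, D^{(s)}\right)$. Every chain of length at most $t$ has length at most $t+1$. Crucially, the configuration does not depend on $t$, so the $\calM^{(s)}$-blocks and critical $\calM^{(s)}$-atoms of a given chain $P$ are the same objects regardless of which collection $\calP_t$ we view $P$ in.

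For the tails, I would first dispose of the case $t=0$: by convention $\tail(0)=\emptyset$, which is trivially contained in $\tail(1)$. For $t \geq 1$, we have $\tail(t)=\bigcup_{P\in\calP_{t-1}}\tail(P)$. Since $\calP_{t-1}\subseteq\calP_t$, this union is contained in $\bigcup_{P\in\calP_t}\tail(P)=\tail(t+1)$. Here $\tail(P)$ is a property of $P$ alone: it depends only on the last node $w_k$, the non-forest edges of $\calF^{(s)}$, and the blocks of $P$. It does not depend on the index $t$.

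For $N(t)$, take any $u\in N(t)$. Then $u$ is $\calM^{(s)}$-non-reducible, which is a condition independent of $t$. Moreover, there is a witness chain $P\in\calP_t$ of some length $k\le t$ such that $u$ lies in the $i^{\th}$ $\calM^{(s)}$-block of $P$ for some $i\le k$. The same $P$ lies in $\calP_{t+1}$ and satisfies $k\le t+1$, so it witnesses $u\in N(t+1)$.

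There is no real obstacle here. The only point needing care is noting that all the relevant objects (blocks, reducibility, tails) are defined relative to the fixed final configuration. Because of this, enlarging the length bound can only add witnesses and never invalidates existing ones.
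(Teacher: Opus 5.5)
Your proof is correct and follows the paper's approach: the paper says only that the observation ``follows immediately from the definitions,'' and your argument spells this out. You use the monotonicity $\calP_t \subseteq \calP_{t+1}$ with respect to the fixed configuration $\left(\calF^{(s)},\calM^{(s)},D^{(s)}\right)$, together with the convention $\tail(0)=\emptyset$ for the base case, and this gives both inclusions.
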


\begin{proof}
Follows immediately from the definitions of the sets $\tail(t)$ and $N(t)$.
\end{proof}

\begin{lemma}\label{lem:outgoing-edges-of-t-atoms}
    Fix any $t \in [0, H-1]$, and consider any edge $(u,v) \in E$ such that $u \in V(C)$ and $v \notin V(C)$ for some $t$-atom $C$.
    Then, we must have $v \in \tail(t+1) \cup N(t)$.
\end{lemma}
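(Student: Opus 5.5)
Since $C$ is a $t$-atom, we fix an alternating chain $P=(w_0,z_1,w_1,\ldots,z_k,w_k)\in\calP_t$ with $k\le t$ such that $C$ is the $i^{\th}$ critical $\calM^{(s)}$-atom of $P$ for some $i\le k$. We may truncate $P$ to its prefix $P'=(w_0,z_1,\ldots,z_i,w_i)$. The prefix of an alternating chain is again an alternating chain: every property in \Cref{def:alt-chain} only involves indices up to the current position. So $P'\in\calP_i\subseteq\calP_t$, and $C$ is the last critical atom of $P'$. The claim then reduces to the following: for an edge $(u,v)$ with $u\in V(C)$ and $v\notin V(C)$, either $v\in N(t)$, or $v$ lies in $\tail(P'')$ for some chain $P''\in\calP_t$, in which case $v\in\tail(t+1)$.

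The proof splits into cases on the position of $v$. \textbf{Case 1:} $v$ lies in some $j^{\th}$ $\calM^{(s)}$-block of $P'$ with $j\le i$. If $v$ is $\calM^{(s)}$-non-reducible, then $v\in N(i)\subseteq N(t)$ by definition. Otherwise $v$ is reducible, hence lies in some atom $C'\neq C$ inside that block (\Cref{obs:block}), or $v$ is the root of a normal molecule. The aim here is a contradiction, or a reroute that puts $v$ into $N(t)$. \textbf{Case 2:} $v$ lies outside every block of $P'$.
\begin{itemize}
\item If $(u,v)$ is a non-forest edge and $u=w_i$, then $v\in\tail(P')\subseteq\tail(t+1)$ directly.
\item If $u\neq w_i$, we would like to re-route. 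The idea is to replace $w_i$ by $u$, since $u$ is in the same atom $C$ and hence also satisfies property (c). We must check that the $i^{\th}$ block $T_{u\leftarrow z_i}$ equals $T_{w_i\leftarrow z_i}$, which holds because atoms are connected and disjoint from the block root. For $i=0$ the special molecule is unchanged. All other properties concern only $w_i$'s atom and the blocks, so the modified chain $P''$ is still alternating, and $v\in\tail(P'')$.
\item If $(u,v)$ is a forest edge, then $v$ lies in the same molecule as $u$ or is its root. A neighbour of an atom in the forest that is not in the atom must be non-reducible, or else the atom-merging procedure would have merged them. It must also be a bad node ($\deg=\Delta^\star+1$). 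If $v$ is non-reducible, we show $v\in N(t)$ by exhibiting $v$ inside some block. Otherwise $v$ is the root of a normal molecule, and we need to place it in $\tail$ or $N$.
\end{itemize}

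\textbf{Main obstacle.} Handling forest edges and edges into earlier blocks is the subtle step. In Case 1 with $v$ reducible, $v$'s atom $C'$ is not in $N$. The fix is to use the laminar structure (\Cref{obs:key}) together with the fact that $u\notin$ the $j^{\th}$ block (property \ref{property-wi-outside-jth-block}). From these we argue that an edge from $C$ entering block $j$ must go either to its root region or to a non-reducible node. Alternatively, we may need to reinterpret the chain so that $v$ becomes a tail node. For forest edges whose endpoint $v$ is reducible, $v$ would lie in the same atom, which contradicts $v\notin V(C)$. For forest edges to the root $y$ of a normal molecule, we try to show that $y$ lies in some block containing the molecule, or else appears as the root endpoint, and argue that $y$ is then captured by $N(t)$ or $\tail(t+1)$. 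I expect this case analysis, and especially checking that the re-routed chain preserves distinctness of nodes and property \ref{property-wi-outside-jth-block}, to be where the real work lies.
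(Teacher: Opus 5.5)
Your reductions (truncating to the prefix $P'$ so that $C$ is the last critical atom, and replacing $w_i$ by $u$) match the paper's ``w.l.o.g.''. The cases you actually close are also correct: a non-reducible $v$ inside a block of $P'$, and a non-forest edge to a $v$ outside all blocks. The gap is in the case you call the main obstacle, and the fix you suggest there cannot work. You never use this section's standing hypothesis (\Cref{assume:main}) that $\left(\calF^{(s)},\calM^{(s)},D^{(s)}\right)$ is an $(H+1)$-configuration, nor the bound $t\le H-1$. The lemma is false without them. A non-forest edge from $u$ to an $\calM^{(s)}$-reducible node $v$ in an earlier block of $P'$ (in a different molecule) is fully compatible with \Cref{obs:key} and property~\ref{property-wi-outside-jth-block}. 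Such a $v$ lies in neither $N(t)$ nor, in general, $\tail(t+1)$. So no laminarity argument can show that an edge from $C$ into block $j$ ends at a non-reducible node.

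What is needed (Case~3 of the paper's proof) splits in two. First, suppose $v$ is reducible and in the same molecule as $u$ (in particular if $(u,v)$ is a forest edge, or $v$ is in the $i^{\th}$ block). The atom-merging rule applies to every edge of $E$, forest or not, so it would force $v\in V(C)$, a contradiction. Otherwise $(u,v)$ is a non-forest edge and $v$ lies in the $j^{\th}$ block for some $j<i$; take $j$ minimal. Then $(w_0,z_1,\ldots,w_{j-1},z_j,v,u)$ is an augmenting chain of length $j+1\le t\le H-1$, contradicting the $(H+1)$-configuration. This works because $T^{\calF^{(s)}}_{v\leftarrow z_j}=T^{\calF^{(s)}}_{w_j\leftarrow z_j}$, $v$ avoids earlier blocks by minimality, and $u$ is reducible and outside blocks $0,\ldots,j$ by \Cref{obs:block}.

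Your forest-edge case is also incomplete. You need to show that a forest neighbour $v\notin V(C)$ of $u$ either lies in the $i^{\th}$ block or equals $z_i$. In the first case $v$ is non-reducible by merging, so $v\in N(t)$. The dichotomy holds because the block is attached to the rest of $\calF^{(s)}$ only through the edge $(y_i,z_i)$. If $u$ is adjacent to the root $r$ of its normal molecule, the requirement that $T^{\calF^{(s)}}_{w_i\leftarrow z_i}$ lie inside a molecule forces $z_i=r$.

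For $v=z_i$, your plan of ``exhibiting $v$ inside some block'' fails: $z_i$ is not in the $i^{\th}$ block and need not lie in any block. The correct conclusion uses property~\ref{property-non-forest-edge}: $z_i$ belongs to $\tail$ of the length-$(i-1)$ prefix of $P'$, so $z_i\in\tail(i)\subseteq\tail(t+1)$. The only exception is when $z_i$ lies in an earlier block. Then it is non-reducible there, and $z_i\in N(t)$.
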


\begin{proof}
W.l.o.g., let $C$ be the $k^{th}$ critical $\calM^{(s)}$-atom of an alternating chain $P \in \calP_t$ of length $k$, for some $k \in [0, t]$, and let $P = (w_0, z_1, w_1, \ldots, z_k, w_k)$, with $w_k = u$.
    
    We now consider the following mutually exclusive and exhaustive cases.

    \medskip
    \noindent 
    \textbf{Case 1: $(u,v) \in E(\calF^{(s)})$.} Since $C$ is an  $\calM^{(s)}$-atom, $u \in V(C)$ and $v \notin V(C)$, either $C$ itself is a normal $\calM^{(s)}$-molecule with $v$ being its  root, or $v$ is an $\calM^{(s)}$-non-reducible node inside the $\calM^{(s)}$-molecule containing $C$. Thus, either $v = z_k$, or $v$ is an $\calM^{(s)}$-non-reducible node on the path $P_{w_k, z_k}^{\calF}$. In the former event we have $v \in \tail(k+1) \subseteq \tail(t+1)$ as per \Cref{lem:increasing-tail(t)-N(t)}, whereas in the latter event we have $v \in N(t)$.

    \medskip
    \noindent 
    \textbf{Case 2. $(u, v) \notin E(\calF^{(s)})$ and the node $v$  belongs to  the $k^{th}$ $\calM^{(s)}$-block of $P$.}
    Since the node $u$ also belongs to the $k^{th}$ $\calM^{(s)}$-block of $P$ and is $\calM^{(s)}$-reducible, the node $v$ must be $\calM^{(s)}$-non-reducible, for otherwise, we would have $v \in V(C)$ according to the definition of $\calM^{(s)}$-atoms. 
    It follows that $v \in N(t)$.

    \medskip
    \noindent 
    \textbf{Case 3. $(u, v) \notin E(\calF^{(s)})$ and $v$ appears in the $i^{th}$ $\calM^{(s)}$-block of $P$, for some $i \in [0, k-1]$.}
    If $v$ is $\calM^{(s)}$-non-reducible, we have $v \in N(t)$ as desired.
    Otherwise, $v$ is contained in an $\calM^{(s)}$-atom.
    Define the following sequence $ A := (w_0,z_1, w_1, \ldots, z_i, v, u)$.
    It is straightforward to verify that $A$ is an augmenting chain of length $i+1 \leq k \leq H-1$ w.r.t.~$\left(\calF^{(s)}, \calM^{(s)}, D^{(s)}\right)$.
    Since $\left(\calF^{(s)}, \calM^{(s)}, D^{(s)}\right)$ is an $(H+1)$-configuration, this is a contradiction.

    \medskip
    \noindent 
    \textbf{Case 4. $(u, v) \notin E(\calF^{(s)})$, and $v$ does {\em not} belong to the $i^{th}$ $\calM^{(s)}$-block of $P$, for each $i \in [0, k]$.}
    In this case, we must necessarily have $v \in \tail(k+1)$ by definition. Since $k \in [0, t]$, we have $\tail(k+1) \subseteq \tail(t+1)$ as per \Cref{lem:increasing-tail(t)-N(t)}. Thus, we infer that $v \in \tail(t+1)$.
\end{proof}

\begin{corollary}\label{cor:outgoing-edges-of-t-atoms}
    Fix any $t \in [0, H-1]$, and consider any edge $(u,v) \in E$ such that $u \in V(C)$ and $v \notin V(C)$ for some $t$-atom $C$.
    Then, we must have $v \in W(t+1)$.
\end{corollary}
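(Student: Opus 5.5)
This corollary follows directly from \Cref{lem:outgoing-edges-of-t-atoms}, together with the definitions $W(t+1) = \tail(t+1) \cup N(t+1)$ and the monotonicity in \Cref{lem:increasing-tail(t)-N(t)}. The plan is to take an arbitrary edge $(u,v) \in E$ with $u \in V(C)$ and $v \notin V(C)$ for some $t$-atom $C$, with $t \in [0, H-1]$, and show $v \in W(t+1)$.

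First, I apply \Cref{lem:outgoing-edges-of-t-atoms} to this edge. Its hypotheses are identical to those of the corollary, so it gives $v \in \tail(t+1) \cup N(t)$.

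Second, I use \Cref{lem:increasing-tail(t)-N(t)}. It applies because $t \in [0, H-1]$, and it gives $N(t) \subseteq N(t+1)$. Hence $\tail(t+1) \cup N(t) \subseteq \tail(t+1) \cup N(t+1)$, and the right-hand side equals $W(t+1)$ by definition. Therefore $v \in W(t+1)$.

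There is no real obstacle. The only thing to check is that the index ranges line up: both $\tail(t+1)$ and $N(t+1)$ must be defined, which holds because $t+1 \le H$, and the monotonicity observation must be invoked at index $t$ itself, which is in its allowed range $[0, H-1]$.
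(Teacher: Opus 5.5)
Your proposal is correct and is exactly the paper's argument: apply \Cref{lem:outgoing-edges-of-t-atoms} to get $v \in \tail(t+1) \cup N(t)$, then use $N(t) \subseteq N(t+1)$ from \Cref{lem:increasing-tail(t)-N(t)} together with $W(t+1) = \tail(t+1) \cup N(t+1)$. Your check of the index ranges ($t \in [0, H-1]$, $t+1 \le H$) is also fine.
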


\begin{proof}
As $W(t+1) = \tail(t+1) \cup N(t+1)$, the corollary follows from \Cref{lem:increasing-tail(t)-N(t)} and \Cref{lem:outgoing-edges-of-t-atoms}.
\end{proof}

\begin{lemma}\label{Wt-has-full-degree}
    For every $t \in [1, H]$ and every node $x \in W(t) - D^{(s)}$, we have $\deg_{\calF^{(s)}}(x) = \Delta^\star+1$.
\end{lemma}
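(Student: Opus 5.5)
We must show every $x \in W(t) - D^{(s)}$ has full degree $\Delta^\star+1$ in $\calF^{(s)}$. Since $W(t) = \tail(t) \cup N(t)$, we split into two cases.

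\textbf{Case $x \in N(t)$.} Here $x$ is $\calM^{(s)}$-non-reducible by definition. We recall how atoms are built: every node of an $\calM^{(s)}$-molecule with degree at most $\Delta^\star$ is initially a singleton atom, and atoms only grow. Hence every non-reducible node is a bad node, i.e.\ has $\deg_{\calF^{(s)}}(x) = \Delta^\star+1$. This case is immediate and does not even use $x \notin D^{(s)}$.

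\textbf{Case $x \in \tail(t)$.} Then there is an alternating chain $P = (w_0, z_1, \ldots, z_k, w_k) \in \calP_{t-1}$ with $k \le t-1 \le H-1$ such that $(w_k, x) \in E - E(\calF^{(s)})$ and $x$ lies outside every $i^{\th}$ $\calM^{(s)}$-block of $P$. We argue by contradiction and suppose $\deg_{\calF^{(s)}}(x) \le \Delta^\star$. We then classify $x$:
\begin{itemize}
\item If $x$ is $\calM^{(s)}$-covered, it is reducible, since a covered node of degree at most $\Delta^\star$ is a singleton atom from the start of the atom-building procedure.
\item If $x$ is $\calM^{(s)}$-free, then $\deg_{\calF^{(s)}}(x) \le \Delta^\star$ and $x \notin D^{(s)}$ by hypothesis.
\end{itemize}
In either case $x$ satisfies property~\ref{property-zell1-reducible} of \Cref{def:aug-chain}. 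Properties~\ref{property1:aug-chain}, \ref{property-zell1-outside-blocks} and~\ref{property-well-zell1} hold directly from the definition of $\tail(P)$. It remains to check that the nodes of $A := (w_0, z_1, \ldots, z_k, w_k, x)$ are distinct.
\begin{itemize}
\item $x \ne w_k$, since $(w_k, x)$ is an edge.
\item For $i \ge 1$, each $w_i$ lies in the $i^{\th}$ block, whereas $x$ does not, so $x \ne w_i$.
\item $w_0$ lies in the $0^{\th}$ block, whereas $x$ does not, so $x \ne w_0$.
\item $x \ne z_i$: I expect this to be the one delicate spot. The block $T_{w_i \leftarrow z_i}$ excludes $z_i$, so the block condition alone does not rule out $x = z_i$. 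The plan is to use the laminar structure (\Cref{obs:key}) together with property~\ref{property-wi-outside-jth-block}. If $x = z_i$ with $i \le k$, then $z_i$ is either $\calM^{(s)}$-non-reducible, hence of degree $\Delta^\star+1$, which contradicts our assumption directly, or the root of a normal molecule. In the root case, I would exploit that $z_i$ is adjacent through a non-forest edge to $w_{i-1}$, and try to shortcut to the shorter chain $(w_0, \ldots, z_{i-1}, w_{i-1}, z_i)$, which still has length at most $H$. If this case resists, an alternative is to check whether the paper's definition of $\tail(P)$ implicitly forces $x \notin \{z_1,\ldots,z_k\}$.
\end{itemize}

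With distinctness in hand, $A$ is an augmenting chain of length $k+1 \le t \le H$ with respect to $(\calF^{(s)}, \calM^{(s)}, D^{(s)})$. This contradicts the fact that $(\calF^{(s)}, \calM^{(s)}, D^{(s)})$ is an $(H+1)$-configuration (\Cref{assume:main}). Hence $\deg_{\calF^{(s)}}(x) = \Delta^\star+1$, since validity of the forest gives the matching upper bound.
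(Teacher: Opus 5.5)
Your proof is correct and follows the paper's argument. The $N(t)$ case is immediate from non-reducibility, and in the $\tail(t)$ case appending $x$ to the witnessing chain $P$ gives an augmenting chain of length $k+1 \le H$, contradicting the $(H+1)$-configuration. The distinctness issue you flag ($x = z_i$) is passed over silently in the paper's proof. Your shortcut settles it, so you can state it firmly rather than tentatively: if $x = z_i$ is the root of a normal molecule, then $(w_0, z_1, \ldots, z_{i-1}, w_{i-1}, z_i)$ is an augmenting chain of length $i \le H$. This holds because $z_i = x$ is $\calM^{(s)}$-free, has degree at most $\Delta^\star$, lies outside $D^{(s)}$, and lies outside every block of $P$ by the definition of $\tail(P)$. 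This is exactly the $z_j = z_{k+1}$ sub-case of \Cref{claim:pseudoChain-into-chain}.
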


\begin{proof}
Fix any $t \in [1, H]$ and any node $x \in W(t) - D^{(s)}$.
If $x \in N(t)$, then $x$ is $\calM^{(s)}$-non-reducible, and hence $\deg_{\calF^{(s)}}(x) = \Delta^\star+1$. 
For the rest of the proof, suppose that $x \in \tail(t) - \left(D^{(s)} \cup N(t) \right)$. 

Accordingly, there is an alternating chain $P = (w_0, z_1, w_1, \ldots, z_k, w_k) \in \calP_{t-1}$ of length $k \in [0, t-1]$ such that $(w_k, x) \in E - E\left(\calF^{(s)}\right)$, and $x$ lies outside the $i^{th}$ $\calM^{(s)}$-block of $P$ for all $i \in [0, k]$. Since $x \notin \left(D^{(s)} \cup N(t)\right)$, we are now left with the following possibilities.

\medskip
\noindent
{\bf Case 1: $\deg_{\calF^{(s)}}(x) \leq \Delta^\star$, and $x$ is either $\calM^{(s)}$-reducible or $\calM^{(s)}$-free.} Here, the sequence $(w_0, z_1, w_1, \ldots, z_k, w_k, x)$, obtained by appending $x$ to $P$, is an augmenting chain of length $k+1 \leq t \leq H$ w.r.t.~$\left( \calF^{(s)}, \calM^{(s)}, D^{(s)}\right)$. This contradicts \Cref{assume:main}. So, we never end up in this case.

\medskip
\noindent
{\bf Case 2: $\deg_{\calF^{(s)}}(x) = \Delta^\star + 1$.} Here, we already have the desired guarantee on the degree of $x$.
\end{proof}

\begin{corollary}\label{sum-of-deg-Wt}
    For every $t \in [1, H]$, we have
    $\sum_{x \in W(t)} \deg_{\calF^{(s)}}(x) \geq (\Delta^\star + 1) \cdot |W(t)|  - f/4$.
\end{corollary}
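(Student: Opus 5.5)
By \Cref{Wt-has-full-degree}, every node $x \in W(t) - D^{(s)}$ has $\deg_{\calF^{(s)}}(x) = \Delta^\star+1$. Every node in $D^{(s)}$ has degree exactly $\Delta^\star$, by the definition of a valid configuration. Hence
\[
\sum_{x \in W(t)} \deg_{\calF^{(s)}}(x) \geq (\Delta^\star+1)\cdot |W(t)| - |W(t) \cap D^{(s)}| \geq (\Delta^\star+1)\cdot|W(t)| - |D^{(s)}|,
\]
and the whole task reduces to showing $|D^{(s)}| \leq f/4$.

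To bound $|D^{(s)}|$, I would trace how $D$ evolves. We start with $D^{(0)} = \emptyset$. Applying an augmenting chain $A^{(i)}$ adds only the set $Y$, and $Y \subseteq \{y_1, \ldots, y_\ell\}$ where $\ell+1$ is the length of $A^{(i)}$. Each $A^{(i)}$ has length at most $H$, so each application adds at most $H-1 < H$ nodes to $D$. Since $D$ only grows, we get
\[
|D^{(s)}| \leq s \cdot H.
\]
Now we use \Cref{assume:main}, namely $s < f^3/(10^5 n^2)$, together with $H = \lceil 20n/f \rceil \leq 21 n/f$ (which holds since $f \leq n$). These give
\[
|D^{(s)}| < \frac{f^3}{10^5 n^2} \cdot \frac{21 n}{f} = \frac{21 f^2}{10^5 n} \leq \frac{21 f}{10^5} \leq f/4,
\]
again using $f \leq n$.

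The main point to verify is that the per-chain increment of $D$ really is at most $H$. This follows directly from the update rule $D' \leftarrow D \cup Y$ and the length bound on the chains in \Cref{key-lemma}. The rest is the arithmetic above. I would also note that \Cref{Wt-has-full-degree} needs $t \geq 1$, which matches the range of $t$ in the corollary.
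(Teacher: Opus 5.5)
Your proposal is correct and follows the paper's proof essentially step for step. Like the paper, you combine \Cref{Wt-has-full-degree} with the fact that dirty nodes have degree exactly $\Delta^\star$, bound $|D^{(s)}| \leq s \cdot H$ via the per-chain increment of $D$, and finish with \Cref{assume:main}. Your explicit arithmetic, $H \leq 21n/f$ using $f \leq n$, fills in a step the paper leaves unstated.
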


\begin{proof}
 Observe that  $\left| D^{(i)} - D^{(i-1)}\right| \leq H$ for each $i \in [1, s]$, since each augmenting chain $A^{(i)}$ is of length at most $H$ (see the statement of \Cref{key-lemma}). Moreover, we have $D^{(0)} = \emptyset$. This implies that
\begin{equation}
\label{eq:new}
\left| D^{(s)} \right| \leq \left| D^{(0)} \right| + \sum_{i=1}^{s} \left| D^{(i)} - D^{(i-1)} \right| \leq s \cdot H.
\end{equation}

By definition, every node $x \in D^{(s)}$ satisfies $\deg_{\calF^{(s)}}(x) = \Delta^\star$. This observation, along with \Cref{eq:new} and \Cref{Wt-has-full-degree}, gives us:
\begin{eqnarray*}
\sum_{x \in W(t)} \deg_{\calF^{(s)}}(x) & = &  (\Delta^\star+1) \cdot |W(t)| - \left| W(t) \cap D^{(s)} \right| \\
 & \geq & (\Delta^\star+1) \cdot |W(t)|   - \left| D^{(s)} \right| \\
 & \geq & (\Delta^\star+1) \cdot |W(t)|  - s \cdot H \\
 & \geq & (\Delta^\star+1) \cdot |W(t)|  - f/4.
\end{eqnarray*}
The last inequality in the above derivation follows from \Cref{assume:main}, and the fact that $H = \lceil 20 n/f \rceil$ as per the statement of \Cref{key-lemma}. This concludes the proof of the corollary.
\end{proof}

\begin{lemma}\label{Wt-not-in-atom}
    For every $t \in [1, H]$ and $x \in W(t)$, the node $x$ is {\em not} part of  {\em any} $\calM^{(s)}$-atom.
\end{lemma}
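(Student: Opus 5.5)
We show that for every $t \in [1, H]$, no node of $W(t) = \tail(t) \cup N(t)$ lies in an $\calM^{(s)}$-atom. We treat the two parts of the union separately.

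\textbf{Nodes of $N(t)$.} This part is immediate. Every node of $N(t)$ is $\calM^{(s)}$-non-reducible by definition. $\calM^{(s)}$-reducible nodes are, by definition, exactly the nodes lying in $\calM^{(s)}$-atoms. So no node of $N(t)$ lies in an atom.

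\textbf{Nodes of $\tail(t)$.} Suppose for contradiction that $x \in \tail(t)$ belongs to some $\calM^{(s)}$-atom. By the definition of $\tail(t)$, there is an alternating chain $P = (w_0, z_1, w_1, \ldots, z_k, w_k) \in \calP_{t-1}$ of length $k \le t-1$ with two properties:
\begin{itemize}
\item $(w_k, x) \in E - E(\calF^{(s)})$;
\item $x$ lies outside the $i^{\th}$ $\calM^{(s)}$-block of $P$ for every $i \in [0,k]$.
\end{itemize}
Form $A := (w_0, z_1, \ldots, z_k, w_k, x)$. We check the conditions of \Cref{def:aug-chain} for $A$ with respect to $(\calF^{(s)}, \calM^{(s)}, D^{(s)})$:
\begin{itemize}
\item Property~\ref{property1:aug-chain} holds because the prefix is exactly $P$, which is alternating.
\item Property~\ref{property-zell1-outside-blocks} holds by the block-avoidance condition in the definition of $\tail(P)$.
\item Property~\ref{property-well-zell1} holds because $(w_k, x)$ is a non-forest edge.
\item Property~\ref{property-zell1-reducible} holds because $x$ lies in an atom, hence is $\calM^{(s)}$-reducible.
\end{itemize}

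\textbf{Distinctness.} The one point needing care is that the nodes of $A$ must be distinct, i.e.\ $x \notin \{w_0, z_1, \ldots, w_k\}$. For each $i \ge 1$, both $z_i$ and $w_i$ lie in the $i^{\th}$ block $T^{\calF^{(s)}}_{w_i \leftarrow z_i}$. The subtree $T_{w_i \leftarrow z_i}$ contains $w_i$ and, being bounded by the edge at $z_i$, excludes $z_i$; the block definitions are loose enough here that I would handle both cases, $z_i$ in or out of the block, explicitly. Also, $w_0$ lies in the $0^{\th}$ block, the special molecule containing it. Since $x$ avoids every block of $P$, we get $x \neq w_i$ for all $i$, and $x \neq w_0$.

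It remains to rule out $x = z_i$ for $i \in [1,k]$. The node $z_i$ is adjacent in $\calF^{(s)}$ to the $i^{\th}$ block. If $z_i$ were in an atom, I would argue as follows. The block condition says $z_i$ is either $\calM^{(s)}$-non-reducible or the root of a normal molecule. In the first case $z_i$ is in no atom. In the second case $z_i$ is $\calM^{(s)}$-free, so again it is in no atom. Hence $z_i$ is in no atom, while $x$ is, so $x \ne z_i$.

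\textbf{Conclusion.} Thus $A$ is an augmenting chain of length $k+1 \le t \le H$. This contradicts \Cref{assume:main}, under which $(\calF^{(s)}, \calM^{(s)}, D^{(s)})$ is an $(H+1)$-configuration and so has no augmenting chain of length at most $H$. This is the same pattern as Case~1 in the proof of \Cref{Wt-has-full-degree}. The main (mild) obstacle is the distinctness verification above; the rest is direct from the definitions.
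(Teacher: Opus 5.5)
Your proof is correct and is essentially the paper's argument: nodes of $N(t)$ are excluded by non-reducibility, and for $x\in\tail(t)$ you append $x$ to the witnessing alternating chain, obtaining an augmenting chain of length at most $t\le H$ and contradicting \Cref{assume:main}. Your explicit distinctness check, which the paper leaves implicit, is a fine addition ($x\neq w_i$ because $x$ avoids every block, and $x\neq z_i$ because each $z_i$ is $\calM^{(s)}$-non-reducible or $\calM^{(s)}$-free); just drop the opening remark that both $z_i$ and $w_i$ lie in the $i^{\th}$ block, since $z_i$ never lies in $T^{\calF^{(s)}}_{w_i\leftarrow z_i}$.
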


\begin{proof}
The argument is very similar to the proof of \Cref{Wt-has-full-degree}. Consider any $t \in [1, H]$ and any node $x \in W(t) = \tail(t) \cup N(t)$. If $x \in N(t)$, then by definition $x$ is $\calM^{(s)}$-non-reducible, and so $x$ can {\em not} be part of any $\calM^{(s)}$-atom. For the rest of the proof, suppose that $x \in \tail(t) - N(t)$.

Accordingly, there is an alternating chain $P = (w_0, z_1, w_1, \ldots, z_k, w_k) \in \calP_{t-1}$ of length $k \in [0, t-1]$ such that $(w_k, x) \in E - E\left(\calF^{(s)}\right)$, and $x$ lies outside the $i^{th}$ $\calM^{(s)}$-block of $P$ for all $i \in [0, k]$. 

Now, if $x$ is part of some $\calM^{(s)}$-atom, then the sequence $(w_0, z_1, w_1, \ldots, z_k, w_k, x)$, obtained by appending $x$ to $P$, is an augmenting chain of length $k+1\leq t 
\leq H$ w.r.t.~$\left( \calF^{(s)}, \calM^{(s)}, D^{(s)}\right)$. This contradicts \Cref{assume:main}. Thus, we must have that $x$ is {\em not} part of any $\calM^{(s)}$-atom.
\end{proof}

\subsection{Analyzing the Potential Function: Proofs of \Cref{lm:upper:bound:potential} and \Cref{lem:number-of-avoiding-W(t)}}\label{sec:proof-lem:number-of-avoiding-W(t)}

\subsubsection{Proof of \Cref{lm:upper:bound:potential}}

As the sum of the degrees of all the nodes in any forest is at most $2n$, from \Cref{sum-of-deg-Wt} we infer that

\begin{align*}
   2n &\geq \sum_{x \in W(t)} \deg_{\calF^{(s^\star)}}(x) \geq (\Delta^\star + 1) \cdot |W(t)| - f/4 \geq \Delta^\star \cdot |W(t)| - n 
\end{align*}
Rearranging the terms, we get $\Delta^\star \cdot |W(t)| \leq 3n$.

\subsubsection{Proof of the First Property of \Cref{lem:number-of-avoiding-W(t)}}

Fix any $t \in [0, H-1]$.
Define $\calC_t$ to be the collection of all node sets $C$ satisfying one of the following;
\begin{itemize}
    \item Either $C$ is the node set of a $W(t)$-avoiding molecule of $\calF^{(s)}$ which is a $t$-atom as well, i.e., $C = V(M)$ for some $M \in \calM_{\texttt{atom}}(W(t))$,
    \item Or $C$ is a singleton set consisting of a node $x \in W(t) = \tail(t) \cup N(t)$.
\end{itemize}
 According to \Cref{Wt-not-in-atom}, the collection $\calC_t$ is well-defined since no nodes in $W(t)$ is part of any $\calM^{(s)}$-atom, every component in  $\calM_{\texttt{atom}}(W(t))$ is a $t$-atom, and each $t$-atom is an $\calM^{(s)}$-atom.
The optimal spanning tree $T^\star$, with maximum degree $\Delta^\star$, connects the node sets of $\calC_t$ together.
Thus, there must exist at least $\left| \calC_t \right| - 1$ edges $(u, v) \in E(T^\star)$ with one endpoint $u$ being part of some node set $C_u \in \calC_t$ and the other endpoint $v$ being outside that node set $C_u$.
Since $W(t) \subseteq W(t+1)$, \Cref{cor:outgoing-edges-of-t-atoms}
implies that each of these edges must be incident on $W(t+1)$.
Accordingly, we get
\begin{align*}
   \Delta^\star \cdot |W(t+1)| \geq \sum_{x \in W(t+1)} \deg_{T^\star}(x) \geq |\calC_t|-1 = \left| \calM_{\texttt{atom}}(W(t)) \right| + |W(t)| - 1.
\end{align*}

\subsubsection{Proof of the Second property of \Cref{lem:number-of-avoiding-W(t)}}

Fix any $t \in [0, H-1]$. We start with the following important claim.

\begin{claim}
    \label{claim:num-incident-on-S}
    Consider any component $C$ of $\calF^{(s)}$, and let $\calM_C(W(t)) \subseteq \calM(W(t))$ denote the collection of $W(t)$-avoiding molecules that are subgraphs of $C$. Then, we must have
   $$\left| \calM_C(W(t)) \right| \geq  1 + \sum_{x\in W(t) \cap V(C)}(\deg_{\calF^{(s)}}(x) - 2).$$ 
\end{claim}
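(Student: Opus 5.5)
Write $X := W(t) \cap V(C)$. The plan is a standard tree-counting argument: remove $X$ from the tree $C$ and count the components that remain. By the definition of $X$-avoiding molecules, every component of $C - X$ is a $W(t)$-avoiding molecule of $\calF^{(s)}$ contained in $C$. Such a component is either a normal molecule, attached to $X$ by exactly one forest edge with its root in $X$, or, when $X = \emptyset$, the special molecule $C$ itself. So I only need a lower bound on the number of components of $C - X$.

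First dispose of the case $X = \emptyset$. Then $C$ itself is $W(t)$-avoiding, which gives $|\calM_C(W(t))| \geq 1$, matching the right-hand side $1 + 0$. (If $W(t)$ does not meet $C$, then $C$ is trivially a component of $\calF^{(s)} - W(t)$.)

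Now assume $X \neq \emptyset$ and count edges of $C$. Let $e_X$ be the number of edges of $C$ with both endpoints in $X$, and let $d$ be the number of edges with exactly one endpoint in $X$. Then
\begin{equation*}
\sum_{x \in X} \deg_{\calF^{(s)}}(x) = 2e_X + d .
\end{equation*}
Here $\deg_C(x) = \deg_{\calF^{(s)}}(x)$, since $C$ is a component of $\calF^{(s)}$. Each component of $C - X$ is joined to $X$ by at least one edge, because $C$ is connected and $X \neq \emptyset$.

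Next, contract each component of $C - X$ to a single vertex. This produces a tree on $|X| + k$ vertices, where $k$ is the number of components of $C - X$. Because $C$ is a tree, contracting node-disjoint connected subtrees gives no parallel edges or cycles, so the number of edges is preserved. Hence
\begin{equation*}
e_X + d = |X| + k - 1 .
\end{equation*}
Substituting $d = \sum_{x \in X} \deg(x) - 2e_X$ yields
\begin{equation*}
k = 1 + \sum_{x \in X}\bigl(\deg_{\calF^{(s)}}(x) - 2\bigr) + (|X| - e_X) .
\end{equation*}
The subgraph $C[X]$ is a forest, so $e_X \leq |X| - 1 < |X|$, and therefore $k \geq 1 + \sum_{x \in X} (\deg(x) - 2)$.

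It remains to check that each of the $k$ components counted is a $W(t)$-avoiding molecule lying inside $C$. For this, take a component $K$ of $C - X$ and let $(x, y)$, with $x \in X$ and $y \in V(K)$, be the unique forest edge attaching it. Then $K = T^{\calF^{(s)}}_{y x}$, so $K$ is a normal molecule with root $x$. This follows from the tree structure and needs no earlier lemma. The one point needing care is showing that $K$ has exactly one such edge; that again follows from acyclicity of $C$, since two edges from $K$ into $X$ would close a cycle through the connected sets. No real obstacle is expected beyond this bookkeeping.
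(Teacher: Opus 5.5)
Your proof has a genuine gap in its last step. It is not true that every component $K$ of $C - X$ is joined to $X$ by exactly one forest edge. Acyclicity only forbids two edges from $K$ to the \emph{same} node of $X$, or to two nodes of $X$ that are already connected outside $K$. Since $X = W(t)\cap V(C)$ need not induce a connected subgraph of $C$, a component can sit between two different nodes of $X$. For example, let $C$ be the path $a - x_1 - b - x_2 - c$ and $X=\{x_1,x_2\}$. Then $C-X$ has $k=3$ components, but $\{b\}$ has two forest edges leaving it, so it is not a molecule; note that $T^{\calF^{(s)}}_{b x_1}$ contains $x_2$ and $c$. Hence $|\calM_C(W(t))| = 2 < k$.

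So the inequality $|\calM_C(W(t))|\ge k$ that your argument rests on is false in general. The definition of a $W(t)$-avoiding molecule requires the component to be a molecule in the first place, and components attached to $X$ at two or more points are exactly the ones that fail this. The claim still holds in this example ($2 \geq 1$), but only because of the slack $|X|-e_X$ that you discarded.

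To repair the count, let $k_{\geq 2}$ be the number of components with at least two edges into $X$. The components with exactly one edge $(y,x)$ are indeed the normal molecules $T^{\calF^{(s)}}_{yx}$. In your contracted tree every component-vertex has degree at least $1$, so $d \geq k + k_{\geq 2}$. Combined with $e_X + d = |X| + k - 1$, this gives $k_{\geq 2} \le |X| - e_X - 1$. Hence, when $X\neq\emptyset$, the number of molecules satisfies $k - k_{\geq 2} \geq 2 + \sum_{x\in X}(\deg_{\calF^{(s)}}(x) - 2)$, which is more than needed.

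The paper avoids the issue differently. It deletes the nodes of $X$ one at a time, in non-decreasing order of depth in a rooted copy of $C$. When $x_i$ is deleted, the subtrees below its (at least $\deg_{\calF^{(s)}}(x_i) - 1$) children contain no previously deleted node, so each is attached only through $x_i$ and is a genuine molecule. At most one molecule, the component containing $x_i$, is destroyed. Either way, the missing ingredient is an argument that separates molecules from components attached to $X$ at several points.
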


\begin{proof}
The claim is trivial if $W(t) \cap V(C) = \emptyset$.
For the rest of the proof, suppose that $W(t) \cap V(C) \neq \emptyset$.

We select {\em any} node $v \in V(C)$, and treat $C$ as a  tree rooted at $v$.
Next, we sort the nodes in $W(t) \cap C$ in non-decreasing order of their depths in this rooted tree $C$.
Suppose that $W(t) \cap V(C):=\{x_1,x_2,\ldots,x_\ell\}$ in  this sorted order. 
Thus, for all $1 \leq j \leq i \leq \ell$, the depth of $x_i$ is at least the depth of $x_j$ in $C$. For each $i \in [1, \ell]$, let $\calM_i$ denote the collection of all $\{x_1, \ldots, x_i\}$-avoiding molecules that are subgraphs of $C$. It is easy to observe that $\calM_{\ell} = \calM_C(W(t))$.
For consistency of notations, we define $\calM_0 = C$.
Below, we track the growth of $\left| \calM_i\right|$ as a function of $i \in [0, \ell]$. Intuitively, this can be visualized by considering a process whereby we remove the nodes  $x_1,x_2,\ldots,x_\ell$ (along with their incident edges) one after another from $C$, and we keep track of the number of components created during this process that also happen to be molecules of $\calF^{(s)}$.

In the beginning, we have 
\begin{equation}
\label{eq:count:0}
\left|\calM_0\right| = 1.
\end{equation}
 Once the node $x_1$ gets deleted, we have $\deg_{\calF^{(s)}}(x_1)$ components created during the process that are molecules in  $\calF^{(s)}$, and so $\left|\calM_1\right| = \deg_{\calF^{(s)}}(x_1)$. Thus, we infer that
\begin{equation}
\label{eq:count:1}
\left|\calM_1\right| -  \left|\calM_0\right|= \deg_{\calF^{(s)}}(x_1) - 1 \geq \deg_{\calF^{(s)}}(x_1) - 2.
\end{equation}
 Now, consider any $i \in [2, \ell]$. Just before removing the node $x_i$, there are at least $\deg_{\calF^{(s)}}(x_i)-1$ edges incident on it (one incident edge between $x_i$ and its parent in the rooted tree $C$ might have been deleted previously during the process, since we are considering the nodes $x_1, x_2, \ldots$ in non-decreasing order of their depths in $C$). Accordingly, as we remove $x_i$, the number of components created during the process that also happen to be molecules in $\calF^{(s)}$ increases by at least $\deg_{\calF^{(s)}}(x_i)-1$ (the sub-trees of $C$ rooted at children of $x_i$), and one component is damaged since we removed $x_i$.
 Hence, we get
 \begin{equation}
\label{eq:count:2}
\left|\calM_i\right| -  \left|\calM_{i-1}\right|= \deg_{\calF^{(s)}}(x_i) - 2 \text{ for all } i \in [2, \ell].
\end{equation}
Since $\calM_{\ell} = \calM_C(W(t))$ and $W(t) \cap V(C) = \{x_1, \ldots, x_\ell\}$, the lemma follows from summing up \Cref{eq:count:0}, \Cref{eq:count:1} and \Cref{eq:count:2} for all $i \in [2, \ell]$.
\end{proof}

Observe that each time we apply an augmenting chain w.r.t.~$(\calF^{(i)}, \calM^{(i)}, D^{(i)})$ for each $i \in [0, s-1]$, the number of components in the underlying valid forest decreases by one. Since there are $f$ components in $\calF^{(0)}$, it follows that there are $f - s \geq 3f/4$ components in $\calF^{(s)}$ (see the statement of \Cref{key-lemma}), where the inequality follows from \Cref{assume:main}. 

Now, summing the inequalities guaranteed by \Cref{claim:num-incident-on-S} over all the components of $\calF^{(s)}$, we get
\begin{eqnarray*}
\left| \calM(W(t)) \right| \geq  3f/4 + \sum_{x\in W(t)}(\deg_{\calF^{(s)}}(x) - 2) \geq f/2 + (\Delta^\star-1) \cdot |W(t)| ,
\end{eqnarray*}
where the last inequality follows from \Cref{sum-of-deg-Wt}. This concludes the proof.

\subsubsection{Proof of the Third Property of \Cref{lem:number-of-avoiding-W(t)}}

Fix any $t \in [0, H-1]$. For each $i \in [0, s]$, let $\free(i) \subseteq V$ denote the set of all $\calM^{(i)}$-free nodes (recall $\calM^{(i)}$ from \Cref{key-lemma}). We start with an observation which tracks the growth of these sets $\free(i)$ over $i \in [0, s]$.

\begin{observation}
\label{ob:nested}
For each $i \in [1, s]$, we have $\free(i-1) \subseteq \free(i)$, and $\left| \free(i) - \free(i-1)\right| \leq (2\theta) \cdot (H+1)$.
\end{observation}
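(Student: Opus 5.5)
We need to prove two things about the transition from $\calM^{(i-1)}$ to $\calM^{(i)}$: monotonicity of the free sets, and a bound on how many nodes become newly free. The plan is to read both directly off the update rule for $\calM$ when an augmenting chain is applied.

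\textbf{Monotonicity.} By construction, $\calM^{(i)}$ is obtained from $\calM^{(i-1)}$ by deleting the molecules affected by $A^{(i)}$, and no molecule is ever inserted. So every $\calM^{(i)}$-molecule is an $\calM^{(i-1)}$-molecule, with the same node set: the unaffected molecules keep their edges and node sets, as argued in the proof of \Cref{lem:apply-aug-chain}. Hence a node covered by $\calM^{(i)}$ was already covered by $\calM^{(i-1)}$. Taking complements gives $\free(i-1) \subseteq \free(i)$.

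\textbf{Counting newly freed nodes.} The set $\free(i) - \free(i-1)$ is exactly the union of the node sets of the affected molecules. A molecule is affected only if it contains one of $w_0, \ldots, w_\ell, z_{\ell+1}$, where $\ell+1 \le H$ is the length of $A^{(i)}$. That is at most $\ell+2 \le H+1$ nodes, and since the molecules are node-disjoint, at most $H+1$ molecules are affected. It therefore suffices to show that every $\calM^{(i-1)}$-molecule has at most $2\theta$ nodes.

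\textbf{Size of molecules (the main step).} Since $\calM^{(i-1)} \subseteq \calM^{(0)}$, and molecules do not change while they remain in $\calM$, it is enough to bound the molecules of the initial $\theta$-molecular decomposition $\calM^{(0)}$.
\begin{itemize}
\item Normal molecules have at most $\theta$ nodes, by property~\ref{property:theta-max-decomposition-2} of a $\theta$-molecular decomposition.
\item For special molecules, the definition only forces components with at most $2\theta$ nodes to be special; it does not directly forbid a big component from being a special molecule. The construction described after the definition, however, only ever designates small components (at most $2\theta$ nodes) as special molecules, and builds only normal molecules inside big components. So I would invoke that construction, or note that the $\theta$-molecular decomposition used in \Cref{key-lemma} is the one it produces, to conclude that special molecules have at most $2\theta$ nodes.
\end{itemize}

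If one insists on an arbitrary $\theta$-molecular decomposition, an extra argument would be needed to rule out a special molecule with more than $2\theta$ nodes, and I expect this to be the only delicate point. A plausible route is property~\ref{property:theta-max-decomposition-3}, though it only speaks about molecules containing free nodes, so it does not obviously exclude such a component. I would therefore adopt the constructive reading.

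\textbf{Conclusion.} Combining the three steps, $|\free(i) - \free(i-1)| \le (H+1)\cdot 2\theta$.
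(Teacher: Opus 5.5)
Your proof is correct and matches the paper's argument: $\calM^{(i)}$ is obtained from $\calM^{(i-1)}$ only by deleting the affected molecules, at most $H+1$ molecules are affected because they are node-disjoint and each contains one of $w_0,\ldots,w_\ell,z_{\ell+1}$, and every $\calM^{(0)}$-molecule has at most $2\theta$ nodes. Your caveat is also justified: the paper cites only the first two properties of a $\theta$-molecular decomposition for the $2\theta$ bound, and these do not by themselves rule out a large special molecule, so the bound rests on the fact you invoke, namely that the decomposition the algorithm computes makes only components with at most $2\theta$ nodes special.
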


\begin{proof}
After applying the augmenting chain $A^{(i)}$ w.r.t.~$\left(\calF^{(i-1)}, \calM^{(i-1)}, D^{(i-1)}\right)$, we obtain $\calM^{(i)}$ by removing the {\em affected} molecules from $\calM^{(i-1)}$ (see the discussion  after \Cref{def:aug-chain}). The set $\free(i)$ consists of all the nodes in $\free(i-1)$, {\em plus} all the nodes in the affected $\calM^{(i-1)}$-molecules.  

Finally, we note that there can be at most $H+1$ affected molecules in $\calM^{(i-1)}$ due to the application of the augmenting chain $A^{(i)}$ (since $A^{(i)}$ is of length at most $H$), and each affected molecule contains at most $2\theta$ nodes (see properties \ref{property:theta-max-decomposition-1} and \ref{property:theta-max-decomposition-2} of the initial $\theta$-molecular decomposition $\calM^{(0)}$ of $\calF^{(0)}$).
\end{proof}

\begin{corollary}
\label{cor:nested}
We have $\left| \free(s) - \free(0)\right| \leq s \cdot (2\theta) \cdot (H+1)$.
\end{corollary}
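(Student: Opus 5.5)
The statement follows by telescoping \Cref{ob:nested} over the $s$ augmenting-chain applications. The plan is to write $\free(s) - \free(0)$ as a disjoint union of the one-step increments and then bound each increment.

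\textbf{Step 1 (decomposition).} By the first part of \Cref{ob:nested}, the sets are nested:
$$\free(0) \subseteq \free(1) \subseteq \cdots \subseteq \free(s).$$
Hence every node $v \in \free(s) - \free(0)$ has a unique smallest index $i \in [1, s]$ with $v \in \free(i)$. Since $v \notin \free(0)$ and $i$ is minimal, $v \notin \free(i-1)$. Therefore
$$\free(s) - \free(0) = \bigcup_{i=1}^{s} \bigl(\free(i) - \free(i-1)\bigr),$$
and the union is disjoint. (Disjointness is not actually needed; the union bound below already suffices.)

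\textbf{Step 2 (bounding each increment).} By the second part of \Cref{ob:nested}, each increment satisfies
$$\left|\free(i) - \free(i-1)\right| \leq (2\theta)\cdot(H+1).$$

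\textbf{Step 3 (summing).} Summing this bound over the $s$ indices $i \in [1, s]$ gives
$$\left|\free(s) - \free(0)\right| \leq \sum_{i=1}^{s} \left|\free(i) - \free(i-1)\right| \leq s\cdot(2\theta)\cdot(H+1),$$
which is the claimed bound.

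The argument has no real obstacle, because \Cref{ob:nested} does all the work. The one point worth double-checking is that the per-step bound in \Cref{ob:nested} uses sizes from the \emph{initial} decomposition $\calM^{(0)}$. This is legitimate because every $\calM^{(i-1)}$-molecule is also an $\calM^{(0)}$-molecule: each update only deletes molecules, so $\calM^{(i)} \subseteq \calM^{(i-1)}$ for all $i$. Consequently every affected molecule has at most $2\theta$ nodes, by properties \ref{property:theta-max-decomposition-1} and \ref{property:theta-max-decomposition-2} of a $\theta$-molecular decomposition.
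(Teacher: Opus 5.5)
Your proposal is correct and matches the paper's argument, which simply telescopes \Cref{ob:nested} over the $s$ applications of augmenting chains. Your remark that $\calM^{(i)} \subseteq \calM^{(i-1)} \subseteq \calM^{(0)}$ is the right justification for why every affected molecule inherits the $2\theta$ size bound from the initial $\theta$-molecular decomposition, and this is what the per-step bound in \Cref{ob:nested} relies on.
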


\begin{proof}
Follows immediately from \Cref{ob:nested}.
\end{proof}

Next, we show that a $W(t)$-avoiding molecule that is {\em not} a $t$-atom must contain an $M^{(s)}$-free node.

\begin{claim}
\label{cl:free}
   Consider any $M \in \calM(W(t)) - \calM_{\texttt{atom}}(W(t))$.
    Then, we must have $V(M) \cap \free(s) \neq \emptyset$. 
\end{claim}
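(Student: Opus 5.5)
We argue by contradiction. Fix $M \in \calM(W(t)) - \calM_{\texttt{atom}}(W(t))$ and suppose that every node of $M$ is $\calM^{(s)}$-covered. The plan is to show that $M$ is then exactly the node set of a single $\calM^{(s)}$-atom which is a $t$-atom. That would give $M \in \calM_{\texttt{atom}}(W(t))$, a contradiction.

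The first step is to place $M$ inside one $\calM^{(s)}$-molecule. Two adjacent $\calM^{(s)}$-covered nodes cannot lie in different $\calM^{(s)}$-molecules: the only forest edge leaving a normal molecule goes to its root, which is $\calM^{(s)}$-free, and a special molecule is a whole component. Since $M$ is connected and fully covered, it lies inside a single $\calM^{(s)}$-molecule $M'$.

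The second step is a general mechanism. Whenever $M$ coincides with the $i^{\th}$ $\calM^{(s)}$-block of some chain $P \in \calP_t$, every $\calM^{(s)}$-non-reducible node of $M$ lies in $N(t) \subseteq W(t)$. This is impossible because $M$ is $W(t)$-avoiding. So $M$ then has no non-reducible nodes. In the atom-forming procedure, nodes of $M$ that are adjacent via forest edges end up in the same atom. Hence a fully covered, bad-free connected $M$ forms a single atom, namely the $i^{\th}$ critical atom of $P$, which is a $t$-atom.

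\emph{Special case.} Here $M$ is a whole component containing no node of $W(t)$. By the first step, $M' = M$, since a normal $M'$ would have a free root in the same component. So $M$ is a special $\calM^{(s)}$-molecule. If it contains some atom node $w_0$, then the length-$0$ chain $(w_0)$ has $M$ as its $0^{\th}$ block, and the mechanism applies with $t \ge 0$. If $M$ contains no atom node, all its nodes are non-reducible and lie in $N(0) \subseteq W(t)$, which is absurd.

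\emph{Normal case, setup.} Here $M = T^{\calF^{(s)}}_{x \leftarrow y}$ with $y \in W(t)$. We first check that $M$ is an $\calM^{(s)}$-block. If $y \notin V(M')$, then $y$ is the root of $M'$. If $y \in V(M')$, then $y$ is covered and, by \Cref{Wt-not-in-atom} (or directly $W(0) = N(0)$ when $t = 0$), lies in no atom, so $y$ is non-reducible.

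\emph{Normal case, $y \in \tail(t)$.} There is a chain $P = (w_0, \ldots, z_k, w_k) \in \calP_{t-1}$ with $(w_k, y)$ a non-forest edge and $y$ outside all blocks of $P$. I would extend $P$ by $z_{k+1} = y$ and $w_{k+1} =$ any atom node of $M$. Such a node exists, since otherwise $M$ consists only of non-reducible nodes and $M \subseteq N(t)$. The extension lies in $\calP_t$ with $M$ as its $(k+1)^{\th}$ block, after which the mechanism applies.

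The only condition to check is that $w_{k+1}$ lies outside earlier blocks. Suppose some earlier block $B_j$ meets $M$. Then $B_j \neq$ the $0^{\th}$ block, because that block is a whole component and would contain $y$. Since $B_j$ is connected and avoids $y$, we get $B_j \subseteq M$. Now $B_j$ contains the atom node $w_j$, and $M$ is one atom (by the argument applied to the extension, or directly via \Cref{obs:block}). So $B_j = M$, and $M$ is already the $j^{\th}$ block of $P$, a $t$-atom.

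\emph{Normal case, $y \in N(t) - \tail(t)$.} Then $y$ is non-reducible and lies in the $i^{\th}$ block $B_i = T_{w_i \leftarrow z_i}$ of some $P \in \calP_t$. If $M \subseteq B_i$, I would try to reroute the chain. One option is to replace the suffix by a chain whose last block is $M$, keeping the prefix $(w_0, \ldots, z_i)$ and using the laminar structure (\Cref{obs:key}). Another is to show directly that the non-reducible nodes of $M$ are in $N(t)$, so that $M$ has none, and then that $M$'s atom equals a critical atom.

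If $M \not\subseteq B_i$, then $M$ contains $z_i$ and everything beyond it. Then $w_{i-1}$, or an earlier chain node, sits inside $M$ in an atom, and I would aim to derive that $M$ contains a critical atom of $P$ whose surrounding block is $M$.

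This subcase is the main obstacle: it is not a priori clear that $M$'s atom is critical in any chain of length $\le t$. I expect to need the fact that $M$ is bad-free: any non-reducible node of $M$ lying inside any block of $P$ would already be in $N(t)$.
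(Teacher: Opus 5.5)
Your plan is the paper's: argue by contradiction, place $M$ inside one $\calM^{(s)}$-molecule, then split on the boundary node $y$. But the proof is incomplete. The subcase $y\in N(t)\setminus\tail(t)$ is left open, and the obstacle you describe there does not actually arise.

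\textbf{The missing observation.} \emph{For every $P=(w_0,z_1,w_1,\ldots,z_k,w_k)\in\calP_t$ and every $i\in[1,k]$, we have $z_i\in W(t)$.} There are two cases.
\begin{itemize}
\item If $z_i$ lies in an earlier block $B_j$ ($j<i$), it is $\calM^{(s)}$-covered, so it is not the root of a normal molecule. Being the root of the block $B_i$, it must then be $\calM^{(s)}$-non-reducible, whence $z_i\in N(t)$.
\item Otherwise $z_i\in\tail(Q)$ for the prefix $Q=(w_0,z_1,\ldots,z_{i-1},w_{i-1})\in\calP_{i-1}\subseteq\calP_{t-1}$, whence $z_i\in\tail(t)$.
\end{itemize}
Since $M$ is $W(t)$-avoiding, $z_i\notin V(M)$.

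\textbf{Closing the subcase $y\in N(t)$.} Here $M\not\subseteq B_i$ would force $z_i\in V(M)$, which is impossible; so $M\subseteq B_i$ (automatic if $i=0$). Then every non-reducible node of $M$ lies in $B_i$, hence in $N(t)$, so $M$ has none. Since its boundary node $y$ is non-reducible, $M$ is exactly one $\calM^{(s)}$-atom.

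The second point you missed: you should not look for a chain in which $M$ is a \emph{block}; it suffices that $M$ be a \emph{critical atom}. Let $j\le i$ be least with $x\in B_j$. Then $(w_0,z_1,\ldots,w_{j-1},z_j,x)$ (just $(x)$ if $j=0$) is an alternating chain in $\calP_t$. Indeed, $T^{\calF^{(s)}}_{x\leftarrow z_j}=B_j$, and $x$ avoids the earlier blocks by minimality. Moreover, $M$ is its $j^{\th}$ critical atom. This is the paper's argument, which substitutes $x$ for $w_i$.

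\textbf{Repairing the subcase $y\in\tail(t)$.} There, you pass from $B_j\subseteq M$ to $B_j=M$ via \emph{$M$ is one atom}. That is circular: it is exactly what the extension $(P,y,w_{k+1})$ was meant to establish, and \Cref{obs:block} only applies once $M$ is known to be an atom. The correct argument uses the same observation. If $B_j\subsetneq M$, then, since $M$ is connected, the unique forest edge leaving $B_j$ lies in $M$ and ends at $z_j$. So $z_j\in V(M)\cap W(t)$, a contradiction.

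\textbf{A smaller circular step.} Your arguments that $M$ contains an atom node are also circular: \emph{otherwise its nodes lie in $N(0)$}, resp.\ \emph{otherwise $M\subseteq N(t)$}. Membership in $N(\cdot)$ requires a chain having $M$ inside one of its blocks. Use instead that $M$ contains a node of $\calF^{(s)}$-degree at most $1\le\Delta^\star$: any leaf of $M$ other than $x$, or the unique node if $|V(M)|=1$. This node is covered, and therefore is a singleton atom at the start of the atom-forming procedure.
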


\begin{proof}
For the sake of contradiction, assume that $V(M) \cap \free(s) = \emptyset$.
According to the properties of the initial $\theta$-molecular decomposition $\calM^{(0)}$ and the fact that $\calM^{(s)} \subseteq \calM^{(0)}$, we conclude that $M$ is contained entirely in an $\calM^{(s)}$-molecule (say $T \in \calM^{(s)}$).
The reason is that the normal $\calM^{(0)}$-molecules (resp.~normal $\calM^{(s)}$-molecules) are separated by their roots that are $\calM^{(0)}$-free nodes (resp.~$\calM^{(s)}$-free nodes) according to the definition of a molecular decomposition.
Hence, we have the following two cases;

\medskip
\noindent
\textbf{Case I.}
$T$ is a special $\calM^{(s)}$-molecule.
In this case, all of the $\calM^{(s)}$-non-reducible nodes in $T$ are part of $N(0) \subseteq W(t)$.
Since $M$ is $W(t)$-avoiding and $W(t)$ does not contain any $\calM^{(s)}$-reducible node (see \Cref{Wt-not-in-atom}), we conclude that $M$ itself must be an $\calM^{(s)}$-atom inside $T$ which is a $0$-atom (as well as a $t$-atom) by definition.
But, this is in contradiction with the assumption that $M \notin \calM_{\texttt{atom}}(W(t))$.

\medskip
\noindent
\textbf{Case II.}
$T$ is a normal $\calM^{(s)}$-molecule.
According to the definition of a $W(t)$-avoiding molecule, since $M \subseteq T$ is not a special molecule of $\calF^{(s)}$, there must exist an edge $(x,y) \in \calF^{(s)}$ with $y \in W(t)$ such that $M = T^{\calF^{(s)}}_{xy}$.
Since $W(t)= N(t) \cup \tail(t)$, we have two sub-cases as follows.

\begin{itemize}
    \item $y \in \tail(t)$.
    Then, there exists an alternating chain $P = (w_0,z_1,w_1,\ldots,z_k,w_k) \in \calP_{t-1}$ of length $k \in[0,t-1]$ such that $y \in \tail(P)$.
    Now, according to the definition of $N(t)$, any $\calM^{(s)}$-non-reducible node in $T^{\calF^{(s)}}_{xy}$ is contained in $N(t)$.
    But $V(M) \cap N(t) = \emptyset$ since $M = T^{\calF^{(s)}}_{xy}$ is $W(t)$-avoiding.
    We conclude that $V(M)$ does not contain any $\calM^{(s)}$-non-reducible node and must be an $\calM^{(s)}$-atom.
    Finally, $(w_0,z_1,w_1,\ldots,z_k,w_k, y, x) \in \calP_t$ is an alternating chain of length $k+1 \leq t$ such that $M$ is its $(k+1)^\th$ critical $\calM^{(s)}$-atom.
    Hence, $M$ must be a $t$-atom, which is in contradiction with the assumption $M \notin \calM_{\texttt{atom}}(W(t))$.

    \item $y \in N(t)$.
    Then, there exists an alternating chain $P = (w_0,z_1,w_1,\ldots,z_k,w_k) \in \calP_t$ such that $y$ belongs to the $i^\th$ $\calM^{(s)}$-block of $P$ for some $0 \leq i \leq k \leq t$.
    This alternating chain $P$ certifies that all of the $\calM^{(s)}$-non-reducible nodes of $M = T^{\calF^{(s)}}_{xy}$ belongs to $N(t) \subseteq W(t)$ as well.
    Since $M$ is $W(t)$-avoiding, we conclude that it is an $\calM^{(s)}$-atom.
    Finally, $P' = (w_0,z_1,w_1,\ldots,z_i,x) \in \calP_t$ is an alternating chain where $M$ is its $i^\th$ critical $\calM^{(s)}$-atom.
    This concludes $M$ is a $t$-atom which is in contradiction with $M \notin \calM_{\texttt{atom}}(W(t))$.\qedhere
\end{itemize}
\end{proof}

Let $\calM^\star(W(t)) \subseteq \calM(W(t))$ denote the collection of $W(t)$-avoiding molecules that  contain at least one node from the set $\free(s)$.  \Cref{cl:free} implies that
\begin{equation}
\label{eq:partition}
\calM(W(t)) - \calM_{\texttt{atom}}(W(t)) \subseteq \calM^\star(W(t)).
\end{equation}
Accordingly, for the rest of the proof, we focus on upper bounding $\calM^\star(W(t))$. Towards this end, we partition $\calM^\star(W(t))$ into two subsets -  $\calM^\star_1(W(t))$ and $\calM^\star_2(W(t))$ - as defined below, and separately upper bound the size of each of these two subsets.

\medskip
\noindent {\bf I: The set $\calM^\star_1(W(t))$.} This set consists of the collection of $W(t)$-avoiding molecules in $\calM^\star(W(t))$ that contain at least one node from $\free(s) - \free(0)$. Since the $W(t)$-avoiding molecules in $\calM^\star_1(W(t))$ are mutually node-disjoint, \Cref{cor:nested} implies that 
\begin{equation}
\label{eq:partition:1}
\left| \calM^\star_1(W(t)) \right| \leq \left| \free(s) - \free(0)\right| \leq s \cdot (2\theta) \cdot (H+1). 
\end{equation}

\medskip
\noindent {\bf II: The set $\calM^\star_2(W(t))$.} This set consists of all the $W(t)$-avoiding molecules  $M \in \calM^\star(W(t))$ such that every $\free(s)$ node in $M$ is also part of $\free(0)$, i.e., $V(M) \cap \free(s) = V(M) \cap \free(0) \neq \emptyset$. In other words, $M$ does not contain {\em any} node which was part of a molecule that got affected while applying the sequence of augmenting chains $A^{(1)}, \ldots, A^{(s)}$. Hence, $M$ is also a molecule of $\calF^{(0)}$, but is not  an $\calM^{(0)}$-molecule since it contains an $\calM^{(0)}$-free node (part of $\free(0)$).
Since the initial molecular decomposition $\calM^{(0)}$ is a $\theta$-molecular decomposition of $\calF^{(0)}$, it follows that $|V(M)| \geq \theta+1$ (see property \ref{property:theta-max-decomposition-3}).
As the molecules in $\calM^\star_2(W(t))$ are mutually node-disjoint, we get
\begin{equation}
\label{eq:partition:2}
\left| \calM^\star_2(W(t)) \right| \leq n/(\theta+1).
\end{equation}

Since $\theta = H = \lceil 20 n/f \rceil$ as per the statement of \Cref{key-lemma} and $s < f^3/(10^5 \cdot n^2)$ as per \Cref{assume:main}, from \Cref{eq:partition}, \Cref{eq:partition:1} and \Cref{eq:partition:2}, we infer that
$$\left|\calM(W(t)) - \calM_{\texttt{atom}}(W(t)) \right| \leq s \cdot (2\theta) \cdot (H+1) + n/(\theta+1) \leq f/4.$$
This concludes the proof.

\section{Computing an $(\ell+1)$-Configuration: Proof of \Cref{lem:subroutine-main}}\label{sec:description-subroutine-main}

In this section, we provide the most critical subroutines of our algorithm for \Cref{lem:subroutine-main}.
Before we explain how the subroutine works, we provide some properties about augmenting chains in \Cref{sec:algorithmic-insights} that help us design the algorithm and provide a better understanding of the algorithm.
Then, we describe the algorithm for \Cref{lem:subroutine-main} in \Cref{sec:subroutine-main-description}.
Finally, we analyze the algorithm in \Cref{sec:subroutine-main-analysis} that proves \Cref{lem:subroutine-main}.
Throughout this section, we consider the following assumption.

\begin{assumption}\label{assumption:input}
    An $\ell$-configuration $(\calF, \calM, D)$ is given for some $\ell \geq 1$.
    The goal is to apply as many augmenting chains of length $\ell$ as possible to $(\calF, \calM, D)$ and achieve an $(\ell + 1)$-configuration $(\calF', \calM', D')$.
\end{assumption}

\noindent
\textbf{Notation.}
For simplicity in presenting the algorithm, for each special $\calM$-molecule $M$, we consider a \textbf{dummy} node $r_M$ as the root of this molecule and attach it to an arbitrary node $u \in V(M)$ with a dummy edge $(r_M, u)$.
Then, we treat $M$ as a rooted tree with $r_M$ as its root.
Note that these dummy roots are not contained in $V$, the dummy edges are not contained in $E$ and they do not contribute to the degree of nodes in $\calF$.
Moreover, by considering these dummy roots, we can denote the $0^\th$ $\calM$-block of an augmenting chain $(w_0,z_1,w_1,\ldots,z_k,w_k)$ w.r.t.~$(\calF, \calM, D)$ by $T^{\calF}_{w_0 \leftarrow z_0}$, where $z_0$ is the dummy root of the special $\calM$-molecule containing $w_0$. 
This simplifies the further arguments in this section.
Throughout this section, we \emph{implicitly} consider this notation $z_0$ according to the specified augmenting chain.

For each $x \in V$, we define the sub-tree $T^\calF_x$ of $\calF$ as follows;
\begin{itemize}
    \item If $x$ is the root of an $\calM$-molecule,
    then $T^\calF_x$ is defined as the sub-tree of $\calF$ consisting of all normal $\calM$-molecules rooted at $x$.

    \item If $x$ is inside an $\calM$-molecule $M$,
    then $M$ is treated as a rooted sub-tree of $\calF$, i.e., if $M$ is a normal $\calM$-molecule, we have already defined its root in previous sections, and if $M$ is a special $\calM$-molecule, we defined a dummy root for $M$ in the previous paragraph.
    Then, $T^\calF_x$ is defined as the sub-tree rooted at $x$ while considering $M$ as a rooted sub-tree of $\calF$.

    \item If $x$ is $\calM$-free,
    then we define $T^\calF_x =\{x\}$.
\end{itemize}
Moreover, for every dummy root $x$ of an special $\calM$-molecule, we define $T^{\calF}_x$ as the entire $\calM$-molecule containing $x$.

\subsection{Useful Properties of Minimum Length Augmenting Chains}\label{sec:algorithmic-insights}

We provide some properties of augmenting chains that help us find them algorithmically.

\subsubsection{Pseudo-Augmenting Chains}
We start by defining the notion of a pseudo-augmenting chain as follows.

\begin{definition}[Pseudo-Augmenting Chain]\label{def:pseudo-chain}
    Assume $S = (w_0, z_1, w_1, \ldots, z_k ,w_k, z_{k+1}) $ is a sequence of nodes (not necessarily distinct) for some $k \geq 0$.
    We call $S$ a pseudo-augmenting chain of length $k+1$ w.r.t.~a valid configuration $(\calF, \calM, D)$, if it satisfies the following conditions.
    \begin{enumerate}
        \item $P(S) := (w_0, z_1, w_1, \ldots, z_k, w_k)$ satisfies properties \ref{property-z0}, \ref{property-wi-zi-connection}, and \ref{property-non-forest-edge} in \Cref{def:alt-chain} w.r.t.~$(\calF, \calM, D)$.
        
        \item $S$ satisfies properties \ref{property-well-zell1} and \ref{property-zell1-reducible} in \Cref{def:aug-chain} w.r.t.~$(\calF, \calM, D)$.
        
        \item\label{property3:pseudo-chain} $w_k$ and $z_{k+1}$ are not in the same $\calM$-atom.
    \end{enumerate}
\end{definition}

\begin{claim}\label{claim:pseudoChain-into-chain}
    Assume $(\calF, \calM, D)$ is a valid configuration.
    If there exists a pseudo-augmenting chain $S = (w_0,z_1,w_1,\ldots, z_k, w_k, z_{k+1})$ of length $k+1$ w.r.t.~$(\calF, \calM, D)$, then there exists an augmenting chain of length at most $k+1$ w.r.t.~$(\calF, \calM, D)$.
    Moreover, if $S$ itself is not an augmenting chain, then there exists an augmenting chain of length at most $k$ w.r.t.~$(\calF, \calM, D)$.
\end{claim}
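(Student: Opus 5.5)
Induct on $k$ and shortcut the chain: a pseudo-augmenting chain fails to be augmenting only because of distinctness (Property~\ref{property-wi-outside-jth-block}) or because $z_{k+1}$ lies in an earlier block (Property~\ref{property-zell1-outside-blocks}); each such failure lets us truncate to a strictly shorter pseudo-augmenting chain. If $S$ is already augmenting we are done with length $k+1$. Otherwise we produce a pseudo-augmenting chain of length at most $k$ and, by induction (base case $k=0$), obtain an augmenting chain of length at most $k$. For $k=0$, $S=(w_0,z_1)$ with $w_0$ in an atom of a special molecule, $(w_0,z_1)$ a non-forest edge, $z_1$ reducible or free and clean, and $z_1$ not in the atom of $w_0$. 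The only possible failure is $z_1$ lying in the $0^\th$ block, i.e.\ the special molecule of $w_0$. Then $z_1$ is reducible in a different atom of the same molecule, but the atom construction merges atoms joined by any edge of $E$. Hence this case cannot occur, and $S$ is augmenting.

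For the inductive step, let $S$ be non-augmenting and choose the smallest index $i$ at which a violation occurs. There are two kinds of violation. \emph{(a)} Some $w_i$ lies inside the $j^\th$ block with $j<i$. Since $w_i$ is in an atom, by Observation~\ref{obs:block} its whole atom lies in that block. The edge $(w_{i-1},z_i)$ is non-forest, so we form $S'=(w_0,z_1,\dots,z_j, w_j', \dots)$; the more natural move is to use $z_i$ itself. \emph{(b)} Some $z_i$ (for $i\le k+1$) lies in the $j^\th$ block with $j<i$. I would take the minimal such pair and truncate to $S'=(w_0,z_1,w_1,\dots,z_j,w_j,\ z_i\ \text{?})$. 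That is not right either, since the edge goes from $w_{i-1}$. The correct truncation is $(w_0,\dots,z_j,\ u,\ \dots)$ with $u=z_i$'s partner. Concretely, when $z_i$ lies in block $j$ and is reducible, mimic Case~3 of Lemma~\ref{lem:outgoing-edges-of-t-atoms}: the sequence $(w_0,z_1,\dots,z_j,z_i,w_{i-1})$ uses the non-forest edge $(z_i,w_{i-1})$ reversed. This makes $z_i$ the new $j^\th$ critical node, which is valid since $T_{z_i\leftarrow z_j}$ equals the $j^\th$ block. It also makes $w_{i-1}$ the final node, reducible and, by minimality of the violation, outside blocks $0..j$. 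This has length $j+1\le i\le k$. If $z_i$ is non-reducible inside block $j$ (only possible for $i\le k$), then instead take $(w_0,\dots,z_j,w_j)$ extended by $z_i$'s role: replace block $j$ by $T_{w_i\leftarrow z_i}$. That is, form $(w_0,\dots,w_{j-1},z_i,w_i,\dots)$ with the edge $(w_{j-1},z_i)$. This needs care, and my fallback is the alternative $(w_0,\dots,z_{i-1},w_{i-1})$-based shortcut using laminarity (Observation~\ref{obs:key}, whose proof only used Properties~2--3 up to the violation index).

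The key steps in order are as follows. First, take the first violation index. Second, using laminarity of blocks (Observation~\ref{obs:block}) and the fact that reducible nodes in a common block joined by an edge of $E$ must share an atom, show that the violating node gives a non-forest edge into an earlier block. Third, splice to get a pseudo-augmenting chain ending at a reducible node (so Property~\ref{property-zell1-reducible} holds automatically) of length at most $k$. Fourth, verify Property~\ref{property3:pseudo-chain} for the new chain, namely that its last two nodes are in different atoms; if they were in the same atom, the chain would close a cycle in the atom, which contradicts minimality. Distinct-node failures ($w_i=w_j$ or $z_i=z_j$) are handled as special cases of (a) and (b), since equal nodes lie in each other's blocks.

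The main obstacle is step two and three: exhibiting, for each violation type, a concrete shorter sequence and checking that Property~\ref{property-wi-zi-connection}(b), that $T_{w\leftarrow z}$ is an $\calM$-block, still holds after splicing, especially when the violating $z_i$ is non-reducible inside an earlier block. I would first try always to splice at the \emph{largest} earlier block containing the violator, so that laminarity gives the needed containment.
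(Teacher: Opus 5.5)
Your architecture matches the paper's: induct on length, and whenever $S$ is not augmenting, splice it into a strictly shorter pseudo-augmenting chain. Your base case is correct. For $z_{k+1}$ in the $j$-th block, your reversed splice $(w_0,\ldots,w_{j-1},z_j,z_{k+1},w_k)$ is exactly the paper's Case~II. But the inductive step, which is the substance of the claim, is not carried out.

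\textbf{Case (a).} For a violation of Property~\ref{property-wi-outside-jth-block} ($w_i$ in the $j$-th block, $j<i$), you find the right ingredient but never commit to a splice. The correct one is $(w_0,\ldots,w_{j-1},z_j,w_i,z_{i+1},\ldots,w_k,z_{k+1})$. Observation~\ref{obs:block} puts the whole atom of $w_i$ inside $T^{\calF}_{w_j\leftarrow z_j}$, so $T^{\calF}_{w_i\leftarrow z_j}=T^{\calF}_{w_j\leftarrow z_j}$ is an $\calM$-block. Every remaining edge $(w_{r-1},z_r)$ is an original non-forest edge, and the final pair is unchanged. For $j=0$, the new first node $w_i$ lies in the special molecule of $w_0$. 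Neither $z_i$ nor $(w_{i-1},z_i)$ is used.

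\textbf{Case (b) with $i\le k$.} This is not a violation at all. Definitions~\ref{def:alt-chain} and~\ref{def:aug-chain} restrict the location only of the $w_i$'s and of $z_{k+1}$, never of $z_1,\ldots,z_k$. That branch should be dropped; moreover, the splice you propose there uses an edge $(w_{j-1},z_i)$ that need not exist in $E$.

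\textbf{Case (b) with $i=k+1$.} This is the genuine case, and here your bound ``$j+1\le i\le k$'' is wrong. You must rule out $j=k$ separately, using the atom-merging argument from your base case. Otherwise $z_{k+1}$ would be $\calM$-covered, hence reducible, in the same molecule as $w_k$. The non-forest edge $(w_k,z_{k+1})$ would then put them in one atom, contradicting Property~\ref{property3:pseudo-chain} of Definition~\ref{def:pseudo-chain}. Property~\ref{property3:pseudo-chain} for the spliced chain is then simply inherited, since its last pair is still $\{z_{k+1},w_k\}$; no minimality argument is needed.

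\textbf{Repeated $z$'s.} These are \emph{not} special cases of (a) or (b). By definition $z_j\notin V(T^{\calF}_{w_j\leftarrow z_j})$, so $z_i=z_j$ with $j<i\le k+1$ places $z_i$ in no prescribed block. This case needs its own splice: $(w_0,\ldots,w_{j-1},z_i,w_i,\ldots,w_k,z_{k+1})$, or just $(w_0,\ldots,w_{j-1},z_{k+1})$ when $i=k+1$. It is valid because $(w_{j-1},z_i)=(w_{j-1},z_j)$ \emph{is} a non-forest edge. This is precisely the move you misapplied in the non-reducible branch of case (b). When $i=k+1$ it additionally uses that $z_{k+1}=z_j$ lies in no atom.

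The other coincidences do reduce as you say. $w_i=w_j$ falls under (a). $z_{k+1}=w_r$ puts $z_{k+1}$ in the $r$-th block, which is the genuine case (b). $z_i=w_j$ with $i\le k$ is impossible, since such $z_i$ is non-reducible or an $\calM$-free root.

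With these repairs you recover the paper's Cases~I--III. Laminarity (Observation~\ref{obs:key}) and the choice of a minimal violating index are not needed.
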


\begin{proof}
We prove the claim by induction on the length of $S$.
If the length of $S$ is one, property \ref{property-wi-outside-jth-block} in \Cref{def:alt-chain} becomes vacuously true for $P(S) = (w_0)$ w.r.t.~$(\calF, \calM, D)$.
Property \ref{property-zell1-outside-blocks} in \Cref{def:aug-chain} and the fact that $w_0$ and $z_1$ are distinct follow from the assumption that $z_1$ and $w_0$ are not in the same $\calM$-atom (see property \ref{property3:pseudo-chain} in \Cref{def:pseudo-chain}).
Now, we show the induction step.
We show that if $S$ itself is not an augmenting chain, we can construct another pseudo-augmenting chain $S'$ of strictly smaller length.
This completes the proof of the both parts of the claim.
If $S$ is not an augmenting chain, we have one of the following cases.

\medskip
\noindent
\textbf{Case I: $w_0,z_1,w_1,\ldots, z_k,w_k,z_{k+1}$ Are Not Distinct.}
Since $w_0, \ldots, w_k$ are contained in $\calM$-atoms, $z_1, \ldots, z_k$ are not contained in $\calM$-atoms, and $z_{k+1}$ might or might not be contained in an $\calM$-atom, we have one of the following sub-cases.
\begin{itemize}
    \item $w_j = w_i$ for some $0\leq j < i \leq k$. In this case, it is straightforward to see that
    $$ S' := (w_0,z_1,w_1,\ldots,z_j,w_j,z_{i+1},w_{i+1},\ldots, z_k,w_k,z_{k+1})$$ is a pseudo-augmenting chain of length strictly less than $k+1$.

    \item $z_j = z_i$ for some $1 \leq j < i \leq k+1$. Similar to the previous case, it is straightforward to see that
    $$ S' := (w_0,z_1,w_1,\ldots,z_{j-1},w_{j-1},z_{i},w_{i},\ldots, z_k,w_k,z_{k+1}) $$ is a pseudo-augmenting chain of length strictly less than $k+1$.

    \item $z_{k+1} = w_i$ for some $i \in [0, k]$.
    It is obvious that $i \neq k$ since $w_k$ and $z_{k+1}$ are not in the same $\calM$-atom (see property \ref{property3:pseudo-chain} in \Cref{def:pseudo-chain}).
    Now, define
    $$ S' := (w_0,z_1,w_1,\ldots,z_{i-1},w_{i-1},z_{i},z_{k+1},w_k).$$
    Notice the change of the order of $w_k$ and $z_{k+1}$ at the end of the sequence.
    It is straightforward to see that $S'$ is a pseudo-augmenting chain of length strictly less than $k+1$.
\end{itemize}

\medskip
\noindent
\textbf{Case II: Property \ref{property-zell1-outside-blocks} in \Cref{def:aug-chain} is Violated.}
Assume that $z_{k+1}$ is in the $j^\th$ $\calM$-block of $S$ for some $j \in [0, k]$, i.e., $z_{k+1} \in T_{w_j \leftarrow z_j}^\calF$.
First, we show that $j \neq k$.
Assume the contrary, that concludes both of $w_k$ and $z_{k+1}$ must be inside the same $\calM$-molecule since both of them are in $\calM$-blocks rooted at $z_k$.
According to property \ref{property-well-zell1} in \Cref{def:aug-chain}, there is a non-forest edge between $w_k$ and $z_{k+1}$.
But, in this case, the $\calM$-atoms containing $w_k$ and $z_{k+1}$ must be merged together according to the process of defining $\calM$-atoms in each $\calM$-molecule.
This is in contradiction with the assumption that $S$ is a pseudo-augmenting chain (see property \ref{property3:pseudo-chain} in \Cref{def:pseudo-chain}).
Hence, we have $j < k$.
Now, define
$$ S' := (w_0,z_1,w_1,\ldots,z_{j-1},w_{j-1},z_{j},z_{k+1},w_k).$$
Notice the change of the order of $w_k$ and $z_{k+1}$ at the end of the sequence.
Since $j < k$, it is straightforward to see that $S'$ is a pseudo-augmenting chain of length strictly less than $k+1$.

\medskip
\noindent
\textbf{Case III: Property \ref{property-wi-outside-jth-block} in \Cref{def:alt-chain} is Violated.}
There exists $0 \leq j < i \leq k$ such that the $\calM$-atom containing $w_i$ is not outside of $T^{\calF}_{w_j \leftarrow z_j}$.
According to \Cref{obs:block}, we conclude that the $i^\th$ critical $\calM$-atom of $S$ (containing $w_i$) must be completely inside $T^\calF_{w_j \leftarrow z_j}$.
We conclude that $T^\calF_{w_j \leftarrow z_j} = T^\calF_{w_i \leftarrow z_j}$.
Now, define 
$$ S' := (w_0,z_1,w_1,\ldots,z_{j},w_{i},z_{i+1},w_{i+1}, \ldots, z_k, w_k, z_{k+1}). $$
It is straightforward to see that $S'$ is a pseudo-augmenting chain of length strictly less than $k+1$.
\end{proof}

\Cref{claim:pseudoChain-into-chain} implies that if we look for augmenting chains of \textbf{minimum} length, we can simply ignore property \ref{property-wi-outside-jth-block} in \Cref{def:alt-chain} and property \ref{property-zell1-outside-blocks} in \Cref{def:aug-chain}, i.e., instead of searching for augmenting-chains, we can search for pseudo-augmenting chains that have less conditions in the definition.
This makes the search procedure easier since we need to satisfy fewer properties.

\subsubsection{A Useful Layering}\label{sec:otpimal-layers}

Assume that $\calP^\star$ is the set of all augmenting chains of length $\ell$ w.r.t.~$(\calF, \calM, D)$.
For every $t \in [1, \ell-1]$ define
\begin{align*}
   Z^\star_t &:= \left\{ x \in V \mid \exists \ (w_0,z_1,w_1,\ldots,z_{\ell-1},w_{\ell-1}, z_{\ell}) \in \calP^\star \text{ s.t. } x = z_{t} \right\}.
\end{align*}
For the special case $t=0$, we define $Z_0^\star$ as the set of all dummy roots of special $\calM$-molecules.
The main property about these sets $Z_t^\star$ that we show in \Cref{claim:property-optimal-layers} is that they are mutually disjoint.
This property facilitates the algorithmic search for augmenting chains $A = \calP^\star$ in the following manner; the sets of potential nodes $x$ appearing as the root of the $t^\th$ $\calM$-block of $A$ for different values of $t \in [0, \ell-1]$ are mutually disjoint.
The main reason for this property is that the minimum length of any augmenting chain w.r.t.~$(\calF, \calM, D)$ is at least $\ell$ (see \Cref{assumption:input}).

\begin{claim}\label{claim:property-optimal-layers}
    For every $0 \leq t' < t \leq \ell-1$, we have $Z_{t'}^\star \cap Z_{t}^\star = \emptyset$.
\end{claim}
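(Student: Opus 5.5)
We argue by contradiction. Suppose some node $x \in Z_{t'}^\star \cap Z_t^\star$ with $0 \le t' < t \le \ell-1$. If $t'=0$, then $x$ is a dummy root; but every $z_t$ with $t \ge 1$ is a genuine node of $V$, because it is an endpoint of the non-forest edge $(w_{t-1}, z_t) \in E$. So we may assume $t' \ge 1$. Pick two augmenting chains in $\calP^\star$:
\[
A=(w_0,z_1,\ldots,w_{\ell-1},z_\ell), \qquad A'=(w_0',z_1',\ldots,w_{\ell-1}',z_\ell'),
\]
with $z_t = x = z'_{t'}$. We will splice the prefix of $A'$ up to $z'_{t'}$ with the suffix of $A$ from $z_t$ onward, obtaining
\[
S := (w_0', z_1', w_1', \ldots, w'_{t'-1}, z'_{t'}=z_t, w_t, z_{t+1}, \ldots, w_{\ell-1}, z_\ell).
\]
Its length is $t' + (\ell - t) < \ell$. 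We then show $S$ is a pseudo-augmenting chain (\Cref{def:pseudo-chain}). By \Cref{claim:pseudoChain-into-chain}, this yields an augmenting chain of length at most $t'+\ell-t \le \ell-1$, contradicting \Cref{assumption:input}, since $(\calF,\calM,D)$ is an $\ell$-configuration.

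Verifying the pseudo-augmenting conditions is mostly local.
\begin{itemize}
\item Property~\ref{property-z0} holds because $w_0'$ is unchanged.
\item Property~\ref{property-non-forest-edge} holds for every consecutive pair. Each pair comes from $A'$ or from $A$, except the junction pair $(w'_{t'-1}, z'_{t'})$, which is an edge of $A'$.
\item Property~\ref{property-wi-zi-connection} concerns only the pairs $(z_i, w_i)$. The junction pair is $(z_t, w_t)$, which is taken verbatim from $A$, so $T^\calF_{w_t\leftarrow z_t}$ is an $\calM$-block and $w_t$ lies in an atom.
\item The terminal conditions (properties \ref{property-well-zell1}, \ref{property-zell1-reducible}) and property~\ref{property3:pseudo-chain} are inherited from $A$, provided the suffix is nonempty. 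We need $t \le \ell-1$, so that $w_t$ exists. When $t'+\ell-t$ counts correctly, the last pair is $(w_{\ell-1}, z_\ell)$ from $A$, and $w_{\ell-1}$, $z_\ell$ are not in the same atom because $A$ is a genuine augmenting chain. (If they were, the edge would have merged their atoms, or they would coincide.)
\end{itemize}
Distinctness and the outside-block conditions are not required for pseudo-chains. This is exactly why the splice works without checking the laminar interactions between blocks of $A'$ and $A$.

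The main obstacle is the edge case where the spliced chain could degenerate. If $t'=t$ the claim is vacuous. The remaining checks are that the length really drops strictly, which holds since $t'<t$, and that the prefix of $A'$ ending at $w'_{t'-1}$ is still an alternating-type sequence when $t'=1$, which it is because it is just $(w_0')$. A subtle point is that \Cref{claim:pseudoChain-into-chain} needs the pseudo-chain to have length at least $1$, and $t' + \ell - t \ge 1$ holds because $t \le \ell-1$. We also double-check that a $z_i$ in a pseudo-chain is allowed to be any node whose block condition holds, and that no hidden requirement ties $z_t$ to the specific $w_{t-1}$ preceding it. The only such requirement is the non-forest edge, which the junction supplies from $A'$.
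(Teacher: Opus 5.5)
Your proof is correct and matches the paper's argument: you splice the prefix of $A'$ through $w'_{t'-1}$ onto the suffix of $A$ from $z_t$ onward, note that the result is a pseudo-augmenting chain of length $t'+\ell-t\le \ell-1$, and invoke \Cref{claim:pseudoChain-into-chain} against the $\ell$-configuration assumption. One small fix: your parenthetical does not justify that $w_{\ell-1}$ and $z_\ell$ lie in different $\calM$-atoms; the actual reason is that $z_\ell$ lies outside the $(\ell-1)^{\th}$ $\calM$-block of $A$, and by \Cref{obs:block} that block contains the entire atom of $w_{\ell-1}$.
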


\begin{proof}
    The claim is obvious for $t'=0$.
    Now, assume $1 \leq t' \leq t \leq \ell-1$ and $Z_{t'}^\star \cap Z_{t}^\star \neq \emptyset$.
    This means that there exist $A = (w_0,z_1,w_1,\ldots,z_{\ell-1}, w_{\ell-1},z_{\ell}) \in \calP^\star$ and $A' = (w'_0,z'_1,w'_1,\ldots,z'_{\ell-1}, w'_{\ell-1},z'_{\ell}) \in \calP^\star$ such that $z_t = z'_{t'}$.
    The idea is to merge the first $t'$ blocks of $A'$ together with the last $\ell-t-1$ blocks of $A$ via $z_t = z'_{t'}$ and construct an augmenting chain of length strictly less than $\ell$ out of it.
    To make this formal, define
    $$ S = (w'_0,z'_1,w'_1,\ldots,z'_{t'-1},w'_{t'-1}, z_t, w_t, \ldots, z_{\ell-1},w_{\ell-1},z_{\ell})  . $$
    Since $z'_{t'} = z_t$, it is straightforward to see that $S$ is a pseudo-augmenting chain w.r.t.~$(\calF, \calM, D)$. The claim then follows from \Cref{claim:pseudoChain-into-chain}.
\end{proof}

\subsection{Description of the Algorithm For \Cref{lem:subroutine-main}}\label{sec:subroutine-main-description}

We now explain how our algorithm for \Cref{lem:subroutine-main} works.
We can divide the algorithm into two phases as follows.

\medskip
\noindent
\textbf{Phase I: Find the Layers.}
In this phase, the algorithm tries to find $Z_t^\star$ as defined in \Cref{sec:otpimal-layers} for all $t \in [0, \ell-1]$.
Finding the sets $Z_t^\star$ is challenging since we do not know in advance whether an alternating chain of length $t \leq \ell-1$ can be extended into an augmenting chain of length $\ell$.
Instead, we find a family of mutually disjoint node sets $Z_0, Z_1, \ldots, Z_{\ell-1}$ such that $Z_t^\star \subseteq Z_t$ for every $t \in [0, \ell-1]$.
This means that we are not missing any potential node in $Z_t^\star$.
We refer to $Z_t$ as the $t^\text{th}$ \textbf{layer} throughout our algorithm.
The main idea is to keep track of alternating chains of length $t$ iteratively for $t$ going from $0$ to $\ell-1$ and construct $Z_t$.
We explain this phase in \Cref{sec:alg-phase1}.

\medskip
\noindent
\textbf{Phase II: Find and Apply Augmenting Chains.}
In this phase, we exploit the main property of sets $Z^\star_t \subseteq Z_t$ (see \Cref{claim:property-optimal-layers}) in order to search for augmenting chains.
Since $Z_t^\star \subseteq Z_t$, while searching for augmenting chains, we only need to check nodes of $Z_t$ as potential nodes for the roots of $t^\th$ $\calM$-blocks of augmenting chains of length $\ell$, and we do not miss any node in $ Z_{t}^\star$.
There might be some useless nodes in $Z_t - Z_t^\star$, but we show that the total time spent on these nodes in our algorithm is at most $\tilde{O}(m)$.
We provide a backward DFS-based search that starts from layer $t = \ell-1$ and tries to construct an augmenting chain of length $\ell$ by considering potential nodes in the layers $Z_{\ell-1}, \ldots, Z_1, Z_0$.
We explain this phase in \Cref{sec:alg-phase2}.

\subsubsection{Phase I: Find the Layers}\label{sec:alg-phase1}

Now, we explain how to construct the layers $Z_t$ iteratively.
Initially, the $0^\text{th}$ layer $Z_0$ consists of the dummy root of all special $\calM$-molecules, which is identical to $Z^\star_0$.
We maintain a label for each node as either \textbf{scanned} or \textbf{unscanned}.
Initially, all nodes are unscanned.
A node $u$ becomes scanned while constructing layer $Z_t$ for only one reason; there exists an $x \in Z_{t-1}$ such that $u$ is in an $\calM$-block rooted at $x$.
For instance, during the construction of $Z_1$, all of the nodes contained in special $\calM$-molecules become scanned.

\medskip
\noindent
\textbf{Constructing Layer $Z_{t+1}$.}
Now, we explain how to construct $Z_{t+1}$ assuming that we have already constructed $Z_0,\ldots, Z_{t}$.
The $(t+1)^\text{th}$ layer $Z_{t+1}$ will be constructed as follows, which is aligned with the definition of alternating chain. The idea is to try and expand alternating chains of length $t$ into alternating chains of length $t+1$.

We iterate over all $x \in Z_t$ ($x$ may be treated as the the node $z_{t}$ in some alternating chain $(w_0,z_1,w_1,\ldots,z_t,w_t)$), mark $x$ as scanned, then iterate over all unscanned nodes in $T^\calF_x$.
Assume $u \in V(T^\calF_x)$ is such an unscanned node.
First, we mark $u$ as scanned.
If $u$ is not contained in an $\calM$-atom, we skip $u$ and move to the next unscanned node in $T^\calF_x$.
The reason is that $u$ represents $w_t$ in an alternating chain, and according to \Cref{def:alt-chain}, $w_t$ must be $\calM$-reducible.
Otherwise ($u$ is inside an atom), we iterate over all neighbors of $u$ like $v$ that is a potential node for $z_{t+1}$ in an alternating chain. 
We then put $v$ into $Z_{t+1}$ if it satisfies the following conditions;
$$(u, v) \in E - E(\calF), \text{ and } v \notin V(T^\calF_{u \leftarrow x}). $$
Once we scan all $x \in Z_t$, all $u \in V(T^\calF_x)$, and all $v \in \psi_G(u)$ (neighbors of $u$), we have $Z_{t+1}$ and pass it to the pruning procedure as follows.

\medskip
\noindent
\textbf{Pruning Layer $Z_{t+1}$.}
At the end of the construction of $Z_{t+1}$,  we consider every $x \in Z_{t+1}$ and remove $x$ from $Z_{t+1}$ if at least one of the following holds.
\begin{itemize}
    \item $x$ is already scanned.

    \item $x$ is neither $\calM$-covered nor the root of a normal $\calM$-molecule.
\end{itemize}
The reason for this pruning is as follows.
If $x$ is already scanned, the sub-tree $T^\calF_x$ is scanned as well, and we show later that in this case, $x$ can not appear as $z_{t+1}$ in an augmenting chain of length $\ell$.
The idea is similar to the proof of the main property in \Cref{claim:property-optimal-layers}.
If $x$ satisfies the second condition above, there is no $\calM$-block rooted at $x$, and $x$ can not obviously appear as $z_{t+1}$ in an augmenting chain of length $\ell$ according to \Cref{def:aug-chain}.

\subsubsection{Phase II: Find and Apply Augmenting Chains}\label{sec:alg-phase2}

We now use the layers $Z_t$ in order to find augmenting chains.
First, we look for potential last edges $(w_{\ell-1},  z_{\ell})$ that can appear in an augmenting chain, and once we find such an edge, we start a backward search from layer $\ell-1$ to layer $0$.
Every time we find an augmenting chain, we apply it to $(\calF, \calM, D)$ and keep track of what happens to the forest and its molecular decomposition.
Now, we explain the procedure in the following.

\medskip
\noindent
\textbf{Searching For $(w_{\ell-1}, z_{\ell})$.}
The procedure of searching for $(w_{\ell-1}, z_{\ell})$ is almost identical to what we did for constructing the layers.
We iterate over all $x \in Z_{\ell-1}$, mark $x$ as scanned, and iterate over all unscanned nodes $u$ included in an $\calM$-atom inside $T^\calF_x$.
Let $u \in V(T^\calF_x)$ be such a node.
We mark $u$ as scanned, and iterate over all neighbors of $u$ like $v$.
If $v$ satisfies the following conditions, we start a backward search from $(w_{\ell-1}, z_{\ell}) := (u, v)$;
\begin{itemize}
    \item $(u, v) \in E - E(\calF)$,
    \item $v \notin V(T^\calF_{u \leftarrow x})$, and
    \item $v$ is either \{$\calM$-reducible\} or \{$\calM$-free with $\deg_{\calF}(v) \leq \Delta^\star$ and $v \notin D$\}.
\end{itemize}
The goal of this backward search is to search for an augmenting chain ending at $(u, v)$ and apply it to $(\calF,\calM,D)$.
It is either successful or unsuccessful
In both cases, after the termination of backward search, we continue our search for the next potential edge $(w_{\ell-1}, z_{\ell})$ from the last time that we started a backward search.
The difference is that if the backward search successfully finds an augmenting chain, the configuration $(\calF, \calM, D)$ is updates, and the rest of the search for potential edges $(w_\ell, z_{\ell+1})$ is w.r.t.~the current new configuration $(\calF, \calM, D)$.

\medskip
\noindent
\textbf{Backward Search.}
The backward search starts from a non-forest edge $(w_{\ell-1}, z_{\ell})$ and constructs a sequence $(w_t, z_{t+1}, w_{t+1}, \ldots, w_{\ell-1}, z_{\ell})$ going down the layers.
Once it finds a $w_0$ contained in a special $\calM$-molecule, we show that this sequence must be an augmenting chain w.r.t.~the current configuration $(\calF, \calM, D)$.

For efficiency, we consider a label for every node as either \textbf{effective} or \textbf{ineffective}.
Moreover, we consider these labels for each edge of the graph.\footnote{We only use this label for a node $u$ that is contained in at least one layer $Z_t$ for some $t \in [0,\ell]$. Similarly, we only use this label for non-forest edges $(u,v) \in E - E(\calF)$ such that $\{u,v\} \cap Z_t \neq \emptyset$ for some $t \in [1,\ell-1]$.}
Initially, all edges and nodes are effective.
A node $x$ becomes ineffective once we figure out that there is no augmenting chain w.r.t.~the current configuration of length $\ell$ that contains $x$ as the root of its $t^\th$ $\calM$-block (i.e., $z_t$) for some $t \in [0, \ell-1]$.
This label is for speeding up the search.

\medskip
\noindent
\textbf{Search at Layer $t \in [1, \ell-1]$.}
Now, assume that we have $ (w_t,z_{t+1},\ldots,w_{\ell-1}, z_\ell)$ and want to find $(w_{t-1}, z_t)$ and append it to the beginning of this sequence.
The procedure is aligned with the definition of an augmenting chain.

We consider all effective ancestors of $w_t$ in $Z_t$, i.e., all effective nodes $x \in Z_t$ such that $w_t \in T^\calF_x$ and iterate over these nodes $x$ in the decreasing order of their depth (while considering each molecule as a rooted sub-tree of $\calF$).
Then, we iterate over all effective edges incident on $x$ like $(y, x)$.
If $y$ is $\calM$-reducible and $(y, x) \in E -  E(\calF)$, we set $(w_{t-1}, z_{t}) := (y, x)$ and pass $ (w_{t-1},z_{t},\ldots,w_{\ell-1}, z_\ell)$ to the search at layer $t-1$.
Otherwise, (if $y$ does not satisfy these properties), we mark the edge $(y, x)$ as ineffective and move to the next effective edge $(y', x)$.
In the former case, the following can happen to the call at layer $t-1$.
\begin{itemize}
    \item \textbf{The Search at Layer $t-1$ is Successful.}
    In this case, we show that the final sequence $(w_0, z_1, w_1 , \ldots , w_\ell, z_{\ell+1})$ found at layer $0$ is an augmenting chain w.r.t.~the current configuration $(\calF, \calM, D)$ and we terminate the backward search.

    \item \textbf{The Search at Layer $t-1$ is Unsuccessful.}
    In this case, we mark the edge $(y, x)$ as ineffective and continue to the next effective edge $(y', x)$ incident on $x$.
\end{itemize}
If all of the recursive calls at layer $t-1$ for all neighbors of $x$ are unsuccessful, we mark $x$ as ineffective and move to the next effective ancestor of $w_t$ in $Z_t$.
We will show that if $x$ is marked ineffective, it can not appear as the root of $t^\th$ $\calM$-block of (i.e., $z_{t}$) of any augmenting chain w.r.t.~any configuration $(\calF, \calM, D)$ \emph{until the end of Phase II}.

Finally, if all of the effective ancestors $x$ of $w_t$ in $Z_t$ become ineffective in this call, we say that this search at layer $t-1$ is unsuccessful.

\medskip
\noindent
\textbf{Search at Layer $t=0$.}
When a call at layer $t=0$ is made, we have the entire sequence $(w_0, z_1, w_1 , \ldots , w_\ell, z_{\ell+1})$.
We simply check whether $w_0$ is inside a special $\calM$-molecule.
If this is the case, we say that the backward search is succussful, and we show that this sequence is indeed an augmenting chain w.r.t.~the current configuration $(\calF, \calM, D)$.

Next, we update $(\calF, \calM, D)$ by applying this augmenting chain to it, and terminate the entire backward search.
The procedure of applying an augmenting chain to $(\calF, \calM, D)$ is explained in \Cref{sec:molecules-atoms}, and it is completely algorithmic.

The current configuration $(\calF, \calM, D)$ will change at the end of a successful backward search, and the algorithm continues searching for more augmenting chains w.r.t.~the new updated configuration.

\subsection{Analysis of Our Algorithm for \Cref{lem:subroutine-main}}\label{sec:subroutine-main-analysis}

\subsubsection{Correctness of Phase I}\label{sec:analysis-phase-I}
We show the following series of claims.

\begin{claim}\label{claim:Zt-appears-in-partial-chains}
    For every $t \in [1, \ell-1]$ and every node $v \in Z_t$, there exists an alternating chain $ P = (w_0, z_1, w_1, \ldots, z_{t-1}, w_{t-1})$ of length $t-1$ w.r.t.~$(\calF, \calM, D)$ (the initial input) such that $v \in \tail(P)$.
\end{claim}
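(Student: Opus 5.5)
We argue by induction on $t$, carrying a slightly stronger invariant that also covers the scanned nodes. The invariant has two parts.

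\emph{Part (a).} For every $x \in Z_{t-1}$ there is an alternating chain $P_x = (w_0, z_1, \ldots, z_{t-1})$ with its final node $w_{t-1}$ removed. In this chain $z_{t-1} = x$, and every block condition holds up to index $t-2$. We also require that $x$ lies outside the $j^\th$ $\calM$-blocks of $P_x$ for all $j \le t-2$ (for $t-1=0$, $x$ is a dummy root).

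\emph{Part (b).} Every node $u$ that becomes scanned while building $Z_t$ lies in $T^\calF_x$ for some $x \in Z_{t-1}$. We assign $u$ to the first such $x$ processed.

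With this invariant, the claim follows by completing $P_x$ with $w_{t-1} := u$. Here $u$ is the scanned node from which $v$ was discovered, so $(u,v) \in E - E(\calF)$ and $v \notin V(T^\calF_{u \leftarrow x})$.

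For the base case $t = 1$, take $x \in Z_0$ to be a dummy root $r_M$ and let $u \in V(M)$ lie in an $\calM$-atom. Then $(u)$ is an alternating chain of length $0$. A node $v$ placed in $Z_1$ satisfies $(u,v) \notin E(\calF)$ and $v \notin V(M) = V(T^\calF_{u\leftarrow r_M})$, so $v \in \tail((u))$. Pruning only removes nodes, so every surviving $v$ still qualifies.

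For the inductive step, fix $v \in Z_t$ and let $(x, u)$ be the pair that inserted it, with $x \in Z_{t-1}$, $u \in V(T^\calF_x)$ unscanned at that moment, and $u$ inside an $\calM$-atom. Because $x$ survived pruning, it is $\calM$-covered or is the root of a normal $\calM$-molecule. Hence $T^\calF_{u\leftarrow x}$ is an $\calM$-block: it is contained in the molecule containing $u$, and $x$ is non-reducible or a root, since $x$ cannot be reducible given that $x \in \tail(\cdot)$ came from a non-forest edge of an atom and, if $x$ were reducible, that chain would be augmenting of length $\le t-1 < \ell$, contradicting the $\ell$-configuration. By induction, $x \in \tail(P')$ for some $P' = (w_0,\ldots,w_{t-2})$ of length $t-2$. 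Set $P := (P', x, u)$.

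Properties \ref{property-z0}, \ref{property-wi-zi-connection} and \ref{property-non-forest-edge} of \Cref{def:alt-chain} are immediate. The main obstacle is Property \ref{property-wi-outside-jth-block}: we need $u$ outside every earlier block $T^\calF_{w_j\leftarrow z_j}$ for $j \le t-2$. We use scanning here. All nodes of those earlier blocks were marked scanned when layers $Z_1,\ldots,Z_{t-1}$ were built, because $z_j \in Z_j$ caused a scan of $T^\calF_{z_j}$, which contains the block. Since $u$ was still unscanned, it lies outside all of them. This requires the invariant that the chain witnessing $x$ uses $z_j \in Z_j$ for every $j$, which the induction provides. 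If some $z_j$ was processed after $u$ within the same layer, \Cref{claim:pseudoChain-into-chain} (Case III shortcut) gives a shorter chain, which we use instead.

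Finally, $v \in \tail(P)$ requires $v$ outside every block of $P$. For the last block this is given directly. For the earlier blocks, if $v$ lay inside one, then $v$ would have been scanned before the pruning of $Z_t$ and would have been removed. Since $v$ survived pruning, it lies outside all of them, which completes the inductive step.
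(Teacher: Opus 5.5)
Your proof is correct and follows the paper's argument. Both go by induction on $t$ and extend the witness chain for $x \in Z_{t-1}$ by $(x,u)$. The block property then comes from pruning together with the $\ell$-configuration, and Property~3 comes from the scanned labels. You are more explicit than the paper in two places: you note that the witness chains must satisfy $z_j \in Z_j$ and $w_j \in V(T^\calF_{z_j})$, and you justify $v \in \tail(P)$ via pruning.

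The one blemish is your aside about some $z_j$ being processed after $u$ ``within the same layer''. That case cannot occur, because for $j \le t-2$ the subtree $T^\calF_{z_j}$ is scanned while building $Z_{j+1}$, strictly before $Z_t$. Switching to a shorter chain would in any case break the length-$(t-1)$ requirement of the claim, so that sentence should simply be dropped.
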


\begin{claim}\label{claim:Zi-Zj-are-disjoint}
    $Z_{t'} \cap Z_{t} = \emptyset$ for all $0 \leq t' < t \leq \ell-1$.
\end{claim}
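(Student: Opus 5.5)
The plan is to show that the disjointness of the layers is a direct consequence of the scanned/unscanned bookkeeping in Phase~I, together with the pruning rule. The key invariant I would establish by induction on $t$ is: at the moment the construction of $Z_{t+1}$ begins, every node of $Z_0 \cup \cdots \cup Z_{t-1}$ is already marked scanned. Once the construction of $Z_{t+1}$ (and of all later layers) begins, the pruning step removes every scanned node from the newly built layer. Hence no node of an earlier layer can survive in a later one.

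The steps in order are as follows.

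\textbf{Step 1.} Observe from the description of Phase~I that scanned labels are never reset within Phase~I. It follows that the set of scanned nodes only grows as $t$ increases.

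\textbf{Step 2.} During the construction of $Z_{t+1}$, the algorithm iterates over every $x \in Z_t$ and explicitly marks $x$ as scanned. Therefore, at the end of the construction of $Z_{t+1}$ (before or at pruning), all nodes of $Z_t$ are scanned. This also covers $t=0$: the dummy roots in $Z_0$ are marked scanned when $Z_1$ is built. In any case, dummy roots are not in $V$ and can never be added as a neighbor $v$ of a real node via a non-forest edge, so $Z_0$ is trivially disjoint from every $Z_t$ with $t \geq 1$.

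\textbf{Step 3.} Fix $t' < t$ and suppose $v \in Z_{t'} \cap Z_t$. By Step~2 applied at the construction of $Z_{t'+1}$ (which happens no later than that of $Z_t$, since $t'+1 \leq t$), $v$ was marked scanned before the pruning of $Z_t$. By Step~1 it remains scanned. Hence the pruning rule (``$x$ is already scanned'') deletes $v$ from $Z_t$, which is a contradiction.

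The only delicate point I expect is the timing when $t = t'+1$. Here $v \in Z_{t'}$ is marked scanned during the very same construction pass that builds $Z_t$, and one must check that pruning happens after the whole construction pass. The description confirms this: pruning is applied ``at the end of the construction of $Z_{t+1}$'', after all $x \in Z_t$ have been processed and marked. So even if $v$ was inserted into $Z_t$ before it was itself marked scanned, it gets removed at pruning.

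I would also remark that the pruning check must use the scanned status at pruning time, not at insertion time. I would state this interpretation explicitly, since the whole argument rests on it.
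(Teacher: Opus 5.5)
Your proof is correct and follows the paper's argument: every node of an earlier layer $Z_{t'}$ is marked scanned when $Z_{t'+1}$ is built, scanned labels persist, and so the pruning of $Z_t$ removes any such node. Your explicit check of the case $t = t'+1$ (nodes of $Z_{t-1}$ are marked during the same pass that builds $Z_t$, but before that layer is pruned) is a useful sharpening of the paper's looser phrase that these nodes are ``already scanned'' during the construction of $Z_t$.
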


\begin{claim}\label{claim:Zt-star-subseteq-Zt}
    For every $t \in [0,\ell-1]$, we have $Z_t^\star \subseteq Z_t$.
\end{claim}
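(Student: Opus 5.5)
We argue by induction on $t$ that every augmenting chain $A=(w_0,z_1,w_1,\ldots,z_{\ell-1},w_{\ell-1},z_\ell)\in\calP^\star$ has $z_t\in Z_t$ for each $t\in[0,\ell-1]$. The base case $t=0$ is immediate: $Z_0$ is the set of all dummy roots of special $\calM$-molecules, which equals $Z_0^\star$ by definition.

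For the inductive step, fix $A\in\calP^\star$ and $t\in[0,\ell-2]$ with $z_t\in Z_t$, and consider when $x=z_t$ is processed during the construction of $Z_{t+1}$.

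\textbf{Membership before pruning.} The node $w_t$ lies in the $t^\th$ block $T^\calF_{w_t\leftarrow z_t}$, which is contained in $T^\calF_{z_t}$. By \Cref{def:alt-chain}, $w_t$ is $\calM$-reducible, $(w_t,z_{t+1})$ is a non-forest edge, and $z_{t+1}$ lies outside the $t^\th$ block, i.e. $z_{t+1}\notin V(T^\calF_{w_t\leftarrow z_t})$. These are exactly the insertion conditions, so if $w_t$ is still unscanned when $x=z_t$ is processed, then $z_{t+1}$ is inserted into $Z_{t+1}$.

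\textbf{Surviving the pruning.} Since $T^\calF_{w_t\leftarrow z_t}$ is an $\calM$-block, $z_{t+1}$ is either $\calM$-non-reducible (hence covered) or the root of a normal $\calM$-molecule. So the second pruning rule never removes it. The remaining issue is the first pruning rule, together with the worry that $w_t$ had already been scanned. Both reduce to the same claim: for $t'\le t$, no node of the $(t+1)^\th$ block, and no node of the $t^\th$ block, was scanned while processing layers $Z_{t'-1}$ with $t'-1<t$ (resp.\ $<t-1$). Nodes scanned in the same round do not matter.

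\textbf{Main obstacle: an exchange argument.} Suppose $w_t$, or $z_{t+1}$, or some node of the $(t+1)^\th$ block, was scanned from some $x'\in Z_{t'}$ with $t'<t$ (resp.\ $t'<t+1$ for $z_{t+1}$). Then that node lies in $T^\calF_{x'}$. By \Cref{claim:Zt-appears-in-partial-chains}, $x'$ is the tail of some alternating chain $P'$ of length $t'-1$, which we extend by one step to $(P',x',\cdot)$. We then splice this prefix onto the suffix of $A$, in the spirit of \Cref{claim:property-optimal-layers}, taking the reducible node of $A$ lying in $T^\calF_{x'}$ as the new critical node:
\[
(w'_0,z'_1,\ldots,w'_{t'-1},x',w_t,z_{t+1},\ldots,w_{\ell-1},z_\ell).
\]
The key structural facts needed are the following. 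First, $T^\calF_{x'}$ containing a node of an $\calM$-block of $A$ means, by the laminarity in \Cref{obs:block} and \Cref{obs:key} and the rooted-tree definition of $T^\calF_x$, that either the relevant atom (containing $w_t$, or for $z_{t+1}$ the atom of $w_{t+1}$) lies in $T^\calF_{w\leftarrow x'}$, or $x'$ sits inside the block. Second, because $x'$ is a non-reducible node or a root, $T^\calF_{w_t\leftarrow x'}$ is an $\calM$-block. Given these, the spliced sequence is a pseudo-augmenting chain of length $t'+(\ell-t)<\ell$, where property~\ref{property3:pseudo-chain} is inherited from $A$. \Cref{claim:pseudoChain-into-chain} then produces an augmenting chain of length $<\ell$, contradicting \Cref{assumption:input}.

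I expect the careful case analysis of where $x'$ sits relative to the blocks of $A$ to be the main difficulty. This includes the case where the scanned node is $z_{t+1}$ itself, in which case one splices with $(x',w_{t+1},\ldots)$ instead. Once it is handled, induction gives $z_t\in Z_t$ for all $t\in[0,\ell-1]$, i.e.\ $Z_t^\star\subseteq Z_t$.
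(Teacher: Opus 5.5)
Your strategy (induction on the layer, splicing a prefix certified by \Cref{claim:Zt-appears-in-partial-chains} onto a suffix of $A$, and finishing with \Cref{claim:pseudoChain-into-chain}) is the paper's. However, the insertion step has a hole. You only show that $z_{t+1}$ enters $Z_{t+1}$ \emph{if $w_t$ is still unscanned when $x=z_t$ is processed}, and you dismiss the alternative with ``nodes scanned in the same round do not matter.'' They do matter. The node $w_t$ is scanned once, and its neighbors are tested only against $v\notin V(T^\calF_{w_t\leftarrow x'})$ for the $x'$ that scanned it. Nothing excludes $x'$ being another node of $Z_t$, a proper ancestor of $z_t$, processed before $z_t$. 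Your splice of $x'$'s chain with $(x',w_t,z_{t+1},\ldots,z_\ell)$ has length exactly $\ell$, so it gives no contradiction. In that case $T^\calF_{w_t\leftarrow x'}\supsetneq T^\calF_{w_t\leftarrow z_t}$, and knowing that $z_{t+1}$ avoids the $t^\th$ block of $A$ is not enough.

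The missing step is exactly the paper's second sub-claim. For the $x'\in Z_t$ that first scans $w_t$, one has $z_{t+1}\notin V(T^\calF_{x'})$. Otherwise $w_{t+1}\in V(T^\calF_{z_{t+1}})\subseteq V(T^\calF_{x'})$, and $(w'_0,\ldots,w'_{t-1},x',w_{t+1},z_{t+2},\ldots,z_\ell)$ is a pseudo-augmenting chain of length $\ell-1$. You already use this splice for the pruning rule, so the repair is to apply it to the $x'$ that scanned $w_t$ as well. It also supplies $z_{t+1}\notin V(T^\calF_{w_t\leftarrow z_t})$, which you attribute to \Cref{def:alt-chain}; that definition only keeps the $w_i$ (and $z_{\ell+1}$) outside earlier blocks.

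Some smaller points:
\begin{itemize}
\item Your claim ``no node of the $t^\th$ block was scanned earlier'' is stronger than needed and need not hold. An earlier $x'$ inside the block whose subtree avoids $w_t$ is harmless; you only need the claim for $w_t$.
\item No case analysis of where $x'$ sits relative to the blocks of $A$ is required, since pseudo-augmenting chains drop those conditions. You only use $w\in V(T^\calF_{x'})$ and that $x'$ is a root or non-reducible.
\item The block certifying that $z_{t+1}$ survives the second pruning rule is the $(t+1)^\th$ one, not $T^\calF_{w_t\leftarrow z_t}$.
\end{itemize}
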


The proofs of these claims are deferred to \Cref{sec:deferred-proofs}.
\Cref{claim:Zt-appears-in-partial-chains} shows the intuition of the objective of Phase I of the algorithm.
Basically, the algorithm is trying to expand alternating chains during Phase I, and it makes sense to do so since every $z_t \in Z_t^\star$ appears as the root of the $t^\th$ $\calM$-block of an alternating chain.
Claims \ref{claim:Zi-Zj-are-disjoint} and \ref{claim:Zt-star-subseteq-Zt} show that the layers found by the algorithm in Phase I form a family of mutually disjoint sets that contains all potential nodes that can be the root of the $t^\th$ $\calM$-block of an alternating chain, and we are not missing any node.

\subsubsection{Correctness of Phase II}\label{sec:analysis-phase-II}
Since after finding each augmenting chain in Phase II, $(\calF, \calM, D)$ is updated, we assume that a total of $q$ many augmenting chains $(A^{(1)}, A^{(2)}, \ldots, A^{(q)})$ is found during Phase II, and $(\calF^{(i)}, \calM^{(i)}, D^{(i)})$ is the result of applying $A^{(i)}$ to $(\calF^{(i-1)}, \calM^{(i-1)}, D^{(i-1)})$ (initially $(\calF^{(0)}, \calM^{(0)}, D^{(0)}) := (\calF, \calM, D)$ is the input).
To prove the correctness of the algorithm, we need to show that;
\begin{itemize}
    \item Every sequence $A^{(i)} = \left(\{(z_r,w_r)\}_{r=0}^\ell, z_{\ell+1}\right)$ found by the algorithm is a valid augmenting chain w.r.t.~$(\calF^{(i-1)}, \calM^{(i-1)}, D^{(i-1)})$, and
    \item At the end of Phase II, there is no augmenting chain of length at most $\ell$ w.r.t.~$(\calF^{(q)}, \calM^{(q)}, D^{(q)})$.
\end{itemize}
We show these two main propositions in a series of claims as follows.

\begin{claim}\label{claim:ineffective-correctness}
    For any $i \in [0, q]$ and $t \in [1, \ell-1]$, if a node $x \in Z_t$ is marked ineffective by the algorithm while searching for augmenting chains w.r.t.~$(\calF^{(i)}, \calM^{(i)}, D^{(i)})$, then there is no augmenting chain $A = (w_0,z_1,w_1,\ldots,z_{\ell-1},w_{\ell-1},z_{\ell})$ of length $\ell$ w.r.t.~$(\calF^{(i)}, \calM^{(i)}, D^{(i)})$ satisfying $x = z_t$.
\end{claim}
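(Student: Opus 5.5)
We argue by induction on the order in which nodes are marked ineffective during Phase II. The induction hypothesis is that every node marked ineffective earlier, say $x' \in Z_{t'}$ marked while processing $(\calF^{(j)}, \calM^{(j)}, D^{(j)})$ with $j \le i$, cannot be the $z_{t'}$ of any length-$\ell$ augmenting chain w.r.t.~$(\calF^{(j)}, \calM^{(j)}, D^{(j)})$.

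\textbf{Step 1: ineffectiveness persists across applications.} By \Cref{claim:set-of-augmenting-chains-is-decreasing}, every augmenting chain w.r.t.~$(\calF^{(i)}, \calM^{(i)}, D^{(i)})$ is also an augmenting chain w.r.t.~$(\calF^{(j)}, \calM^{(j)}, D^{(j)})$ for all $j \le i$, by iterating the claim. Its $\calM$-blocks are identical, since the proof of that claim shows $T^{\calF^{(i)}}_{w\leftarrow z}=T^{\calF^{(j)}}_{w\leftarrow z}$. Hence the roles of $z_t$, $w_t$, and the edges $(w_{t-1},z_t)$ are preserved. So the property ``$x$ is not the $z_t$ of any length-$\ell$ chain'' is monotone in $i$, and the hypothesis transfers to the current configuration. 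The same argument applies to edges marked ineffective.

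\textbf{Step 2: the node $x$ now being marked.} Suppose, for contradiction, that $A=(w_0,\dots,z_\ell)$ is a length-$\ell$ augmenting chain w.r.t.~the current configuration with $z_t=x$. The node $x$ is marked ineffective only after every effective edge $(y,x)$ has been tried and rejected. Consider the edge $(w_{t-1},x)$. Since $w_{t-1}$ is $\calM$-reducible and $(w_{t-1},x)$ is a non-forest edge, it was not rejected for local reasons. So either it was already ineffective, or the recursive call at layer $t-1$ from $(w_{t-1},x)$ failed.

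In the second case, that recursive call considered the effective ancestors of $w_{t-1}$ in $Z_{t-1}$. Here $z_{t-1}\in Z^\star_{t-1}\subseteq Z_{t-1}$ by \Cref{claim:Zt-star-subseteq-Zt}, and $z_{t-1}$ is an ancestor of $w_{t-1}$ because $w_{t-1}\in T^\calF_{w_{t-1}\leftarrow z_{t-1}}$. So $z_{t-1}$ was either already ineffective, contradicting the induction hypothesis together with Step 1, or it was explored and marked ineffective in that call. The latter is again an earlier marking, handled by the induction, where we induct on the layer index inside a single backward search.

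For $t-1=0$ the call succeeds whenever $w_0$ lies in a special molecule, so it cannot fail. An ineffective edge is handled the same way: it was marked either because a recursive call through it failed (same argument) or because $y$ failed the local test (impossible for $y=w_{t-1}$).

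\textbf{Step 3: the main obstacle.} The hard part is that all these markings happened possibly under earlier configurations $(\calF^{(j)},\dots)$, and even within the current search the forest is fixed, so we need the chain $A$ to have been a chain at marking time. Step 1 gives exactly this, since chains only disappear going forward. A second subtlety is that ``ancestor of $w_t$ in $Z_t$'' must refer to the current $\calF$ after updates. We use that unaffected molecules, which are the only places where $w$'s of current chains live, are unchanged, so $T^\calF_x$ restricted to them is the same. I expect to spend most care verifying that the order of visiting ancestors (decreasing depth) does not cause $z_{t-1}$ to be skipped. It is not skipped, because every effective ancestor is visited before the call is declared unsuccessful.
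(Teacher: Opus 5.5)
Your proof is correct and follows essentially the paper's argument: assuming $x=z_t$ for some length-$\ell$ chain $A$, the backward search at $x$ would descend along $A$'s prefix to layer $0$ and succeed, unless some $z_r$ with $r<t$ had already been wrongly marked ineffective. The differences are minor. You organize the well-founded induction by the temporal order of markings and treat an already-ineffective edge $(w_{t-1},x)$ explicitly. The paper instead takes the smallest violating layer $t^\star$ and phrases the descent through the pseudo-augmenting chain $S$, which your one-level-down argument shows is not actually needed for this claim.
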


\begin{claim}\label{claim:backward-search-correctness}
    When the backward search finds the $i^\text{th}$ sequence $A^{(i)} = (w_0,z_1,w_1,\ldots, z_{\ell-1}, w_{\ell-1}, z_\ell)$, it is a valid augmenting chain w.r.t.~$(\calF^{(i-1)}, \calM^{(i-1)}, D^{(i-1)})$.
\end{claim}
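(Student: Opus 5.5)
The plan is to avoid checking the augmenting-chain axioms directly. Instead, I would show that the sequence $A^{(i)}$ returned by the backward search is a \emph{pseudo-augmenting chain} of length $\ell$ w.r.t.~the current configuration $(\calF^{(i-1)}, \calM^{(i-1)}, D^{(i-1)})$ (\Cref{def:pseudo-chain}). Minimality of $\ell$ then upgrades it to a genuine augmenting chain.

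\textbf{The upgrade step.} Suppose $A^{(i)}$ is pseudo-augmenting but not augmenting. By \Cref{claim:pseudoChain-into-chain}, there is an augmenting chain of length at most $\ell-1$ w.r.t.~$(\calF^{(i-1)}, \calM^{(i-1)}, D^{(i-1)})$. Applying \Cref{claim:set-of-augmenting-chains-is-decreasing} $i-1$ times, this chain is also augmenting w.r.t.~the input $(\calF, \calM, D)$. That contradicts \Cref{assumption:input}, which says $(\calF, \calM, D)$ is an $\ell$-configuration.

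\textbf{Verifying the pseudo-augmenting properties.} I would go through the definition's conditions in order.

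First, for each $t$, the node $w_t$ is $\calM^{(i-1)}$-reducible. For $t=\ell-1$ this is because it was taken from an $\calM$-atom during the scan. For smaller $t$, the backward search only accepts $y$ as $w_{t-1}$ after checking reducibility w.r.t.~the current decomposition.

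Second, $w_0$ lies in a special molecule, since this is checked at layer $0$.

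Third, the edges $(w_{t-1}, z_t)$ are non-forest edges in the current forest. This too is checked at acceptance time, and for $(w_{\ell-1}, z_\ell)$ it is checked when the backward search starts.

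Fourth, the terminal node $z_\ell$ satisfies property~\ref{property-zell1-reducible} of \Cref{def:aug-chain}. This is checked when the edge is selected.

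Fifth, $w_{\ell-1}$ and $z_\ell$ lie in different $\calM$-atoms. We have $z_\ell \notin V(T^\calF_{w_{\ell-1} \leftarrow x})$, where $x$ is the block root. By \Cref{obs:block}, the atom containing $w_{\ell-1}$ lies inside this block, so it cannot contain $z_\ell$.

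\textbf{The block condition, which I expect to be the main obstacle.} It remains to show that $T_{w_t \leftarrow z_t}^{\calF^{(i-1)}}$ is an $\calM^{(i-1)}$-block for every $t \ge 1$. Here $z_t = x$ is an effective ancestor of $w_t$ in $Z_t$.

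The difficulty is that the layers $Z_t$ were built w.r.t.~the \emph{initial} configuration, while the chain must be valid w.r.t.~the \emph{current} one. The key observation is that $w_t$ is reducible in the current decomposition. Hence its molecule $M$ was never affected by $A^{(1)}, \ldots, A^{(i-1)}$, so $M \in \calM^{(i-1)} \subseteq \calM$. Its forest edges are unchanged, and its atoms and non-reducible nodes are identical to those in the initial configuration, because atoms are computed per molecule.

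Next I would argue that $z_t$ was a valid block root initially. By \Cref{claim:Zt-appears-in-partial-chains}, $z_t$ lies in $\tail(P)$ for an alternating chain $P$ of length $t-1$ w.r.t.~the input. If $z_t$ were $\calM$-reducible, or $\calM$-free with degree at most $\Delta^\star$ and outside $D$, then appending it to $P$ would give an augmenting chain of length $t \le \ell-1$, contradicting the $\ell$-configuration assumption. If $z_t$ is free but not a root, pruning would have removed it from $Z_t$. Therefore $z_t$ is either non-reducible in $M$ or the root of $M$.

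Both properties persist, because $M$ is unchanged. So $T_{w_t \leftarrow z_t}$ is a subtree of $M$ rooted at a non-reducible node or at the root of $M$, and hence is an $\calM^{(i-1)}$-block.

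The delicate point I would handle carefully is that the ancestor relation ``$w_t \in T^\calF_x$'' is evaluated in the current rooted molecule. This is consistent precisely because $M$ is unaffected.
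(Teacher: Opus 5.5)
Your proposal is correct and follows essentially the paper's route. You show that $A^{(i)}$ is pseudo-augmenting w.r.t.\ $(\calF^{(i-1)}, \calM^{(i-1)}, D^{(i-1)})$, where the only real work is the block property: as in the paper, you reduce it to the fact that no node of $Z_t$ is reducible (your inline argument is exactly \Cref{claim:Zt-is-not-in-atom}) together with persistence of atoms in unaffected molecules. You then rule out a shorter chain via \Cref{claim:pseudoChain-into-chain} and \Cref{claim:set-of-augmenting-chains-is-decreasing}.

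One cosmetic point: in your fifth check, the scan root $x$ need not be the final $z_{\ell-1}$. Your block argument applies to any $x \in Z_{\ell-1}$ with $w_{\ell-1} \in V(T^{\calF}_{x})$, so the conclusion is unaffected.
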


\begin{claim}\label{claim:no-more-chain-at-the-end-of-Phase-II}
    At the end of Phase II, $(\calF^{(q)}, \calM^{(q)}, D^{(q)})$ is an $(\ell+1)$-configuration, i.e., there does not exist any augmenting chain of length at most $\ell$ w.r.t.~$(\calF^{(q)}, \calM^{(q)}, D^{(q)})$.
\end{claim}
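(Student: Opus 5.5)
The plan is to argue by contradiction, reducing first to chains of length exactly $\ell$. By \Cref{claim:set-of-augmenting-chains-is-decreasing}, applied inductively along $A^{(1)},\ldots,A^{(q)}$, every augmenting chain w.r.t.~$(\calF^{(q)}, \calM^{(q)}, D^{(q)})$ is also an augmenting chain w.r.t.~every intermediate configuration $(\calF^{(i)}, \calM^{(i)}, D^{(i)})$, and in particular w.r.t.~the input $\ell$-configuration. Hence its length is at least $\ell$. So suppose, for contradiction, that $A=(w_0,z_1,w_1,\ldots,z_{\ell-1},w_{\ell-1},z_\ell)$ is an augmenting chain of length exactly $\ell$ w.r.t.~the final configuration.

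The first step is to fix facts about $A$ that hold at every time $i\in[0,q]$.
\begin{itemize}
\item $A$ is augmenting w.r.t.~configuration $i$ for every $i$.
\item Each $w_j$ lies in an $\calM^{(q)}$-atom, so the $\calM$-molecule containing $w_j$ was never affected during Phase II. Its atoms, its blocks $T^{\calF}_{w_j\leftarrow z_j}$, and the incident non-forest edges are therefore identical in all configurations, exactly as in the proof of \Cref{claim:set-of-augmenting-chains-is-decreasing}.
\item Since $A$ is augmenting w.r.t.~the input configuration, each $z_t$ lies in $Z_t^\star\subseteq Z_t$ by \Cref{claim:Zt-star-subseteq-Zt}.
\end{itemize}

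The second step shows that Phase II's scan actually examined the edge $(w_{\ell-1},z_\ell)$ as a candidate last edge. The scan visits every unscanned node $u$ in an atom of $T^\calF_x$ for every $x\in Z_{\ell-1}$, including $x=z_{\ell-1}$. I must rule out that $w_{\ell-1}$ was already scanned during Phase I. If it had been, then $w_{\ell-1}$ would lie in a block rooted at some $x'\in Z_{t'}$ with $t'<\ell-1$. By \Cref{claim:Zt-appears-in-partial-chains}, splicing the corresponding alternating chain with the last edge $(w_{\ell-1},z_\ell)$ gives a pseudo-augmenting chain of length less than $\ell$. This is the same splicing as in \Cref{claim:property-optimal-layers}, and \Cref{claim:pseudoChain-into-chain} then contradicts the $\ell$-configuration assumption.

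Once $(u,v)=(w_{\ell-1},z_\ell)$ is examined at some time $i$, its three trigger conditions hold, because $A$ is augmenting w.r.t.~configuration $i$. Note that the condition on $z_\ell$ is exactly property~\ref{property-zell1-reducible} at time $i$. So a backward search is launched from $(u,v)$. I then consider the last backward search launched from this edge. If it succeeded, the scan re-examines the same edge, so the last search must have been unsuccessful. The alternative is that the successful search affected the molecule of $w_{\ell-1}$, which contradicts the first step. Let $i$ be the configuration during this final unsuccessful search.

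The third step, which is the main obstacle, shows that an unsuccessful search from $(w_{\ell-1},z_\ell)$ is impossible while $A$ is augmenting w.r.t.~configuration $i$. I would prove, by downward induction on $t$, that whenever the search at layer $t$ is invoked with suffix $(w_t,z_{t+1},\ldots,z_\ell)$ equal to that of $A$ and returns unsuccessful, the node $z_t$ has been marked ineffective. The argument runs as follows.
\begin{itemize}
\item The node $z_t$ is an ancestor of $w_t$ in $Z_t$, so it is either processed or was already ineffective.
\item When $z_t$ is processed, the edge $(w_{t-1},z_t)$ is either tried, in which case its recursive call returns unsuccessful and by induction $z_{t-1}$ becomes ineffective, or it was marked ineffective earlier.
\item For $t=1$, $w_0$ lies in a special molecule, so that call would succeed; hence it cannot have returned unsuccessful.
\end{itemize}
In every case some node $z_{t'}$ of $A$ is marked ineffective while $A$ is augmenting w.r.t.~the current configuration, which contradicts \Cref{claim:ineffective-correctness}.

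The delicate point is edges that were marked ineffective in an earlier search, possibly w.r.t.~an earlier configuration. I would handle this by strengthening the invariant: an edge $(y,x)$ is ineffective only if no augmenting chain of length $\ell$ w.r.t.~the current or any later configuration uses it as $(w_{t-1},z_t)$. This invariant should follow from the same reasoning as \Cref{claim:ineffective-correctness}, together with the monotonicity given by \Cref{claim:set-of-augmenting-chains-is-decreasing}.
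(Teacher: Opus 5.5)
Your proposal is correct and follows the paper's route. You reduce to length exactly $\ell$ via \Cref{claim:set-of-augmenting-chains-is-decreasing}, show that $w_{\ell-1}$ is scanned in Phase II so that a backward search is launched from $(w_{\ell-1},z_\ell)$, and observe that this search can neither succeed nor fail. Success would affect the molecule of $w_{\ell-1}$. Failure would leave $z_{\ell-1}$ ineffective, which contradicts \Cref{claim:ineffective-correctness} directly, so your extra edge-label invariant is not needed here.

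One step needs a small fix. The trigger condition $z_\ell\notin V(T^{\calF}_{w_{\ell-1}\leftarrow x})$ does not follow from $A$ being augmenting when the scanning root $x\in Z_{\ell-1}$ differs from $z_{\ell-1}$. Argue as the paper does: otherwise $z_\ell$ would be a reducible node in the molecule of $w_{\ell-1}$, joined to it by a non-forest edge, hence in the same atom. That contradicts $z_\ell$ lying outside the $(\ell-1)^{\text{th}}$ block.
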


The proofs of these claims are deferred to \Cref{sec:deferred-proofs}.
Recall \Cref{claim:set-of-augmenting-chains-is-decreasing} from \Cref{sec:alt-aug-chains}.
It indicates that we do not need to reconstruct the layers $Z_0, Z_1, \ldots, Z_\ell$ in order to search for augmenting chains of length $\ell$ after each update on $(\calF, \calM, D)$.
\Cref{claim:ineffective-correctness} also shows the correctness of the labels.
As a result, considering the nodes that are still marked as effective and are contained in the layers $Z_t$ suffices to search for augmenting chains of length $\ell$ w.r.t.~the current configuration $(\calF, \calM, D)$.
Finally, Claims \ref{claim:backward-search-correctness} and \ref{claim:no-more-chain-at-the-end-of-Phase-II} conclude that the final $(\calF^{(q)}, \calM^{(q)}, D^{(q)})$ is indeed an $(\ell+1)$-configuration, which completes the correctness of the algorithm.

\subsubsection{Running Time (Implementation Details)}

\medskip
\noindent
\textbf{Time Spent on Constructing Layers.}
Recall the labels `scanned' and `unscanned' for the nodes.
Assume $x \in Z_t$ and we want to scan the sub-tree $T^\calF_x$.
We can easily determine the set of unscanned nodes inside $T^\calF_x$ by maintaining a label for each node $y \in V$ that indicates whether or not $T^\calF_y$ is scanned.
As a result, in order to iterate over the set of unscanned nodes inside $T^\calF_x$, we can do a simple DFS, and whenever we reach a node $y$ such that $T^\calF_y$ is already scanned, we ignore the entire sub-tree $T^\calF_y$ within $T^\calF_x$.

This concludes that the running time of scanning the nodes in $T^\calF_x$ is actually proportional to the number of unscanned nodes in there, i.e., $O(1 + \#\{\text{unscanned nodes in } T^\calF_x\})$.
According to \Cref{claim:Zi-Zj-are-disjoint}, we consider each subtree $T^\calF_x$ at most once for every fixed node $x$.
Hence, the total time for scanning nodes during the construction of the layers is $\tilde{O}(n)$.

For every unscanned node $u$ that is contained in an atom of $T^{\calF}_x$, we iterate over all neighbors $v \in \psi_G(u)$ only once.
For every neighbor $v \in \psi(u)$, we must check whether $v \in V(T^{\calF}_{u \leftarrow x})$.
This property can be easily performed in $\tilde{O}(1)$ time by simple data structures on rooted trees.
More precisely, we maintain the depth of nodes (according to their depth from roots of molecules to leaves), and save the ancestors of depth $(\text{depth}(x) - 2^i)$ for each $i \geq 0$.
Then we can find the unique ancestors of $u$ and $v$ at depth $\text{depth}(x) + 1$ in $\tilde{O}(1)$ time.
These ancestors are equal if and only if $v \in V(T^\calF_{u \leftarrow x})$.

As a result, the total time spent for constructing the layers $Z_0, Z_1, \ldots, Z_{\ell-1}$ is at most $\tilde{O}(m)$.

\medskip
\noindent
\textbf{Time Spent on the Pruning Procedure.}
In the pruning procedure of $Z_{t}$, some of the nodes in $Z_{t}$ might be removed from this set.
We can simply charge the running time of removing the node $v$ from $Z_t$ to the edge $(u, v)$ that makes the algorithm put $v$ into $Z_t$.
All of these edges are distinct according to the argument in the previous paragraph.
As a result, the total time spent in Phase I of the algorithm is at most $\tilde{O}(m)$.

\medskip
\noindent
\textbf{Time Spent on Searching For $(w_{\ell-1}, z_{\ell})$.}
Since the algorithm for searching the edges $(w_{\ell-1}, z_{\ell})$ is similar to what it does for constructing the layers, the running time of this part is $\tilde{O}(m)$ as well.

\medskip
\noindent
\textbf{Time Spent on Finding Potential Effective Ancestors in the Backward Search.}
Recall the labels `effective' and `ineffective' for the nodes.
Consider a $t \in [0, \ell-1]$ and a $w_t$ inside a call at layer $t$ of the backward search.
Below, we describe how to efficiently iterate over effective ancestors of $w_t$ in $Z_t$, i.e., effective nodes $x \in Z_t$ such that $w_t \in V(T^\calF_x)$.

We maintain a rooted forest $\calF(Z_t)$ w.r.t.~$Z_t$ as follows;
\begin{itemize}
    \item There are exactly $|Z_t|$ non-leaf nodes in $\calF(Z_t)$, each corresponding to a unique node in $Z_t$.
    
    \item The leaves of $\calF(Z_t)$ correspond to $\calM$-atoms that are contained in at least one sub-tree $T^\calF_x$ for some $x \in Z_t$.

    \item For each leaf $v$ in $\calF(Z_t)$ corresponding to the $\calM$-atom $C_v$, the parent of $v$ in $\calF(Z_t)$ is the least ancestral node $x \in Z_t$ such that $V(C_v) \subseteq V(T^\calF_x)$.

    \item For each non-leaf node in $\calF(Z_t)$ corresponding to $x \in Z_t$, the parent of $x$ in $\calF(Z_t)$ is the least ancestral ancestor of $x$ that is contained in $Z_t$.

    \item Each node in $\calF(Z_t)$ is either `effective' or `ineffective'.
\end{itemize}
Note that we can construct these data structures for all $t \in [1, \ell-1]$ simultaneously by performing only one DFS call at the end of Phase I.

Now, once we have a node $w_t$ in an $\calM$-atom corresponding to a leaf node $v$ in $\calF(Z_t)$, we can efficiently iterate over the effective ancestors of $w_t$ in $Z_t$ as follows.
The algorithm considers the ancestors of $w_t$ in the decreasing order of their depth (w.r.t.~the rooted molecules), and marks them ineffective as long as no augmenting chain is found.
As a result, while considering the ancestors of an $\calM$-atom $C$ in $\calF(Z_t)$, if we reach an ancestor $x \in Z_t$ which is already marked `ineffective', we can stop searching for the rest of the ancestors of $C$.
The reason is that there must be another $\calM$-atom $C'$ which satisfies $V(C') \subseteq V(T^\calF_x)$ and $x$ is marked `ineffective' because of $C'$.
But, in this case, all of the ancestors of $x$ in $Z_t$ must also become `ineffective' because of $C'$.
If this is not the case, an augmenting chain must have been found with $C'$ being one of its critical $\calM$-atom, and both of the $\calM$-atoms $C$ and $C'$ (which are in the same $\calM$-molecule) will become free nodes, and the algorithm no longer considers $C$ as an $\calM$-atom.
Hence, the algorithm can efficiently iterate over all effective ancestors of an $\calM$-atom in $Z_t$.

\medskip
\noindent
\textbf{Number of Times a Node is Considered a Potential Ancestor.}
Assume $x \in Z_t$ is considered as a potential ancestor of $w_t$.
For every effective edge $(x, y)$, there might be a recursive backward search at layer $t-1$.
If this recursive call is unsuccessful, the edge triggering the call will become ineffective and will never be scanned again.
If the recursive call is successful, it means that $y$ must be inside an atom, and after applying the augmenting chain $y$ becomes a free node.
Hence, the next time that the edge $(x,y)$ is scanned, it will become ineffective.
Hence, the total number of times that a node $x$ can be considered as a potential ancestor of some node throughout the entire Phase II is at most $n_x = O(\deg_{\calF^{(0)}}(x))$.

\medskip
\noindent
\textbf{Time Spent on Exploring Potential Effective Edges.}
Now, consider a node $x \in Z_t$ as a potential ancestor of $w_t$.
According to the previous explanations, the effective edges incident on $x$ will be scanned at most twice.
As a result, the total time spent on searching for effective edges $(x, w_{t-1})$ for each fixed $x$ is at most $O(n_x + \deg_{\calF^{(0)}}(x))$.
Finally, by summing up these values for all $x$, we conclude that the total time spent on searching for augmenting chains in Phase II of the algorithm is at most $\tilde{O}(m)$.

\medskip
\noindent
\textbf{Time Spent on Applying Augmenting Chains to Configurations.}
Consider an augmenting chain $A = (w_0,z_1,w_1,\ldots,w_{\ell-1},z_\ell,) $.
The time spent on the degree-reduction subroutine of $w_i$ (and possibly $z_{\ell}$ if it is contained in an $\calM$-atom) is at most $\tilde{O}(m_{w_i})$ according to \Cref{lem:degree-reduction}, where $m_{w_i}$ is the number of edges in the $\calM$-atom containing $w_i$.
Since the atoms containing $w_i$s are all distinct among all augmenting chains that the algorithm finds, we conclude that the total time spent on the degree-reduction subroutine is at most $\tilde{O}(m)$ throughout the entire Phase II.

Finding the child $y_i$ of $z_i$ that satisfies $T^\calF_{w_i \leftarrow z_i} = T^\calF_{y_i \leftarrow z_i}$ can be done easily by maintaining simple data structures on rooted trees.
The only important note here is that we do not update the data structures after changing the forest.
All of these searches can be performed according to the very initial forest $\calF$ given as input.
The reason is that the molecules containing $w_0,w_1,\ldots,w_\ell$ and possibly $z_{\ell+1}$ are not affected at the time of applying $A$, and the edges $(w_i,z_{i+1}), (y_i,z_i)$ coincide in the forest at that time and the initial forest $\calF$.
Hence, we initialize these data structures only once.
This argument also shows that the total number of edges that we change in the forest by applying all of the augmenting chains is at most $\tilde{O}(m)$ (because each edge appears in at most one augmenting chain in Phase II).
Updating $\calM$ and $D$ trivially takes at most $\tilde{O}(n)$ time.

\subsection{Deferred Proofs}\label{sec:deferred-proofs}

\subsubsection{Proof of \Cref{claim:Zt-appears-in-partial-chains}}

We prove the claim inductively.
Assume $v \in Z_t$.
According to the procedure of constructing $Z_t$, we conclude that there must exist $x \in Z_{t-1}$ and $u$ contained in an $\calM$-atom inside $T^\calF_x$ such that $(u,v) \in E - E(\calF)$ and $v \notin  V(T^\calF_{u \leftarrow x})$.
Since $x \in Z_{t-1}$, we have that there exist an alternating chain $ P = (w_0, z_1, w_1, \ldots, z_{t-2}, w_{t-2}) $ of length $t-2$ w.r.t.~$(\calF, \calM, D)$ such that $x \in \tail(P)$ (in the case $t=1$, we have that $x$ is the dummy root of an special $\calM$-molecule).
Let $P' : = (w_0,z_1,w_1,\ldots,z_{t-2}, w_{t-2}, x, u)$.
According to the properties of $v$ it is obvious that $v \in \tail(P')$.
It suffices to show that $P'$ is an alternating chain w.r.t.~$(\calF, \calM, D)$.
Properties \ref{property-z0} and \ref{property-non-forest-edge} in \Cref{def:alt-chain} are trivial.
Since $P$ is an alternating chain, it suffices to show the following properties for $P'$.

\medskip
\noindent
\textbf{Property \ref{property-wi-outside-jth-block} for $0 \leq j < i= t-1$.}
This property follows from the description of the algorithm.
Since the node $u$ (and the atom in which it is contained) is scanned during the construction of $Z_t$, it is not contained in the $\calM$-block $T^\calF_{w_j \leftarrow z_j}$, as otherwise it would have been scanned in the previous iteration during the construction of $Z_{j+1}$.

\medskip
\noindent
\textbf{Property \ref{property-wi-zi-connection} for $i = t-1$.}
The only non-trivial part is part $(b)$.
We have that $x$ must be the root of an $\calM$-molecule or be contained in an $\calM$-molecule, as otherwise $x$ must have been removed from $Z_{t-1}$ during the pruning procedure of $Z_{t-1}$ at the end of the construction of $Z_{t-1}$ in Phase I.
So, $T_{u \leftarrow x}^\calF$ is contained in an $\calM$-molecule.
Moreover, if $x$ is inside an $\calM$-molecule and is $\calM$-reducible, then $(P,x)$ will be an augmenting chain of length $t-1 < \ell$, which is a contradiction.
We conclude that $x$ is either the root of an $\calM$-molecule or is $\calM$-non-reducible.
Hence, $T_{u \leftarrow x}^\calF$ is $\calM$-block.

\subsubsection{Proof of \Cref{claim:Zi-Zj-are-disjoint}}

During the construction of $Z_t$, since $t' < t$, all of the nodes in $Z_{t'}$ are already scanned.
Hence, if a node $x \in Z_{t'}$ was put into $Z_t$, it would be removed during the pruning procedure at the end of the construction of $Z_{t}$ in Phase I.

\subsubsection{Proof of \Cref{claim:Zt-star-subseteq-Zt}}

We show this claim inductively.
It is obvious that $ Z_0^\star = Z_0$ since both of them equal the set of dummy roots of the special $\calM$-molecules.
Now, assume that $Z_{t'}^\star \subseteq Z_{t'}$ for all $t' \in [0, t-1]$. We will prove it for $t$.
Assume $y \in Z_t^\star$ is arbitrary.
According to the definition of $Z_{t}^\star$, we conclude that there exists an augmenting chain $A = (w_0,z_1,w_1,\ldots,z_{\ell-1}, w_{\ell-1}, z_\ell)$ w.r.t.~$(\calF, \calM, D)$ such that $y = z_t$.
We have that $z_{t-1} \in Z_{t-1}^\star \subseteq Z_{t-1}$.
We show that $w_{t-1}$ was not scanned before starting the construction of $Z_t$ and becomes scanned during the construction of $Z_t$.

\begin{claim}
    Assume $w_{t-1}$ (and the atom containing $w_{t-1}$) becomes scanned for the first time while scanning $T_{x}^\calF$ for some $x \in Z_{t'}$ during the construction of $Z_{t'+1}$.
    Then, we must have $t' = t-1$.
\end{claim}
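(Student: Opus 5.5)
We prove the claim by ruling out $t' < t-1$ and $t' > t-1$ separately. Throughout, fix the augmenting chain $A=(w_0,z_1,\ldots,z_{\ell-1},w_{\ell-1},z_\ell)$ of length exactly $\ell$ with $y=z_t$. We use the induction hypothesis $z_{t-1}\in Z^\star_{t-1}\subseteq Z_{t-1}$, and the fact that $(\calF,\calM,D)$ is an $\ell$-configuration, so no augmenting chain (hence, by \Cref{claim:pseudoChain-into-chain}, no pseudo-augmenting chain) of length $<\ell$ exists.

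\textbf{Case $t' > t-1$.} The layers are built in increasing order. During the construction of $Z_t$ we iterate over every $x\in Z_{t-1}$, including $x=z_{t-1}$, and we scan every still-unscanned node of $T^\calF_{z_{t-1}}$. The node $w_{t-1}$ lies in $T^\calF_{w_{t-1}\leftarrow z_{t-1}}\subseteq T^\calF_{z_{t-1}}$, where we use the dummy-root convention when $t-1=0$. So $w_{t-1}$ is scanned no later than the construction of $Z_t$, i.e.\ $t'\le t-1$.

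\textbf{Case $t'<t-1$.} This is the main step. Suppose $w_{t-1}$ is first scanned inside $T^\calF_x$ for some $x\in Z_{t'}$ with $t'\le t-2$. By \Cref{claim:Zt-appears-in-partial-chains} (or by the definition of $Z_0$ when $t'=0$), there is an alternating chain $P=(w'_0,z'_1,\ldots,w'_{t'-1})$ of length $t'-1$ with $x\in\tail(P)$. Since $x$ survived pruning, $x$ is the root of a normal $\calM$-molecule or lies inside one. If $x$ were $\calM$-reducible, then $(P,x)$ would be an augmenting chain of length $t'<\ell$, which is impossible. Hence $T^\calF_{w_{t-1}\leftarrow x}$ is an $\calM$-block, as in the proof of \Cref{claim:Zt-appears-in-partial-chains}.

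We then splice the chains exactly as in \Cref{claim:property-optimal-layers}:
$$S=(w'_0,z'_1,\ldots,z'_{t'-1},w'_{t'-1},x,w_{t-1},z_t,w_t,\ldots,w_{\ell-1},z_\ell).$$
We check the conditions of \Cref{def:pseudo-chain}. Properties \ref{property-z0}, \ref{property-wi-zi-connection} and \ref{property-non-forest-edge} hold for the prefix, since each piece satisfies them and the junction $(w'_{t'-1},x)$ is a non-forest edge with $T^\calF_{w_{t-1}\leftarrow x}$ an $\calM$-block. The ending conditions are inherited from $A$. The chain $S$ has length $t'+1+(\ell-t)<\ell$, so \Cref{claim:pseudoChain-into-chain} yields an augmenting chain of length $<\ell$, a contradiction.

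The subtle point is that $w_{t-1}$ being scanned inside $T^\calF_x$ must really mean $w_{t-1}\in V(T^\calF_{x})$ with $x$ an ancestor, so that the block $T^\calF_{w_{t-1}\leftarrow x}$ is well defined. This needs care when $x$ is the root of several normal molecules, in which case $T^\calF_x$ is the union of those molecules. It also needs care when $x$ is a dummy root, which happens when $t'=0$; there the spliced chain simply starts at $w_{t-1}$ in a special molecule, with length $\ell-t+1<\ell$.

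Combining the two cases gives $t'=t-1$.
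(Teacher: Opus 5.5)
Your proof is correct and is essentially the paper's. You get $t'\le t-1$ because $z_{t-1}\in Z_{t-1}$ is processed while building $Z_t$, and $t'\ge t-1$ by splicing the chain from \Cref{claim:Zt-appears-in-partial-chains} onto $(w_{t-1},z_t,\ldots,z_\ell)$ and invoking \Cref{claim:pseudoChain-into-chain}; your explicit check that $T^\calF_{w_{t-1}\leftarrow x}$ is an $\calM$-block fills in a step the paper calls straightforward.

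One small point you share with the paper: under the arbitrary dummy root, the inclusion $T^\calF_{w_{t-1}\leftarrow z_{t-1}}\subseteq T^\calF_{z_{t-1}}$ can fail if $z_{t-1}$ lies in a special molecule. For $t\ge 2$ this cannot happen, since then $(w_{t-1},z_t,\ldots,z_\ell)$ would itself be a pseudo-augmenting chain of length $\ell-t+1<\ell$.
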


\begin{proof}
    First, we show $t' \leq t-1$, i.e., $w_{t-1}$ becomes scanned no later than the construction $Z_t$. 
    Consider the execution of the algorithm while constructing $Z_{t}$.
    In some iteration, $z_{t-1} \in Z_{t-1}$ will be considered and the sub-tree $T^\calF_{z_{t-1}}$ of $z_{t-1}$ will be scanned.
    $w_{t-1}$ is contained in an $\calM$-atom inside $T^\calF_{z_{t-1}}$ since $A$ is an augmenting chain.
    Hence, either $w_{t-1}$ is already scanned, or becomes scanned at this time.
    
    Now, we show that $t' \geq t-1$.
    Since $x \in Z_{t'}$, according to \Cref{claim:Zt-appears-in-partial-chains}, we conclude that there exists an alternating chain $ P' := (w_0', z_1', w_1', \ldots, z'_{t'-1}, w'_{t'-1})$ w.r.t.~$(\calF, \calM, D)$ such that $x \in \tail(P')$.
    Now, define
    $$ S := (w_0',z_1',w_1',\ldots,z'_{t'-1}, w'_{t'-1},x,w_{t-1},z_t,w_t,\ldots, z_{\ell-1}, w_{\ell-1}, z_\ell). $$
    Since $A$ is an augmenting chain, $P'$ is an alternating chain, $t \leq \ell-1$, $x \in \tail(P')$, and $w_{t-1} \in T_{x}^\calF$, it is straightforward to see that $S$ is a pseudo-augmenting chain w.r.t.~$(\calF, \calM, D)$ of length $\ell - t + t'+1$.
    \Cref{claim:pseudoChain-into-chain} concludes that there exists an augmenting chain w.r.t.~$(\calF, \calM, D)$ of length at most $\ell - t + t'+1$.
    Since $(\calF, \calM, D)$ is an $\ell$-configuration, we conclude that $t' \geq t-1$.
\end{proof}

Now, assume that $w_{t-1}$ becomes scanned while scanning $T_x^\calF$ for some $x \in Z_{t-1}$ (according to the above claim).
We will show that when the algorithm considers the neighbors of $w_{t-1}$, it will put $z_t$ into $Z_t$. 
The condition $(w_{t-1}, z_t) \in E - E(\calF)$ holds since since $A$ is an augmenting chain.
It remains to show that $z_t \notin V(T^\calF_{w_{t-1} \leftarrow x})$ to conclude that $z_t$ will be put into $Z_t$ by the algorithm.

\begin{claim}
    $z_t \notin V(T^\calF_{w_{t-1} \leftarrow x}$).
\end{claim}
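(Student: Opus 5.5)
We argue by contradiction: assume $z_t \in V(T^\calF_{w_{t-1} \leftarrow x})$. The plan is to build a pseudo-augmenting chain that is shorter than $\ell$. Feeding it into \Cref{claim:pseudoChain-into-chain} then contradicts the assumption that $(\calF,\calM,D)$ is an $\ell$-configuration, exactly as in the proof of the previous claim.

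First, we record what we know. By the previous claim, $w_{t-1}$ was first scanned while scanning $T^\calF_x$ for this $x \in Z_{t-1}$. \Cref{claim:Zt-appears-in-partial-chains} then gives an alternating chain $P' = (w'_0, z'_1, \ldots, z'_{t-2}, w'_{t-2})$ of length $t-2$ with $x \in \tail(P')$; for $t=1$, $x$ is instead the dummy root of a special molecule. Since $x$ survived pruning, $x$ is $\calM$-covered or a root. It is not $\calM$-reducible, since otherwise $(P',x)$ would already be a short augmenting chain. Hence $T^\calF_{w_{t-1}\leftarrow x}$ is an $\calM$-block.

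Next, we exploit the hypothesis. Assume $z_t$ lies in the $\calM$-block $T^\calF_{w_{t-1}\leftarrow x}$. By \Cref{def:aug-chain}, $z_t$ is the root of the $t$-th block $T^\calF_{w_t \leftarrow z_t}$, so $z_t$ is either $\calM$-non-reducible or a molecular root. It cannot be a root lying strictly inside a block contained in a molecule, so $z_t$ is $\calM$-non-reducible inside the same molecule. By the laminarity in \Cref{obs:key} and \Cref{obs:block}, $T^\calF_{w_t\leftarrow z_t}$ is either contained in $T^\calF_{w_{t-1}\leftarrow x}$ or contains $x$'s side. We split into these two cases.

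\textbf{Case A:} $w_t \in T^\calF_{w_{t-1}\leftarrow x}$. Splice the chains as
\[
S := (w'_0, \ldots, w'_{t-2}, x, w_t, z_{t+1}, w_{t+1}, \ldots, w_{\ell-1}, z_\ell).
\]
Here the $(t-1)$-th block becomes $T^\calF_{w_t\leftarrow x}$, which is a valid block. $S$ has length $\ell-1$ and satisfies the pseudo-augmenting conditions, because its edges are edges of $A$ and $P'$ and its atoms are critical atoms of those chains. If $\ell - t = 0$, the final node $z_\ell$ must not lie in the same atom as $w_t$; this holds since those are two distinct atoms of $A$.

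\textbf{Case B:} $w_t \notin T^\calF_{w_{t-1}\leftarrow x}$. Then $T^\calF_{w_t \leftarrow z_t}$ contains $w_{t-1}$, and hence contains the $(t-1)$-th block of $A$ rooted at $z_{t-1}$ if that block is nested. This lets us drop the pair $(z_t, w_{t-1})$ entirely: the chain $(w_0, \ldots, z_{t-1}, w_t, z_{t+1}, \ldots, z_\ell)$, with block $T^\calF_{w_t\leftarrow z_{t-1}}$, is a shorter pseudo-augmenting chain. Alternatively, when the containment fails, we route through $x$ as in Case A.

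In both cases \Cref{claim:pseudoChain-into-chain} yields an augmenting chain of length less than $\ell$, which is a contradiction.

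The main obstacle is Case B. There we must verify property \ref{property-wi-zi-connection}(b) for the spliced block: that $z_{t-1}$ (or $x$) is an ancestor of $w_t$ in the same molecule, so that the block is well defined. Establishing this needs careful use of the rooted-tree structure of $T^\calF_x$ together with the fact that $w_{t-1}$ was scanned from $x$. I would first try to show directly that $z_t \in T^\calF_{w_{t-1}\leftarrow x}$ forces $w_t \in T^\calF_x$. The argument would be that the path from $z_t$ toward $w_t$ avoids $x$, because otherwise $x$ would lie inside the $t$-th block, making $w_t$ scannable from layer $t-1$. That observation would reduce everything to Case A.
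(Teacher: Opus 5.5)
\textbf{Gap: Case B is never closed.}

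Your Case A is the paper's argument. The splice $S=(w'_0,\ldots,w'_{t-2},x,w_t,z_{t+1},\ldots,w_{\ell-1},z_\ell)$ is a pseudo-augmenting chain of length $\ell-1$, and \Cref{claim:pseudoChain-into-chain} then contradicts the $\ell$-configuration assumption. The problem is Case B.
\begin{itemize}
\item The chain $(w_0,\ldots,z_{t-1},w_t,\ldots,z_\ell)$ with block $T^\calF_{w_t\leftarrow z_{t-1}}$ needs $w_t$ to hang below $z_{t-1}$. Nothing you have shown provides this.
\item ``Routing through $x$ as in Case A'' is unavailable, because Case B assumes exactly that $w_t$ is not in $x$'s branch.
\item Your proposed repair stops short. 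You correctly note that if the $z_t$--$w_t$ path went through $x$, then $x$ would lie in the $t$-th block. But ``making $w_t$ scannable from layer $t-1$'' is not a contradiction; it is precisely the situation Case A exploits.
\end{itemize}

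\textbf{The missing fact.} The paper uses directly that $w_t\in V(T^\calF_{z_t})$: every $\calM$-block of $A$ hangs \emph{below} its root. This follows from the block definition alone, since $T^\calF_{w_t\leftarrow z_t}$ must lie inside an $\calM$-molecule.
\begin{itemize}
\item If $z_t$ is the root of a normal molecule, the block is forced to be a normal molecule rooted at $z_t$.
\item If $z_t$ is non-reducible inside a normal molecule $M$, a block on the parent side of $z_t$ would contain the root of $M$. That root belongs to no $\calM$-molecule, so this is impossible.
\item Special molecules cause no trouble. For $t=1$, $x$ is a dummy root and $T^\calF_x$ is the whole special molecule, so $w_1\in V(T^\calF_x)$ trivially. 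For $t\ge 2$, $Z_{t-1}$ contains no node of a special molecule: all such nodes are scanned while building $Z_1$ and are therefore pruned.
\end{itemize}
Now $z_t\neq x$ lies in the downward branch $T^\calF_{w_{t-1}\leftarrow x}$ of $x$. Hence its whole subtree $T^\calF_{z_t}$, and with it $w_t$, lies in that branch. So Case B is vacuous. The second alternative of your dichotomy (the $t$-th block ``containing $x$'s side'') is exactly what the block definition forbids.

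With this observation, your Case A is the entire proof. The paper runs the same argument under the weaker hypothesis $z_t\in V(T^\calF_x)$, and so obtains the stronger conclusion $z_t\notin V(T^\calF_x)$.
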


\begin{proof}
    We show something stronger, that is $z_t \notin V(T_x^\calF)$.
    For the sake of contradiction, assume that $z_t \in V(T_{x}^\calF)$.
    Since $t \leq \ell-1$ and $A$ is an augmenting chain, we have $w_t \in V(T^\calF_{z_t}) \subseteq V(T_x^\calF)$.
    Since $x \in Z_{t-1}$, according to \Cref{claim:Zt-appears-in-partial-chains}, there exists an alternating chain $ P' := (w'_0,z'_1,w_1',\ldots,z_{t-2}',w_{t-2}')$ w.r.t.~$(\calF, \calM, D)$ such that $x \in \tail(P')$.
    Now, define
    $$ S := (w_0',z_1',w_1',\ldots,z_{t-2}',w_{t-2}',x,w_t,z_{t+1},w_{t+1},\ldots,z_{\ell-1},w_{\ell-1}, z_{\ell}). $$
    According to $A$ being an augmenting chain, $P'$ being an alternating chain, $x \in \tail(P')$, and $w_t \in V(T^\calF_x) $, it is straightforward to see that $S$ is a pseudo-augmenting chain w.r.t.~$(\calF, \calM, D)$ of length $\ell -1$.
    \Cref{claim:pseudoChain-into-chain} concludes that there exists an augmenting chain of length at most $\ell - 1$ w.r.t.~$(\calF, \calM, D)$.
    This is a contradiction.
\end{proof}

Now, we have that $z_t$ is put into $Z_{t}$ by the algorithm.
It remains to show that $z_t$ will not be removed from $Z_t$ during the pruning procedure, which completes the proof of \Cref{claim:Zt-star-subseteq-Zt}.

\begin{claim}
    $z_t$ is not removed from $Z_t$ during the pruning procedure of $Z_t$.
\end{claim}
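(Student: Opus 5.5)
The pruning procedure removes $z_t$ from $Z_t$ in exactly two situations: either $z_t$ is neither $\calM$-covered nor the root of a normal $\calM$-molecule, or $z_t$ was already scanned when the construction of $Z_t$ ended. I will rule out each situation separately.

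\textbf{The first condition.} Since $A$ is an augmenting chain and $t \in [1,\ell-1]$, property (b) of \Cref{def:alt-chain} says that $T^\calF_{w_t \leftarrow z_t}$ is an $\calM$-block. By the definition of an $\calM$-block, $z_t$ is then either $\calM$-non-reducible, hence $\calM$-covered, or the root of a normal $\calM$-molecule. So the first pruning condition does not apply.

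\textbf{The second condition.} This is the main step. Suppose, for contradiction, that $z_t$ was scanned before the pruning of $Z_t$. A node becomes scanned in only two ways.
\begin{itemize}
\item It is marked as a member $x\in Z_{t'}$ when $Z_{t'+1}$ is built, for some $t' \le t$. Then $z_t\in Z_{t'}$ with $t'<t$, because only $Z_0,\dots,Z_{t-1}$ are iterated over while building $Z_t$. Since $Z^\star_t\subseteq Z_t$ is exactly what we are trying to prove, I would instead use \Cref{claim:Zt-appears-in-partial-chains}, which is already established for the earlier layers.
\item It is scanned as an element of $T^\calF_x$ for some $x\in Z_{t'}$ with $t'\le t-1$.
\end{itemize}

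In both cases there is some $x\in Z_{t'}$ with $t'\le t-1$ and $z_t\in V(T^\calF_x)$, where possibly $x=z_t$. By \Cref{claim:Zt-appears-in-partial-chains} (or, when $t'=0$, because $x$ is a dummy root), there is an alternating chain $P'=(w'_0,z'_1,\dots,z'_{t'-1},w'_{t'-1})$ with $x\in\tail(P')$. Since $A$ is augmenting and $t\le \ell-1$, the atom containing $w_t$ lies in $T^\calF_{z_t}\subseteq T^\calF_x$. So I would splice the two chains exactly as in the preceding claim and form
$$S:=(w'_0,z'_1,\dots,w'_{t'-1},x,w_t,z_{t+1},\dots,w_{\ell-1},z_\ell).$$
$S$ has length $t'+(\ell-t) \le \ell-1$.

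\textbf{Checking the splice.} I then need to verify that $S$ is a pseudo-augmenting chain.
\begin{itemize}
\item Property (b) at position $x$: $T^\calF_{w_t\leftarrow x}$ must be an $\calM$-block. I argue this as in the proof of \Cref{claim:Zt-appears-in-partial-chains}: $x$ survived pruning, so it is covered or a root. If $x$ were $\calM$-reducible, then $(P',x)$ would be an augmenting chain of length $t'<\ell$, which is impossible.
\item The edge $(w'_{t'-1},x)$ is a non-forest edge because $x\in\tail(P')$.
\item The remaining conditions are inherited from $A$.
\end{itemize}
\Cref{claim:pseudoChain-into-chain} then yields an augmenting chain of length at most $\ell-1$. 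This contradicts the assumption that $(\calF,\calM,D)$ is an $\ell$-configuration.

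\textbf{Expected obstacle.} The delicate cases are the degenerate ones: $x=z_t$ itself (this needs $t'<t$, which I get from \Cref{claim:Zi-Zj-are-disjoint}-style reasoning), and $t'=0$, where $x$ is a dummy root and the chain simply starts at $w_t$ inside a special molecule. I expect both to follow the same template.
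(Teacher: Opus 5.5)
Your proposal is correct and takes essentially the same route as the paper. You rule out the first pruning condition via the $\calM$-block property of $T^\calF_{w_t \leftarrow z_t}$, and if $z_t$ was already scanned you take $x \in Z_{t'}$ with $t' \le t-1$ and $z_t \in V(T^\calF_x)$. You then splice the chain from \Cref{claim:Zt-appears-in-partial-chains} with the suffix $(x, w_t, z_{t+1}, \ldots, z_\ell)$ of $A$, and apply \Cref{claim:pseudoChain-into-chain} to the resulting pseudo-augmenting chain of length $t' + \ell - t \le \ell - 1$. As a minor remark, the bound $t' < t$ already follows from the scanning order, so you do not need to appeal to \Cref{claim:Zi-Zj-are-disjoint}.
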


\begin{proof}
    For the sake of contradiction, assume that $z_t$ is removed from $Z_t$ during the pruning procedure.
    Since $A$ is an augmenting chain and $t \in [1, \ell-1]$, we have that $z_t$ is the root of an $\calM$-molecule or is $\calM$-covered.
    Hence, the only reason that the algorithm removes $z_t$ from $Z_t$ is that $z_t$ has already been scanned.
    This concludes that there exists $y \in Z_{t'}$ for some $t' \in [0, t-1]$ such that $z_t \in T_y^\calF$.
    According to \Cref{claim:Zt-appears-in-partial-chains}, there exists an alternating chain $ P' := (w'_0,z'_1,w_1',\ldots,z_{t'-1}',w_{t'-1}')$ w.r.t.~$(\calF, \calM, D)$ such that $y \in \tail(P') $.
    Now, define
    $$ S := (w_0',z_1',w_1',\ldots,z_{t'-1}',w_{t'-1}',y,w_t,z_{t+1},w_{t+1},\ldots,z_{\ell-1},w_{\ell-1},z_\ell). $$
    According to $A$ being an augmenting chain, $P'$ being an alternating chain, $y \in \tail(P')$, and $w_t \in V(T^\calF_{z_t}) \subseteq V(T_y^\calF) $, it is straightforward to see that $S$ is a pseudo-augmenting chain w.r.t.~$(\calF, \calM, D)$ of length $\ell-t+t'$.
    \Cref{claim:pseudoChain-into-chain} concludes that there exists an augmenting chain w.r.t.~$(\calF, \calM, D)$ of length at most $\ell-t+t' \leq \ell - 1$.
    This is a contradiction.
\end{proof}

\subsubsection{Proof of \Cref{claim:ineffective-correctness}}

We start with the following claim.

\begin{claim}\label{claim:Zt-is-not-in-atom}
    For every $t \in [1,\ell-1]$, none of the nodes $Z_t$ are contained in an $\calM^{(0)}$-atom.
\end{claim}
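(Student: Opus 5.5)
We argue by contradiction, using \Cref{claim:Zt-appears-in-partial-chains} together with the assumption that $(\calF^{(0)}, \calM^{(0)}, D^{(0)}) = (\calF, \calM, D)$ is an $\ell$-configuration. Fix $t \in [1, \ell-1]$ and $v \in Z_t$, and suppose that $v$ lies in some $\calM^{(0)}$-atom. By \Cref{claim:Zt-appears-in-partial-chains}, there is an alternating chain $P = (w_0, z_1, w_1, \ldots, z_{t-1}, w_{t-1})$ of length $t-1$ w.r.t.~$(\calF, \calM, D)$ with $v \in \tail(P)$. By the definition of $\tail(P)$, we have $(w_{t-1}, v) \in E - E(\calF)$, and $v$ lies outside the $i^{\th}$ $\calM$-block of $P$ for every $i \in [0, t-1]$.

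Next, consider the sequence $A := (w_0, z_1, w_1, \ldots, z_{t-1}, w_{t-1}, v)$ obtained by appending $v$ to $P$. We check the four properties of \Cref{def:aug-chain} in turn:
\begin{itemize}
\item property~\ref{property1:aug-chain} holds because $P$ is an alternating chain;
\item property~\ref{property-zell1-outside-blocks} holds because $v \in \tail(P)$;
\item property~\ref{property-well-zell1} holds because $(w_{t-1}, v)$ is a non-forest edge;
\item property~\ref{property-zell1-reducible} holds because $v$ is $\calM$-reducible, which is exactly our assumption that $v$ lies in an $\calM$-atom.
\end{itemize}
We must also check distinctness of the nodes. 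Each $w_i$ lies in the $i^{\th}$ $\calM$-block, and $v$ is outside every such block, so $v \neq w_i$. Each $z_i$ is either $\calM$-non-reducible or the root of a normal molecule, hence $\calM$-free; since $v$ is covered, $v \neq z_i$. Therefore $A$ is an augmenting chain of length $t \leq \ell - 1$ w.r.t.~$(\calF, \calM, D)$.

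This contradicts the fact that $(\calF, \calM, D)$ is an $\ell$-configuration, i.e., that every augmenting chain has length at least $\ell$. Hence no node of $Z_t$ lies in an $\calM^{(0)}$-atom. The only delicate step is the distinctness check, and in particular confirming that no $z_i$ coincides with $v$; this follows from the block definition, which forces each $z_i$ to be either non-reducible or a molecule root.
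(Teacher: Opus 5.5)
Your argument is the same as the paper's. You use \Cref{claim:Zt-appears-in-partial-chains} to obtain an alternating chain $P$ of length $t-1$ with $v \in \tail(P)$, append $v$, and get an augmenting chain of length $t<\ell$, which contradicts the $\ell$-configuration assumption.

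One small correction to your distinctness check: an $\calM$-non-reducible node is $\calM$-covered, not $\calM$-free. The correct reason that $z_i \neq v$ is that each $z_i$ lies in no $\calM$-atom (it is either non-reducible or a free molecule root), whereas $v$ does.
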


\begin{proof}
For the sake of contradiction, assume that $x \in Z_t$ is contained in an $\calM^{(0)}$-atom (i.e., is $\calM^{(0)}$-reducible).
According to \Cref{claim:Zt-appears-in-partial-chains}, since $x \in Z_t$, there exists an alternating chain $P = (w_0,z_1,w_1,\ldots,z_{t-1},w_{t-1})$ such that $x \in \tail(P)$.
Since $x$ is $\calM^{(0)}$-reducible, $(P, x)$ must be an augmenting chain of length $t < \ell$ w.r.t.~$(\calF^{(0)}, \calM^{(0)}, D^{(0)})$.
This is in contradiction with the assumption that the initial configuration given to the algorithm is an $\ell$-configuration.
\end{proof}

Now, we proceed with the proof of \Cref{claim:ineffective-correctness}.
For the sake of contradiction, assume that the claim does not hold for at least one $t \in [1, \ell-1]$.
Consider the smallest value $t^\star$ that violates the claim.
Assume $x$ is marked ineffective and $x = z_{t^\star}^\star$ for some augmenting chain $A = (w_0^\star,z_1^\star,w_1^\star,\ldots,z_{\ell-1}^\star, w_{\ell-1}^\star, z_\ell^\star)$ w.r.t.~$(\calF^{(i)}, \calM^{(i)}, D^{(i)})$.
According to \Cref{claim:set-of-augmenting-chains-is-decreasing} $A$ is also an augmenting chain w.r.t.~$(\calF^{(0)}, \calM^{(0)}, D^{(0)})$, and we conclude that $z_r^\star \in Z_r$ for all $r \in [1, \ell-1]$ according to \Cref{claim:Zt-star-subseteq-Zt}.
Consider the execution of the backward search when $x$ is marked ineffective.
Let $(w_{t^\star}, z_{t^\star+1},\ldots,w_{\ell-1},z_\ell)$ be the sequence that is passed to the call at layer $t^\star$ in the backward search, where $x = z_{t^\star}^\star$ is marked ineffective in this call.
We must have $w_{t^\star} \subseteq V(T^{\calF^{(i-1)}}_x) = V(T^{\calF^{(i-1)}}_{z^\star_{t^\star}})$, as otherwise, $x$ would not have been considered in this call.
Now, define
$$ S = (w_0^\star,z_1^\star,w_1^\star,\ldots,z^\star_{t^\star-1},w^\star_{t^\star-1}, x,w_{t^\star},z_{t^\star+1},w_{t^\star+1},\ldots, z_{\ell-1}, w_{\ell-1},z_\ell). $$

\begin{claim}
    $S$ is a pseudo-augmenting chain w.r.t.~$(\calF^{(i)},\calM^{(i)},D^{(i)})$
\end{claim}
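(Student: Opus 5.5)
We verify the three conditions of \Cref{def:pseudo-chain} for $S$ w.r.t.~$(\calF^{(i)},\calM^{(i)},D^{(i)})$, splitting $S$ at the junction node $x = z^\star_{t^\star}$. The prefix $(w_0^\star, z_1^\star, \ldots, z^\star_{t^\star-1}, w^\star_{t^\star-1}, x)$ is taken from the augmenting chain $A$ w.r.t.~$(\calF^{(i)},\calM^{(i)},D^{(i)})$. Hence properties \ref{property-z0}, \ref{property-wi-zi-connection}, \ref{property-non-forest-edge} of \Cref{def:alt-chain} hold for all indices $r \le t^\star-1$, and the edge $(w^\star_{t^\star-1}, x)$ is a non-forest edge of $\calF^{(i)}$.

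For the suffix $(w_{t^\star}, z_{t^\star+1}, \ldots, w_{\ell-1}, z_\ell)$, I would use the fact that this partial sequence was built by the backward search during the current configuration. Each edge $(w_{r-1}, z_r)$ was accepted only when it was non-forest and $w_{r-1}$ was $\calM$-reducible. The last pair $(w_{\ell-1},z_\ell)$ was accepted because $z_\ell$ satisfies property \ref{property-zell1-reducible}, and because $w_{\ell-1}$ and $z_\ell$ lie in different atoms ($z_\ell\notin V(T_{w_{\ell-1}\leftarrow x'})$, and an in-block non-forest edge would merge atoms). Each $w_r$ was chosen inside $T^{\calF}_{z_r}$, where $z_r\in Z_r$ is pruned to be $\calM$-covered or a molecule root. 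By \Cref{claim:Zt-is-not-in-atom}, $z_r$ is not reducible, so $T_{w_r\leftarrow z_r}$ is an $\calM$-block.

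The main obstacle is that these checks were made against the configuration at search time, and we need them w.r.t.~index $i$. Since $x$ is marked ineffective while searching in configuration $i$, the whole recursive call happens within configuration $i$: no chain is applied during an unsuccessful call. So the labels and forest are those of $(\calF^{(i)},\calM^{(i)},D^{(i)})$. I would also argue that forest data structures computed on the initial forest coincide on unaffected molecules.

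The junction needs separate care. Property \ref{property-wi-zi-connection} at index $t^\star$ needs $w_{t^\star}\in V(T^{\calF^{(i)}}_x)$, which holds because $x$ was considered as an effective ancestor of $w_{t^\star}$. It also needs $T_{w_{t^\star}\leftarrow x}$ to be an $\calM^{(i)}$-block. This follows because $x=z^\star_{t^\star}$ is already the root of a block in $A$, so $x$ is a molecule root or non-reducible. Finally, $w_{t^\star}$ lies in an atom, since the backward search accepted it as reducible. Distinctness and the block-outside properties are not required, since we only need a pseudo-augmenting chain. This is exactly why the claim is phrased that way: it lets us invoke \Cref{claim:pseudoChain-into-chain} afterwards.
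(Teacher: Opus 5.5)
Your proposal is correct and follows essentially the paper's argument. The prefix up to $x$ is inherited from $A$, the suffix properties come from the checks the backward search makes within configuration $i$, and property~\ref{property-wi-zi-connection}(b) of \Cref{def:alt-chain} for the suffix blocks comes from \Cref{claim:Zt-is-not-in-atom}. Your explicit treatment of the junction block $T_{w_{t^\star}\leftarrow x}$ is more careful than the paper's, which simply appeals to $A$ there.

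One small repair is needed. Pruning and \Cref{claim:Zt-is-not-in-atom} are statements about $\calM^{(0)}$, not about the search-time configuration, so your appeal to search-time checks does not cover them. Instead, derive that $z_r$ is $\calM^{(i)}$-covered or an $\calM^{(i)}$-root from the fact that $w_r$ lies in an $\calM^{(i)}$-atom inside $T^{\calF^{(i)}}_{z_r}$ (a free non-root node $z$ has $T_z=\{z\}$). Then transfer \Cref{claim:Zt-is-not-in-atom} to $\calM^{(i)}$ using the fact that every $\calM^{(i)}$-atom is also an $\calM^{(0)}$-atom, exactly as the paper does.
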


\begin{proof}
The only non-trivial property is property \ref{property-wi-zi-connection} (b) in \Cref{def:alt-chain}.
$T^{\calF^{(i)}}_{w_r^\star \leftarrow z_r^\star}$ satisfies property \ref{property-wi-zi-connection} (b) for all $r \in [0,t^\star]$ since $A$ is an augmenting chain w.r.t.~$(\calF^{(i)}, \calM^{(i)}, D^{(i)})$.
$T^{\calF^{(i)}}_{w_r \leftarrow z_r}$ for $r \in [t^\star+1, \ell-1]$ also satisfy property \ref{property-wi-zi-connection} (b) for the following reason;
According to the procedure of backward search $w_r$ must be contained in an $\calM^{(i)}$-atom inside $T^{\calF^{(i)}}_{z_r}$.
We conclude that $z_r$ is either the root of an $\calM^{(i)}$-molecule or is $\calM^{(i)}$-covered.
It remains to show that $z_r$ is not contained in any $\calM^{(i)}$-atom.
Assume that $z_r$ is in an $\calM^{(i)}$-atom.
Since every $\calM^{(i)}$-atom is an $\calM^{(0)}$-atom as well, we conclude that $Z_r$ contains a node in an $\calM^{(0)}$-atom, which is in contradiction with \Cref{claim:Zt-is-not-in-atom}.
\end{proof}

Now, we show that the backward search will eventually reach layer $0$ and will never mark $x$ as ineffective.
The reason is that; 1) at the current time step, the sequence $(w_{t^\star}, z_{t^\star+1}, w_{t^\star+1}, \ldots, w_{\ell-1}, z_\ell)$ is passed to layer $t^\star$ and $x$ is being processed, 2) according to minimality of $t^\star$, all of the nodes $z_{r}^\star$ for $r \in [0, t^\star-1]$ are marked effective and remain effective (until we transfer to $(\calF^{(i+1)}, \calM^{(i+1)}, D^{(i+1)})$ or the termination of the algorithm for $i=q$), and 3) for every $r \in [0, t^\star-1]$ we have $z_r^\star \in Z_r$.
Note that the backward search might find a different sequence than $S$, but $S$ will remain a valid option for the backward search until the end of the procedure, and $S$ will not be missed.
This is in contradiction with the assumption that $x$ is marked ineffective in this call.

\subsubsection{Proof of \Cref{claim:backward-search-correctness}}

We start by showing that $A^{(i)}$ is a pseudo-augmenting chain w.r.t.~$(\calF^{(i-1)}, \calM^{(i-1)}, D^{(i-1)})$.
According to the description of the backward search in Phase II, it is straightforward to see that all the properties of a pseudo-augmenting chain hold, and the only non-trivial property is property \ref{property-wi-zi-connection} (b).
For every $r \in [0, \ell-1]$, since $w_r$ is contained in an atom inside $T^{\calF^{(i-1)}}_{z_r}$, we conclude that $z_r$ is either the root of a $\calM^{(i-1)}$-molecule, or it is $\calM^{(i-1)}$-covered.
It remains to show that $z_r$ is not contained in an $\calM^{(i-1)}$-atom.
This is implied by \Cref{claim:Zt-is-not-in-atom} and the fact that every $\calM^{(i-1)}$-atom is an $\calM^{(0)}$-atom as well.
Now, assume that $A^{(i)}$ is not an augmenting chain w.r.t.~$(\calF^{(i-1)}, \calM^{(i-1)}, D^{(i-1)})$.
\Cref{claim:pseudoChain-into-chain} concludes that there exists an augmenting chain $\tilde{A}$ of length \emph{strictly} less than $\ell$ w.r.t.~$(\calF^{(i-1)}, \calM^{(i-1)}, D^{(i-1)})$.
\Cref{claim:set-of-augmenting-chains-is-decreasing} implies that $\tilde{A}$ is also an augmenting chain w.r.t.~the initial configuration $(\calF^{(0)}, \calM^{(0)}, D^{(0)})$.
This is a contradiction, with the initial assumption that $(\calF^{(0)}, \calM^{(0)}, D^{(0)})$ is an $\ell$-configuration.

\subsubsection{Proof of \Cref{claim:no-more-chain-at-the-end-of-Phase-II}}

Assume that there exists an augmenting chain $A = (w_0^\star,z^\star_1,w^\star_1,\ldots,z^\star_{k-1},w^\star_{k-1},z^\star_{k})$ of length $k \leq \ell$  w.r.t.~$(\calF^{(q)}, \calM^{(q)}, D^{(q)})$.
According to \Cref{claim:set-of-augmenting-chains-is-decreasing}, $A$ is an augmenting chain w.r.t.~$(\calF^{(j)}, \calM^{(j)}, D^{(j)})$ for all $j \in [0, q]$ as well.
Hence, $k = \ell$ since $(\calF^{(0)}, \calM^{(0)}, D^{(0)})$ is an $\ell$-configuration, and $z_r^\star \in Z_r$ for all $r \in [1, \ell-1]$ (according to \Cref{claim:Zt-star-subseteq-Zt}).
\Cref{claim:ineffective-correctness} implies that all of the nodes $z_r^\star$ are still marked as effective.
Since $z_{\ell-1}^\star \in Z_{\ell-1}$ and $w_{\ell-1} \in V(T^{\calF^{(i)}}_{z_{\ell-1}^\star})$ for all $i \in [0, q]$, according to the procedure of the algorithm while searching for $(w_{\ell-1}, z_{\ell})$ (before starting the backward search), the node $w_{\ell-1}^\star$ must become scanned at some point while scanning the sub-tree $T^{\calF^{(i^\star)}}_x$ for some $x \in Z_{\ell-1}$ and $i^\star \in [0, q]$.

\begin{claim}
    A backward search must have been started from an edge $( w_\ell^\star, v)$ at the time of scanning $w_\ell^\star$ in the sub-tree $T^{\calF^{(i^\star)}}_x$.
\end{claim}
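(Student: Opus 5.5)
The plan is to check, one by one, the three conditions under which the search for $(w_{\ell-1}, z_\ell)$ launches a backward search. I will check them for the candidate pair $(u,v) := (w^\star_{\ell-1}, z^\star_\ell)$, with respect to the configuration $(\calF^{(i^\star)}, \calM^{(i^\star)}, D^{(i^\star)})$ that is current when $w^\star_{\ell-1}$ is scanned inside $T^{\calF^{(i^\star)}}_x$. (The statement's $w^\star_\ell$ should be read as $w^\star_{\ell-1}$, the last $w$-node of $A$.)

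First I record the preliminaries. By \Cref{claim:set-of-augmenting-chains-is-decreasing}, applied repeatedly, $A$ is an augmenting chain w.r.t.~$(\calF^{(j)}, \calM^{(j)}, D^{(j)})$ for every $j \in [0,q]$, and in particular for $j = i^\star$. Because $w^\star_{\ell-1}$ still lies in an $\calM^{(q)}$-atom, its molecule is never affected during Phase II. Hence $w^\star_{\ell-1}$ lies in an $\calM^{(i^\star)}$-atom, so the algorithm does iterate over its neighbours when it scans it. Moreover, its incident forest edges agree in $\calF^{(i^\star)}$ and $\calF^{(q)}$.

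Next come the two easy conditions. The first is $(w^\star_{\ell-1}, z^\star_\ell) \in E - E(\calF^{(i^\star)})$. This is property \ref{property-well-zell1} of $A$ w.r.t.~configuration $i^\star$, or equivalently it follows from the unchanged neighbourhood just noted. The third condition says that $z^\star_\ell$ is $\calM^{(i^\star)}$-reducible, or $\calM^{(i^\star)}$-free with $\deg_{\calF^{(i^\star)}}(z^\star_\ell) \le \Delta^\star$ and $z^\star_\ell \notin D^{(i^\star)}$. This is exactly property \ref{property-zell1-reducible} of $A$ w.r.t.~configuration $i^\star$.

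The main obstacle is the second condition, $z^\star_\ell \notin V(T^{\calF^{(i^\star)}}_{w^\star_{\ell-1}\leftarrow x})$. The difficulty is that $x$ need not equal $z^\star_{\ell-1}$, so property \ref{property-zell1-outside-blocks} of $A$ does not apply directly. I will argue by contradiction and suppose $z^\star_\ell$ lies in that subtree. Since $x \in Z_{\ell-1}$ survived pruning, $x$ is either $\calM$-covered or the root of a normal molecule. In either case $T_{w^\star_{\ell-1}\leftarrow x}$ lies inside a single $\calM^{(i^\star)}$-molecule, namely the one containing $w^\star_{\ell-1}$, and so it contains no $\calM^{(i^\star)}$-free node. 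Therefore the free alternative of the third condition is impossible, and $z^\star_\ell$ must be $\calM^{(i^\star)}$-reducible and lie in the same molecule as $w^\star_{\ell-1}$.

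It then remains to rule out this reducible case. Both endpoints of the non-forest edge $(w^\star_{\ell-1}, z^\star_\ell)$ lie in atoms of the same molecule. By the merging procedure that defines atoms, the two atoms would be merged, so $z^\star_\ell$ lies in the critical atom of $w^\star_{\ell-1}$. By \Cref{obs:block}, that atom is contained in the $(\ell-1)^\th$ block $T_{w^\star_{\ell-1}\leftarrow z^\star_{\ell-1}}$ of $A$. This contradicts property \ref{property-zell1-outside-blocks} of $A$; it also contradicts the distinctness/pseudo-chain condition \ref{property3:pseudo-chain}. With all three conditions verified, the algorithm necessarily starts a backward search from $(w^\star_{\ell-1}, z^\star_\ell)$ at that moment, which proves the claim.
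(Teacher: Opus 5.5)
Your proposal is correct and follows the paper's proof. You check the three launch conditions for $v = z^\star_\ell$, take the first and third directly from $A$ being augmenting w.r.t.~$(\calF^{(i^\star)}, \calM^{(i^\star)}, D^{(i^\star)})$, and rule out $z^\star_\ell \in V(T^{\calF^{(i^\star)}}_{w^\star_{\ell-1}\leftarrow x})$ because the non-forest edge would force $z^\star_\ell$ into the critical atom of $w^\star_{\ell-1}$. Your extra steps (excluding the $\calM^{(i^\star)}$-free alternative, and using \Cref{obs:block} to turn ``same atom'' into a violation of property~\ref{property-zell1-outside-blocks}) only make explicit what the paper leaves implicit, and reading $w^\star_\ell$ as $w^\star_{\ell-1}$ is correct.
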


\begin{proof}
    According to the procedure of Phase II, it is sufficient to show that there exists a node $v$ that satisfies; 1) 
    $(w_\ell^\star,  v) \in E - E(\calF^{(i^\star)})$, 2) $v \notin V(T^{\calF^{(i^\star)}}_{w_{\ell-1}^\star \leftarrow x})$, and 3) $v$ is either \{$\calM^{(i^\star)}$-reducible\} or \{$\calM^{(i^\star)}$-free with $\deg_{\calF^{(i^\star)}}(v) \leq \Delta^\star$ and $v \notin D^{(i^\star)}$\}.
    We show these properties hold for $v = z^\star_{\ell}$.
    
    The first and third properties hold since $A$ is an augmenting chain w.r.t.~$(\calF^{(i^\star)}, \calM^{(i^\star)}, D^{(i^\star)})$.
    If $z_{\ell}^\star \in V(T^{\calF^{(i^\star)}}_{w_{\ell-1}^\star \leftarrow x})$, we conclude that; 1) there is a non-forest edge between $z_{\ell}^\star$ and $w_{\ell-1}^\star$, 2) $z_{\ell}^\star$ and $w_{\ell-1}^\star$ are in the same $\calM^{(i^\star)}$-molecule containing $T^{\calF^{(i^\star)}}_{w_{\ell-1}^\star \leftarrow x}$.
    According to the procedure of defining atoms, we conclude that $z_{\ell}^\star$ and $w_{\ell-1}^\star$ must be in the same $\calM^{(i^\star)}$-atom, which is a contradiction with $A$ being an augmenting chain w.r.t.~$(\calF^{(i^\star)}, \calM^{(i^\star)}, D^{(i^\star)})$.
\end{proof}

Finally, similar to the proof of \Cref{claim:ineffective-correctness}, we can argue that the backward search started from $(w_{\ell-1}^\star, v)$ (note that $v$ is not necessarily equal to $z_{\ell}^\star$) ends successfully.
As a result, the algorithm applies this augmenting chain that contains $w_{\ell-1}^\star$, and the atom containing $w_{\ell-1}^\star$ will no longer be an $\calM^{(i^\star)}$-atom (and also $\calM^{(q)}$-atom accordingly).
This is in contradiction with the assumption that $A$ is an augmenting chain w.r.t.~$(\calF^{(q)}, \calM^{(q)}, D^{(q)})$.

\section*{Acknowledgements}

We thank Chandra Chekuri for pointing us to \Cref{cor:n-dependent}.

Sayan Bhattacharya is funded by the European Union (ERC grant, DYNALP, 101170133). Views and opinions expressed are however those of the author(s) only and do not necessarily reflect those of the European Union or the European Research Council Executive Agency. Neither the European Union nor the granting authority can be held responsible for them.

\bibliographystyle{alpha}
\bibliography{references}

\appendix
\section{Brief Discussion on Other Related Work}\label{sec:technical-overview-comparison}

\subsection{The Algorithm of~\cite{DHZ20}}
The~\cite{DHZ20} algorithm provides an spanning tree of maximum degree $(1+\epsilon)\Delta^\star + O(\epsilon^{-2}\log n)$ in $O(m\epsilon^{-7}\log^7n)$ time for any arbitrary $\epsilon \in (0,1/6)$.
In the following, we provide an example showing that the algorithm can potentially return a spanning tree that has a \textbf{multiplicative} approximation ratio of $\Omega(\log n / \log \log n)$.
First, let us briefly discuss how the~\cite{DHZ20} algorithm works.
The algorithm start with an arbitrary spanning tree $T$.
Then given a threshold $k \in [3, n]$, it tries to reduce the degree of the spanning tree to $k-1$.
Finally, by deliberately choose different thresholds in different rounds, it reduces the maximum degree of the spanning tree until its degree becomes $(1+\epsilon)\Delta^\star + O(\epsilon^{-2}\log n)$.

The main object which is used for this process in~\cite{DHZ20} is called `augmenting sequence', and is defined as follows.
It is a sequence of vertex-disjoint non-tree edges $(w_1,z_1),(w_2,z_2),\ldots, (w_h,z_h) \in E - E(T)$ such that; 1) there exists an $w_0 \in P^T_{w_1,z_1}$ where $\deg_T(w_0) \geq k$, 2) $w_i \in P^{T}_{w_{i+1},z_{i+1}} - \cup_{j=i+2}^h P^{T}_{w_{j},z_{j}}$ for all $i \in [0, h-1]$, and 3) $\deg_{T}(z_i) \leq k-2$ for all $i \in [1, h]$ as well as $\deg_T(w_h) \leq k-2$.
The algorithm improves the spanning tree by considering augmenting sequences of length $h+1$ where $h := 1 + \lceil \log_{1+\epsilon} n \rceil$.

We provide an example of a graph $G$ and a spanning tree $T$ of maximum degree $\Omega(\log n/ \log\log n)$ such that there does not exists any augmenting sequence for any arbitrary threshold $k$.
Moreover the optimal spanning tree $T^\star$ of $G$ has degree $\Delta^\star=3$.
This shows that the algorithm of~\cite{DHZ20} in this specific instance can not improve the spanning tree $T$, that has multiplicative approximation $\Omega(\log n/ \log\log n)$.

We define $G$, $T$, and $T^\star$ inductively.
$G_0$ consists of a single node $r_0$ called \textbf{apex} of $G_0$.
Obviously, $T^\star_0 = \{r_0\}$. 
We also consider $T_0 = \{r_0\}$ a rooted tree.
For any $i \geq 1$;
\begin{enumerate}
    \item Consider a node $r$ as the apex of $G_i$ (as well as the root of $T_i$).

    \item Make $i$ copies of $G_{i-1}$ like $G_{i-1}^{(1)}, G_{i-1}^{(2)}, \ldots, G_{i-1}^{(i)}$.

    \item Connect $r$ via $i$ edges to the apex of these copies of $G_{i-1}$.
    All of these edges are part of $T_i$.
    The only edge that is part of $T^\star_i$ is the edge from $r$ to the apex of $G_{i-1}^{(1)}$.

    \item For each $j \in [1, i-1]$, connect the apex of $G_{i-1}^{(j)}$ and apex $G_{i-1}^{(j+1)}$ via an edge.
    These edges are not part of $T_i$, but they are part of $T^\star_i$.
\end{enumerate}
The following figure illustrates $G_4$, $T_4$, and $T_4^\star$.
\input{figures/fig-G4}

It is straightforward to see that, in the graph $G_q$, we have $\Delta^\star = \max_{x \in V} \deg_{T^\star_q}(x) = 3$, but the maximum degree of $T_q$ is $q$.
Note that $q$ can grow as large as $ \Omega (\log n / \log \log n)$, while the number of nodes in $G_q$ remains at most $ (q+1)! \leq n$ (this can be easily verified according to Stirling's approximation of $(q+1)!$).

\subsection{The Algorithm of~\cite{CQT21}}

There are two main algorithms in \cite{CQT21}, a 
$\approx (1+\varepsilon)\Delta^\star$ approximation for the fractional version of the problem by solving the LP relaxation via the multiplicative weights update (MWU) technique in $\tilde{O}(m)$ time.
The multiplicative $1+\epsilon$ approximation is inherent according to MWU.
The second result is a $(1+\epsilon)\Delta^\star + 2$ approximation in $\tilde{O}(n^2/\epsilon^2)$ time that is achieved as follows; 
After approximating the LP by a factor of $(1 + \epsilon)$, it is possible to sample a sparse subgraph $G' \subseteq G$ of the input such that w.h.p.~there exists a fractional solution of cost at most $(1 + 3\epsilon)(1 + \epsilon)\Delta^\star \leq (1 + 7\epsilon)\Delta^\star$ in $G'$.
According to \cite{SL07}, w.h.p.~there exists an integer solution in $G'$ with max degree $\lceil(1+7\epsilon) \Delta^\star\rceil + 1$.
Finally, by running the \cite{FR92} algorithm on $G'$, \cite{CQT21} get a spanning tree of $G'$ (as well as $G$) with maximum degree $\leq \lceil(1+7\epsilon)\Delta^\star\rceil + 2$ in $\tilde{O}(n^2/\epsilon^2)$ time.

According to this explanation, it is straightforward to see that it is impossible to achieving additive plus one approximation in less than $O(mn)$ time with this algorithm for general input graphs $G$.
First of all, we should consider $\epsilon \leq 1/(7\Delta^\star)$ in order to translate the multiplicative $(1+7\epsilon)\Delta^\star$ approximation to $\Delta^\star + 1$, which increases the running time to $\tilde{O}(n^2\Delta^2)$ and can be as worse as $\Omega(n^3) = \Omega(mn)$ if $\Delta^\star \geq \Omega(\sqrt{n})$.
Moreover, the only guarantee on the sparsified graph $G'$ that we have is that w.h.p.~there exists a spanning tree of maximum degree at most $\lceil(1+7\epsilon)\Delta^\star\rceil + 1 \geq  \Delta^\star+2$.
Hence, even if we can find an optimal spanning tree in the sparsified graph $G'$, its maximum degree might be $\Delta^\star+2$.

\section{Generalization to the Bounded Degree Spanning Tree Problem}\label{appendix:generalization}

Recall the BDST problem as defined in the remarks after \Cref{theorem:main} in \Cref{sec:intro}.
Here, we briefly summarize how our algorithm extends to BDST and provides an additive plus one approximation for BDST.

\medskip
\noindent
\textbf{Modified Definitions and Arguments.}
If we point out the changes in the definition of the main objects, it is straightforward to adjust all the statements according to these new definitions. 
A valid forest $\calF$ must satisfy $\deg_\calF(u) \leq b(u)+1$ for all $u \in V$ instead of $\deg_\calF(u) \leq \Delta^\star+1$.
The definition of a molecule remains unchanged.
However, to define atoms, we need to consider the specific bounds $b(u)$, and the degree-reduction subroutine in \Cref{lem:degree-reduction} updates the atom $C$ containing $u \in V(C)$, achieving $\calF^+$ such that $\deg_{\calF^+}(u) \leq b(u)$ and $\calF^+$ remains a valid forest.
The definitions of alternating and augmenting chains remain unchanged except property \ref{property-zell1-reducible} in \Cref{def:aug-chain}, where we need to have $\deg_{\calF}(z_{\ell+1}) \leq b(z_{\ell+1})$ instead of $\deg_{\calF}(z_{\ell+1}) \leq \Delta^\star$.

All of the statements in the paper can be easily adjusted according to these new definitions, and if we assume that there exists a spanning tree $T^\star$ of $G$ satisfying $\deg_{T^\star}(u) \leq b(u)$ for all $u \in V$, all of the arguments go through.
Here, we only highlight that in the analysis of the main \Cref{key-lemma} (in \Cref{sec:main-lemma}), the potential function that we need to define is $\sum_{x \in W(t)} b(x)$ instead of $\Delta^\star \cdot |W(t)|$.

The main subroutine in the algorithm (as in \Cref{lem:subroutine-main}) also works the same except while searching for the edge $(w_{\ell-1}, z_\ell)$ in Phase II, the condition $\deg_\calF(z_{\ell}) \leq \Delta^\star$ must be replaced by $\deg_\calF(z_\ell) \leq b(z_{\ell})$.

\medskip
\noindent
\textbf{Final Algorithm.}
Our final algorithm works as follows;
If a spanning tree $T^\star$ satisfying $\deg_{T^\star}(u) \leq b(u)$ exists, according to the analysis of the algorithm, it returns a spanning tree $T$ satisfying $\deg_T(u) \leq b(u)+1$ in a total of $\tilde{O}(mn^{3/4})$ time.
But, if no spanning tree $T^\star$ satisfying $\deg_{T^\star}(u) \leq b(u)$ exists, the arguments in the analysis of the algorithm become invalid.
As a result, we must consider a threshold of $\tilde{O}(mn^{3/4})$ for the running time of the algorithm and terminate it if it has not found the desired spanning tree after this running time threshold.
Hence, our algorithm always runs in $\tilde{O}(mn^{3/4})$ time, and if no spanning tree is returned,\footnote{It is easy to see that it is not possible for our algorithm to return a spanning tree which is not valid, i.e., it either returns a valid spanning tree or gets stuck at some iteration and the valid forest can not be further improved.} we can certify that there is no spanning tree $T^\star$ satisfying degree bounds $\deg_{T^\star}(u) \leq b(u)$.

\section{Getting Rid of the Prior Knowledge of $\Delta^\star$}\label{appendix:unknown-Delta-star}

In this section, we show how to run our algorithm without prior knowledge of $\Delta^\star$.

We do a binary search on the value of $\Delta^\star$ using variable $k \in [1, n]$.
We use the generalization of our algorithm for the BDST problem with input $b(u) = k$ for all $u \in V$.
If our algorithm finds a spanning tree of maximum degree $k+1$, we reduce the value of $k$ according to the binary search step.
If our algorithm was unsuccessful in finding a spanning tree of degree at most $k+1$, we increase the value of $k$ according to the binary search step.
Eventually, we consider the smallest value of $k^\star$, where the algorithm has successfully returned a spanning tree of maximum degree at most $k^\star+1$.
We show that $k^\star + 1 \leq \Delta^\star + 1$.

For the sake of contradiction, assume that $k^\star > \Delta^\star$.
In this case, define $k := k^\star - 1 \geq \Delta^\star$ and consider the execution of the algorithm for bounds $b(u) = k$ for all $u \in V$.
Since $k \geq \Delta^\star$, there exists a spanning tree $T^\star$ satisfying $\deg_{T^\star}(u) \leq b(u)$ for all $u \in V$, and our algorithm would successfully return a spanning tree of maximum degree $k+1$.
This is in contradiction with the assumption on the minimality of $k^\star$.

As a result, our algorithm eventually returns a spanning tree of maximum degree $ \leq \Delta^\star+1$ where $\Delta^\star$ is the maximum degree of the optimal spanning tree.

Note that it is NP-Hard to find the value of $\Delta^\star$.
Our binary search does not find the exact value of $\Delta^\star$.
The final $k^\star$ might be equal to $\Delta^\star$ or $\Delta^\star-1$.
More precisely, although in the case of $k^\star = \Delta^\star - 1$, there is no spanning tree of maximum degree $k^\star$, and the analysis of the algorithm does not go through, it is possible that the algorithm successfully returns a spanning tree of maximum degree $k^\star+1 = \Delta^\star$.
Hence, the final spanning tree of maximum degree $k^\star+1$ might have maximum degree $\Delta^\star$ or $\Delta^\star+1$, and we can not distinguish these two cases.

\end{document}